\documentclass{article}
\usepackage{odonnell}
\usepackage{xspace,braket}

\newcommand{\Ugp}[1]{\mathrm{U}(#1)}
\newcommand{\SUgp}[1]{\mathrm{SU}(#1)}

\newcommand{\Ogp}[1]{\mathrm{O}(#1)}
\newcommand{\SOgp}[1]{\mathrm{SO}(#1)}
\newcommand{\PSOgp}[1]{\mathrm{PSO}(#1)}
\newcommand{\JustG}{\mathrm{G}}
\newcommand{\Ggp}[1]{\mathrm{G}(#1)}

\newcommand{\opnorm}[1]{\norm*{#1}_{\mathrm{op}}}

\newcommand{\K}{\mathrm{K}}

\newcommand{\dRie}{d_{\textnormal{Rie}}}

\newcommand{\Fr}{\mathrm{Fro}}

\newcommand{\rnote}[1]{*\marginpar{\tiny *Ryan: \bf #1}}
\newcommand{\pnote}[1]{\footnote{{\bf Pedro:}  #1}}

\newcommand{\rasnote}[1]{\footnote{{\bf Rocco:}  #1}}

\DeclareMathOperator{\Img}{\textnormal{Im}}

\renewcommand{\c}{c}

\title{Explicit orthogonal and unitary designs}

\author{Ryan O'Donnell\thanks{Computer Science Department, Carnegie Mellon University.}${~}^{\tiny{\textcircled{r}}}$\and Rocco A. Servedio\thanks{Department of Computer Science, Columbia University.}${~}^{\tiny{\textcircled{r}}}$ \and Pedro Paredes\thanks{Department of Computer Science, Princeton University.}${~}^{\tiny{\textcircled{r}}}$\\  \and${~}^{\tiny{\textcircled{r}}}${\small Author ordering randomized.}}

\date{\today}

\begin{document}

\maketitle
\begin{abstract}

    We give a strongly explicit construction of $\eps$-approximate $k$-designs 
    for the orthogonal group~$\Ogp{N}$ and the unitary group~$\Ugp{N}$, for $N=2^n$.
    Our designs are of cardinality~$\poly(N^k/\eps)$ (equivalently, they have seed length $O(nk + \log(1/\eps)))$; up to the polynomial, this matches the number of design elements used by the construction consisting of completely random matrices.
    \end{abstract}

%!TEX root = main.tex

\newcommand{\Fro}{\mathrm{Fro}}
\newcommand{\Rie}{\mathrm{Rie}}
\newcommand{\aux}{\mathrm{aux}}
\newcommand{\conj}{C}

\section{Introduction}
\label{sec:overall-goal}

The main new result in our work is the following:
\begin{theorem} \label{thm:main}
    Let $N = 2^n$ and let $\Ggp{n}$ denote either the orthogonal group $\Ogp{N}$ or the unitary group~$\Ugp{N}$.  
    Then for any $k = k(n)$, there is an explicit $\eps$-approximate $k$-design for~$\Ggp{n}$ of cardinality~$\poly(N^k/\eps)$; i.e., samplable using  a seed  of just $O(nk + \log(1/\eps))$ truly random bits.
Moreover, these designs are strongly explicit in the following sense:  
    (i)~each output matrix is given by an $n$-qubit circuit consisting of $S = \poly(nk) \log(1/\eps)$ gates, each gate being either $\mathrm{CNOT}$ or one of a few fixed and explicitly specified $1$-qubit gates; (ii)~the algorithm that takes as input a seed and outputs the associated circuit runs in deterministic $\poly(S)$ time.
%    Reducing $\eps$ by a factor of $2^{O(nk)}$, the resulting distribution becomes an $\eps$-approximate $k$-design for~$\Ggp{N}$ of cardinality~$\poly(N^k/\eps)$; i.e., samplable using  a seed  of just $O(nk + \log(1/\eps))$ truly random bits.
\end{theorem}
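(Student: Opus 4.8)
The plan is to bootstrap from the known fact that \emph{local random quantum circuits} are approximate $k$-designs and then derandomize the two sources of randomness in that construction: the choices of the $2$-qubit gates, and the number of iterations needed to push the error down to~$\eps$. First note that the two formulations in the statement are equivalent for free, since a family samplable from $b$ random bits has cardinality at most $2^b$; so it suffices to produce an $\eps$-approximate $k$-design samplable from $O(nk+\log(1/\eps))$ bits with an efficient seed-to-circuit map. For the engine, recall (Brand\~{a}o--Harrow--Horodecki, with its quantitative strengthenings, and with the orthogonal case treated analogously) that a brickwork circuit of $S_0 = \poly(nk)$ gates on $n$ qubits, each gate an independent Haar-random element of $\Ugp{4}$ (resp.\ $\Ogp{4}$) applied to an adjacent pair, is a $\tfrac1{10}$-approximate $k$-design for $\Ggp{n}$. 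The relevant object is the $k$-th moment operator $T_\nu = \E_{U\sim\nu}\bracks*{U^{\otimes k}\otimes\overline U^{\otimes k}}$ of a circuit distribution~$\nu$, where ``$c$-approximate $k$-design'' means $\opnorm{T_\nu-P}\le c$ and $P$ projects onto the $\Ggp{n}$-invariant subspace. Since $g^{\otimes k}\otimes\overline g^{\otimes k}\cdot P = P$ for every $g\in\Ggp{n}$ by invariance of Haar measure, $T_\nu P = PT_\nu = P$ for \emph{any} distribution~$\nu$ on $\Ggp{n}$, so $T_\nu - P = (I-P)\,T_\nu\,(I-P)$; this is what makes iteration behave cleanly.

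\emph{Discretize, then derandomize the gate sequence.} Replace each Haar-random $2$-qubit gate by a uniformly random element of a fixed finite $\mathcal F\subseteq\Ugp{4}$ (resp.\ $\Ogp{4}$) that is itself an $\eps_0$-approximate $k$-design with $\eps_0 = 1/\poly(S_0)$; such an $\mathcal F$ of size $\poly(k/\eps_0)$ exists by a net-plus-concentration argument, and each element compiles into $\poly(\log(1/\eps_0))$ gates from $\{\mathrm{CNOT}\}\cup\mathcal G_1$ (with $\mathcal G_1$ a fixed finite universal family of $1$-qubit gates) by Solovay--Kitaev. A hybrid argument over the $S_0$ slots shows the perturbed circuit is still, say, a $\tfrac15$-approximate $k$-design; its remaining randomness is a length-$S_0$ string over an alphabet of size $\poly(k)$. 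To cut this from $\poly(nk)$ bits to $O(nk)$ bits, I would reopen the martingale / spectral-gap core of the design analysis and extract that the gate labels need only be $\poly(k)$-wise independent --- morally, the moment-operator recursion couples only $O(k)$ slots at a time --- so that a $\poly(k)$-wise independent distribution on the $S_0$ labels (seed $\wtO(nk)$, with the logarithmic overhead absorbed into the final $\poly(N^k/\eps)$) suffices. Call the result $\nu_{\mathrm{base}}$: a $\tfrac15$-approximate $k$-design with seed space $V$ of size $2^{O(nk)}$, circuits of $\poly(nk)$ gates from $\{\mathrm{CNOT}\}\cup\mathcal G_1$, and a $\poly(nk)$-time seed-to-circuit map.

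\emph{Amplify to error~$\eps$ almost for free.} Put a strongly explicit constant-degree expander $H$ on $V$ with normalized second eigenvalue $\lambda\le\tfrac1{10}$, take a length-$m$ random walk $v_1,\dots,v_m$ on~$H$, and output $C_{v_m}\cdots C_{v_1}$. Writing $M_v = C_v^{\otimes k}\otimes\overline{C_v}^{\otimes k}$ and $M_v' = (I-P)M_v(I-P)$, the cross terms vanish and $M_{v_m}\cdots M_{v_1} = P + M_{v_m}'\cdots M_{v_1}'$, so the walk distribution deviates from an exact $k$-design only through $\E_{\mathrm{walk}}\bracks*{M_{v_m}'\cdots M_{v_1}'}$, which an operator-valued expander-walk (``expander Chernoff'') estimate bounds by $\parens*{\opnorm{\E_v M_v'}+\lambda}^m\le\parens*{\tfrac15+\tfrac1{10}}^m$; this is $\le\eps$ once $m = O(\log(1/\eps))$. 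The seed is $\log|V| + m\log(\deg H) = O(nk)+O(\log(1/\eps))$, the circuit has $m\cdot\poly(nk) = \poly(nk)\log(1/\eps) = S$ gates of the required form, and the seed-to-circuit map runs in $\poly(S)$ time since $H$ is strongly explicit --- exactly the claimed object. (The orthogonal case is identical, with $\overline{C_v} = C_v$ throughout.)

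The real obstacle is the derandomization in the second paragraph: the clean theorems in the literature assume fully independent gates, so one must go inside the spectral-gap / martingale argument and pin down exactly how much independence among the gate choices it consumes, then check this gives a seed of length $O(nk)$ rather than $\poly(nk)$ --- perhaps requiring a small-bias space or a tailored pseudorandom generator instead of plain bounded independence, and for $k$ super-polynomial in~$n$ a further twist. A secondary, more technical point is establishing the operator-valued expander-walk inequality used in the amplification with good enough constants for clean exponential decay, and checking that every analytic ingredient --- the circuit-design theorem, the finite set $\mathcal F$, the compilation --- admits a fully real version for~$\Ogp{N}$.
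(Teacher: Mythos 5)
Your amplification step (paragraph three) is not strong enough to do the job you've assigned it, and this, combined with the gap you already flag in paragraph two, is fatal to the route you've sketched. A vanilla expander walk gives $\opnorm{\E[\prod_i M_{v_i}']} \le (\opnorm{\E_v M_v'} + \lambda)^m$, so it can only drive the error down geometrically \emph{if the base distribution already has a constant spectral gap}. You are therefore forced to build $\nu_{\mathrm{base}}$ with $\opnorm{T_{\nu_{\mathrm{base}}} - P} \le 1/5$ using only $O(nk)$ bits, which is exactly the step you cannot justify: the claim that ``the moment-operator recursion couples only $O(k)$ slots at a time,'' so $\poly(k)$-wise independent gate labels suffice, has no support in the spectral-gap analyses you cite. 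Those arguments track contraction of a Markov operator across the full $\poly(nk)$-step product and genuinely use fresh randomness at every step; there is no locality that bounded independence could exploit. Even granting the claim, the seed-length accounting is off --- $\poly(k)$-wise independence over $\poly(nk)$ symbols from a $\poly(k)$-ary alphabet costs $\poly(k)\cdot\polylog(nk)$ bits, which is not $\wtO(nk)$ for general $k$. And the casual ``orthogonal case treated analogously'' hides a real issue: the BHH-style spectral-gap theorems are proved for $\Ugp{\cdot}$ and do not transfer for free to $\Ogp{\cdot}$; the orthogonal analogue had only been partially established (and a prior attempted fix was erroneous), which is part of what this paper supplies.

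The paper avoids your obstruction entirely by reorganizing the two steps. It never attempts a constant-error derandomized circuit as a base. Instead, the base object is the uniform distribution over a single $O(n^2)$-sized gate multiset (one $\mathrm{CNOT}$ or one fixed $1$-qubit gate on randomly chosen wires), and all that is proved about it is a \emph{weak} spectral gap $\opnorm{\E_{\bg\sim P_n}[\rho^{k,k}(\bg)] - \Pi} \le 1 - \frac{1}{n\cdot\poly(k)}$ (your Theorem~\ref{thm:2} / Theorem~\ref{thm:ini-gap-sou}), established via an inductive ``randomize $m-1$ of $m$ qubits'' recursion down to a constant number of qubits, where Benoist--de~Saxc\'e supplies the finite-gate-set gap. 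Crucially, the amplification is not a plain expander walk but \emph{derandomized squaring} in the style of Rozenman--Vadhan (Section~\ref{sec:pseudorandom-walks}, Theorem~\ref{thm:derando-walks}), which squares the offending eigenvalue at each round. Starting from spectral radius $1-\delta$ with $\delta = 1/\poly(nk)$, $O(\log(1/\delta) + \log\log(2^{nk}/\eps))$ rounds suffice to reach $\eps/2^{nk}$, at a seed cost of $O(\log(c/(\delta\cdot\eps 2^{-nk})^{O(1)})) = O(nk + \log(1/\eps))$. Your single-expander walk would instead need $\lambda < \delta = 1/\poly(nk)$ and $m = \poly(nk)(nk + \log(1/\eps))$ steps to close a weak gap, blowing the seed budget. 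So the difference is not cosmetic: the squaring structure is what allows the paper to start from the easy (weak, single-gate) spectral gap rather than the hard (constant, full-circuit) one you were trying to manufacture.
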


In the unitary case,  similar results in the literature only discuss the regime $k \leq \poly(n)$~\cite{HL09,Sen18}, or have polynomially worse seed length~\cite{BHH16,Haf22}.
In contrast, our result holds for all $k$ (even exponentially large as a function of $n$, or larger), and achieves a seed length which matches, up to constant factors, that of a random construction.
A significant motivation for our work was the orthogonal case, where the only prior works we know of are~\cite{KM15,HH21}, which we discuss below.  Our  \Cref{thm:main} provides  the efficient orthogonal designs needed for Kothari and Meka's near-optimal pseudorandom generators for spherical caps~\cite{KM15}.

Let us now discuss the general context for our result.

\paragraph{Derandomization.}  Let $\mathcal{G}$ be a class of objects, and assume informally that each object has ``size''~$N^{\Theta(1)}$ (think, e.g., of strings of length~$N$, or $N \times N$ matrices).  To choose an object from the uniform probability distribution on~$\mathcal{G}$ typically requires using~$\Omega(N)$ truly random bits.  
A broad goal in  derandomization is to identify a useful notion of ``pseudorandomness'' for probability distributions on~$\mathcal{G}$, and then to show that one can sample from such a distribution using just $r \ll N$ truly random bits.\footnote{In this introduction, we refer to sampling uniformly from a set of size~$R$ as ``using $\log_2 R$ truly random bits''.}
An additional goal is for the sampling algorithm to be \emph{efficient}; i.e., the sampled object should be produced by a deterministic $\poly(r)$-time algorithm, given the truly random seed of length~$r$.  
In this case, since the sampler has only $2^r$ possible outcomes yet the total number of objects is exponential in~$N$, it must be the case that the sampler represents the output objects in a ``succinct'' way.  Informally, if it is possible to efficiently compute with objects represented in this succinct way, the sampler is said to be ``strongly explicit''.

\paragraph{Exact $k$-wise independence.}
One of the most common and useful notions of pseudorandomness is that of \emph{bounded independence}.
For random objects with $N^{\Theta(1)}$ ``entries'' (``coordinates''/``dimensions''),  it often suffices for applications if the objects are merely ``$k$-wise independent'' for some $k \ll N$.  This means that the object looks truly random whenever only~$k$ entries are inspected.
In this case one may hope that the object can be sampled using a  random seed of length  just $O(k \log N)$ bits.

The paradigmatic example of this comes from $k$-wise independent length-$N$ Boolean strings.
Using results from coding theory~\cite{ABI85}, it has long been known that $O(k \log N)$ random bits suffice to efficiently sample a precisely $k$-wise independent string $\bx \in \{0,1\}^N$ (meaning that $(\bx_{i_1}, \dots, \bx_{i_k})$ is perfectly uniformly distributed on~$\{0,1\}^k$ for any $i_1, \dots, i_k$).  

For other kinds of random objects, obtaining \emph{exact} $k$-wise independence seems extremely difficult.  
Take the case of random permutations, where $\bpi \in S_N$ is said to be $k$-wise independent if~$(\bpi(i_1), \dots, \bpi(i_k))$ is uniformly distributed on $\binom{[N]}{k}$ for any distinct $i_1, \dots, i_k$.  
While simple efficient methods for generating $2$- and $3$-wise independent permutations using $O(\log N)$ random bits 
are known, for any constant $k \geq 4$ the best known efficient construction uses~$\Theta(N)$ random bits~\cite{FPY15}.  
The situation is similar for random unitary matrices, where $\bU \in \Ugp{N}$ is said to be drawn from a $k$-design if $\E[\bU_{i_1 j_1} \cdots \bU_{i_k j_k}]$ is equal to what it would be if $\bU$ were Haar-distributed on~$\Ugp{N}$ (and similarly if any  subset of the entries $\bU_{i_t j_t}$ in the product were replaced with their complex conjugates).
Here it is known how to efficiently construct exact $2$-designs using $O(\log N)$ bits~\cite{DLT02}, and exact $3$-designs using $O(\log^2 N)$ bits~\cite{Web16}, but good constructions of exact $k$-designs for $k \geq 4$ are lacking (see, e.g.,~\cite{BNZZ19}).

\paragraph{Approximate $k$-wise independence.}
Given these issues, it is natural to seek \emph{$\eps$-approximate} $k$-wise independence ($k$-designs).  Here it is important to carefully define the precise notion of  ``approximate'', as different natural notions are often only equivalent if one is willing to change~$\eps$ by a factor that is exponential in~$k$.  
For example, in the context of Boolean strings in $\{\pm 1\}^N$, a weak\ignore{but useful} notion of $(\eps,k)$-wise independence is that $\abs{\E[\bx_{i_1} \cdots \bx_{i_k}]} \leq \eps$ for all $k$-tuples of distinct values $i_1, \dots, i_k$. 
 Naor and Naor~\cite{NN93} showed that $O(\log (nk/\eps))$ random bits suffice to  explicitly generate such a distribution, where we write $n = \log_2 N$.
 %\rnote{I cannot find a known lower bound}  
However, to get the stronger guarantee that every $k$~bit positions are $\eps$-close to the uniform distribution in statistical distance, one needs $(\eps 2^{-k},k)$-wise independence (see, e.g.,~\cite{AGHP92}), and hence the number of random bits used in known constructions is $O(k + \log(n/\eps))$.  
In general, for $q$-ary rather than $2$-ary (Boolean) strings, the  seed-length penalty becomes $O(k \log q)$.
So if, e.g., one wants a distribution on $\Z_N^N$ in which every $k$ coordinates have statistical difference~$\eps$ from uniform where $N=2^n$,\footnote{Cf.\ achieving $\eps$-approximate $k$-wise independent permutations from~$S_N$.} then the best known explicit constructions use  $O(k n + \log(1/\eps))$ random bits.\\

In this work, we give a common framework for randomness-efficient generation of  approximately $k$-wise independent distributions over \emph{groups}, particularly subgroups of the unitary group.  
Our framework applies to, e.g.,  the group of $q$-ary strings $\Z_q^N$ (realized as diagonal matrices with $q$th roots of unity as the diagonal entries), the permutation group~$S_N$ (realized as $N \times N$ permutation matrices), the orthogonal group~$\Ogp{N}$, and the unitary group~$\Ugp{N}$ (with $N = 2^n$).  We will not discuss strings further in this work, as they are already very well studied.  We first describe prior work on the other three groups, and then explain our new general method.

\paragraph{Permutations.}  Explicit approximate $k$-wise independent permutations have found a wide variety of applications; e.g., in cryptography~\cite{KNR09}, hashing/dimensionality reduction~\cite{LK10,KN14}, and explicit constructions of expanders~\cite{MOP22}.
One method for creating them was initiated by Gowers~\cite{Gow96}, who showed that a random $n$-qubit circuit composed of~$\poly(n,k)\log(1/\eps)$ ``classical'' $3$-qubit gates (i.e., permutations on $\{0,1\}^3$) yields an $\eps$-approximate $k$-wise independent permutation on $S_{2^n}$.  
(Note that since the circuit size is polynomial rather than linear in $nk$, the randomness-efficiency of \cite{Gow96} is polynomially worse than the $O(nk + \log(1/\eps))$ random bits needed by a non-explicit random construction.)
Gowers's technique was to lower-bound the spectral gap of the random walk on a related graph by $1/\poly(n,k)$. 
(See~\cite{HMMR05,BH08} for improvement of the spectral gap to $1/\wt{O}(k^2 n^2)$.)
Subsequently, using techniques related to space-bounded walks in graphs~\cite{Rei08}, Kaplan--Naor--Reingold~\cite{KNR09} derandomized this ``truly random walk'' to achieve efficient $\eps$-approximate $k$-wise independent permutations on $S_{2^n}$ with seed length $O(k n + \log(1/\eps))$, matching the (inexplicit) random bound. Around the same time, Kassabov~\cite{Kas07} got the same seed length (without requiring $N$ to be a power of~$2$) via a sophisticated construction of a constant-size generating set for any~$S_N$ that makes the resulting Cayley graph an expander.

\paragraph{Unitary matrices.} Introduced to the quantum computing literature in~\cite{DCEL09}, explicit $\eps$-approximate $k$-designs for the unitary group have had a wide variety of applications, from randomized benchmarking of quantum gate sets~\cite{ZZP17}, to efficient state and process tomography~\cite{HKOT23}, to understanding quantum state and unitary complexity~\cite{RY17,BCHKP21}.  
%We refrain from pandering to the reader by mentioning black holes and the AdS/CFT correspondence.\rnote{Just kidding, we should delete this sentence.}
Previously, works on constructing approximate unitary designs have chiefly focused on achieving ``strong explicitness''  rather than on randomness-efficiency.
In particular, the goal has been to show that a \emph{truly} random $n$-qubit quantum circuit composed of~$S = \poly(n,k) \log(1/\eps)$  gates (i.e.~each gate is a Haar random unitary operator on a constant number of uniformly randomly chosen qubits) constitutes an $\eps$-approximate $k$-design for~$\Ugp{2^n}$.  The breakthrough in this area came from the work of Brand{\~a}o, Harrow, and Horodecki~\cite{BHH16}, who showed that $S = O(n^2 k^{10.5} \log(1/\eps))$ suffices for $k \le 2^{\Omega(n)}$. %\rnote{Haferkamp 22 says $k \leq O(2^{.4n})$.}
(See also \cite{HL09} for an earlier construction using $\poly(n,k)\log(1/\eps)$ gates when $k=O(n/\log n)$, and \cite{Sen18} for a construction in the $k=\poly(n)$ regime.)
 Further work has been done on improving the circuit depth and the exponent on~$k$; see~\cite{Haf22}. 
%As far as we are aware, o
Ours is the first work to derandomize these results and achieve a seed length that is \emph{linear} rather than polynomial in $n$ and $k$, and that works for all $k$, thus matching the non-explicit random construction.
As an example application of our result for unitary matrices, by applying~\cite{BCHKP21} we get an efficient deterministic procedure for outputting~$2^{O(nk)}$ many $n$-qubit unitary circuits of $\poly(nk)$ gates such that at least $2^{\Omega(n k)}$ of them (a polynomially large fraction)   have strong quantum circuit complexity~$\Omega(\frac{n}{\log n} k)$ (provided $k \leq 2^{\Omega(n)}$).

\paragraph{Orthogonal Matrices.}  It is natural to think that  designs for $\Ogp{N}$ and $\Ugp{N}$ should be related (and indeed orthogonal designs have played a role in randomized benchmarking for quantum circuits~\cite{HFGW18}).  However there is no obvious reduction between the tasks of constructing $\eps$-approximate  $k$-designs for the two groups.  
The first paper we are aware of that attempts to explicitly construct approximate orthogonal designs is~\cite{KM15}. 
That work used explicit orthogonal designs with $O(kn + \log(1/\eps))$ seed length as the core pseudorandom object underlying its  state-of-the-art pseudorandom generator for linear threshold functions on~$\mathbb{S}^{n-1}$.
Unfortunately, there was an error in their construction of these designs.\footnote{The error is in the interpretation of the main result of \cite{BG12} that is used to establish Corollary~6.1 of \cite{KM15}. Corollary~6.1 claims that the spectral gap established by \cite{BG12} for $\SUgp{N}$ is independent of $N$, but this is in error \cite{Kothari22}; indeed, as noted in \cite{BHH16} after their Corollary~7, ``the proof [in \cite{BG12}] does not give any estimate of the dependency of the spectral gap on $N$.''}
%\rasnote{Since we state Kothari/Meka as one of our main motivations, maybe we owe the reader slightly more than just the bald statement? I could imagine some readers/reviewers wanting more detail, but it's a little delicate so let's be careful.  The following red footnote is one possibility; another would be to just cite \cite{Kothari22} after saying ``there was an error in their construction.''}
Fixing this error was a key motivation for the present work, and indeed our \Cref{thm:main} provides the crucial ingredient needed for the pseudorandom generators of~\cite{KM15}.

Some of our technical ideas for handling the orthogonal group are drawn from the work of Haferkamp and Hunter-Jones, who showed (Theorem~9 of \cite{HH21}) that truly random local orthogonal $n$-qudit circuits of size $\poly(n,k)\log(q/\eps)$ constitute $\eps$-approximate $k$-designs for~$\Ogp{q^n}$, provided $q \geq 8k^2$. This result has suboptimal randomness complexity because of the polynomial rather than linear dependence on $n$ and $k$, and only gives approximate $k$-designs for small values of $k$.

\ignore{
}

\subsection{Our framework} \label{sec:our-framework}
As stated earlier, we are interested in $k$-wise independent distributions over groups, particularly the symmetric, orthogonal, and unitary groups.  For each such group~$\JustG$,  the notion of ``$k$-wise independence'' is defined through a certain \emph{representation}~$\rho^k$ of the group.  
Informally, we say a distribution $\calP$ on~$\JustG$ is approximately $k$-wise independent if
\begin{equation}    \label{eqn:vague-design}
    \E_{\bg \sim \calP}[\rho^k(\bg)] \approx \E_{\bg \sim \JustG}[\rho^k(\bg)],
\end{equation}
where on the right-hand side $\bg$ is drawn from the Haar distribution on~$\JustG$.\footnote{Here and throughout, whenever $\JustG$ is a compact Lie group we write $\bg \sim \JustG$ to denote that $\bg$ is drawn according to the Haar distribution; in particular, this is the uniform distribution if $\JustG$ is finite.}

Let us consider our three example groups~$\JustG$, starting with the orthogonal group~$\Ogp{N}$.  In this case, the associated representation $\rho^k$ is on $(\C^N)^{\otimes k}$, and it maps $R \in \Ogp{N}$ to $R^{\otimes k}$.  
In other words, specialized to the orthogonal group, \Cref{eqn:vague-design} asserts that~$\calP$ is an approximate $k$-design on~$\Ogp{N}$ provided
\begin{equation}   
    \E_{\bR \sim \calP}[\bR^{\otimes k}] \approx \E_{\bR \sim \Ogp{N}}[\bR^{\otimes k}].
\end{equation}
As matrices, the entries of~$\rho^k(\bR) = \bR^{\otimes k}$ are degree-$k$ monomials in the entries of~$\bR$, and thus \Cref{eqn:vague-design} (qualitatively) implies that any degree-$k$ polynomial in the entries of~$\bR$ has approximately the same expectation under~$\calP$ as it has under the Haar distribution.  
This is the usual meaning of approximate $k$-wise independence in theoretical computer science, and is often how the notion is used in applications.  For the unitary matrices~$U$ we wish to consider polynomials in both the entries of~$U$ and their complex conjugates; thus the appropriate representation of $\Ugp{N}$ is $\rho^{k,k}$ on $(\C^N)^{\otimes 2k}$ defined by 
\begin{equation}
    \rho^{k,k}(U) = U^{\otimes k} \otimes \overline{U}^{\otimes k}.
\end{equation}
Actually, to unify notation we will work with $\rho^{k,k}$ even when studying the orthogonal group $\Ogp{N} \leq \Ugp{N}$; in this case of course $\rho^{k,k}$ is equivalent to $\rho^{2k}$, and we won't be concerned with the difference between $k$~and~$2k$.  
(Note that if $k$ is odd then the expectation of any degree-$k$ monomial in the entries of $\bR$, $\bR \sim \Ogp{N}$, is trivially 0.)
Finally, for the symmetric group $S_N \leq \Ugp{N}$ we could again use $\rho^{k,k}$, but previous work has (implicitly) used an alternative representation, which we'll call~$\calW^k$.  
To define it, let $[N]_{(k)}$ denote the set of sequences of distinct indices $i_1, \dots, i_k \in [N]$ and let $\C^{[N]_{(k)}}$ denote the (complex) vector space with orthonormal basis vectors $\ket{i_1\cdots i_k}$.  Then the representation~$\calW^k$ is defined on $\pi \in S_N$ via $\calW^k(\pi)\ket{i_1 \cdots i_k} = \ket{\pi(i_1) \cdots \pi(i_k)}$.
This representation~$\calW^k$ is the one usually associated to $k$-wise independence on~$S_N$, with the analogue of \Cref{eqn:vague-design} asserting that $\E_{\bpi \sim \calP}[(\bpi(i_1), \dots, \bpi(i_k))]$ is close to being uniformly distributed on~$[N]_{(k)}$ for each $(i_1, \dots, i_k) \in [N]_{(k)}$.\\

A first way to try to achieve approximate $k$-wise independence on~$\JustG \in \{S_N, \Ogp{N}, \Ugp{N}\}$ is through a Markov chain.  Suppose $P \subset \JustG$ is a set (closed under inverses) of size~$\poly(n)$, where $n = \log_2 N$.  Consider the random walk on~$\JustG$ that starts at~$\Id$ and multiplies by a uniformly random element of~$P$ at each step.  We may hope that after, say, $\poly(nk)\log(1/\eps)$ steps, the resulting distribution~$\calP$ on~$\JustG$ is close enough to the Haar distribution on~$\JustG$ that \Cref{eqn:vague-design} holds.  As alluded to earlier, results of this form were previously shown for $\JustG = S_{2^n}$ (starting with~\cite{Gow96}) and for $\JustG = \Ugp{2^n}$ (starting with~\cite{BHH16}).  One significant contribution of the present work is to generalize the latter to apply also to~$\Ogp{2^n}$ (or, more precisely and essentially equivalently, its connected subgroup~$\SOgp{2^n}$).
Specifically, in \Cref{sec:ini-gap,sec:large-m,sec:small-m}, our goal will essentially be to show the following:
\begin{theorem} \label{thm:2}
Fix $n \geq 4$ and let $P_n \subset \SOgp{2^n}$ denote the $O(n^2$)-sized multiset of all $n$-qubit, $1$-gates circuits consisting of either $\mathrm{CNOT}$ (on some $2$~qubits) or $\mathrm{Q} = \begin{bmatrix} 
        3/5 & -4/5 \\
        4/5 & \phantom{-}3/5 \\
    \end{bmatrix}$ on some $1$~qubit, and then closed under negation and  inverses.
    Then for any $k \geq 1$, 
    \begin{equation}
        \opnorm{\E_{\bg \sim P_n}[\rho^{k,k}(\bg)] - \E_{\bg \sim \SOgp{2^n}}[\rho^{k,k}(\bg)]} \leq 1 - \frac{1}{n \cdot \poly(k)}.
    \end{equation}
    A similar statement holds for $\SUgp{2^n}$ with the $1$-qubit $\mathrm{H}$,~$\mathrm{S}$, and~$\mathrm{T}$ gates replacing~$\mathrm{Q}$.
\end{theorem}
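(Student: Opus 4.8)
The plan is to recast the inequality as a spectral-gap statement and then establish that gap by the Brand\~ao--Harrow--Horodecki (BHH) strategy, adapted from the unitary to the orthogonal group. Write $N=2^n$, $M_n:=\E_{\bg\sim P_n}[\rho^{k,k}(\bg)]$ and $\Pi:=\E_{\bg\sim\SOgp{N}}[\rho^{k,k}(\bg)]$ as operators on $(\C^N)^{\otimes 2k}$. Since $P_n$ is closed under inverses and $\rho^{k,k}$ is a unitary representation, $M_n$ is self-adjoint with $\opnorm{M_n}\le 1$, and since every $\bg\in P_n$ fixes $\Img(\Pi)$ we get $M_n\Pi=\Pi M_n=\Pi$; hence $M_n-\Pi$ is self-adjoint and $\opnorm{M_n-\Pi}$ is the spectral radius of $M_n$ on $\Img(\Pi)^\perp$. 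So the theorem is the assertion of a \emph{two-sided} gap: every eigenvalue of $M_n$ on $\Img(\Pi)^\perp$ lies in $\big[-1+\tfrac{1}{n\cdot\poly(k)},\,1-\tfrac{1}{n\cdot\poly(k)}\big]$. The $(-1)$ endpoint is easy to handle qualitatively: an eigenvalue $-1$ would force a unit $v\perp\Img(\Pi)$ with $\rho^{k,k}(\bg)v=-v$ for \emph{every} $\bg\in P_n$ (equality in the triangle inequality), but the eigenvalues of $\rho^{k,k}(\mathrm{Q}_i)$ ($\mathrm{Q}$ acting on qubit $i$) are products of $2k$ numbers each equal to $e^{\pm\mathrm{i}\theta}$ with $\theta=\arccos(3/5)$, and none of these equals $-1$ since $\theta/\pi$ is irrational (Niven's theorem); the \emph{quantitative} $(-1)$-side bound then comes along with the $(+1)$-side analysis (or by passing once to the lazy walk $\tfrac12(\Id+M_n)$). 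From here I concentrate on the top eigenvalue of $M_n$ on $\Img(\Pi)^\perp$.

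First I would record \emph{universality}: $\{\mathrm{CNOT},\mathrm{Q}\}$ generates a dense subgroup of $\SOgp{N}$, and $\{\mathrm{CNOT},\mathrm{H},\mathrm{S},\mathrm{T}\}$ a dense subgroup of $\SUgp{N}$ up to the global phase that $\rho^{k,k}$ does not see. For $\SOgp{N}$ this is the standard fact that $\mathrm{CNOT}$ together with a dense supply of one-qubit rotations generates $\SOgp{N}$, combined with Niven's theorem to see that powers of $\mathrm{Q}$ are dense in the one-qubit rotation group; for $\SUgp{N}$ it is the textbook Solovay--Kitaev gate set. Universality immediately makes $\Img(\Pi)$ the unique $P_n$-invariant subspace, so $\opnorm{M_n-\Pi}<1$ by compactness of $\SOgp{N}$ --- but with no rate. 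Everything after this is the job of making the rate $1-1/(n\cdot\poly(k))$.

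Next, the \emph{base case} (the content of \Cref{sec:ini-gap}, and of the small-$N$ case in \Cref{sec:small-m}): I would bunch $m_0=\Theta(\log k)$ qubits into one register, so that its local dimension $2^{m_0}$ exceeds the $\Theta(k^2)$ threshold at which the commutant of $\rho^{k,k}$ behaves generically. For the walk driven by \emph{Haar-random} $m_0$-qubit gates, a $\rho^{k,k}$-spectral gap of $1/\poly(n,k)$ is known: BHH in the unitary case, and in the orthogonal case by re-running their argument with the orthogonal-group (Brauer-algebra and Weingarten-calculus) machinery of Haferkamp--Hunter-Jones, which applies precisely because of the bunching. I would then transfer this gap to our \emph{fixed} one- and two-qubit generators by a comparison of random walks: by Solovay--Kitaev every $m_0$-qubit gate is $\eta$-approximable by a length-$\mathrm{polylog}(1/\eta)$ word in $P_{m_0}$, and since $\opnorm{\rho^{k,k}(U)-\rho^{k,k}(V)}\le 2k\,\opnorm{U-V}$ it suffices to take $\eta=1/\poly(k)$; this already gives $\opnorm{M_n-\Pi}\le 1-1/\poly(n,k)$ whenever $n=O(\log k)$.

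Finally, the \emph{local-to-global bootstrap} (\Cref{sec:large-m}), which propagates the base-case gap up to all $n$ with only an $O(n)$ loss. Here I would exploit the all-to-all architecture of $P_n$ (gates on \emph{any} one or two of the $n$ qubits) together with the explicit commutant of $\rho^{k,k}$ in the generic regime $N\gtrsim\poly(k)$ --- the Brauer algebra for $\SOgp{N}$, the group algebra of $S_k$ via Schur--Weyl for $\SUgp{N}$ --- and run a Nachtergaele-style martingale recursion (equivalently a detectability-lemma argument): the gap for an $m$-qubit register is bounded below in terms of the gap of overlapping $(m-1)$-qubit sub-registers plus the one-step ``detectability'' of a fresh gate, and iterating from size $m_0$ to $n$ costs one factor $O(n)$ thanks to the all-to-all structure. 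I expect the genuine obstacle to be exactly this orthogonal analogue of the BHH machinery: Schur--Weyl duality must be replaced by Brauer-algebra duality, which is markedly more delicate because the Brauer algebra fails to be semisimple precisely in the small-$N$ regime --- forcing a clean separation between that regime (dispatched in \Cref{sec:small-m} by the compactness/base-case route above) and the generic large-$N$ recursion (\Cref{sec:large-m}), inside which one must carry quantitative bounds on orthogonal-group Weingarten functions through every level of the induction. Assembling the base case ($n=O(\log k)$) with the bootstrap (larger $n$) yields $\opnorm{M_n-\Pi}\le 1-1/(n\cdot\poly(k))$ in all cases; the $\SUgp{N}$ statement is the identical argument with Schur--Weyl in place of Brauer and $\{\mathrm{H},\mathrm{S},\mathrm{T}\}$ in place of $\mathrm{Q}$.
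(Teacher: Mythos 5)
Your overall shape is right, and roughly matches the paper's structure: an initial-gap statement for the fixed gate set, boosted to all $n$ by a local-to-global cascade exploiting the Brauer/Schur--Weyl commutant of $\rho^{k,k}$ (this is \Cref{lem:iter} iterated with \Cref{thm:large-m} and \Cref{thm:small-m}), with the lazy walk handling the $-1$ eigenvalue, and universality of $\{\mathrm{CNOT},\mathrm{Q}\}$ (resp.\ $\{\mathrm{CNOT},\mathrm{H},\mathrm{S},\mathrm{T}\}$) supplying denseness of the generated subgroup.

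The crux, however --- the transfer from Haar-random base-case gates to the fixed discrete generators --- is not carried by what you propose. Solovay--Kitaev says that every $m_0$-qubit unitary is $\eta$-close to \emph{some} word of length $\polylog(1/\eta)$ in the gate set; in other words, the support of the $\ell$-step walk is an $\eta$-net. It does \emph{not} say that the $\ell$-step walk \emph{measure} $\mu^{\ast\ell}$ is close to Haar: that measure could concentrate extremely unevenly on a small sub-net while still being $\eta$-dense. Net density is a statement about Cayley-graph diameter, spectral gap is a statement about mixing, and there is no quantitative deduction from the former to $\opnorm{\E_{\bg\sim\mu^{\ast\ell}}[\rho^{k,k}(\bg)]-\Pi}$ being bounded away from $1$. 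This is precisely the point at which the paper must invoke the deep spectral-gap theorem of Bourgain--Gamburd for $\SUgp{2^{n_0}}$, and its generalization by Benoist--de Saxc\'{e} to compact \emph{simple} Lie groups, which is needed for the orthogonal case (one must pass to $\PSOgp{2^{n_0}}$, since $\SOgp{2^{n_0}}$ itself is not simple --- a subtlety you omit). Crucially, that theorem gives a gap for the \emph{regular} representation, with a constant independent of $k$, which then dominates every $\rho^{k,k}$ since all irreps appear in the regular. A $k$-dependent gap is classical (Arnold--Krylov) but exactly what the paper explains is \emph{not} enough here; the very error in the Kothari--Meka construction that motivated this paper lay in assuming too much uniformity from the Bourgain--Gamburd input. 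Your proposal does not touch the Bourgain--Gamburd/Benoist--de Saxc\'{e} machinery at all, and substitutes an argument that does not establish the needed bound.

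A second, lesser discrepancy: the paper does not dispatch the small-$m$ regime ($4\le m\le O(\log k)$) by compactness or by re-using the base case. It instead gives a quantitative argument (\Cref{sec:small-m}) by constructing an explicit coupling of two two-step walks and invoking a theorem of Oliveira that upgrades a local $L^2$-Wasserstein contraction to a global one. The base case where Bourgain--Gamburd applies is the fixed constant $n_0=4$; the cascade then runs the Oliveira coupling argument up to $m\approx\Theta(\log k)$ and the Brauer/Schur--Weyl frame argument (\Cref{sec:large-m}) beyond that threshold.
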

\noindent (See \Cref{thm:XXX} for more details.  In \Cref{sec:BHH-extraction-stuff} we will pass from $\SOgp{2^n}$ and $\SUgp{2^n}$ to $\Ogp{2^n}$ and $\Ugp{2^n}$; our analysis in \Cref{sec:ini-gap,sec:large-m,sec:small-m} is carried out in the ``special'' versions of these groups for technical reasons that will become clear in \Cref{sec:ini-gap}, specifically \Cref{sec:overview}.) As we discuss in \Cref{sec:overview}, the high-level approach we take to establish \Cref{thm:2} extends an approach from \cite{HH21}.

Given the above theorem, we could improve its right-hand side to~$\eps/2^{nk}$ by forming~$\bg$ as a product of $n \cdot \poly(k) \cdot \log(2^{nk}/\eps)=\poly(nk) \cdot \log(1/\eps)$ uniformly randomly elements from~$P_n$. (See \Cref{def:orthogonal-design}, where we give a precise definition of ``$\eps$-approximate $k$-design'', for a discussion of why $\eps/2^{nk}$ is the correct bound for the right-hand side.)
The resulting distribution on~$\Ogp{2^n}$  would be an $\eps$-approximate $k$-design, but unfortunately,
%``$(\eps,k)$-orthogonal tensor product expander'', to generalize the language from~\cite{BHH16}.
drawing from this distribution would require a seed of $\poly(nk) \cdot \log(1/\eps)$  truly random bits, which leaves something to be desired from the standpoint of randomness-efficiency.%\rasnote{This paragraph look okay? It's a stab at the QQQ from earlier \textbf{Pedro:} I think this is fine, happy to keep it this way.}

To improve this and match the randomness-efficiency of the random construction, one may attempt to apply the method of ``pseudorandom walks on consistently labeled graphs'' from~\cite{Rei08,RTV06}, or ``derandomized squaring'' from~\cite{RV05}.  
This is the approach taken in~\cite{KNR09} for the symmetric group, where the evolving value of~$\calW^k(\bpi)\ket{i_1 \cdots i_k}$ can be thought of as a random walk on a graph with vertex set~$[N]_{(k)}$.  In the setting of \Cref{thm:2} there is no graph.
Nevertheless, in \Cref{sec:pseudorandom-walks} we will show how derandomized squaring can be slightly generalized to obtain the following result (a similar generalization appeared recently in~\cite{JMRW22}):
\begin{theorem} \label{thm:abbrev-derando-walks}
    (Abbreviated version of \Cref{thm:derando-walks}.)
    Given $c, \delta, \eps$, there is a strongly explicit deterministic  algorithm that outputs a sequence~$\calP$ of $O(c/\poly(\delta \eps))$ ``monomials'' over the symbols $u_1, \dots, u_c, u_1^\dagger, \dots, u_c^\dagger$, each of length $O(\log(1/\eps)/\poly(\delta))$, such that $\opnorm{\avg_{\bm \in \calP} \{ \bm(\calU) \}} \leq \eps$ whenever $\calU = (U_1, \dots, U_c)$ is a sequence of unitaries with $\opnorm{\avg_{i \in [c]}\{U_i\}} \leq 1- \delta$.  
    (Here $m(\calU)$ denotes the product of $U_i$'s and $U_i^\dagger$'s obtained by substituting $u_i = U_i$ in~$m$.)
\end{theorem}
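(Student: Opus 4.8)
The plan is to adapt the \emph{derandomized squaring} of~\cite{RV05} and the pseudorandom-walk framework of~\cite{Rei08,RTV06} from symmetric stochastic matrices to operator-valued walks, along the lines of~\cite{JMRW22}; throughout, for a list $\calQ = (Q_1,\dots,Q_M)$ of unitaries on $\C^D$ write $\avg\calQ := \tfrac1M\sum_a Q_a$. The single genuinely new ingredient --- and the step I expect to be the main obstacle --- is an \emph{operator} version of the basic derandomized-squaring bound: if $\opnorm{\avg\calQ} = s$ and $H$ is a $d$-regular expander on $[M]$ whose (symmetric) normalized adjacency operator decomposes as $\widehat A = \ket\psi\bra\psi + E$ with $\opnorm E \le \mu$, where $\ket\psi := \tfrac1{\sqrt M}\sum_a\ket a$, then the ``derandomized square'' $\calQ' := (Q_bQ_a)_{a\in[M],\,b\sim_H a}$, a list of $Md$ unitaries each a product of two elements of $\calQ$, satisfies
\[
    \opnorm{\avg\calQ'} \;\le\; 1 - (1-s^2)(1-\mu).
\]
I would prove this by fixing unit $x,y\in\C^D$ and rewriting $\langle y,(\avg\calQ')\,x\rangle = \langle\vec q,(\widehat A\otimes I)\,\vec p\rangle$ for the unit vectors $\vec p := \tfrac1{\sqrt M}\sum_a\ket a\otimes Q_a x$ and $\vec q := \tfrac1{\sqrt M}\sum_b\ket b\otimes Q_b^\dagger y$ of $\C^M\otimes\C^D$: the $\ket\psi\bra\psi$-part is exactly $\langle y,(\avg\calQ)^2 x\rangle$, of magnitude at most $\norm{(\avg\calQ)x}\cdot\norm{(\avg\calQ)^\dagger y}$, while for the $E$-part one subtracts off the $\ket\psi$-components of $\vec p$ and $\vec q$ and uses $E\ket\psi = 0$ to bound it by $\mu\sqrt{(1-\norm{(\avg\calQ)x}^2)(1-\norm{(\avg\calQ)^\dagger y}^2)}$; maximizing $ut+\mu\sqrt{(1-u^2)(1-t^2)}$ over $u,t\in[0,s]$ then gives the claim. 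The subtlety is that $\avg\calQ$ is not normal and has no distinguished ``top eigenspace'' --- indeed $\opnorm{\avg_i\{U_i\}}\le 1-\delta$ asserts that \emph{every} direction contracts --- so the classical argument must be rerouted through the fact that the block-diagonal ``transition'' operator $\bigoplus_a Q_a$ is \emph{unitary}: that is exactly what forces $\norm{\vec p}=\norm x$ and $\norm{\vec q}=\norm y$, hence controls the $E$-part by $\mu$ times a quantity strictly below $1$. It is also why the $Q_a$ must remain unitary, rather than merely contractive, throughout the recursion.

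Granting this lemma, the construction runs in two phases. In \textbf{Phase~1} we amplify the spectral gap to a constant. Starting from $\calQ_0 := (U_1,\dots,U_c)$, so $\opnorm{\avg\calQ_0}\le 1-\delta$, we repeatedly apply the basic step, at stage $i$ using one fixed explicit constant-degree expander of second eigenvalue $\mu\le\tfrac16$ on the vertex set $[\abs{\calQ_i}]$. As long as $\opnorm{\avg\calQ_i}\ge\tfrac12$ the lemma gives $1-\opnorm{\avg\calQ_{i+1}}\ge(1-\opnorm{\avg\calQ_i})(1+\opnorm{\avg\calQ_i})(1-\mu)\ge\tfrac54\bigl(1-\opnorm{\avg\calQ_i}\bigr)$, so the gap grows geometrically; after $T = O(\log(1/\delta))$ steps, plus $O(1)$ further steps with a smaller constant $\mu$, we obtain a list $\calR$ with $\opnorm{\avg\calR}\le\tfrac1{10}$ whose elements are monomials in $u_1,\dots,u_c$ of length $2^{O(T)}=\poly(1/\delta)$, with $\abs{\calR}=c\cdot d^{O(T)}=c\cdot\poly(1/\delta)$.

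In \textbf{Phase~2} we polish to error $\eps$ using a single long expander walk. Fix an explicit constant-degree expander $H'$ of second eigenvalue at most $\tfrac1{10}$ on $[\abs{\calR}]$, and let $\calP$ be the list of all products $R_{a_L}R_{a_{L-1}}\cdots R_{a_0}$ as $(a_0,\dots,a_L)$ ranges over length-$L$ walks in $H'$, with $L = O(\log(1/\eps))$. Tracking, after $\ell$ steps of this walk, the norm $\alpha_\ell$ of the component of the evolving state along the uniform direction $\ket\psi\otimes(\cdot)$ and the norm $\beta_\ell$ of its orthogonal complement in $\C^{\abs{\calR}}\otimes\C^D$, unitarity of $\bigoplus_a R_a$ yields an entrywise recursion $\binom{\alpha_\ell}{\beta_\ell}\le\left(\begin{smallmatrix} s & \mu' \\ 1 & \mu'\end{smallmatrix}\right)\binom{\alpha_{\ell-1}}{\beta_{\ell-1}}$ with $s:=\opnorm{\avg\calR}\le\tfrac1{10}$ and $\mu'\le\tfrac1{10}$, and this $2\times2$ matrix has spectral radius at most $\tfrac12$; hence $\opnorm{\avg\calP}\le 2^{-\Omega(L)}\le\eps$. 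By construction $\abs{\calP}=\abs{\calR}\cdot(d')^L = c\cdot\poly(1/\delta)\cdot\poly(1/\eps) = O\bigl(c/\poly(\delta\eps)\bigr)$, and each element of $\calP$ is a monomial --- over $u_1,\dots,u_c$ alone; daggers are never needed --- of length $(L+1)\cdot\poly(1/\delta) = O\bigl(\log(1/\eps)/\poly(\delta)\bigr)$.

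Strong explicitness is then immediate: both phases invoke only fixed explicit constant-degree expander families on vertex sets of prescribed sizes, so the list $\calP$ of monomial-strings is produced by a deterministic algorithm in time polynomial in its (polynomial-sized) output. The one quantitative point to flag is why constant-degree expanders suffice in \emph{both} phases. In Phase~1 this requires the sharp ``$1-(1-s^2)(1-\mu)$'' form of the basic step --- a constant $\mu$ still makes multiplicative progress on the gap even when $s$ is very close to $1$, because the gained factor $(1+s)(1-\mu)$ is a constant exceeding $1$. In Phase~2 it is essential to use one length-$O(\log(1/\eps))$ walk rather than $O(\log\log(1/\eps))$ further derandomized squarings: the latter would force expanders of degree $\poly(1/\eps)$ at the last stages and blow the monomial count up by a super-polynomial factor $\poly(1/\eps)^{\Theta(\log\log(1/\eps))}$.
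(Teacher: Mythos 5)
Your construction is correct and parallels the paper's Phase~1 (constant-degree derandomized squaring to amplify the gap from $\delta$ to a constant), but diverges in two places. First, your basic squaring step outputs undaggered monomials $Q_bQ_a$ and introduces the adjoint only in the analysis via the test vector $\vec q\propto\sum_b\ket b\otimes Q_b^\dagger y$; the paper's \Cref{def:q} instead daggers the monomials themselves, $q_G\circ S=(s_j^\dagger s_i)_{(i,j)\in E}$, so that the complete-graph case is the Hermitian square $\avg(S)^\dagger\avg(S)$ (\Cref{fact:sq}) and the step bound follows from \Cref{fact:fun,fact:quadratic,fact:expand} with no unitarity hypothesis. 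You assert unitarity of the $Q_a$ is forced, but your own argument only needs $\|\vec p\|,\|\vec q\|\le1$, which contractivity already gives (and the full \Cref{thm:derando-walks} assumes only $\opnorm{U_i}\le1$); so the dagger-free monomials are a genuine but modest simplification rather than a structurally different regime. Second, for polishing to error $\eps$ your single length-$O(\log(1/\eps))$ expander walk with the $2\times2$ parallel/perpendicular recursion is a valid and arguably cleaner alternative to the paper's choice of $\ell_2=O(\log\log(1/\eps))$ further derandomized-squaring stages with geometrically decreasing gaps $\mu_j=\tfrac14 2^{-2^j}$, realized by expanders of geometrically increasing degree $32^{2^j}$ (\Cref{prop:calcs}, \Cref{cor:expanders}). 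Both routes land at the same parameters: the paper's degree product telescopes to $\prod_{j\le\ell_2}32^{2^j}=32^{O(2^{\ell_2})}=\poly(1/\eps)$, dominated by the final stage, giving $N=O(c/(\delta^{11.25}\eps^{10}))$ monomials of length $O(\log(1/\eps)/\delta^{1.25})$.

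One claim in your closing paragraph is incorrect: the paper's doubly-geometric squaring schedule does \emph{not} incur a $\poly(1/\eps)^{\Theta(\log\log(1/\eps))}$ blow-up. That would happen only if all $\ell_2$ stages used degree $\poly(1/\eps)$; in \Cref{prop:calcs} the stage-$j$ degree is only $32^{2^j}$, so the product over $j\le\ell_2=\log_2\log_2(1/\eps)$ is $\poly(1/\eps)$. Your expander-walk polishing is thus not \emph{essential} to achieve the stated parameters --- it is an alternative, not a fix.
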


Taking the $\delta$ of \Cref{thm:abbrev-derando-walks} to be the $1/\poly(nk)$ of \Cref{thm:2}, and the unitaries $\calU = (U_1, \dots, U_c)$ to correspond to the 1-gate circuits of \Cref{thm:2}, we obtain strongly explicit $\eps$-approximate $k$-designs as described in \Cref{thm:main} for the special orthogonal and special unitary groups.  A simple modification gives corresponding designs for the unitary and orthogonal groups, thus yielding \Cref{thm:main}.

\subsection{Organization of this paper}

In \Cref{sec:BHH-extraction-stuff} we give the detailed argument explaining how an initial spectral gap of the sort given by \Cref{thm:2} and the generalized ``derandomized squaring'' result given by \Cref{thm:abbrev-derando-walks} together yield efficient explicit approximate designs for the orthogonal and unitary groups. The rest of the paper is devoted to establishing the two necessary ingredients \Cref{thm:2} and \Cref{thm:abbrev-derando-walks}.  \Cref{sec:ini-gap} gives our general framework for establishing the initial spectral gap for the special unitary and special orthogonal groups; as we explain there, a crucial step in this framework is establishing a spectral gap for a certain  ``auxiliary'' $m$-qubit random walk which was inspired by the analysis of \cite{HH21}. Similar to \cite{HH21}, it turns out that to analyze this auxiliary random walk, two quite different technical arguments are required depending on whether the tensor power $k$ is ``large'' or ``small'' compared to the number of qubits $m$; we give these two arguments in \Cref{sec:large-m} and \Cref{sec:small-m} respectively. Finally, we provide the necessary analysis of the generalized ``derandomized pseudorandom walks'' in \Cref{sec:pseudorandom-walks}.

\subsection{Notation and preliminaries} \label{sec:prelim}

%\gray{
%\begin{definition} \label{def:orthogonal-design}
%%\rasnote{Writing the desired goal statement / title in terms of orthogonal designs, even though 99\% of our work to get this will be done in the $\SOgp{\cdot}$ setting and we'll just segue to $\Ogp{\cdot}$ at the last minute}
%
%Let $\Ggp{2^n}$ be a subgroup of $\Ugp{2^n}$ that is equipped with the Haar measure.
%A distribution ${\cal D}$ on a finite subset of matrices from $\Ugp{2^n}$ is an \emph{$\eps$-approximate $k$-design for $\Ggp{2^n}$} if for every polynomial $p: \C^{2^n \times 2^n} \to \R$ with complex coefficients satisfying $\deg(p) \leq k$ and $\|p\|_1 = 1$ (where $\|p\|_1$ denotes sum of norms of coefficients), we have
%\[
%\left| \E_{\cal D}[p] - \E_{\Ggp{2^n}}[p]\right| \leq \eps,
%\]
%where $\E_{\Ggp{2^n}}[\cdot]$ denotes expectation with respect to the Haar measure on $\Ggp{2^n}.$
%We say ${\cal D}$ is an \emph{explicit $\eps$-approximate $k$-design for $\Ggp{2^n}$ with seed-length $s$} if there is a $\poly(2^n)$-time randomized procedure that generates a draw from ${\cal D}$, using $s$ bits of randomness.
%\end{definition}
%}

To give our constructions it is convenient to use  the language of quantum computing, even when the group involved is the orthogonal or symmetric group.
We will generally consider operators on $\C^N$, where $N = 2^n$ for some $n \in \N^+$.
We identify $\C^N = (\C^2)^{\otimes n}$ and think of the tensor factors as corresponding to~$n$ qubits.%\footnote{We could also consider $N = d^n$ and ``qudits'' more generally, but we have focused on $d = 2$ for simplicity.}
\begin{notation}    \label{not:gatepositioning}
Let $g \in \Ugp{2^\ell}$, 
%    Let $g \in \SOgp{2^\ell}$, 
    thought of as an $\ell$-qubit ``gate'' and let $e =(i_1,\dots,i_\ell)$ be a sequence of $\ell$ distinct elements of $[n]$, i.e.\ $e \in [n]_\ell$.   (Here $\ell$ should be thought of as ``much less than $n$'', in particular we will be interested in constant $\ell$.) We use the notation $g_e$ for the operator in 
$\Ugp{N}$ 
%    $\SOgp{2^n}$ 
    defined by applying~$g$ on qubits (i.e., tensor factors) $i_1, \dots, i_\ell$ (in that order) and applying the identity operator on the remaining $n-\ell$ qubits.
    When $e \in \binom{[n]}{\ell}$ is a set rather than a sequence, we assume the increasing order on its elements.
\end{notation}
%Our interest is in restricting attention to subgroups of $\Ugp{N}$, particularly $\Ogp{N}$ and $\SOgp{N}$.

We write $A^\dagger$ to denote the conjugate transpose of a complex matrix $A$, $\opnorm{A}$ to denote the operator norm, and $\norm{A}_1$ to denote its Schatten $1$-norm.
We use bold font to denote random variables.

%!TEX root = arxiv.tex

\section{A general framework:  Explicit $k$-wise independent permutations, orthogonal designs, and unitary designs} \label{sec:BHH-extraction-stuff}

Let $\Ggp{n}$ be a subgroup of $\Ugp{n}$ (the key examples to keep
in mind are the group of permutations on $2^n$ elements, the
$2^n$-dimensional orthogonal group, the $2^n$-dimensional unitary
group itself, and the ``special'' versions of the latter two).  In
light of \Cref{thm:abbrev-derando-walks}, given a probability
distribution on a subset of $\Ggp{n}$, we would like to understand
how fast the associated random walk mixes vis-a-vis a particular
representation, namely the $k$-wise tensor product representation
(since that representation corresponds to $k$-wise independence).

Let $\calP$ be a probability distribution %\rnote{perhaps have to say some blah blah measure-theoretic whatnot} 
on $\Ggp{n}$ that is
symmetric (meaning that $\bg^{-1} = \bg^\dagger$ is distributed
as~$\calP$ when $\bg$ is), and let~$\rho$ be a unitary representation
of $\Ggp{n}$.  Note that since $\rho$ is unitary, $\E_{\bg \sim
  {\cal P}}[\rho(\bg)]$ is a Hermitian operator with real eigenvalues
lying in $[-1,1]$. Since our goal is $k$-wise independence, the
representations that are of interest to us are $k$-wise tensor product
representations:
\begin{notation} [$k$-wise tensor product representations]
  For any $k \in \N^+$, we will write $\rho^{k,k}_{2^n}$ for the
  (complex) representation of $\Ggp{n}$ defined by
  \begin{equation}
    \rho^{k,k}_{2^n}(g) = g^{\otimes k,k} \coloneqq g^{\otimes k} \otimes \overline{g}^{\otimes k},
  \end{equation}
  where $\overline{g}$ denotes the complex conjugation of matrix~$g$.
\end{notation}

There are several different definitions of $\eps$-approximate $k$-designs in the literature, all of which are equivalent if one is willing to lose factors of $2^{nk}$ on~$\eps$.  For definiteness, we choose the $1$-norm definition from \cite{HL09}.  (One could also equivalently use the notion from Kothari--Meka~\cite{KM15}, again up to $2^{nk}$ factors.)
\begin{definition} \label{def:orthogonal-design}
  A distribution ${\cal P}$ on a finite subset of matrices from
  $\Ggp{n}$ is an \emph{$\eps$-approximate $k$-design for
    $\Ggp{n}$} if

  \begin{equation}
    \norm*{\E_{\bg \sim \cal P}[\rho_{2^n}^{k,k}(\bg)] - \E_{\bg \sim \Ggp{n}}[\rho_{2^n}^{k,k}(\bg)]}_1 \leq \eps
  \end{equation}
  (where $\norm*{\cdot}_1$ denotes the Schatten $1$-norm).
  We remark that the above condition is implied by the following operator-norm bound:
  \begin{equation}
    \opnorm{\E_{\bg \sim \cal P}[\rho_{2^n}^{k,k}(\bg)] - \E_{\bg \sim \Ggp{n}}[\rho_{2^n}^{k,k}(\bg)]} \leq \eps/2^{nk},
  \end{equation}
  and indeed we will establish our approximate design results by going through the operator norm.
\end{definition}

\ignore{Like mentioned before, the outcome of our analysis will provide
approximate designs for $\SUgp{2^n}$ and $\SOgp{2^n}$ instead of
$\Ugp{2^n}$ and $\Ogp{2^n}$. Later in this section we will show how to
pass from $\SUgp{2^n}$ and $\SOgp{2^n}$ to $\Ugp{2^n}$ and
$\Ogp{2^n}$, but first we note a related fact that will be relevant
later.

\begin{fact} \label{fact:su-to-u}
  For any $k < QQQ$ we have: \pnote{What is the right dependency on $k$? Also, who do we cite for this?}

  \begin{equation}
    \E_{\bg \sim \Ugp{2^n}}[\rho_{2^n}^{k,k}(\bg)] = \E_{\bg \sim \SUgp{2^n}}[\rho_{2^n}^{k,k}(\bg)].
  \end{equation}
\end{fact}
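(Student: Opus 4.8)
For a compact subgroup $\JustG \le \Ugp{2^n}$, the operator $\E_{\bg \sim \JustG}[\rho_{2^n}^{k,k}(\bg)]$ is the orthogonal projection onto the subspace $\{ v \in (\C^{2^n})^{\otimes 2k} : \rho_{2^n}^{k,k}(g)\,v = v \text{ for all } g \in \JustG \}$ of $\JustG$-invariant vectors; this is the standard averaging-projector fact (the operator is Hermitian and idempotent, consistent with the earlier remark that $\E_{\bg \sim \calP}[\rho(\bg)]$ has spectrum in $[-1,1]$), and a one-line citation to any representation theory text covers it. So it suffices to prove that $\Ugp{2^n}$ and $\SUgp{2^n}$ have the same space of $\rho_{2^n}^{k,k}$-invariants. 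The inclusion ``$\Ugp{2^n}$-invariants $\subseteq$ $\SUgp{2^n}$-invariants'' is immediate from $\SUgp{2^n} \le \Ugp{2^n}$, so the whole content is the reverse inclusion.

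For the reverse inclusion I would use the decomposition $\Ugp{2^n} = \SUgp{2^n} \cdot Z$, where $Z = \{\xi \Id : \xi \in \C,\ |\xi| = 1\}$ is the center of $\Ugp{2^n}$: given $u \in \Ugp{2^n}$, choose $\xi$ with $\xi^{2^n} = \det(u)$, so that $s \coloneqq \xi^{-1} u$ lies in $\SUgp{2^n}$ and $u = s \cdot (\xi\Id)$. The key point is that the center acts trivially in the \emph{mixed} tensor power: for any $\xi$ with $|\xi| = 1$,
\[
  \rho_{2^n}^{k,k}(\xi\Id) \;=\; (\xi\Id)^{\otimes k} \otimes \overline{(\xi\Id)}^{\otimes k} \;=\; \xi^{k}\,\overline{\xi}^{\,k}\,\Id \;=\; |\xi|^{2k}\,\Id \;=\; \Id .
\]
Hence, if $v$ is $\SUgp{2^n}$-invariant, then for every $u = s\cdot(\xi\Id) \in \Ugp{2^n}$ we get $\rho_{2^n}^{k,k}(u)\,v = \rho_{2^n}^{k,k}(s)\,\rho_{2^n}^{k,k}(\xi\Id)\,v = \rho_{2^n}^{k,k}(s)\,v = v$, so $v$ is $\Ugp{2^n}$-invariant. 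This gives the reverse inclusion, the two invariant subspaces coincide, and therefore so do the two projectors, which is exactly the claimed identity.

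I do not expect a real obstacle here: the argument is short and imposes \emph{no} constraint on $k$ --- it works for every $k \ge 1$, and the $k=0$ case is trivial --- so the placeholder ``$k < QQQ$'' in the statement should simply be dropped. The step requiring the most care is the trivial-looking cancellation $\xi^{k}\overline{\xi}^{\,k} = 1$, which is precisely what the \emph{mixed} representation $\rho_{2^n}^{k,k}$ buys us and is the reason no $k$-dependence appears. By contrast, if one instead worked with the holomorphic representation $g \mapsto g^{\otimes 2k}$, the analogous statement would be \emph{false} --- already for $2k = 2^n$, say --- since the totally antisymmetric subspace $\wedge^{2^n}\C^{2^n}$ is $\SUgp{2^n}$-invariant but transforms by $\det$ (nontrivially) under $\Ugp{2^n}$; this is presumably the source of the placeholder, but it is a non-issue for the representation actually used in the Fact.
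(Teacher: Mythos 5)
Your proof is correct, and your main observation is the right one: the placeholder ``$k < QQQ$'' should simply be removed, since the argument imposes no constraint on $k$ whatsoever. The paper never actually finalized this Fact --- it sits inside an \verb|\ignore{...}| block with a TODO --- but your reasoning matches the sketch the paper gives elsewhere, namely the footnote to Theorem~\ref{thm:sherman}, which says the $\SUgp{}$ and $\Ugp{}$ versions of $\Pi^{(m)}$ coincide ``because of the conjugation in the definition of $\rho^{k,k}_{2^n}$.'' Your proof supplies exactly the content behind that remark: decompose $u = s\cdot(\xi\Id)$ with $s\in\SUgp{2^n}$ and $\xi^{2^n}=\det u$, note $\rho^{k,k}_{2^n}(\xi\Id)=\xi^k\overline{\xi}^{\,k}\Id=\Id$, and conclude the $\SUgp{}$-invariants and $\Ugp{}$-invariants coincide, hence so do the averaging projectors.

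Where the paper \emph{does} carry out this $\SUgp{}\to\Ugp{}$ passage in earnest (the Corollary restating Theorem~\ref{thm:main}) it takes a different route: it identifies the kernels $K_\mathrm{U}=\{e^{i\alpha}\Id\}$ and $K_\mathrm{SU}=\{\pm\Id\}$ of $\rho^{k,k}_{2^n}$, observes $\Ugp{2^n}/K_\mathrm{U}\cong\SUgp{2^n}/K_\mathrm{SU}\cong\mathrm{PU}(2^n)$, and invokes push-forward of Haar measure under the quotient map. That argument has the advantage of directly transferring the \emph{approximate-design} property (not just the exact Haar averages, which is all the Fact asserts) from $\SUgp{}$ to $\Ugp{}$ in one stroke. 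Your argument is more elementary --- no quotient groups, no measure push-forward, just the explicit center decomposition and the cancellation $\xi^k\overline{\xi}^{\,k}=1$ --- and is arguably more transparent about \emph{why} the mixed representation $\rho^{k,k}$ is essential; your closing remark that the analogous claim fails for the holomorphic representation $g\mapsto g^{\otimes 2k}$ (e.g.\ via $\wedge^{2^n}\C^{2^n}$) is exactly the right cautionary example and probably the concern that led to the ``$QQQ$'' placeholder in the first place. Both routes are valid; the paper's is better suited to the design-transfer application, while yours is the cleaner proof of the Fact as literally stated.
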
}

Often we will study the operator $\E_{\bg \sim \calP}[\rho^k(\bg)]$
through its ``Laplacian'', which we define as follows:

\begin{definition} \label{def:LW}
  We define the ``Laplacian''
    
  \begin{equation}
    L_{\calP}(\rho) = \Id - \E_{\bg \sim \calP}[\rho(\bg)],
  \end{equation}
  a self-adjoint (since~$\calP$ is symmetric) operator satisfying the
  following inequalities (in the PSD order):
  \begin{equation}    \label[ineq]{ineq:2}
    0 \leq L_{\calP}(\rho) \leq 2 \cdot \Id.
  \end{equation}
\end{definition}

\begin{notation}    \label{not:P}
  In the preceding definition, we abuse notation as follows: In place
  of~$\calP$ we may write a finite (multi)set~$P \subset \Ggp{n}$,
  in which case the uniform distribution on~$P$ is understood.  We may
  also write ``$\Ggp{n}$'' in place of~$\calP$, in which case the
  uniform (Haar) distribution is understood.  Finally, if $\calP$ now
  denotes a distribution on~$\Ggp{\ell}$, and $E \subseteq
  [n]_\ell$, we write $\calP \times E$ for the distribution
  on~$\Ggp{n}$ given by choosing $\bg \sim \calP$, independently
  choosing $\be \sim E$ (uniformly), and finally forming~$\bg_{\be}$.
\end{notation}

\begin{definition} \label{def:twiddle}
  Given a symmetric probability distribution $\calP$ as in
  \Cref{def:LW}, we define its ``lazy'' version, $\widetilde{\cal P}$,
  to be the distribution which is an equal mixture of ${\cal P}$ and
  the point distribution supported on the identity element $\Id$ (note
  that $\widetilde{\cal P}$ is also a symmetric distribution).
  Similar to \Cref{def:LW}, we have that $\E_{\bg \sim \widetilde{\cal
      P}}[\rho(\bg)]$ is a Hermitian operator but now with real
  eigenvalues lying in $[0,1]$, and we have the PSD inequalities
  \begin{equation}    \label[ineq]{ineq:2twiddle}
    0 \leq L_{\widetilde{\calP}}(\rho) \leq \Id.
  \end{equation}
\end{definition}

\begin{fact}
  In the setting of \Cref{def:LW,def:twiddle}, $L_{\Ggp{n}}(\rho)$
  is an orthogonal projection operator, and for any symmetric $\calP$
  we have that
  \begin{equation}
    \ker L_{\Ggp{n}}(\rho) \subseteq \ker L_{\calP}(\rho)
  \end{equation}
  always holds (because for \emph{every}~$g_0$ in the support
  of~$\calP$ we have $\rho(g_0) \Pi = \Pi$, where $\Pi = \E_{\bg \sim
    \Ggp{n}}[\rho(\bg)]$).  From this, and
  \Cref{ineq:2,ineq:2twiddle}, we also get
    \begin{equation}    \label[ineq]{ineq:fantasy}
      L_{\Ggp{n}}(\rho) \geq \tfrac12 \cdot L_{\calP}(\rho),
    \end{equation}
    \begin{equation} \label[ineq]{ineq:fantasytwiddle}
      L_{\Ggp{n}}(\rho) \geq L_{\widetilde{\calP}}(\rho).
    \end{equation}
\end{fact}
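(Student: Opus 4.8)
The plan is to verify the three claims in sequence; all of them flow from bi-invariance of the Haar measure together with unitarity of $\rho$, and the last step will collapse \Cref{ineq:fantasy} and \Cref{ineq:fantasytwiddle} into a single operator inequality. \emph{Step 1: $\Pi \coloneqq \E_{\bg \sim \Ggp{n}}[\rho(\bg)]$ is an orthogonal projection.} Since $\rho$ is unitary, $\rho(\bg)^\dagger = \rho(\bg^{-1})$, and since inversion preserves Haar measure, $\Pi^\dagger = \E_{\bg}[\rho(\bg^{-1})] = \Pi$ (this is the same computation that makes $\E_{\bg\sim\calP}[\rho(\bg)]$ self-adjoint for symmetric $\calP$, as used in \Cref{def:LW}). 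For any fixed $h \in \Ggp{n}$, left-invariance of Haar measure gives $\rho(h)\Pi = \E_{\bg}[\rho(h\bg)] = \Pi$ and right-invariance gives $\Pi\rho(h) = \Pi$. Averaging the first identity over $h \sim \Ggp{n}$ yields $\Pi^2 = \Pi$, so $\Pi$ is a self-adjoint idempotent, i.e. an orthogonal projection, and hence so is the complementary operator $L_{\Ggp{n}}(\rho) = \Id - \Pi$.

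\emph{Step 2: $\ker L_{\Ggp{n}}(\rho) \subseteq \ker L_{\calP}(\rho)$.} Because $\Pi$ is a projection, $\ker L_{\Ggp{n}}(\rho) = \ker(\Id - \Pi) = \Img\,\Pi$. Fix $v$ with $\Pi v = v$. Every $g_0$ in the support of $\calP$ lies in $\Ggp{n}$, so by Step 1 we have $\rho(g_0)\Pi = \Pi$, whence $\rho(g_0) v = v$. Taking the expectation over $\bg \sim \calP$ gives $\E_{\bg\sim\calP}[\rho(\bg)]\,v = v$, i.e. $L_{\calP}(\rho) v = 0$ --- precisely the reasoning indicated parenthetically in the statement.

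\emph{Step 3: the PSD inequalities.} Write $M = \E_{\bg\sim\calP}[\rho(\bg)]$ and $\widetilde M = \E_{\bg\sim\widetilde\calP}[\rho(\bg)]$; since $\widetilde\calP$ is the equal mixture of $\calP$ with the point mass at $\Id$ (\Cref{def:twiddle}) and $\rho(\Id) = \Id$, we have $\widetilde M = \tfrac12(\Id + M)$, so that $L_{\widetilde\calP}(\rho) = \Id - \widetilde M = \tfrac12(\Id - M) = \tfrac12 L_{\calP}(\rho)$. Thus \Cref{ineq:fantasy} and \Cref{ineq:fantasytwiddle} assert the same thing, namely $\Pi \le \widetilde M$. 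From Step 1, $M\Pi = \E_{\bg\sim\calP}[\rho(\bg)\Pi] = \Pi$ and symmetrically $\Pi M = \Pi$; hence $\widetilde M \Pi = \Pi \widetilde M = \Pi$, so $\widetilde M$ commutes with $\Pi$ and with $\Id - \Pi$. Therefore $\widetilde M - \Pi = \widetilde M(\Id - \Pi)$ is a product of two commuting positive semidefinite operators --- $\widetilde M \succeq 0$ by \Cref{ineq:2twiddle}, and $\Id - \Pi$ is an orthogonal projection --- and is consequently positive semidefinite. This yields $\Pi \le \widetilde M$, i.e. $L_{\Ggp{n}}(\rho) = \Id - \Pi \ge \Id - \widetilde M = L_{\widetilde\calP}(\rho) = \tfrac12 L_{\calP}(\rho)$, which is both \Cref{ineq:fantasytwiddle} and \Cref{ineq:fantasy}.

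None of this is hard; the statement is essentially bookkeeping about Haar averaging. The only point that deserves a moment's attention is the positive-semidefiniteness in Step 3: the useful observation is that $\widetilde M$ and $\Pi$ commute (because both one-sided products $\widetilde M\Pi$ and $\Pi\widetilde M$ collapse to $\Pi$), which lets one write $\widetilde M - \Pi$ as a product of commuting PSD operators rather than having to diagonalize anything. One should also stay mindful of where symmetry of $\calP$ is genuinely used --- only to know that $M$ and $\widetilde M$ are self-adjoint, which is what makes the PSD ordering meaningful and underlies \Cref{ineq:2,ineq:2twiddle}.
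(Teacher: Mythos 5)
Your proof is correct and follows essentially the same route the paper's hint points toward: show $\Pi$ is an orthogonal projection by Haar invariance, deduce $M\Pi = \Pi M = \Pi$ from $\rho(g_0)\Pi = \Pi$, and combine this with the operator bound in \Cref{ineq:2twiddle} to get the PSD inequality. The one pleasant streamlining you add is observing that $\widetilde{M} = \tfrac12(\Id + M)$ forces $L_{\widetilde{\calP}}(\rho) = \tfrac12 L_{\calP}(\rho)$, so \Cref{ineq:fantasy} and \Cref{ineq:fantasytwiddle} are literally the same inequality; your ``commuting PSD operators have PSD product'' step is algebraically equivalent to the conjugation $(\Id - \Pi)\bigl(\Id - \tfrac12 L_{\calP}(\rho)\bigr)(\Id - \Pi) \ge 0$ that the paper intends.
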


As \Cref{ineq:fantasy,ineq:fantasytwiddle} contain a surfeit of
symbols, one may wish to read them respectively as
\begin{equation}    \label[ineq]{ineq:rando}
  \text{``(randomizing $n$ qubits)} \geq \tfrac12 \cdot\text{($\calP$-pseudorandomizing $n$ qubits)} \quad \text{[vis-a-vis~$\rho$]''},
\end{equation}
\begin{equation} \label[ineq]{ineq:randotwiddle}
  \text{``(randomizing $n$ qubits)} \geq \text{($\widetilde{\calP}$-pseudorandomizing $n$ qubits)} \quad \text{[vis-a-vis~$\rho$]''},
\end{equation}
with the ``''$\geq \frac12 \cdot$'' part pronounced ``is at least
$\frac12$ as good as''.

It will be convenient to use the Laplacian operator in some of the
steps in the following sections, even though we ultimately want
statements about the expectation operator. To convert between the two
we will use the following:

\begin{fact}
  For any unitary representation $\rho$, $L_{\calP}(\rho) \leq \eps \cdot
  L_{\Ggp{n}}(\rho)$ is equivalent to
  
  \begin{equation}
    \opnorm{\E_{\bg \sim \cal P}[\rho(\bg)] - \E_{\bg \sim \Ggp{n}}[\rho(\bg)]} \leq \eps.
  \end{equation}
\end{fact}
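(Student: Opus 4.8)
The plan is to reduce the comparison to the one subspace that matters, namely $W$, the space of $\rho(\Ggp{n})$-invariant vectors, equivalently the range of the projection $\Pi \coloneqq \E_{\bg \sim \Ggp{n}}[\rho(\bg)]$. By the preceding Fact, $L_{\Ggp{n}}(\rho) = \Id - \Pi$ is an orthogonal projection, so $\Pi$ projects orthogonally onto $W$ and $L_{\Ggp{n}}(\rho)$ onto $W^\perp$. Write $M \coloneqq \E_{\bg \sim \calP}[\rho(\bg)]$, a self-adjoint contraction (since $\calP$ is symmetric and $\rho$ unitary). The first step is to observe that $M$ is block diagonal with respect to $W \oplus W^\perp$ and acts as $\Id$ on $W$: this is exactly where the preceding Fact enters, since $\rho(g_0)\Pi = \Pi$ for every $g_0$ in the support of $\calP$ gives $M\Pi = \Pi$ upon averaging, and then $\Pi M = (M\Pi)^\dagger = \Pi$ using $M = M^\dagger$ and $\Pi = \Pi^\dagger$; hence $M$ commutes with $\Pi$, preserves $W$ and $W^\perp$, and restricts to $\Id$ on $W$.

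Consequently both operators in the claim split along $W \oplus W^\perp$ with vanishing $W$-block: $L_{\calP}(\rho) = \Id - M$ is $0$ on $W$, and the ``defect'' $D \coloneqq \E_{\bg \sim \calP}[\rho(\bg)] - \E_{\bg \sim \Ggp{n}}[\rho(\bg)] = M - \Pi$ is $0$ on $W$; while on $W^\perp$, $L_{\Ggp{n}}(\rho)$ is the identity, $L_{\calP}(\rho)$ restricts to $\Id - M|_{W^\perp}$, and $D$ restricts to $M|_{W^\perp}$. Hence $\opnorm{D} = \opnorm{M|_{W^\perp}}$, and the remaining step is a one-line eigenvalue chase: letting $\mu$ range over the spectrum of the self-adjoint operator $M|_{W^\perp}$, the bound $\opnorm{D} \leq \eps$ says every $\mu$ lies in $[-\eps,\eps]$, equivalently every eigenvalue $1-\mu$ of $L_{\calP}(\rho)|_{W^\perp}$ lies in $[1-\eps,1+\eps]$, equivalently --- re-attaching the zero $W$-block and using that $L_{\Ggp{n}}(\rho)$ is $\Id$ on $W^\perp$ --- that $(1-\eps)\,L_{\Ggp{n}}(\rho) \preceq L_{\calP}(\rho) \preceq (1+\eps)\,L_{\Ggp{n}}(\rho)$ in the PSD order. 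This two-sided comparison is the content of the Laplacian inequality being invoked; its lower half is equivalent to $M|_{W^\perp} \preceq \eps\,\Id$ and its upper half to $M|_{W^\perp} \succeq -\eps\,\Id$, and both halves are needed to recover $\opnorm{M|_{W^\perp}} \leq \eps$ (when $\calP$ is ``lazy'', so that $M \succeq 0$, the upper half is automatic and only the lower half carries content).

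I do not expect a genuine obstacle here: the whole thing is a routine block-diagonalization once $M\Pi = \Pi M = \Pi$ is in hand, which the preceding Fact supplies. The only point I would be careful about is that the conversion really passes through the $(1\pm\eps)$-spectral sandwich rather than a bare one-sided inequality; for instance $M|_{W^\perp} = (1-\delta)\,\Id$ gives $L_{\calP}(\rho) = \delta\,L_{\Ggp{n}}(\rho) \preceq \eps\,L_{\Ggp{n}}(\rho)$ for every $\eps \geq \delta$, yet $\opnorm{D} = 1-\delta$, so ``$L_{\calP}(\rho) \preceq \eps\,L_{\Ggp{n}}(\rho)$'' by itself is far too weak and it is the two-sided comparison that is equivalent to $\opnorm{D}\leq\eps$. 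With this caveat noted, the proof is immediate from the block decomposition above.
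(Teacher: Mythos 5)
The paper states this Fact without proof, so there is nothing to compare your argument against; your block-diagonalization via $M\Pi = \Pi M = \Pi$ (where $M = \E_{\bg\sim\calP}[\rho(\bg)]$, $\Pi = \E_{\bg\sim\Ggp{n}}[\rho(\bg)]$, $W = \Img\Pi$) is the natural and correct route. But your closing caveat is not a minor caution --- it is a genuine correction of the Fact as written, and you should say so plainly rather than hedge. On $W^\perp$, the one-sided condition $L_{\calP}(\rho) \leq \eps L_{\Ggp{n}}(\rho)$ unwinds to $M|_{W^\perp} \geq (1-\eps)\Id$, while $\opnorm{M-\Pi}\leq\eps$ unwinds to $-\eps\Id \leq M|_{W^\perp} \leq \eps\Id$; for $\eps < 1/2$ these two conditions are mutually exclusive unless $W^\perp = 0$, and for no $\eps\in(0,1)$ does either imply the other. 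Your example $M|_{W^\perp}=(1-\delta)\Id$ is realized concretely by the $\delta$-lazy Haar walk on $\Ggp{n}$ (with probability $\delta$ take a Haar-random step, else stay put), which has $L_{\calP}(\rho) = \delta\, L_{\Ggp{n}}(\rho) \leq \eps L_{\Ggp{n}}(\rho)$ for all $\eps\geq\delta$ while $\opnorm{M-\Pi} = 1-\delta$. So the literal equivalence claimed in the Fact is false. What your argument correctly establishes is
\begin{equation}
\opnorm{M-\Pi}\leq\eps \quad\Longleftrightarrow\quad (1-\eps)\,L_{\Ggp{n}}(\rho) \ \leq\ L_{\calP}(\rho) \ \leq\ (1+\eps)\,L_{\Ggp{n}}(\rho),
\end{equation}
and, as you also note, for lazy $\calP$ (so that $M\geq 0$) the upper inequality is automatic, making the one-sided $L_{\calP}(\rho) \geq (1-\eps)L_{\Ggp{n}}(\rho)$ genuinely equivalent to $\opnorm{M-\Pi}\leq\eps$. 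That lazy one-sided form is exactly what the paper uses when it converts the Laplacian bound for $\widetilde{P}_{n_0}$ in \Cref{thm:XXX} into the operator-norm bound of \Cref{thm:ini-gap-sou}, so the intended content of the Fact is that version, and your proof delivers it.
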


\subsection{Initial spectral gaps for $S_N$, $\SOgp{N}$ and $\SUgp{N}$} \label{ssec:ini-gaps}

Here we summarize all of the non-trivial spectral gaps that we will
amplify using \Cref{thm:abbrev-derando-walks}.

\begin{theorem}[\cite{BH08}] \label{thm:ini-gap-perm}
  For any $k \geq 1$, there is a (multi)set $P_{S_{2^n}}$ of cardinality
  $O(n^3)$ such that
  \begin{equation}
    \opnorm{\E_{\bg \sim P_{S_{2^n}}}[\calW^k_{2^n}(\bg)] - \E_{\bg \sim S_{2^n}}[\calW^k_{2^n}(\bg)]} \leq 1 - \frac{1}{\wt{O}(k^2 n^2)}.
  \end{equation}
\end{theorem}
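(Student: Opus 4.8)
This is the theorem of \cite{BH08}, sharpening \cite{Gow96,HMMR05}; here is the route I would take. First I would fix $P_{S_{2^n}}$ to be a set of \emph{simple} permutations of $\{0,1\}^n$: for every triple $e\in\binom{[n]}{3}$ of qubits and every $\sigma\in\Sgp{\{0,1\}^3}$, include the permutation $\sigma_e$ (apply $\sigma$ to the three bits indexed by $e$, identity on the rest). Then $\abs{P_{S_{2^n}}}=\binom n3\cdot 8!=O(n^3)$, this multiset is closed under inverses (each $\Sgp{\{0,1\}^3}$ is a group), and it contains the identity. By construction $\E_{\bg\sim P_{S_{2^n}}}[\calW^k_{2^n}(\bg)]$ is precisely the transition operator of the random walk on the set $[2^n]_{(k)}$ of ordered $k$-tuples of \emph{distinct} hypercube points in which $k$ labelled ``particles'' are moved by a uniformly random simple permutation at each step (collisions are impossible since each $\sigma_e$ is a bijection of $\{0,1\}^n$). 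Since $S_{2^n}$ acts transitively on $[2^n]_{(k)}$, the Haar operator $\E_{\bg\sim S_{2^n}}[\calW^k_{2^n}(\bg)]$ equals the orthogonal projection onto the constants, so the claim reduces to two statements about the spectrum of this walk: (i)~its second-largest eigenvalue is at most $1-1/\wt{O}(k^2n^2)$; and (ii)~its smallest eigenvalue is at least $-1+\Omega(1)$. Statement (ii) is free: with probability $1/8!=\Omega(1)$ a step is the identity, so the chain has $\Omega(1)$ holding probability and no eigenvalue near $-1$. Everything therefore reduces to the spectral-gap bound (i).

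For (i) I would follow Gowers's strategy of \emph{peeling off one qubit at a time}, turning the gap into a recursion in $n$; write $\gamma_n$ for the spectral gap of the $k$-particle walk on $\{0,1\}^n$. Singling out qubit $n$: the $\Theta(1/n)$ fraction of steps whose triple $e$ contains qubit $n$ serve both to mix the particles' $n$-th bits and to ``shuffle within'' the two sub-cubes $\{0,1\}^{n-1}$ it cuts out, while the remaining steps run, inside each sub-cube, the $k'$-particle simple-permutation walk on $\{0,1\}^{n-1}$ (for the various occupancies $k',k-k'$). Feeding this block structure into a Markov-chain decomposition inequality for Dirichlet forms lower-bounds $\gamma_n$ in terms of $\gamma_{n-1}$ and the gap of a small ``base'' process that reshuffles the particles across the two sub-cubes, and unwinding over all $n$ qubits yields $\gamma_n=1/\poly(n,k)$. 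Extracting the sharp $1/\wt{O}(k^2n^2)$ then requires tracking the constants carefully through the decomposition and using tight estimates for the base process; heuristically, the ``hard'' degrees of freedom are the $\Theta(k^2)$ pairwise Hamming distances among the particles, each of which relaxes like a birth--death chain on $\{0,\dots,n\}$ in $\Theta(n^2)$ steps, which is the origin of the $k^2n^2$ (with the logarithmic factors coming from the bookkeeping).

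The genuine obstacle --- and the reason \cite{Gow96,HMMR05,BH08} are nontrivial --- is the \emph{interaction between particles}: $[2^n]_{(k)}$ is not a product of $k$ single-particle spaces, because the particles must stay distinct, so when two of them sit at nearby vertices the dynamics is a true two-body interaction and the naive bound ``$k$ independent one-particle walks'' is simply false (it discards the constraint). Making the recursion honest thus requires (a)~a lower bound on the base process's gap that stays under control when the particles are distributed very unevenly across the two sub-cubes (one nearly empty, say), and (b)~ensuring the decomposition inequality does not degrade in that regime. This is exactly where $3$-locality of the generators is needed rather than $2$-locality: $\Sgp{\{0,1\}^3}$ is large enough to pull apart a pair of particles occupying a single edge of the hypercube, whereas a $2$-qubit symmetric group is not. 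Since the statement is invoked verbatim from \cite{BH08}, in the actual write-up I would simply cite it; the above is the conceptual picture.
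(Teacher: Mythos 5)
Your plan agrees with the paper at the level of what the paper actually does: it invokes \cite{BH08} as a black box, as you also ultimately propose to do, and the heuristic ``interacting-particles'' picture you sketch is not carried out anywhere in this paper. The one substantive discrepancy is your choice of the generating set $P_{S_{2^n}}$. You take the set of \emph{all} $3$-local permutations, i.e.\ $\sigma_e$ for every triple $e\in\binom{[n]}{3}$ and every $\sigma\in\Sgp{\{0,1\}^3}$. The set the paper specifies (and the one \cite{Gow96,HMMR05,BH08} actually analyze) is the strictly smaller set of \emph{simple $3$-bit permutations}: the maps $f_{i,j_1,j_2,h}$ sending $(x_1,\dots,x_n)$ to itself with coordinate $i$ replaced by $x_i\oplus h(x_{j_1},x_{j_2})$, ranging over distinct $i,j_1,j_2\in[n]$ and arbitrary $h:\{0,1\}^2\to\{0,1\}$. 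Both sets have cardinality $O(n^3)$, but they are different objects, so you cannot literally cite the $1-1/\wt{O}(k^2n^2)$ bound of \cite{BH08} for yours; it is quite plausible for your larger set as well, but that would require a separate comparison argument rather than a direct citation. Switching the definition of $P_{S_{2^n}}$ to the simple $3$-bit permutations makes the citation exact. (A minor downstream consequence: with that set, the $\Omega(1)$ holding probability you use to rule out eigenvalues near $-1$ comes from $h\equiv 0$, which has probability $1/16$, rather than from $\sigma=\mathrm{id}$ with probability $1/8!$; either suffices.)
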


Recall that the representation~$\calW^k_{2^n}$ is defined on $g \in
S_{2^n}$ via $\calW^k_{2^n}(g)\ket{i_1 \cdots i_k} = \ket{g(i_1)
  \cdots g(i_k)}$. The set $P_{S_{2^n}}$ mentioned above is the set of
``simple 3-bit permutations''. This is the set of permutations
$f_{i,j_1,j_2,h}$, where $i, j_1, j_2 \in [n]$ are all distinct, and
$h$ is a Boolean function on $\{0, 1\}^2$, which maps $(x_1, \ldots,
x_n) \in \{0, 1\}^n$ to $(x_1, \ldots, x_{i-1}, x_i \oplus h(x_{j_1},
x_{j_2}), x_{i+1}\ldots, x_n)$.
  
  We establish the following in
\Cref{sec:ini-gap,sec:large-m,sec:small-m}.

\begin{theorem}[\Cref{thm:XXX} restated] \label{thm:ini-gap-sou}
  For $\Ggp{n} \in \{\SOgp{2^n}, \SUgp{2^n}\}$, and any $k \geq 1$, there is a (multi)set $P_G$ of cardinality
  $O(n^2)$ such that
  \begin{equation}
    \opnorm{\E_{\bg \sim P_G}[\rho_{2^n}^{k,k}(\bg)] - \E_{\bg \sim \Ggp{n}}[\rho_{2^n}^{k,k}(\bg)]} \leq 1 - \frac{1}{n \cdot \poly(k)}.
  \end{equation}
\end{theorem}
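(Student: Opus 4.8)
The plan is to argue throughout in the Laplacian picture of \Cref{def:LW}. By the operator-norm/Laplacian equivalence recorded above, the claimed bound for $\Ggp{n}\in\{\SOgp{2^n},\SUgp{2^n}\}$ is equivalent to the spectral-gap statement
\[
  L_{P_G}(\rho^{k,k}_{2^n}) \;\geq\; \frac{1}{n\cdot\poly(k)}\cdot L_{\Ggp{n}}(\rho^{k,k}_{2^n}),
\]
and since $L_{\Ggp{n}}(\rho^{k,k}_{2^n}) = \Id-\Pi$ is the orthogonal projection onto the orthocomplement of the Haar-invariant subspace $\Img\Pi$ (where $\Pi=\E_{\bg\sim\Ggp{n}}[\rho^{k,k}_{2^n}(\bg)]$), it suffices to lower bound every nonzero eigenvalue of $L_{P_G}(\rho^{k,k}_{2^n})$ by $1/(n\poly(k))$. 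Two preliminary remarks. First, $P_G$ is, up to closure under inverses (which makes $L_{P_G}$ self-adjoint) and under negation (which is immaterial for $\rho^{k,k}$, as $\rho^{k,k}(-g)=\rho^{k,k}(g)$), the union over all single qubits of the gate $\mathrm{Q}$ (resp.\ of $\mathrm{H},\mathrm{S},\mathrm{T}$) together with $\mathrm{CNOT}$ over all ordered pairs of qubits. Since $\arctan(4/3)$ is an irrational multiple of $\pi$ (resp.\ by the standard universality of $\{\mathrm{H},\mathrm{S},\mathrm{T}\}$), these gates generate a dense — hence, by connectedness, the entire — subgroup $\Ggp{n}$; consequently the common $(+1)$-eigenspace of $\{\rho^{k,k}_{2^n}(g):g\in P_G\}$ equals $\Img\Pi$, so the gap is positive and the whole content of the theorem is quantitative. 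Second, we treat the orthogonal and unitary cases in parallel: for $\SOgp{2^n}$ the representation $\rho^{k,k}$ coincides with $\rho^{2k}$, and we will not distinguish $k$ from $2k$.

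Following the strategy of \cite{HH21}, the first real step (carried out in \Cref{sec:ini-gap}) is a \emph{locality reduction}. Fix a block size $m=m(k)$, to be taken of order $\log k$, and introduce a ``block-Haar'' walk on $\Ggp{n}$ whose single step picks a uniformly random window $W$ of $m$ consecutive qubits (cyclically) and applies a Haar-random element of $\Ggp{m}$ inside $W$. We bound $L_{P_G}(\rho^{k,k}_{2^n})$ from below by a product of two factors. (i) Because the base gates \emph{confined to a single window} already generate a dense subgroup of $\Ggp{m}$, the $P_G$-Laplacian on $n$ qubits dominates the block-Haar Laplacian up to exactly the spectral gap of the \emph{auxiliary $m$-qubit random walk} — the walk on $\Ggp{m}$ whose step is one of the $\mathrm{Q}$/$\mathrm{CNOT}$ (resp.\ $\mathrm{H},\mathrm{S},\mathrm{T}$/$\mathrm{CNOT}$) gates placed inside $m$ qubits — evaluated at $\rho^{k,k}_{2^m}$. (ii) The operator $\frac1n\sum_W(\Id-\Pi_W)$, where $\Pi_W$ projects onto the subspace left invariant by Haar on $W$, is a frustration-free, translation-invariant one-dimensional local Hamiltonian with ground space $\Img\Pi$ (again by density/connectedness) and with consecutive windows overlapping in $m-1$ qubits; a detectability-lemma/martingale argument in the spirit of \cite{BHH16} then yields a gap of order $1/n$ for the block-Haar walk. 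Multiplying, $L_{P_G}(\rho^{k,k}_{2^n})\geq \frac{\gamma_{m,k}}{n}\cdot(\Id-\Pi)$, where $\gamma_{m,k}$ is the auxiliary walk's gap; so it remains to show $\gamma_{m,k}\ge 1/\poly(m,k)$, which for $m=\Theta(\log k)$ is $1/\poly(k)$.

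This last bound, the crux of \Cref{thm:ini-gap-sou}, is established in \Cref{sec:large-m} and \Cref{sec:small-m}, and — as in \cite{HH21} — its proof bifurcates according to whether $k$ is small or large relative to $m$. The analysis is governed by the invariant theory of $\rho^{k,k}_{2^m}$: the relevant commutant is spanned by diagram operators of the Brauer algebra on $2k$ strands (orthogonal case) or the walled Brauer algebra on $(k,k)$ strands (unitary case) with loop parameter $2^m$, and one must control the overlap of the projections $\Pi_W$ attached to two windows sharing $m-1$ qubits in terms of these operators. When $k$ is small compared to $m$ (so $2^m$ comfortably exceeds $\poly(k)$) the diagram operators are linearly independent, the algebra is semisimple of the expected dimension, and the overlap can be estimated essentially combinatorially, losing only $\poly(m,k)$; this is \Cref{sec:large-m}. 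When $k$ is large compared to $m$ — the operative regime, since $m=\Theta(\log k)\ll k$ — linear dependencies among the diagram operators appear, the naive dimension counts fail, and one needs a considerably more careful argument that tracks which dependencies can arise and bounds their effect on the smallest nonzero eigenvalue; this is \Cref{sec:small-m}. I expect this ``large $k$'' case to be the principal obstacle: it is precisely where the representation-theoretic bookkeeping degenerates, and where care is required to keep the loss polynomial — rather than exponential — in $k$. Assembling the two regimes, setting $m=\Theta(\log k)$, combining with the locality reduction, and finally reading the orthogonal case through the identification $\rho^{k,k}\equiv\rho^{2k}$ completes the proof.
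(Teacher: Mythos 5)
Your high-level skeleton — locality reduction, then a bifurcation into "large~$m$" and "small~$m$" regimes following~\cite{HH21} — matches the paper's structure, but the mechanism you describe diverges from the paper's proof at almost every joint, and one of those divergences is a genuine gap.

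The paper's locality reduction (\Cref{lem:iter}) is a direct PSD-order recurrence: one shows that Haar-randomizing a random $(m-1)$-subset of $m$ qubits is at least a $\tau_{k,m}$-fraction as good (vis-\`a-vis $\rho^{k,k}$) as Haar-randomizing all $m$ qubits, and iterates this from $m=n$ all the way \emph{down to a fixed constant} $n_0=4$. There is no block-Haar walk on consecutive windows of size $\Theta(\log k)$ and no detectability lemma; the $1/n$ factor arises purely from telescoping the large-$m$ bounds $\tau_{k,m}\geq 1-\frac1m-o(1)$ over $m>\Theta(\log k)$, and the $1/\poly(k)$ factor from a \emph{constant} lower bound $\tau_{k,m}\geq .04$ over the $\Theta(\log k)$ values $m\leq\Theta(\log k)$. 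You also misidentify what \Cref{sec:large-m} and \Cref{sec:small-m} are proving: both sections bound the \emph{Haar-to-Haar} comparison $\tau_{k,m}$, not the spectral gap of a finite-gate auxiliary walk. In particular, \Cref{sec:small-m} is \emph{not} a more careful Brauer-algebra analysis in the degenerate regime $k\gg 2^m$; it abandons diagram operators entirely and instead proves a Wasserstein local-contraction estimate and invokes Oliveira's local-to-global theorem for compact Lie groups. The finite-gate set enters only once, at the bottom of the recurrence (\Cref{lem:statement-G-new}), where Benoist--de~Saxc\'e gives a spectral gap for $\widetilde P_{n_0}$ on the \emph{constant}-dimensional group $\Ggp{n_0}$, uniformly in the representation $\rho^{k,k}$.

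The genuine gap in your outline is precisely at the point where you must quantify the auxiliary $m$-qubit walk for $m=\Theta(\log k)$. You observe that the gates in a window generate a dense subgroup of $\Ggp{m}$ and treat the quantitative gap as a follow-on bookkeeping problem, but density only gives that the gap is \emph{positive}. The only known general tool for a $k$-uniform lower bound is the Bourgain--Gamburd/Benoist--de~Saxc\'e spectral gap for the regular representation, and the constant there depends on the ambient group in an uncontrolled way; taking $m=\Theta(\log k)$ makes that constant implicitly $k$-dependent, which destroys the claimed $1/\poly(k)$ bound. The paper sidesteps this exactly by never invoking Bourgain--Gamburd above dimension $n_0=4$: the recurrence \Cref{lem:iter} pushes the window size all the way down to a universal constant before the finite gate set appears. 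Without this descent (or a substitute argument controlling the finite-gate gap at dimension $\Theta(\log k)$ uniformly in $k$), the proposal does not close.
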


The sets $P_G$ for $\SOgp{2^n}$ and $\SUgp{2^n}$ are described in \Cref{sec:gateset}.

Our proof of \Cref{thm:ini-gap-sou} is itself a general framework that
could potentially be used to obtain similar results for other
subgroups of the unitary group (for example, the sympletic group),
even though we only carry out the calculations for $\SOgp{2^n}$ and
$\SUgp{2^n}$. 
%However, it is known that our proof strategy will fail for some groups, for example, the results in sections \Cref{sec:large-m,sec:small-m} break when trying to adapt them to $S_{2^n}$.

\subsection{Explicit $k$-wise independent permutations, orthogonal designs, and unitary designs}

We can finally apply \Cref{thm:abbrev-derando-walks}, so let's write
our above results in the notation of this theorem. Fix $k \geq 1$ and
consider any of the $P$ (multi)sets described in
\Cref{thm:ini-gap-perm,thm:ini-gap-sou}.

Let ${\cal U} = (\rho(g) - \E_{\bg \sim \Ggp{n}}[\rho(\bg)] : g \in
P)$ be a sequence of unitaries, where $\rho$ is the appropriate
unitary representation ($\calW^k_{2^n}$ for $S_{2^n}$ and $\rho^{k,
  k}_{2^n}$ for $\SOgp{2^n}$ and $\SUgp{2^n}$). For this choice of
${\cal U}$ we have $c = |P| = \poly(n)$. Notice that $\opnorm{\avg_{i
    \in [c]}\{U_i\}}$ is exactly the left hand side of the equations
in \Cref{thm:ini-gap-perm,thm:ini-gap-sou}, so we know that this
average is at most $1 - \delta$ for $\delta = 1 / \poly(n, k)$ (as
observed, this is actually $1 /\wt{O}(k^2 n^2)$ for $S_{2^n}$ and $1 /
(n \poly(k))$ for $\SOgp{2^n}$ and $\SUgp{2^n}$). Given $\eps > 0$,
applying \Cref{thm:abbrev-derando-walks} (with its ``$\eps$'' parameter set to $\eps/2^{nk}$) we obtain a sequence ${\cal
  P}$ of cardinality $\poly(2^{nk} / \eps)$ that satisfies
$\opnorm{\avg_{U \in {\cal P}}\{U\}} \leq \eps$. Additionally, $U \in
{\cal P}$ is a product of at most $\poly(nk)\log(1/\eps)$ elements of
${\cal U}$, and so it can be written as $\rho(g) - \E_{\bg \sim
  \Ggp{n}}[\rho(\bg)]$, where $g$ is a product of at most
$\poly(nk)\log(1/\eps)$ elements of $P$. This follows since for any
$g, g' \in P$,

\begin{equation}
  \parens*{\rho(g) - \E_{\bg \sim \Ggp{n}}[\rho(\bg)]} \parens*{\rho(g') - \E_{\bg \sim \Ggp{n}}[\rho(\bg)]} = \parens*{\rho(g \cdot g') - \E_{\bg \sim \Ggp{n}}[\rho(\bg)]},
\end{equation}

\noindent where we use the fact that $\E_{\bg \sim
  \Ggp{n}}[\rho(\bg)]$ is an orthogonal projection operator, and that
$\rho$ is a representation. We combine all of this in the following
theorems:

\begin{theorem} \label{thm:main-perm}
  Let $\eps > 0$. Then for any $k = k(n)$, there is a set ${\cal P}_{S_{2^n}}$
  that satisfies:

  \begin{equation}
    \opnorm{\E_{\bg \sim {\cal P}_{S_{2^n}}}[\calW^k(\bg)] - \E_{\bg \sim {S_{2^n}}}[\calW^k(\bg)]} \leq \eps/2^{nk}.
  \end{equation}

  Additionally, this set satisfies the following properties:

  \begin{itemize}
  \item Its cardinality is $\poly(2^{nk} / \eps)$.
  \item Each element of ${\cal P}_{S_{2^n}}$ is given by an $n$-qubit
    circuit consisting of $S = \poly(nk) \log(1/\eps)$ gates, which
    are elements of $P_G$.
  \item The algorithm that takes as input a seed and outputs the
    associated circuit runs in deterministic $\poly(S)$ time.
  \end{itemize}
\end{theorem}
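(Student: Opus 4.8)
The plan is to obtain \Cref{thm:main-perm} as an essentially immediate consequence of the initial spectral gap of \Cref{thm:ini-gap-perm} and the generalized derandomized-squaring statement \Cref{thm:abbrev-derando-walks}, glued together by a short ``telescoping'' identity; the construction is in fact already spelled out in the discussion preceding the theorem, so the bulk of the work is bookkeeping.

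First I would prepare the inputs to \Cref{thm:abbrev-derando-walks}. Let $\Pi = \E_{\bg \sim S_{2^n}}[\calW^k(\bg)]$, the orthogonal projection onto the $S_{2^n}$-invariant subspace of $\C^{[N]_{(k)}}$ (it is an orthogonal projection since $\calW^k$ is a unitary representation of a group closed under inverses, so $\Pi^\dagger = \Pi$ and $\Pi^2 = \Pi$). For each generator $g \in P_{S_{2^n}}$ put $U_g = \calW^k(g) - \Pi$. Because $\calW^k$ is a unitary representation, the invariant subspace and its orthogonal complement $V$ are both $\calW^k$-invariant; $U_g$ kills the former and restricts to the unitary $\calW^k(g)|_V$ on $V$, so on $V$ the $U_g$ are honest unitaries (alternatively one applies the contraction version, \Cref{thm:derando-walks}, directly). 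Since averaging $\Pi$ over $g$ returns $\Pi$, we have $\avg_{g \in P_{S_{2^n}}} U_g = \avg_g \calW^k(g) - \Pi$, whose operator norm is exactly the left-hand side of \Cref{thm:ini-gap-perm}; hence $\opnorm{\avg_g U_g} \le 1 - \delta$ with $\delta = 1/\wt{O}(k^2 n^2)$ and $c \coloneqq |P_{S_{2^n}}| = O(n^3)$.

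Next I would invoke \Cref{thm:abbrev-derando-walks} with its accuracy parameter set to $\eps/2^{nk}$. It outputs a sequence $\calP$ of $O(c/\poly(\delta\cdot\eps/2^{nk})) = \poly(2^{nk}/\eps)$ monomials in the symbols $\{u_g, u_g^\dagger\}$, each of length $t = O(\log(2^{nk}/\eps)/\poly(\delta)) = \poly(nk)\log(1/\eps)$, with $\opnorm{\avg_{m \in \calP}\{m(\calU)\}} \le \eps/2^{nk}$. Now the telescoping step: from $\calW^k(g_0)\Pi = \Pi\calW^k(g_0) = \Pi$ (left/right invariance of the uniform measure) together with $\Pi^2 = \Pi$ one gets $(\calW^k(g)-\Pi)(\calW^k(g')-\Pi) = \calW^k(gg') - \Pi$, while $(\calW^k(g)-\Pi)^\dagger = \calW^k(g^{-1}) - \Pi$ is again some $U_h$ (indeed each $f_{i,j_1,j_2,h}$ is an involution, so $P_{S_{2^n}}$ is closed under inverses). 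Iterating $t$ times, every monomial $m \in \calP$ evaluates to $m(\calU) = \calW^k(g_1\cdots g_t) - \Pi$ for the word $g_1,\dots,g_t \in P_{S_{2^n}}$ read off from $m$. Taking $\calP_{S_{2^n}}$ to be the uniform distribution on the multiset $\{g_1\cdots g_t : m \in \calP\}$ then gives $\E_{\bg \sim \calP_{S_{2^n}}}[\calW^k(\bg)] - \Pi = \avg_{m \in \calP}\{m(\calU)\}$, so the operator-norm bound $\le \eps/2^{nk}$ holds and $|\calP_{S_{2^n}}| = |\calP| = \poly(2^{nk}/\eps)$.

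Finally, explicitness. Each generator in $P_{S_{2^n}}$ is a simple $3$-bit permutation $f_{i,j_1,j_2,h}$, i.e.\ a $3$-qubit gate, so the word $g_1\cdots g_t$ is literally an $n$-qubit circuit with $S = t = \poly(nk)\log(1/\eps)$ such gates. For the seed-to-circuit algorithm: \Cref{thm:abbrev-derando-walks} is strongly explicit, so from a seed of $\log|\calP| = O(nk + \log(1/\eps))$ bits one computes the corresponding monomial $m$ deterministically in $\poly(S)$ time, then substitutes $u_g \mapsto g$, $u_g^\dagger \mapsto g^{-1}$ and concatenates to obtain the circuit, again in $\poly(S)$ time. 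I do not expect a genuine obstacle here: the only points that need care are the ``unitary vs.\ partial isometry'' remark above (handled by working on $V$, or by the contraction version of the derandomized-squaring lemma) and the routine check that the parameters compose to cardinality $\poly(2^{nk}/\eps)$ and circuit size $\poly(nk)\log(1/\eps)$; all the substance lives in \Cref{thm:ini-gap-perm} and \Cref{thm:abbrev-derando-walks}.
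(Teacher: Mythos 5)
Your proposal is correct and follows essentially the same path as the paper: define $U_g = \calW^k(g)-\Pi$ so that $\opnorm{\avg_g U_g}$ equals the left side of \Cref{thm:ini-gap-perm}, feed these into \Cref{thm:abbrev-derando-walks} with accuracy $\eps/2^{nk}$, and collapse each output monomial to a single $\calW^k(g_1\cdots g_t)-\Pi$ via the telescoping identity $(\rho(g)-\Pi)(\rho(g')-\Pi)=\rho(gg')-\Pi$. If anything you are slightly more careful than the paper's exposition about the fact that the $U_g$ are contractions/partial isometries rather than unitaries, and about closure of the generating set under inverses so that $U_g^\dagger$ is again some $U_h$; neither point affects the conclusion.
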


\begin{theorem} \label{thm:main-sou}
  Let $\Ggp{n} \in \{\SOgp{2^n}, \SUgp{2^n}\}$ and $\eps > 0$. Then
  for any $k = k(n)$, there is a set ${\cal P}_G$ that satisfies:

  \begin{equation}
    \opnorm{\E_{\bg \sim {\cal P}_G}[\rho_{2^n}^{k,k}(\bg)] - \E_{\bg \sim \Ggp{n}}[\rho_{2^n}^{k,k}(\bg)]} \leq \eps/2^{nk}.
  \end{equation}

  Additionally, this set satisfies the following properties:

  \begin{itemize}
  \item Its cardinality is $\poly(2^{nk} / \eps)$.
  \item Each element of ${\cal P}_G$ is given by an $n$-qubit circuit
    consisting of $S = \poly(nk) \log(1/\eps)$ gates, which are
    elements of $P_G$.
  \item The algorithm that takes as input a seed and outputs the
    associated circuit runs in deterministic $\poly(S)$ time.
  \end{itemize}
\end{theorem}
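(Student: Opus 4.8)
The plan is to amplify the initial spectral gap of \Cref{thm:ini-gap-sou} using the generalized derandomized-squaring statement \Cref{thm:abbrev-derando-walks}, setting the latter's ``$\eps$'' parameter to $\eps/2^{nk}$ so as to land exactly at the operator-norm target of the claim. First, fix $\Ggp{n}\in\{\SOgp{2^n},\SUgp{2^n}\}$ and $k\ge 1$, abbreviate $\rho=\rho^{k,k}_{2^n}$ and $\Pi=\E_{\bg\sim\Ggp{n}}[\rho(\bg)]$ — an orthogonal projection with $\rho(g)\Pi=\Pi\rho(g)=\Pi$ for every $g\in\Ggp{n}$, since it is the projection onto the $\rho$-invariant subspace — and let $P_G$ be the $O(n^2)$-element multiset of \Cref{thm:ini-gap-sou}. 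I would form the sequence $\calU=(U_1,\dots,U_c)$ with $U_i=\rho(g_i)-\Pi$ as $g_i$ ranges over $P_G$; these are operators of norm $\le 1$ (each is $\rho(g_i)$ on $\ker\Pi$ and $0$ on $\Img\Pi$, so one may equivalently restrict throughout to $\ker\Pi$, where they are unitary), $c=O(n^2)$, and $\opnorm{\avg_{i\in[c]}\{U_i\}}=\opnorm{\E_{\bg\sim P_G}[\rho(\bg)]-\Pi}\le 1-\delta$ for $\delta=1/(n\cdot\poly(k))$ by \Cref{thm:ini-gap-sou}.

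Then I would apply \Cref{thm:abbrev-derando-walks} with these $c$ and $\delta$ and with its ``$\eps$'' set to $\eps/2^{nk}$. It outputs a sequence $\calP$ of monomials over $u_1,\dots,u_c,u_1^\dagger,\dots,u_c^\dagger$ with $\opnorm{\avg_{\bm\in\calP}\{\bm(\calU)\}}\le\eps/2^{nk}$, where $|\calP|=O(c/\poly(\delta\cdot\eps/2^{nk}))=\poly(2^{nk}/\eps)$ (using $c=O(n^2)$ and $1/\delta=\poly(nk)$), and each monomial has length $O(\log(2^{nk}/\eps)/\poly(\delta))=\poly(nk)\log(1/\eps)$, a quantity I will call $S$.

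Next I would translate the monomials back into group elements. Because $\Pi$ is a self-adjoint projection fixed by $\rho$, for any $g,g'\in\Ggp{n}$ one has $(\rho(g)-\Pi)(\rho(g')-\Pi)=\rho(gg')-\Pi$, and since $\rho$ is unitary $(\rho(g)-\Pi)^\dagger=\rho(g)^{-1}-\Pi=\rho(g^{-1})-\Pi$; as $P_G$ is closed under inverses, substituting $u_i=\rho(g_i)-\Pi$ and multiplying out a monomial $\bm$ gives $\bm(\calU)=\rho(g_{\bm})-\Pi$ for some $g_{\bm}\in\Ggp{n}$ that is a product of at most $S$ elements of $P_G$. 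Taking $\calP_G=\{g_{\bm}:\bm\in\calP\}$ with the uniform distribution then yields $\E_{\bg\sim\calP_G}[\rho(\bg)]-\Pi=\avg_{\bm\in\calP}\{\bm(\calU)\}$, hence $\opnorm{\E_{\bg\sim\calP_G}[\rho(\bg)]-\Pi}\le\eps/2^{nk}$ and $|\calP_G|=|\calP|=\poly(2^{nk}/\eps)$, which are the first two bullet points.

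Finally, for strong explicitness: each $g\in\calP_G$ is an $n$-qubit circuit of at most $S=\poly(nk)\log(1/\eps)$ gates from $P_G$, and each element of $P_G$ is itself a one-gate circuit — a $\mathrm{CNOT}$ or a fixed one-qubit gate ($\mathrm{Q}$ for $\SOgp{2^n}$, or $\mathrm{H},\mathrm{S},\mathrm{T}$ for $\SUgp{2^n}$) — up to a possible global sign, which is harmless since $\rho^{k,k}(-h)=(-1)^{2k}\rho^{k,k}(h)=\rho^{k,k}(h)$ and may simply be dropped; thus each output matrix is an $n$-qubit circuit of $S$ gates drawn from a fixed finite set. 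For the algorithm, the seed is an index into $\calP$ (of length $\log_2|\calP|=O(nk+\log(1/\eps))$); the strongly-explicit deterministic algorithm of \Cref{thm:abbrev-derando-walks} produces the indicated length-$S$ monomial in $\poly(S)$ time, and reading off the associated word $g_{\bm}$ over $P_G$ and expanding each symbol into its $O(1)$-gate realization costs only $\poly(S)$ more. I expect no real obstacle here: the genuinely hard work lies entirely in the two cited ingredients, \Cref{thm:ini-gap-sou} and \Cref{thm:abbrev-derando-walks}, and within this assembly the only steps needing any care are the algebraic collapse $\bm(\calU)=\rho(g_{\bm})-\Pi$ — which relies on $\Pi$ being an orthogonal projection fixed by $\rho$ and on $P_G$ being closed under inverses — and carrying the $2^{nk}$ rescaling cleanly through the cardinality and gate-count bounds.
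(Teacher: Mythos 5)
Your proposal is correct and follows essentially the same route as the paper: form $\calU = (\rho(g)-\Pi)_{g \in P_G}$, apply \Cref{thm:abbrev-derando-walks} with its $\eps$-parameter set to $\eps/2^{nk}$, and collapse each output monomial back to a single $\rho(g_{\bm})-\Pi$ via the identity $(\rho(g)-\Pi)(\rho(g')-\Pi)=\rho(gg')-\Pi$ together with closure of $P_G$ under inverses. The only additions beyond the paper's exposition — the explicit verification that $\opnorm{\rho(g)-\Pi}\le 1$ by splitting along $\ker\Pi\oplus\Img\Pi$, and the remark that global signs vanish under $\rho^{k,k}$ — are correct minor elaborations rather than a different argument.
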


Ultimately we want designs for $\Ogp{2^n}$ and $\Ugp{2^n}$; we 
obtain them from the above via the following simple corollary.

\begin{corollary}[\Cref{thm:main} restated]
  Let $\Ggp{n} \in \{\Ogp{2^n}, \Ugp{2^n}\}$ and $\eps > 0$. Then for
  any $k = k(n)$, there is a set ${\cal P}_G$ that satisfies the
  conditions of \Cref{thm:main-sou}.
\end{corollary}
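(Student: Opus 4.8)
\medskip
\noindent\emph{Proof plan.}
The plan is to treat the two cases separately, in each case comparing the Haar average of $\rho^{k,k}_{2^n}$ over the full group with its Haar average over the special subgroup, and then reusing, or lightly modifying, the set $\calP_G$ from \Cref{thm:main-sou}.

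First I would dispatch the unitary case, where in fact nothing has to be done. For $g\in\Ugp{2^n}$, write $g=\zeta h$ with $\zeta\in\C$ an $N$-th root of $\det g$ and $h=\zeta^{-1}g\in\SUgp{2^n}$; since $|\zeta|=1$, the $k$ copies of $g$ and $k$ copies of $\overline{g}$ in $\rho^{k,k}_{2^n}(g)$ contribute a scalar $\zeta^{k}\overline{\zeta}^{k}=1$, so
\begin{equation}
\rho^{k,k}_{2^n}(g)=(\zeta h)^{\otimes k}\otimes\overline{(\zeta h)}^{\otimes k}=\rho^{k,k}_{2^n}(h),
\end{equation}
i.e.\ $\rho^{k,k}_{2^n}$ is blind to global scalars. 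Since the multiplication map $\Ugp{1}\times\SUgp{2^n}\to\Ugp{2^n}$, $(\zeta,h)\mapsto\zeta h$, is a surjective homomorphism with finite kernel, it pushes the product of Haar measures onto the Haar measure of $\Ugp{2^n}$; combined with the previous display this gives $\E_{\bg\sim\Ugp{2^n}}[\rho^{k,k}_{2^n}(\bg)]=\E_{\bg\sim\SUgp{2^n}}[\rho^{k,k}_{2^n}(\bg)]$. Hence the $\SUgp{2^n}$ design $\calP_G$ of \Cref{thm:main-sou} already satisfies, verbatim, the operator-norm bound of \Cref{def:orthogonal-design} with $\Ggp n=\Ugp{2^n}$, with the same cardinality, circuits, and construction time.

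For the orthogonal case I would fix any $r\in\Ogp{2^n}$ with $\det r=-1$, so that $\Ogp{2^n}=\SOgp{2^n}\sqcup r\cdot\SOgp{2^n}$ and the Haar measure of $\Ogp{2^n}$ is the equal mixture of that of $\SOgp{2^n}$ with its left translate by $r$; since $\rho^{k,k}_{2^n}$ is a representation,
\begin{equation}
\E_{\bg\sim\Ogp{2^n}}[\rho^{k,k}_{2^n}(\bg)]=\tfrac12\bigl(\Id+\rho^{k,k}_{2^n}(r)\bigr)\,\E_{\bg\sim\SOgp{2^n}}[\rho^{k,k}_{2^n}(\bg)].
\end{equation}
Starting from the $\SOgp{2^n}$ design $\calP_G$ of \Cref{thm:main-sou}, I would take $\calP_G'$ to flip a fair coin and output $\bg\sim\calP_G$ on heads, $r\cdot\bg$ on tails; since then $\E_{\bg\sim\calP_G'}[\rho^{k,k}_{2^n}(\bg)]=\tfrac12(\Id+\rho^{k,k}_{2^n}(r))\,\E_{\bg\sim\calP_G}[\rho^{k,k}_{2^n}(\bg)]$ (using $\rho^{k,k}_{2^n}(r\bg)=\rho^{k,k}_{2^n}(r)\rho^{k,k}_{2^n}(\bg)$), subtracting the previous display yields
\begin{equation}
\E_{\bg\sim\calP_G'}[\rho^{k,k}_{2^n}(\bg)]-\E_{\bg\sim\Ogp{2^n}}[\rho^{k,k}_{2^n}(\bg)]=\tfrac12\bigl(\Id+\rho^{k,k}_{2^n}(r)\bigr)\Bigl(\E_{\bg\sim\calP_G}[\rho^{k,k}_{2^n}(\bg)]-\E_{\bg\sim\SOgp{2^n}}[\rho^{k,k}_{2^n}(\bg)]\Bigr).
\end{equation}
As $\rho^{k,k}_{2^n}(r)$ is unitary, $\tfrac12(\Id+\rho^{k,k}_{2^n}(r))$ is an operator-norm contraction, so submultiplicativity together with \Cref{thm:main-sou} yields $\opnorm{\E_{\bg\sim\calP_G'}[\rho^{k,k}_{2^n}(\bg)]-\E_{\bg\sim\Ogp{2^n}}[\rho^{k,k}_{2^n}(\bg)]}\leq\eps/2^{nk}$; by the remark in \Cref{def:orthogonal-design} this makes $\calP_G'$ an $\eps$-approximate $k$-design for $\Ogp{2^n}$, at the cost of one extra random bit.

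The hard part --- really the only nonroutine point --- will be exhibiting $r$ as a short circuit over a fixed gate set. For any fixed finite $1$- and $2$-qubit gate set, once $n$ is past a small constant every gate in it, viewed as an operator on $\C^{2^n}$ by tensoring with the identity, has determinant $+1$ (its determinant is a fixed power-of-two-order root of unity raised to a large power of two), so no product of such gates can equal a determinant-$(-1)$ matrix; thus $r$ genuinely needs multi-qubit control. The plan is to take $r=\mathrm{diag}(1,\dots,1,-1)$, the reflection flipping the phase of $\ket{1\cdots1}$, and realize it exactly by a standard $\poly(n)$-gate multiply-controlled construction --- either using one clean ancilla qubit over the adjoined fixed set $\{\mathrm{CNOT},\mathrm{H},\mathrm{T},\mathrm{T}^\dagger\}$, or, if one insists on exactly $n$ qubits, by adjoining a single explicit $n$-dependent $1$-qubit phase gate (of angle $\pi/2^{O(n)}$) and using its powers --- the whole circuit computing the real orthogonal matrix $r$ even though individual gates may be complex. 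Prepending this $\poly(n)$-gate block to the circuits of \Cref{thm:main-sou} keeps the gate count $\poly(nk)\log(1/\eps)$, the cardinality $\poly(2^{nk}/\eps)$, and the deterministic $\poly(S)$-time construction intact, which gives the corollary.
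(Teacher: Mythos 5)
Your proof takes essentially the same route as the paper: both handle $\Ugp{2^n}$ by observing that $\rho^{k,k}_{2^n}$ is invariant under global phases so that $\E_{\Ugp{2^n}}[\rho^{k,k}_{2^n}]=\E_{\SUgp{2^n}}[\rho^{k,k}_{2^n}]$ (the paper phrases this via $\mathrm{PU}(2^n)$, you via the surjection $\Ugp{1}\times\SUgp{2^n}\to\Ugp{2^n}$), and both handle $\Ogp{2^n}$ by the coset decomposition $\Ogp{2^n}=\SOgp{2^n}\sqcup r\cdot\SOgp{2^n}$ and mixing the $\SOgp{}$-design with its translate by a fixed reflection (the paper's ``multiply the first column by a random sign'' is the right-multiplication version of your left-multiplication by $r$).

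Where you do differ is in the last part, and your instinct there is well placed. You correctly observe that for $n$ past a small constant, \emph{no} product of gates from $P_G$ (in particular of $\mathrm{CNOT}$, $\mathrm{Q}$, and their negations and inverses, which all have determinant $+1$ once tensored up to act on $n\geq 3$ qubits) can equal a matrix of determinant $-1$, so the reflection $r$ genuinely cannot be expressed as a circuit over $P_G$. The paper's one-line proof silently passes over this, and read literally its conclusion (``satisfies the conditions of \Cref{thm:main-sou}'', one of which is that every design element be a circuit over $P_G$) would fail; the intent must be to allow a fixed $\poly(n)$-gate block for $r$, exactly as you propose. One small caution about your two suggested implementations: the ``adjoin an explicit $n$-dependent $1$-qubit phase gate'' option is at odds with the promise in \Cref{thm:main} of a \emph{fixed} $1$-qubit gate set, so the clean choice is the ancilla-based multiply-controlled-$Z$ construction over $\{\mathrm{CNOT},\mathrm{H},\mathrm{T},\mathrm{T}^\dagger\}$ (with the ancilla returned to $\ket 0$, so the induced map on the original $n$ qubits is exactly $r$). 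Also, your blanket claim that every tensored-up gate from ``any fixed finite $1$- and $2$-qubit gate set'' eventually has determinant $+1$ is true for the paper's gate sets (whose determinants are $2^j$-th roots of unity) but not in general; it costs nothing to restrict the claim to the gate sets at hand. With those caveats, your proof is correct and is in fact more careful than the paper's on the circuit-realizability of the reflection.
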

\begin{proof}
  To obtain the result for $\Ogp{2^n}$, after sampling from ${\cal
    P}_\mathrm{SO}$ one samples $\bb$ as a uniformly random $\pm 1$ and
  multiplies the first column of the sampled matrix by $\bb$ (which only
  changes the cardinality of the resulting ${\cal P}_\mathrm{O}$ by a
  factor of $2$).  
  
For $\Ugp{2^n}$,  no augmentation of ${\cal P}_\mathrm{SU}$ is required; i.e.,~we can simply take 
${\cal P}_\mathrm{U}={\cal P}_\mathrm{SU}$.\footnote{We thank an anonymous reviewer for this observation.}  
To see this, first recall that the representations $\rho^{k,k}_{2^n}$ of $\Ugp{2^n}$ and of $\SUgp{2^n}$ respectively have kernels $K_\mathrm{U} = \{e^{i \alpha} \Id \ : \ \alpha \in [0,2\pi)\}$ and $K_\mathrm{SU} \{-\Id,\Id\}$.  
Next, recall that $\Ugp{2^n}/K_1 \cong \SUgp{2^n}/K_2$, since both of these are isomorphic to the projective unitary group $\mathrm{PU}(2^n)$ (see e.g. \cite{Wiki:projectiveunitarygroup}).  
Now, since the push-forward of the Haar measure of a compact group $G$ to a factor group $G/H$ is exactly the Haar measure on $G/H$, it follows that the set ${\cal P}_\mathrm{SU}$ of unitaries in $\SUgp{2^n}$ that forms an $\epsilon$-approximate $k$-design of $\SUgp{2^n}$ is also an $\epsilon$-approximate $k$-design of $\Ugp{2^n}$.
\end{proof}

%%%%% Removed at FOCS reviewer 3's suggestion:   
%  For $\Ugp{2^n}$, after sampling from ${\cal
%    P}_\mathrm{SU}$ we sample a uniform random unit-norm complex value $\bu$ from the $(2^{O(nk)}/\eps)$-element set $\{1,e^{i\eps/2^{O(nk)}}, e^{i2\eps/2^{O(nk)}}, \dots\}$ and multiply the first column of the sampled matrix by $\bu$ (which only changes the cardinality of the resulting ${\cal P}_\mathrm{U}$ by a
%  factor of $2^{O(nk)}/\eps$.)
%%%%% End of what we removed at FOCS reviewer 3's suggestion

%    \pnote{justification? {\bf Rocco:} I think with our $\norm*{\cdot}_1$ definition, the extra error this incurs is bounded by the 1-norm of the matrix which is the difference between (expectation over using a truly random $\bu$ from the unit circle) and (expectation over using our discretized $\bu$), right? We can bound the 1-norm of that matrix (1-norm of the vector of singular values) by bounding its Frobenius norm (2-norm of vector of singular values) using the fact that the matrices are $N^k \times N^k$; and the Frobenius norm should be easy to control via the discretization factor, right?  Do we want to write out more details / math for this or just leave it as is?}

It is of note that we can apply the framework of this section, through
\Cref{thm:abbrev-derando-walks}, to obtain explicit designs of any
subgroup of the unitary group using any unitary representation, as
long as one establishes an initial gap first, like in
\Cref{ssec:ini-gaps}.

\ignore{
\subsection{old}

\emph{Maybe let's have a section here where we put the pieces together, i.e. we}

\begin{enumerate}

\item \emph{invoke the initial-spectral-gap results of Gowers / Brodsky-Hoory and combine them with \Cref{thm:derando-walks} to get $k$-wise independent permutations }

\item \emph{state the initial-spectral-gap theorems for unitary and special orthogonal that we will establish in the rest of the paper, and combine those results with \Cref{thm:derando-walks} to get our beloved explicit $k$-wise unitary designs and $k$-wise special orthogonal designs. (This might also be a good place to handle the minor task of going from special orthogonal designs to non-special orthogonal designs.)}

\end{enumerate}

\emph{Even though (1) will be brief, I feel like actually doing (1) in the intro may require a little too much forward pointering for it to fit nicely into an introduction - we can give a high level overview in the intro, but my feeling is that it'll be easier to actually do it after \Cref{sec:pseudorandom-walks}.} 

\emph{Regarding (2), hopefully this approach will let us make things relatively clean and modular. We can make portentious sounding statements in this section about how our framework can potentially be generalized - proving analogues of the initial-spectral-gap theorems for other groups (symplectic!  octahedrawhatever!) should lead to $k$-wise independence for those settings - but in this work we confine our attention to special orthogonal and unitary.}

\emph{If we go with an organization like this, below is some stuff that I think should be adapted/included in this section:}

\medskip

Let $\Ggp{2^n}$ be a subgroup of $\Ugp{2^n}$ (key examples to keep in mind are the group of permutations on $2^n$ elements, the $2^n$-dimensional orthogonal group, the $2^n$-dimensional unitary group itself, and the ``special'' versions of the latter two).
In light of  \Cref{thm:derando-walks}, given a probability distribution on a subset of $\Ggp{2^n}$, 
%$\SOgp{2^n}$,
we would like to understand how fast the associated random walk mixes vis-a-vis a particular representation, namely the $k$-wise tensor product representation (since that representation corresponds to $k$-wise independence). We will specialize to the $k$-wise tensor product representation soon, but for now we consider a general unitary representation:
\begin{definition} \label{def:LW}
    Let $\calP$ be a probability distribution\rnote{perhaps have to say some blah blah measure-theoretic whatnot} on 
    $\Ggp{2^n}$ 
%    $\SOgp{2^n}$
    that is symmetric (meaning that $\bg^{-1} = \bg^\dagger$ is distributed as~$\calP$ when $\bg$ is), and let~$\rho$ be a unitary representation of
$\Ggp{2^n}$.
%$\SOgp{2^n}.$
    Note that since $\rho$ is unitary, $\E_{\bg \sim {\cal P}}[\rho(\bg)]$ is a Hermitian operator with real eigenvalues lying in $[-1,1]$.
    We define the ``Laplacian''
    \begin{equation}
        L_{\calP}(\rho) = \Id - \E_{\bg \sim \calP}[\rho(\bg)],
    \end{equation}
    a self-adjoint (since~$\calP$ is symmetric) operator satisfying  the following inequalities (in the PSD order):
    \begin{equation}    \label[ineq]{ineq:2}
        0 \leq L_{\calP}(\rho) \leq 2 \cdot \Id.
    \end{equation}
\end{definition}
\begin{notation}    \label{not:P}
    In the preceding definition, we abuse notation as follows:
    In place of~$\calP$ we may write a finite 
    (multi)set~$P \subset \Ggp{2^n}$, 
%    (multi)set~$P \subset \SOgp{2^n}$, 
    in which case the uniform distribution on~$P$ is understood.
    We may also write 
    ``$\Ggp{2^n}$'' 
%    ``$\SOgp{2^n}$'' 
    in place of~$\calP$, in which case the uniform (Haar) distribution is understood.
    Finally, if $\calP$ now denotes a distribution 
    on~$\Ggp{2^\ell}$, 
% on~$\SOgp{2^\ell}$,     
    and $E \subseteq [n]_\ell$, we write $\calP \times E$ for the distribution 
    on~$\Ggp{2^n}$ 
 %   on~$\SOgp{2^n}$ 
    given by choosing $\bg \sim \calP$, independently choosing $\be \sim E$ (uniformly), and finally forming~$\bg_{\be}$.
\end{notation}

\begin{definition} \label{def:twiddle}
Given a symmetric probability distribution $\calP$ as in \Cref{def:LW}, we define its ``lazy'' version, $\widetilde{\cal P}$, to be the distribution which is an equal mixture of ${\cal P}$ and the point distribution supported on the identity element $\Id$ (note that $\widetilde{\cal P}$ is also a symmetric distribution).
Similar to \Cref{def:LW}, we have that $\E_{\bg \sim \widetilde{\cal P}}[\rho(\bg)]$ is a Hermitian operator but now with real eigenvalues lying in $[0,1]$, and we have the PSD inequalities
    \begin{equation}    \label[ineq]{ineq:2twiddle}
        0 \leq L_{\widetilde{\calP}}(\rho) \leq \Id.
    \end{equation}

\end{definition}

\begin{fact}
    In the setting of \Cref{def:LW,def:twiddle}, 
    $L_{\Ggp{2^n}}(\rho)$ 
%    $L_{\SOgp{2^n}}(\rho)$ 
    is an orthogonal projection operator, and for any symmetric $\calP$ we have that
    \begin{equation}
        \ker L_{\Ggp{2^n}}(\rho) \subseteq \ker L_{\calP}(\rho)
%        \ker L_{\SOgp{2^n}}(\rho) \subseteq \ker L_{\calP}(\rho)
    \end{equation}
    always holds (because for \emph{every}~$g_0$ in the support of~$\calP$ we have $\rho(g_0) \Pi = \Pi$, where 
    $\Pi = \E_{\bg \sim  \Ggp{2^n}}[\rho(\bg)]$).
%    $\Pi = \E_{\bg \sim  \SOgp{2^n}}[\rho(\bg)]$).
        From this, and \Cref{ineq:2,ineq:2twiddle}, we also get
    \begin{equation}    \label[ineq]{ineq:fantasy}
        L_{\Ggp{2^n}}(\rho) \geq \tfrac12 \cdot L_{\calP}(\rho),
%        L_{\SOgp{2^n}}(\rho) \geq \frac12 \cdot L_{\calP}(\rho),
\end{equation}
\begin{equation} \label[ineq]{ineq:fantasytwiddle}
	L_{\Ggp{2^n}}(\rho) \geq L_{\widetilde{\calP}}(\rho).
%	L_{\SOgp{2^n}}(\rho) \geq L_{\widetilde{\calP}}(\rho).
    \end{equation}
\end{fact}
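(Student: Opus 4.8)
The plan is to dispatch the three assertions in turn, each of which is a standard computation in the representation theory of a compact group. Write $\Pi := \E_{\bg \sim \Ggp{n}}[\rho(\bg)]$, so that $L_{\Ggp{n}}(\rho) = \Id - \Pi$. The one identity doing all the work is that $\rho(h)\Pi = \Pi$ for every $h \in \Ggp{n}$, which is immediate from left-invariance of the Haar measure: $\rho(h)\Pi = \E_{\bg}[\rho(h\bg)] = \E_{\bg}[\rho(\bg)] = \Pi$. First I would use this to show $\Pi$ is an orthogonal projection: it is Hermitian (already recorded in \Cref{def:LW}, from symmetry of the Haar distribution together with unitarity of $\rho$), and averaging the identity $\rho(\bh)\Pi = \Pi$ over $\bh \sim \Ggp{n}$ gives $\Pi^2 = \Pi$. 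A Hermitian idempotent is an orthogonal projection, and hence so is its complement $L_{\Ggp{n}}(\rho) = \Id - \Pi$.

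Next, since $L_{\Ggp{n}}(\rho) = \Id - \Pi$ is an orthogonal projection, $\ker L_{\Ggp{n}}(\rho) = \Img \Pi = \{v : \Pi v = v\}$. For such a $v$, and for $\calP$-almost-every $\bg$ (the support of $\calP$ lies in $\Ggp{n}$, so the identity $\rho(\bg)\Pi = \Pi$ applies), we have $\rho(\bg) v = \rho(\bg)\Pi v = \Pi v = v$; taking expectations over $\bg \sim \calP$ gives $\E_{\bg\sim\calP}[\rho(\bg)] v = v$, i.e.\ $L_{\calP}(\rho) v = 0$. This proves $\ker L_{\Ggp{n}}(\rho) \subseteq \ker L_{\calP}(\rho)$, which is exactly the hint in the statement; it applies verbatim with $\widetilde{\calP}$ in place of $\calP$ since $\widetilde{\calP}$ is also symmetric with support in $\Ggp{n}$.

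For the two PSD inequalities, I would first note that $L_{\widetilde{\calP}}(\rho) = \tfrac12 L_{\calP}(\rho)$, which is immediate from $\widetilde{\calP}$ being the equal mixture of $\calP$ with the point mass at $\Id$; thus \Cref{ineq:fantasy,ineq:fantasytwiddle} are the same statement, and it suffices to prove $L_{\Ggp{n}}(\rho) \geq M$ for $M := L_{\widetilde{\calP}}(\rho) = \tfrac12 L_{\calP}(\rho)$. This $M$ is self-adjoint with $M \leq \Id$ (by \Cref{ineq:2twiddle}, equivalently \Cref{ineq:2}), and $\Img \Pi \subseteq \ker M$ by the previous paragraph, so $M\Pi = \Pi M = 0$ using self-adjointness of $M$ and $\Pi$. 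Now decompose an arbitrary vector as $v = \Pi v + (\Id - \Pi)v =: v_0 + v_1$, an orthogonal decomposition since $\Pi$ is an orthogonal projection; because $M v_0 = 0$ the cross terms vanish, giving $\langle v, Mv\rangle = \langle v_1, M v_1\rangle \leq \|v_1\|^2 = \langle v, (\Id - \Pi)v\rangle = \langle v, L_{\Ggp{n}}(\rho) v\rangle$, which is the operator inequality we want. I do not expect a genuine obstacle here --- this is a routine fact --- and the only points needing a moment's care are that $\Pi$ is an \emph{orthogonal} (not merely oblique) projection, which is where Hermiticity and hence symmetry of the Haar distribution and unitarity of $\rho$ enter, and that the identity $\rho(\bg)\Pi = \Pi$ is only invoked for $\bg$ in the support of $\calP$.
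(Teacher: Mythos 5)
Your proof is correct and follows essentially the same route the paper sketches: establish $\ker L_{\Ggp{n}}(\rho) \subseteq \ker L_{\calP}(\rho)$ via $\rho(g_0)\Pi = \Pi$, then combine with the operator bounds $L_{\calP}(\rho) \leq 2\Id$ and $L_{\widetilde{\calP}}(\rho) \leq \Id$ through the orthogonal decomposition $v = \Pi v + (\Id-\Pi)v$. Your observation that $L_{\widetilde{\calP}}(\rho) = \tfrac12 L_{\calP}(\rho)$, which collapses the two PSD inequalities into one, is a pleasant economy that the paper does not bother to record, but it does not change the underlying argument.
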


As \Cref{ineq:fantasy,ineq:fantasytwiddle} contain a surfeit of symbols, one may wish to read them respectively as
\begin{equation}    \label[ineq]{ineq:rando}
    \text{``(randomizing $n$ qubits)} \geq \tfrac12 \cdot\text{($\calP$-pseudorandomizing $n$ qubits)} \quad \text{[vis-a-vis~$\rho$]''},
\end{equation}
\begin{equation} \label[ineq]{ineq:randotwiddle}
\text{``(randomizing $n$ qubits)} \geq \text{($\widetilde{\calP}$-pseudorandomizing $n$ qubits)} \quad \text{[vis-a-vis~$\rho$]''},
\end{equation}
with the ``''$\geq \frac12 \cdot$'' part pronounced ``is at least $\frac12$  as good as''. %\footnote{One might object that the adjective ``fully'' suggests $G_n= \Ugp{n}$, but we will be thinking of~$G_n$ as our ``full'' set of gates, comparing it with pseudorandom subsets thereof.}

Since our goal is $k$-wise independence, the representations that are of interest to us are $k$-wise tensor product representations:
\begin{notation} [$k$-wise tensor product representations]\rnote{Do we want to have consistency on whether there are parentheticals after notations and definitions?}
    For any $k \in \N^+$, we will write $\rho^{k,k}_{2^n}$ for the  (complex) representation of 
    $\Ggp{2^n}$ 
%    $\SOgp{2^n}$ 
    defined by
    \begin{equation}
        \rho^{k,k}_{2^n}(g) = g^{\otimes k,k} \coloneqq g^{\otimes k} \otimes \overline{g}^{\otimes k},
    \end{equation}
    where $\overline{g}$ denotes the complex conjugation of matrix~$g$.
\end{notation}

\subsection{Explicit randomness-efficient $k$-wise independent permutations over $\zo^n$} \label{sec:perms}

To get the desired initial spectral gap for $\Ggp{2^n}$, we require inequalities that go in the opposite direction to \Cref{ineq:rando} and \Cref{ineq:randotwiddle}, with $\rho=\rho^{k,k}_{2^n}$ being the $k$-wise tensor product representation.

etc etc\rnote{to be written?}

\subsection{Explicit randomness-efficient orthogonal and unitary $k$-designs} \label{sec:orthogonal-unitary}

\rnote{I don't think this is the definition of ``design'' that we want}

\begin{definition} \label{def:orthogonal-design}
Let $\Ggp{2^n}$ be a subgroup of $\Ugp{2^n}$ that is equipped with the
Haar measure.  A distribution ${\cal D}$ on a finite subset of
matrices from $\Ugp{2^n}$ is an \emph{$\eps$-approximate $k$-design
  for $\Ggp{2^n}$} if for every polynomial $p: \C^{2^n \times 2^n} \to
\R$ with complex coefficients satisfying $\deg(p) \leq k$ and $\|p\|_1
= 1$ (where $\|p\|_1$ denotes sum of norms of coefficients), we have

\[
\left| \E_{\bM \sim \cal D}[p(\bM)] - \E_{\bM \sim \Ggp{2^n}}[p(\bM)]\right| \leq \eps,
\]

We say ${\cal D}$ is an \emph{explicit $\eps$-approximate $k$-design
  for $\Ggp{2^n}$ with seed-length $s$} if there is a
$\poly(2^n)$-time randomized procedure that generates a draw from
${\cal D}$, using $s$ bits of randomness.
\end{definition}

Using some natural known observations we can simplify the above into
the following:

\begin{lemma}
Suppose ${\cal D}$ is a distribution that satisfies the following:

\begin{equation} \label{eq:tensor-design}
\left| \E_{\bM \sim \cal D}[\rho_{2^n}^{k,k}(\bM)] - \E_{\bM \sim \Ggp{2^n}}[\rho_{2^n}^{k,k}(\bM)]\right| \leq \eps,
\end{equation}

Then ${\cal D}$ is an explicit $\eps$-approximate $k$-design for
$\Ggp{2^n}$.
\end{lemma}
\begin{proof}
  Coming soon...
\end{proof}

We will build these distributions for the orthogonal and unitary
groups by constructing multisets of matrices such that their
cardinality is small and the distribution given by sampling an element
uniformly at random satisfies the condition from
\Cref{eq:tensor-design}. We build these distributions in two steps:
first, establish a really big but non-trivial bound of the form of
\Cref{eq:tensor-design}, with really small seed-length; then, use
\Cref{thm:derando-walks} to reduce this bound without making the
seed-length much bigger. In \Cref{sec:ini-gap} we will carry out the
first step for the orthogonal and unitary groups, but we summarize the
result of the section here:

\begin{theorem} (Consequence of \Cref{thm:XXX})
For the orthogonal group ($\Ggp{2^n} = \Ogp{2^n}$) and the unitary
group ($\Ggp{2^n} = \Ugp{2^n})$, there exist multisets
$\calM_\mathrm{O}$ and $\calM_\mathrm{U}$ such that

\begin{equation}
\left| \E_{\bM \sim \calM_\mathrm{G}}[\rho_{2^n}^{k,k}(\bM)] - \E_{\bM \sim \Ggp{2^n}}[\rho_{2^n}^{k,k}(\bM)]\right| \leq 1 - \frac{1}{\poly(n, k)}, 
\end{equation}

\noindent and $|\calM_\mathrm{G}| \leq \poly(n)$ (so the seed-length is $O(\log n)$).
\end{theorem}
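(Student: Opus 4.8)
The plan is to reduce to the special groups $\SOgp{2^n}$, $\SUgp{2^n}$ and then establish the claimed spectral gap there. For the reduction I would argue exactly as in the proof of the restated \Cref{thm:main} above: a multiset witnessing the bound for $\SUgp{2^n}$ witnesses it verbatim for $\Ugp{2^n}$, because $\rho^{k,k}_{2^n}$ kills every scalar matrix and hence factors through $\mathrm{PU}(2^n)$, to which both of the relevant Haar measures push forward; and for $\Ogp{2^n}$ one adjoins to the $\SOgp{2^n}$-multiset its right-translate by $D=\mathrm{diag}(-1,1,\dots,1)\in\Ogp{2^n}\setminus\SOgp{2^n}$ --- since $\Ogp{2^n}=\SOgp{2^n}\sqcup\SOgp{2^n}D$, a two-line operator-norm computation (using that Haar on $\Ogp{2^n}$ is the average of Haar on $\SOgp{2^n}$ and its $D$-translate) transfers the $\SOgp{2^n}$ bound, at the cost of doubling the multiset size. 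Thus it suffices to prove the bound for $\Ggp{n}\in\{\SOgp{2^n},\SUgp{2^n}\}$ with $P_G$ the $O(n^2)$-sized symmetrized multiset of $1$-gate circuits over $\{\mathrm{CNOT}\}\cup\{\mathrm{Q}\}$, resp.\ $\{\mathrm{CNOT},\mathrm{H},\mathrm{S},\mathrm{T}\}$ --- which is \Cref{thm:ini-gap-sou}.

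To prove \Cref{thm:ini-gap-sou} I would work with Laplacians (\Cref{def:LW}): by the conversion fact above the target is $L_{P_G}(\rho^{k,k}_{2^n}) \geq \tfrac{1}{n\cdot\poly(k)}\,L_{\SOgp{2^n}}(\rho^{k,k}_{2^n})$. The first ingredient is a \emph{local-to-global} reduction, in the spirit of \cite{HH21} (and of \cite{BHH16} for $\Ugp{2^n}$): a ``detectability-lemma''/Nachtergaele-martingale estimate bounds the global Haar-Laplacian $L_{\SOgp{2^n}}(\rho^{k,k}_{2^n})$ above, up to a factor $n\cdot\poly(k)$, by an average over $m$-qubit blocks $e$ of the \emph{block} Haar-Laplacians, while $P_G$ restricted to any fixed $m$ qubits contains all $1$-gate circuits on those qubits, so $L_{P_G}$ dominates (up to polynomial averaging factors) an average over blocks of the block-\emph{walk} Laplacians. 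Everything then reduces to comparing the block walk --- all $1$-gate circuits on $m$ qubits --- to Haar on $\SOgp{2^m}$ with respect to $\rho^{k,k}_{2^m}$; this is the ``auxiliary $m$-qubit random walk'', $m$ being chosen as a suitable function of $k$ and $n$. It is this reduction that introduces the $1/n$; for the orthogonal group it must be carried out with extra care, because the relevant local invariant algebra is the Brauer algebra rather than $\C[S_k]$.

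It then remains to prove, with no dependence on $n$, that the $m$-qubit block walk has spectral gap $\geq 1/\poly(k)$ with respect to $\rho^{k,k}_{2^m}$. Here one uses universality of the gate set for $\SOgp{2^m}$ / $\SUgp{2^m}$ together with the special arithmetic of $\mathrm{Q}=\begin{bmatrix} 3/5 & -4/5 \\ 4/5 & 3/5 \end{bmatrix}$: the Pythagorean triple $(3,4,5)$ makes $\mathrm{Q}$ an infinite-order rotation with good Diophantine behaviour, which is surely what upgrades bare density to a $\poly(k)$-strength gap rather than an $\exp(-\poly(k))$ one. As flagged in the organization of the paper, this auxiliary analysis genuinely bifurcates: when $2^m$ is large compared to $k$ the invariant theory is ``stable'' (a first-fundamental-theorem picture with no relations), supporting one argument (\Cref{sec:large-m}); when $2^m$ is small compared to $k$ the tensor powers over-saturate the $2^m$-dimensional space, the commutant acquires relations, and a different, more combinatorial argument is needed (\Cref{sec:small-m}).

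I expect this last step to be the main obstacle, and within it the small-$m$ (large-$k$) regime for the \emph{orthogonal} group: there is no clean Schur--Weyl picture to lean on, the Brauer-algebra relations must be controlled quantitatively, and one must still extract a $1/\poly(k)$ gap from a \emph{finite} gate set --- precisely the sort of dimension-dependence that was mishandled in \cite{KM15} via \cite{BG12}, so it has to be done by hand rather than quoted. The local-to-global step is essentially routine for $\Ugp{2^n}$, but for $\SOgp{2^n}$ it too requires genuine care, again on account of the Brauer algebra.
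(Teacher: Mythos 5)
Your reduction to the special groups (identifying $\mathcal P_{\mathrm U}=\mathcal P_{\mathrm{SU}}$ since $\rho^{k,k}_{2^n}$ factors through $\mathrm{PU}(2^n)$, and augmenting $\mathcal P_{\mathrm{SO}}$ by a determinant-flip) is exactly the paper's Corollary~2.9, so that part is fine. The trouble starts with how you set up the local-to-global step. In the paper (\Cref{lem:iter}, \Cref{lem:technical}) the reduction bounds $L_{\Ggp{n}}$ by $\bigl(\prod_{n_0<m\le n}\tau_{k,m}\bigr)^{-1}\cdot L_{\Ggp{n_0}\times\binom{[n]}{n_0}}$, where each $\tau_{k,m}$ is a purely \emph{Haar-vs-Haar} quantity: it compares Haar on a random $m-1$ out of $m$ qubits to Haar on all $m$ qubits, and no finite gate set appears at that stage. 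The large-$m$/small-$m$ bifurcation of \Cref{sec:large-m,sec:small-m} lives entirely inside those $\tau_{k,m}$ bounds; the small-$m$ terms (those $m\lesssim\log k$) each give a constant $\tau_{k,m}\ge .04$ and multiply up to the $1/\poly(k)$, while the large-$m$ terms give a telescoping $\approx n_0/n$. Only at the very bottom, at the fixed constant level $n_0=4$, does the finite gate set enter, via Benoist--de~Saxc\'e (\Cref{lem:statement-G-new}, \Cref{cor:bs}), and that step costs a universal constant $\kappa_{n_0}$ with \emph{no} dependence on $k$ or $n$. You have instead located the bifurcation inside the ``$m$-qubit block walk'' — i.e.\ the finite gate set on $m$ qubits — and claimed the gate-set walk itself has gap $\ge 1/\poly(k)$, with $m$ growing with $k$ and $n$.

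That relocation is not just a rephrasing; it is the precise trap that the paper attributes to Kothari--Meka. A Bourgain--Gamburd / Benoist--de~Saxc\'e-type gap for a fixed algebraic gate set on $m$ qubits has \emph{unknown} dependence on the dimension $2^m$ (as the paper quotes from \cite{BHH16}: ``the proof does not give any estimate of the dependency of the spectral gap on $N$''). So applying such a result with $m$ chosen as a function of $k$ does not yield $1/\poly(k)$, or indeed any quantitative bound, and the ``Pythagorean / Diophantine'' properties of $\mathrm Q$ do not rescue this: what BdS actually needs is property~(A)~(algebraic entries) and property~(B)~(density) of \Cref{lem:gateset}, and it returns a constant whose dimension-dependence is uncontrolled. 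The paper's whole design is to \emph{avoid} ever invoking such a result at growing dimension: all of the $m$-dependence is absorbed into the Haar-vs-Haar recurrence, which is analyzed directly (via Schur--Weyl for the large-$m$ regime and the Riemannian/Wasserstein coupling for the small-$m$ regime), and BdS is invoked exactly once, at $n_0=4$. You should restructure accordingly: first prove the Haar-to-Haar recurrence, get $1/(n\poly(k))$ from the product of the $\tau_{k,m}$'s, and only then swap Haar on $n_0$ qubits for the finite gate set at constant cost.
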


We finally can show the main result of this paper:

\begin{theorem}
For the orthogonal group ($\Ggp{2^n} = \Ogp{2^n}$) and the unitary
group ($\Ggp{2^n} = \Ugp{2^n})$, there exist multisets
$\calM_\mathrm{O}$ and $\calM_\mathrm{U}$ such that
\begin{itemize}
\item $\calM_\mathrm{G}$ is an explicit $\eps$-approximate $k$-design
  for $\Ggp{2^n}$.
\item $|\calM_\mathrm{G}| \leq \poly(n^k / \epsilon)$, and so the
  seed-length is $O(nk + \log(1 / \eps))$.
\end{itemize}
\end{theorem}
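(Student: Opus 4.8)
This splits into two short, independent arguments, one per group; all of the substance already lives in \Cref{thm:main-sou} for the ``special'' version, and the corollary only has to pass from $\SOgp{2^n}$ to $\Ogp{2^n}$ and from $\SUgp{2^n}$ to $\Ugp{2^n}$. For the orthogonal group I would exploit that $\SOgp{2^n}$ has index $2$ in $\Ogp{2^n}$ and augment a special-orthogonal design with one random determinant flip. For the unitary group I would observe that $\rho^{k,k}_{2^n}$ kills every scalar matrix, hence descends to the projective unitary group; this forces the Haar averages of $\rho^{k,k}_{2^n}$ over $\Ugp{2^n}$ and over $\SUgp{2^n}$ to coincide, so the $\SUgp{2^n}$ design of \Cref{thm:main-sou} is already a $\Ugp{2^n}$ design, unchanged.

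\textbf{Orthogonal group.} Fix a reflection $D=\Id-2\ket{0^n}\!\bra{0^n}=\mathrm{diag}(-1,1,\dots,1)\in\Ogp{2^n}$; since $\det D=-1$ we have $\Ogp{2^n}=\SOgp{2^n}\sqcup\SOgp{2^n}D$, and by the standard description of the Haar measure of a compact group having an index-$2$ closed subgroup, a Haar-random element of $\Ogp{2^n}$ is distributed as $\bg D^{\bb}$ with $\bg\sim\SOgp{2^n}$ Haar and $\bb$ an independent uniform bit. Let $\calP_{\mathrm{SO}}$ be the design of \Cref{thm:main-sou}, and let $\calP_{\mathrm O}$ be the law of $\bg D^{\bb}$ with $\bg\sim\calP_{\mathrm{SO}}$ and $\bb$ an independent uniform bit. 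Because $\rho^{k,k}_{2^n}$ is a homomorphism and $\bb\perp\bg$, for any distribution $\calD$ on $\SOgp{2^n}$ one has $\E_{\bg\sim\calD,\,\bb}[\rho^{k,k}_{2^n}(\bg D^{\bb})]=\E_{\bg\sim\calD}[\rho^{k,k}_{2^n}(\bg)]\cdot\tfrac12\bigl(\Id+\rho^{k,k}_{2^n}(D)\bigr)$; subtracting the cases $\calD=\calP_{\mathrm{SO}}$ and $\calD=\SOgp{2^n}$ and using that $\tfrac12(\Id+\rho^{k,k}_{2^n}(D))$ is an average of two unitaries and so has operator norm at most $1$, the operator-norm error of $\calP_{\mathrm O}$ is at most that of $\calP_{\mathrm{SO}}$, i.e.\ at most $\eps/2^{nk}$ by \Cref{thm:main-sou}. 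The cardinality merely doubles, staying $\poly(2^{nk}/\eps)$; for the ``strongly explicit'' clauses, in the $\bb=1$ branch append a fixed $\poly(n)$-size subcircuit for the reflection $D$ (e.g.\ $D=X^{\otimes n}\,(C^{n-1}Z)\,X^{\otimes n}$ with $C^{n-1}Z$ expanded into $O(n^2)$ one- and two-qubit gates by the textbook ancilla-free construction), compiling those gates into the gate set of \Cref{thm:2} to operator-norm accuracy $\eps/2^{O(nk)}$ using $\poly(nk)\log(1/\eps)$ of them via the density in \Cref{thm:2} (a telescoping estimate turns this into an $\eps/2^{O(nk)}$ change in the design error, absorbed into $\eps$).

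\textbf{Unitary group.} For every $\alpha$, $\rho^{k,k}_{2^n}(e^{i\alpha}\Id)=e^{ik\alpha}\Id^{\otimes k}\otimes e^{-ik\alpha}\Id^{\otimes k}=\Id$, so $\rho^{k,k}_{2^n}$ annihilates the centre $\{e^{i\alpha}\Id\}$ of $\Ugp{2^n}$ and descends to a representation $\bar\rho$ of $\mathrm{PU}(2^n)=\Ugp{2^n}/\{e^{i\alpha}\Id\}$; it likewise annihilates the centre of $\SUgp{2^n}$ and descends to a representation of $\PSUgp{2^n}$, which under the canonical isomorphism $\PSUgp{2^n}\cong\mathrm{PU}(2^n)$ induced by $\SUgp{2^n}\hookrightarrow\Ugp{2^n}$ is the very same $\bar\rho$ (both are $g\mapsto g^{\otimes k}\otimes\overline{g}^{\otimes k}$). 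A continuous surjective homomorphism of compact groups pushes Haar measure to Haar measure, so $\E_{\bg\sim\Ugp{2^n}}[\rho^{k,k}_{2^n}(\bg)]=\E_{\bg\sim\mathrm{PU}(2^n)}[\bar\rho(\bg)]=\E_{\bg\sim\SUgp{2^n}}[\rho^{k,k}_{2^n}(\bg)]$. Hence, taking $\calP_{\mathrm U}:=\calP_{\mathrm{SU}}$ (the design of \Cref{thm:main-sou}, supported on $\SUgp{2^n}\subseteq\Ugp{2^n}$), the design-error quantity for $\Ugp{2^n}$ equals the one for $\SUgp{2^n}$, so it is at most $\eps/2^{nk}$, and every cardinality and explicitness property transfers verbatim.

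\textbf{Expected main obstacle.} Mathematically this is a wrap-up, so the only real friction is the strong-explicitness bookkeeping for the orthogonal determinant flip --- checking that $D$ can be produced, exactly or to accuracy $\eps/2^{O(nk)}$, within the one- and two-qubit gate set of \Cref{thm:2} and the $\poly(nk)\log(1/\eps)$ gate budget. A secondary point worth spelling out, to justify that ``no augmentation is needed'' for $\Ugp{2^n}$, is that the equality of the $\Ugp{2^n}$- and $\SUgp{2^n}$-Haar averages of $\rho^{k,k}_{2^n}$ holds for all $k$, not merely small $k$; the projective-group argument gives this uniformly in $k$, the underlying reason being that every $\Ugp{2^n}$-irreducible occurring in $\rho^{k,k}_{2^n}$ has a highest weight of coordinate-sum $0$, so no nonzero power of the determinant can contribute an extra $\SUgp{2^n}$-invariant vector.
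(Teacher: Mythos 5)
Your proposal is correct and follows essentially the same route as the paper: the orthogonal case is handled by augmenting the $\SOgp{2^n}$ design with a uniform $\pm 1$ determinant flip (the paper phrases this as ``multiply the first column by $\bb$'', which is exactly right-multiplication by your $D$), and the unitary case is handled by observing that $\rho^{k,k}_{2^n}$ factors through $\mathrm{PU}(2^n) \cong \Ugp{2^n}/K_{\mathrm U} \cong \SUgp{2^n}/K_{\mathrm{SU}}$ so the two Haar averages coincide and no augmentation is needed. You are somewhat more careful than the paper in two places --- you actually write out the operator-norm calculation $\E[\rho(\bg D^{\bb})] = \E[\rho(\bg)]\cdot\tfrac12(\Id+\rho(D))$ showing the error cannot increase, and you flag the strong-explicitness issue that $D$ (being of determinant $-1$) is not a product of $\SOgp{}$-gates and so must be compiled into the fixed gate set --- both of which the paper leaves implicit, but these are refinements of the same argument rather than a different proof.
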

\begin{proof}
  
\end{proof}
} %end ignore

%!TEX root = main.tex

\section{Establishing an initial spectral gap for special orthogonal and unitary groups}
\label{sec:ini-gap}

In the rest of the paper we consider a sequence of groups
$(\Ggp{n})_{n \geq 1}$ which is either $(\SOgp{2^n})_{n \geq 1}$ or $(\SUgp{2^n})_{n \geq 1}$.
We recall (see e.g. \cite[Section~1.3]{Mec19}) that these groups have associated Lie algebras $\frak{g}_n$, where
\begin{align}
 \text{for~} \Ggp{n} = \SOgp{2^n},\ \frak{g}_n &= \{H \in \R^{2^n \times 2^n} : H \text{ skew-symmetric}\}, \label{eq:algySO} \\
 \text{for~}  \Ggp{n}= \SUgp{2^n},\ \frak{g}_n &= \{H \in \C^{2^n \times 2^n} : H \text{ skew-Hermitian},  \tr(H)=0\}. \label{eq:algyU}
\end{align}
When we need to specialize our discussion to a particular one of these two cases, we will do so explicitly; most of our arguments go through for both settings (and many go through for the more general setting in which $\Ggp{n}$ is any compact connected Lie group).

As discussed in \Cref{sec:BHH-extraction-stuff}, given \Cref{thm:derando-walks}, in order to construct an explicit  $k$-design for $\Ggp{n}$ it suffices to construct an explicit sequence  $\calU = (U_1, \dots, U_{\c})$ of $2^n \times 2^n$ matrices from $\Ggp{n}$ satisfying $\opnorm{\rho^{k,k}_{2^n}(U_i)} \leq 1$ for all~$i$ and $\opnorm{\avg(\rho^{k,k}_{2^n}(\calU))} \leq 1-{\frac 1 {n \cdot \poly(k)}}$ (in fact, a spectral gap of $\poly(n,k)$ would also be sufficient). Constructing such a sequence for $\Ggp{n}$ as described above is the main goal of this section and is accomplished in the following theorem:

\begin{theorem} 
\label{thm:XXX}
Let $(\Ggp{n})_{n \geq 1} \in \{(\SOgp{2^n})_{n \geq 1},(\SUgp{2^n})_{n \geq 1}\}$. There is a fixed positive integer $n_0=4$\ignore{$n_0 \geq 2$ \rasnote{check: is $n_0$ just  $=3$ in both cases?}} and a finite multiset $P_{n_0} \subset \Ggp{n_0}$ such that
for all sufficiently large $n$ and all $k \geq 1$, we have
 \begin{equation}    \label[ineq]{ineq:well-mixing-single-pseudo-step}
        \forall k \in \N^+, \quad L_{\widetilde{P}_{n_0} \times \binom{[n]}{n_0}}(\rho^{k,k}_{2^n}) \geq {\frac 1 {n \cdot \poly(k)}} \cdot
        L_{\Ggp{n}}(\rho^{k,k}_{2^n}).
    \end{equation}
\end{theorem}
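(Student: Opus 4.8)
The plan is to transplant the strategy of~\cite{BHH16,HH21} for proving that local random circuits mix, adapting it in two respects: our local gates form a \emph{fixed} algebraic multiset rather than being Haar-random, and we must cover \emph{all}~$k$, including the range where the local dimension $2^{n_0}$ is small compared to~$k$ --- the range that~\cite{HH21} avoids via its hypothesis $q \geq 8k^2$. I argue throughout with the Laplacian operators of \Cref{def:LW,def:twiddle}, write $\rho = \rho^{k,k}_{2^n}$, and target the PSD inequality~\Cref{ineq:well-mixing-single-pseudo-step}. \emph{Step 1 (from fixed gates to local Haar gates).} First I replace $\widetilde P_{n_0}$ (with $n_0=4$) by the Haar measure on~$\Ggp{n_0}$, losing only a constant factor. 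The key fact is that $P_{n_0}$ generates a \emph{dense} subgroup of~$\Ggp{n_0}$: in the $\SOgp{2^{n_0}}$ case because~$\mathrm{Q}$ is a rotation through $\arccos(3/5)$, which is an irrational multiple of~$\pi$ by Niven's theorem, so that $\{\mathrm{CNOT},\mathrm{Q}\}$ is contained in no proper closed subgroup; in the $\SUgp{2^{n_0}}$ case because $\{\mathrm{CNOT},\mathrm{H},\mathrm{S},\mathrm{T}\}$ is the universal Clifford$+\mathrm{T}$ set. Density forces $\ker L_{\widetilde P_{n_0}}(\rho^{k,k}_{2^{n_0}}) = \ker L_{\Ggp{n_0}}(\rho^{k,k}_{2^{n_0}})$ and hence a positive gap; since~$\Ggp{n_0}$ is a fixed compact simple Lie group and the gate set has algebraic entries, spectral-gap results for algebraic generators (Bourgain--Gamburd and its extensions to compact simple Lie groups, e.g.\ Benoist--de Saxc\'e) even make this gap uniform in~$k$, giving $L_{\widetilde P_{n_0}}(\rho^{k,k}_{2^{n_0}}) \geq c_0\, L_{\Ggp{n_0}}(\rho^{k,k}_{2^{n_0}})$ for an absolute constant $c_0>0$ (a $1/\poly(k)$ bound here would already suffice). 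Tensoring with the identity on the remaining qubits and averaging over the placement $\be \in \binom{[n]}{n_0}$ turns this into $L_{\widetilde P_{n_0}\times\binom{[n]}{n_0}}(\rho) \geq c_0\, L_{\Ggp{n_0}\times\binom{[n]}{n_0}}(\rho)$, so it suffices to prove~\Cref{ineq:well-mixing-single-pseudo-step} with $\widetilde P_{n_0}$ replaced by Haar on~$\Ggp{n_0}$.

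\emph{Step 2 (local Haar gates mix on~$n$ qubits).} Write $L := L_{\Ggp{n_0}\times\binom{[n]}{n_0}}(\rho) = \E_{\be}[\Id - \Pi_{\be}]$, where $\Pi_{\be} = \E_{\bg\sim\Ggp{n_0}}[\rho(\bg_{\be})]$ is the orthogonal projector onto the $\Ggp{n_0}$-invariant subspace carried by the tensor copies of the qubits in~$\be$. Then $L$ is a frustration-free ``local Hamiltonian'' whose ground space is the global $\Ggp{n}$-invariant subspace $\ker L_{\Ggp{n}}(\rho)$, so the goal becomes to show $L$ has spectral gap $\geq 1/(n\cdot\poly(k))$ above it, i.e.\ $L \geq \tfrac1{n\,\poly(k)}\,L_{\Ggp{n}}(\rho)$ (a $1/\poly(n,k)$ gap would already suffice downstream). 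I would prove this with the standard frustration-free machinery --- Nachtergaele's martingale method, or the detectability lemma --- run over a telescoping chain of overlapping $m$-qubit blocks: up to a factor polynomial (in fact linear, with care) in~$n$, the gap of~$L$ is bounded below by the worst ``block-overlap gap'' $1 - \opnorm{\Pi_A \Pi_B - \Pi_{A\cup B}}$ over pairs of overlapping $m$-qubit blocks $A,B$ (here getting a full Haar gate on a single block $A$ from $n_0$-local gates is itself a smaller instance of the theorem, handled recursively down to the base size $m=n_0$ of Step~1). It therefore remains to show every such block-overlap gap is $\geq 1/\poly(k)$.

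\emph{Step 3 (the auxiliary~$m$-qubit walk --- the crux).} Restricted to the $\leq 2m$ qubits of $A\cup B$, the operator $\Pi_A\Pi_B$ is precisely the moment operator of the \emph{auxiliary $m$-qubit random walk} that alternately Haar-randomizes~$A$ and~$B$, and its spectral gap above $\Pi_{A\cup B}$ is governed by the commutant of~$\Ggp{m}$ on $(\C^{2^m})^{\otimes 2k}$ --- the Brauer algebra in the $\SOgp{2^m}$ case, the walled Brauer algebra in the $\SUgp{2^m}$ case. When~$m$ is \emph{large} relative to~$k$ (so that the local dimension $2^m$ comfortably exceeds~$k^2$ --- possible only when~$k$ is not too large compared to~$n$) this action is faithful and ``generic'': the relevant Gram matrices are well-conditioned and one can push the argument of~\cite{HH21} through essentially verbatim; this is \Cref{sec:large-m}. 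When~$m$ is \emph{small} relative to~$k$ --- in particular the base block size $m=n_0$, and indeed \emph{every} block size once~$k$ is large --- the Brauer action is no longer faithful, those Gram matrices degenerate, and \cite{HH21}'s estimate fails; supplying a replacement bound on $\opnorm{\Pi_A\Pi_B-\Pi_{A\cup B}}$ in this range (e.g.\ by decomposing into $\Ggp{m}\times\Ggp{m}$-isotypic components and bounding the overlap on each, or by locating the worst near-invariant direction explicitly) is the main technical obstacle of the whole argument, and is carried out in \Cref{sec:small-m}. Combining Steps~1--3 gives~\Cref{ineq:well-mixing-single-pseudo-step}; I expect Step~3's small-$m$ regime to be the genuinely hard part, since it is exactly the case that~\cite{HH21} did not have to treat.
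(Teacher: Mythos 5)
Your high-level architecture matches the paper's: Step~1 (Bourgain--Gamburd / Benoist--de Saxc\'e to pass from the fixed algebraic gate set to Haar on~$\Ggp{n_0}$) is exactly \Cref{lem:statement-G-new} and \Cref{cor:bs}, and your identification of the large-$m$ vs.\ small-$m$ dichotomy and the role of the (walled) Brauer commutant in the large-$m$ regime is exactly \Cref{thm:large-m} and \Cref{thm:kappa}. Your appeal to Niven's theorem for density of $\langle \mathrm{Q},\mathrm{CNOT}\rangle$ is fine, though the paper instead cites Shi's universality result directly.

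Two points of genuine divergence, one minor and one substantive. The minor one: for passing from local Haar gates to global mixing you invoke Nachtergaele's martingale method or the detectability lemma over overlapping $m$-qubit blocks, with a recursive ``smaller instance of the theorem'' to build each block projector. The paper avoids this machinery entirely: \Cref{lem:iter} is a short, self-contained telescoping argument that compares randomizing $m{-}1$ out of $m$ qubits to randomizing all~$m$, iterated from $m=n$ down to $m=n_0+1$, giving a clean product $\prod_{m} \tau_{k,m}$. Your block-overlap quantity $1-\opnorm{\Pi_A\Pi_B-\Pi_{A\cup B}}$ is close in spirit to the paper's $\opnorm{\avg_i\{\Pi_{[m]\setminus i}\otimes\Id_i\}-\Pi^{(m)}}$ but the paper's recursion is more streamlined and dodges the potential circularity in ``handled recursively down to the base size.''

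The substantive gap is Step~3's small-$m$ case, which you correctly flag as the crux but for which you propose only speculative directions (``decomposing into $\Ggp{m}\times\Ggp{m}$-isotypic components,'' ``locating the worst near-invariant direction''). The paper's actual mechanism is quite different: following \cite{BHH16,HH21}, it establishes an $L^2$-Wasserstein \emph{local contraction} (under Riemannian distance on the Lie group) for the two-step auxiliary walk via an explicit coupling --- choosing the second-step gate $\bh$ on the coupled copy as $\exp(-\tfrac12\eps\,\tr_m\log Z)$ to shrink the discrepancy --- and then invokes Oliveira's theorem that local Wasserstein contraction on a length space implies global contraction, finishing with a Lipschitz bound on $\rho^{k,k}$ to convert Wasserstein distance into operator-norm distance of moment operators. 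This yields a $k$-independent constant gap $\tau_{k,m}\geq .04$ for all $m\geq 4$, which is the crucial input to \Cref{lem:technical}. None of this appears in your sketch, and it is not clear that an isotypic decomposition route would give the needed $k$-uniform bound: when $k\gg 2^m$ the number of (walled) Brauer components is huge and the near-orthogonality of $\{\ket{\Phi_M}\}$ completely breaks down, so one would need a substantially new idea to make your alternative work.
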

(We note that even without using the ``pseudorandom walks'' machinery of \Cref{sec:pseudorandom-walks}, as discussed in \Cref{sec:our-framework}, since \Cref{thm:XXX} establishes an initial spectral gap of $1 - {\frac 1 {n \cdot \poly(k)}}$, simply taking a product of $n \cdot \poly(k) \cdot \log(2^{nk}/\eps)$ uniform random draws from $\widetilde{P}_{n_0}$ would yield an $\eps$-approximate $k$-design for $\Ggp{n}$ with seed length $\poly(n,k) \cdot \log(1/\eps)$.  By combining \Cref{thm:XXX} with \Cref{thm:derando-walks} (i.e.~using pseudorandom walks) we are able to improve this to seed length $O(nk + \log(1/\eps))$, thus matching the random construction.)

\subsection{Overview of the proof of \Cref{thm:XXX}} \label{sec:overview}

Our proof of \Cref{thm:XXX} refines and extends an approach from \cite{HH21}, and combines it with arguments from \cite{BHH16}. In this subsection we give a high-level overview of the structure of the proof, and in the next subsection we give (a modular version of) the actual proof. Establishing the various modular pieces will comprise the rest of the paper after \Cref{sec:proof-of-XXX}.

In Theorem~4 of \cite{HH21}, Haferkamp and Hunter-Jones establish a spectral gap for non-local random quantum circuits with truly (Haar) random two-qudit unitary gates over the unitary group. 
This is done by analyzing Haar random unitary gates over $m-1$ randomly chosen qubits from an $m$-qubit system; this enables them to establish a recurrence relation which lets them bound the spectral gap of $k$-qudit Haar random unitary gates in terms of the spectral gap of $(k+1)$-qudit Haar random unitary gates.  
Our \Cref{lem:iter} below is a generalization and rephrasing of their recurrence relation for the special\footnote{At the end of this subsection we explain why, even though our ultimate goal is to obtain results for the orthogonal and unitary groups, we need to work with the special versions of these groups at this point in the argument.} unitary and special orthogonal groups; it essentially says that if truly randomizing (a randomly chosen) $m-1$ out of $m$ qubits is ``not too much worse'' than truly randomizing all $m$ qubits, then truly randomizing only a constant number of (randomly chosen) qubits out of $m$ qubits is also not too much worse than truly randomizing all $m$ qubits. 
Given this, the remaining tasks are (1) to show that indeed truly randomizing (a randomly chosen) $m-1$ out of $m$ qubits is ``not too much worse'' than truly randomizing all $m$ qubits; and (2) to show that at the bottom level of the argument, it suffices to \emph{pseudorandomize} a constant number of (randomly chosen) qubits out of $m$ qubits.

Task (1) requires a significant amount of technical work and is the subject of \Cref{sec:large-m} and \Cref{sec:small-m}. We follow the high-level approach of \cite{HH21} by breaking the analysis into two sub-cases (\Cref{thm:large-m} and \Cref{thm:small-m}) depending on the relative sizes of $k$ and $m$. In each of these sub-cases we adapt and generalize the analysis of \cite{HH21} (we note that the ``small-$m$'' case of \cite{HH21}, for the unitary group, was based in turn on \cite{BHH16}) in a way which permits a unified treatment of both the special orthogonal group and the special unitary group.

Task (2) is necessary because our ultimate goal statement, \Cref{thm:XXX}, requires the randomly chosen non-local gates to be drawn from a \emph{finite} ensemble of gates rather than being Haar random (``truly random'') gates over $n_0$ qubits. 
For this step (made formal in \Cref{lem:statement-G-new}), following \cite{BHH16} we use a deep result of Bourgain and Gamburd (subsequently generalized by Benoiste and de~Saxc\'{e}~\cite{BS16}) to pass from the Haar distribution over $\Ggp{n_0}$ to a uniform distribution over an explicit finite ensemble of $n_0$-qubit gates; see \Cref{sec:Pn0}.
The \cite{BS16} results require that the Lie groups in question be compact and simple; this requirement is why we need to work with the special versions of the unitary and orthogonal groups (indeed, in the special orthogonal case we need to further pass to the projective special orthogonal group; see the proof of \Cref{cor:bs}).

%\blue{
%\begin{itemize}
%
%\item Explain what HHJ did: their Theorem~4 established a spectral gap for non-local random quantum circuit with two-qubit Haar (``truly random'') local gates. They did this via an analysis that considered an auxiliary random walk, etc, etc. Our \Cref{lem:iter} was inspired by their Lemma~5 (explain our \Cref{lem:iter} a bit).
%
%\item With \Cref{lem:iter} in place, to establish a spectral gap for the non-local random quantum circuit, one needs to establish a spectral gap for the auxiliary random walk, i.e. lower bound the $\tau_{k,m}$'s of \Cref{lem:iter}.  (Talk about how these are the main technical tasks; the large-$m$ case argument is loosely inspired by HHJ's Lemma 6, and the small-$m$ case is loosely inspired by BBH's Lemma Whatever.)
%
%\item While HHJ just considered non-local random quantum circuit with two-qubit Haar (``truly random'') non-local gates, for \Cref{thm:XXX} we need to have the randomly chosen non-local gates be drawn from a \emph{finite} ensemble of gates rather than be Haar random (``truly random'') gates over $n_0$ qubits.  This is akin to Corollary~7 of BHH; like them, we use BdS / BG for this.  
%
%\end{itemize}
%}
%
%\bigskip

\subsection{Proof of \Cref{thm:XXX}} \label{sec:proof-of-XXX}

In order to establish the lower bound of \Cref{ineq:well-mixing-single-pseudo-step} we will need to chain together some statements that go in the opposite direction from \Cref{ineq:rando} and \Cref{ineq:randotwiddle}.  We do this via the following lemma, which we prove in \Cref{sec:proof-of-lem-iter}.

\begin{lemma} \label{lem:iter}
    Fix a positive integer constant $n_0 \geq 4$.  Suppose that for $n_0 < m \leq n$ we have 
    \begin{equation}    \label[ineq]{ineq:mess}
        \forall k \in \N^+, \quad L_{\Ggp{m-1} \times \binom{[m]}{m-1}}(\rho^{k,k}_{2^m}) \geq \tau_{k,m} \cdot L_{\Ggp{m}}(\rho^{k,k}_{2^m}).
    \end{equation}
    Then
    \begin{equation}    \label[ineq]{ineq:concl}
 \forall k \in \N^+, \quad L_{\Ggp{n_0} \times \binom{[n]}{n_0}}(\rho^{k,k}_{2^n}) \geq \parens*{\prod_{n_0 < m \leq n} \tau_{k,{m}}}
        \cdot L_{\Ggp{n}}(\rho^{k,k}_{2^n}).
    \end{equation}
\end{lemma}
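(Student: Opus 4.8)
The plan is to prove \Cref{ineq:concl} by telescoping \Cref{ineq:mess} over the scales $m = n, n-1, \dots, n_0+1$, peeling off one qubit at a time. For $n_0 \leq m \leq n$, let
\begin{equation*}
  A_m \;\coloneqq\; L_{\Ggp{m} \times \binom{[n]}{m}}(\rho^{k,k}_{2^n}) \;=\; \Id - \E_{\bS \sim \binom{[n]}{m},\ \bg \sim \Ggp{m}}\bigl[\rho^{k,k}_{2^n}(\bg_{\bS})\bigr],
\end{equation*}
a self-adjoint operator on $(\C^{2^n})^{\otimes 2k}$. Since $\binom{[n]}{n} = \{[n]\}$, we have $A_n = L_{\Ggp{n}}(\rho^{k,k}_{2^n})$, while $A_{n_0}$ is exactly the left-hand side of \Cref{ineq:concl}. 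So it suffices to establish the \emph{one-step bound} $A_{m-1} \geq \tau_{k,m}\, A_m$ for each $n_0 < m \leq n$: chaining these (we may assume each $\tau_{k,m} \geq 0$) yields $A_{n_0} \geq \bigl(\prod_{n_0 < m \leq n}\tau_{k,m}\bigr) A_n$, which is \Cref{ineq:concl}.

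To prove the one-step bound I would first \emph{condition on the outer $m$-set}. A uniformly random $(m{-}1)$-subset of $[n]$ has the same distribution as a uniformly random $(m{-}1)$-subset of a uniformly random $m$-subset of $[n]$ (both are $\Sgp{n}$-invariant distributions on $\binom{[n]}{m-1}$, hence uniform). Therefore
\begin{equation*}
  A_{m-1} \;=\; \E_{\bS \sim \binom{[n]}{m}}\bigl[L^{(\bS)}\bigr], \qquad L^{(S)} \coloneqq \Id - \E_{\bt \sim \binom{S}{m-1},\ \bg \sim \Ggp{m-1}}\bigl[\rho^{k,k}_{2^n}(\bg_{\bt})\bigr],
\end{equation*}
and likewise $A_m = \E_{\bS}\bigl[L_{\Ggp{m}\times\{\bS\}}(\rho^{k,k}_{2^n})\bigr]$. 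Now fix an $m$-set $S$. Relabelling the qubits of $S$ by $[m]$ and moving them to the front via a fixed qubit-permutation unitary on $\C^{2^n}$, any $g \in \Ggp{m-1}$ placed on an $(m{-}1)$-subset $\bt \subseteq S$ becomes, as an element of $\Ugp{2^n}$, conjugate to $g_{\bt'} \otimes \Id_{2^{n-m}}$ with $g_{\bt'} \in \Ugp{2^m}$; regrouping the $2k$ tensor factors of $\rho^{k,k}$ accordingly, there is a single unitary $W_S$ (a permutation of tensor factors, depending only on $S$) such that $\rho^{k,k}_{2^n}(g_{\bt}) = W_S\bigl(\rho^{k,k}_{2^m}(g_{\bt'}) \otimes \Id\bigr)W_S^{\dagger}$ for every $g$ and $\bt$. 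Taking expectations (over $\bg$, and for $L^{(S)}$ over the relabelled $\bt' \sim \binom{[m]}{m-1}$), this gives
\begin{align*}
  L^{(S)} &= W_S\bigl(L_{\Ggp{m-1}\times\binom{[m]}{m-1}}(\rho^{k,k}_{2^m}) \otimes \Id\bigr)W_S^{\dagger}, \\
  L_{\Ggp{m}\times\{S\}}(\rho^{k,k}_{2^n}) &= W_S\bigl(L_{\Ggp{m}}(\rho^{k,k}_{2^m}) \otimes \Id\bigr)W_S^{\dagger}.
\end{align*}
Applying the hypothesis \Cref{ineq:mess} at scale $m$, namely $L_{\Ggp{m-1}\times\binom{[m]}{m-1}}(\rho^{k,k}_{2^m}) \geq \tau_{k,m}\,L_{\Ggp{m}}(\rho^{k,k}_{2^m})$, and then tensoring both sides with $\Id$ and conjugating by $W_S$ (both operations preserve the PSD order), we get $L^{(S)} \geq \tau_{k,m}\,L_{\Ggp{m}\times\{S\}}(\rho^{k,k}_{2^n})$ for every $S$; averaging over $\bS$ (which also preserves the PSD order) yields $A_{m-1} \geq \tau_{k,m} A_m$, as desired.

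The only genuinely delicate part is the bookkeeping in the third paragraph: verifying that the composite subset-sampling is uniform, and that all the tensor-factor reorderings can be carried out by a single $g$-independent unitary $W_S$, so that the PSD inequality of \Cref{ineq:mess}, which lives on $(\C^{2^m})^{\otimes 2k}$, transports verbatim to $(\C^{2^n})^{\otimes 2k}$ after tensoring with an identity. I do not expect a real obstacle here; the substance of the lemma is precisely the observation that conditioning on the outer $m$-set collapses the $n$-qubit statement to the $m$-qubit hypothesis tensored with $\Id$.
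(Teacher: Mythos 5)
Your proof is correct and takes essentially the same approach as the paper: both establish the one-step PSD bound $A_{m-1} \geq \tau_{k,m}\,A_m$ at the $n$-qubit level by tensoring the $m$-qubit hypothesis with identity on the spectator qubits and averaging over the positions of the acted-upon qubits, then chain these bounds from $m=n$ down to $m=n_0$. Your packaging via the explicitly named intermediate Laplacians $A_m$ and the outer-set conditioning is a somewhat crisper way to organize the same tensor-with-identity, averaging, and transitivity manipulations that the paper carries out iteratively.
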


We remark that \Cref{lem:iter} only deals with ``truly" (Haar) random gates; later we will move from $n_0$-arity ``truly random'' gates to ``pseudorandom'' gates, which are drawn uniformly at random from a finite multiset.  It may be helpful to think of the lemma's conclusion (\Cref{ineq:concl}) as intuitively saying that truly randomizing only constantly many (randomly chosen) qubits is ``not too much worse'' than truly randomizing all $n$ qubits, vis-a-vis the $k$-wise tensor product representation.

With the above lemma in hand, proving \Cref{thm:XXX} breaks down naturally into two steps.

\medskip

\noindent {\bf First step: Passing from truly random $m$-qubit gates to truly random $(m-1)$-qubit gates.}  In other words, lower-bounding $\tau_{k,m}$ for $m=n_0+1,\dots,n$.  This is the main technical task where the bulk of our work is required.  The analysis is done separately for  ``large $m$'' and ``small $m$'' cases, similar to Lemmas~6 and~7 of \cite{HH21}, respectively.  
%\rasnote{Expositionally, do we want to say more about these arguments here, or in the relevant sections where we prove those results, or nowhere?}

 \Cref{sec:large-m} lower bounds $\tau_{k,m}$ for ``large $m$'':

\begin{theorem} \label{thm:large-m}
For all   $k \leq \frac{1}{\sqrt{10}m^2} 2^{m/2}$ we have that \Cref{ineq:mess} holds with $\tau_{k,m} \geq 1 - {\frac 1 m} - {\frac {\sqrt{10} km}{2^{m/2}}}$.
\end{theorem}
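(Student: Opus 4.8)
The plan is to turn \Cref{ineq:mess} into a near-orthogonality statement about invariant subspaces and verify it via Schur--Weyl/Brauer duality together with a conditioning bound on Gram matrices of diagram operators. Write $D := 2^m$, let $\Pi := \E_{\bg \sim \Ggp{m}}[\rho^{k,k}_{2^m}(\bg)]$, and for $j \in [m]$ let $\Pi^{(j)} := \E_{\bg \sim \Ggp{m-1}}[\rho^{k,k}_{2^m}(\bg_{[m] \setminus \{j\}})]$; since Haar-averaging a unitary representation of a compact group gives the orthogonal projection onto its invariant subspace, $\Pi$ projects onto the $\Ggp{2^m}$-invariants of $(\C^D)^{\otimes 2k}$ and $\Pi^{(j)}$ onto the invariants of the copy of $\Ggp{2^{m-1}}$ acting on all qubits but the $j$-th. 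That copy really embeds in $\Ggp{2^m}$ --- it is a conjugate of the block embedding $g \mapsto g \otimes \mathrm{Id}_2$, which lands in the special group because $\det(g \otimes \mathrm{Id}_2) = \det(g)^2 = 1$ --- so $\Pi \le \Pi^{(j)}$, the two projections commute, $\Pi \Pi^{(j)} = \Pi$, and $P_j := \Pi^{(j)} - \Pi$ is itself an orthogonal projection, supported on $\ker \Pi$. Since $L_{\Ggp{m-1} \times \binom{[m]}{m-1}}(\rho^{k,k}_{2^m}) = \Id - \frac{1}{m}\sum_{j}\Pi^{(j)}$ and $L_{\Ggp{m}}(\rho^{k,k}_{2^m}) = \Id - \Pi$, and both of these operators vanish on the range of $\Pi$ and commute with $\Pi$, \Cref{ineq:mess} with $\tau_{k,m} = 1 - \frac{1}{m} - \frac{\sqrt{10}\,km}{2^{m/2}}$ amounts to $\opnorm{\frac{1}{m}\sum_{j}P_j} \le \frac{1}{m} + \frac{\sqrt{10}\,km}{2^{m/2}}$ (noting $\sum_j P_j$ is supported on $\ker\Pi$). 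Expanding $\bigl(\sum_j P_j\bigr)^2 = \sum_j P_j + \sum_{i \ne j}P_i P_j$ and using $x^2 \le x+y \Rightarrow x \le 1+y$ (for $x,y\ge0$), this follows once $\sum_{i\ne j}\opnorm{P_i P_j} \le \frac{\sqrt{10}\,km^2}{2^{m/2}}$; since $P_i P_j = \Pi^{(i)}\Pi^{(j)} - \Pi$ (using $\Pi^{(i)}\Pi = \Pi = \Pi\Pi^{(j)}$), the whole theorem reduces to $\opnorm{\Pi^{(i)}\Pi^{(j)} - \Pi} = O(k^2/2^m)$ for each $i \ne j$ --- much stronger than the per-pair bound $\frac{\sqrt{10}\,k}{2^{m/2}}$ that would already do.

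Next I would describe the subspaces. This is the one place the hypothesis $k \le \frac{1}{\sqrt{10}m^2}2^{m/2}$ is essential: it makes $D = 2^m$ vastly larger than $k$, so the $\Ggp{2^m}$- and $\Ggp{2^{m-1}}$-invariants coincide with the $\mathrm{U}$- (resp.\ $\mathrm{O}$-) invariants --- the extra determinant/Levi--Civita invariants would require at least $D$ (conjugate-free, resp.\ total) tensor slots, far more than the $k$ (resp.\ $2k$) available --- and by Schur--Weyl/Brauer duality the invariant subspace is spanned by a \emph{linearly independent} family of diagram vectors, indexed by $\pi\in S_k$ in the unitary case and by perfect matchings of $2k$ points in the orthogonal case. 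The structural point is that each diagram vector $\ket{v} = \bigotimes_{l=1}^{m}a^{(l)}_{v}$ \emph{factors over the $m$ qubit registers}, with $a^{(l)}_v$ the analogous diagram vector on the single-qubit register $(\C^2)^{\otimes 2k}$; correspondingly the range of $\Pi^{(j)}$ is the span of the ``register-$j$-free'' vectors $\bigl(\bigotimes_{l\ne j}a^{(l)}_v\bigr)\otimes w$ with $w$ arbitrary in register $j$. In the idealized regime where the $a^{(l)}_v$ are orthonormal within each register, a short computation --- valid for $m\ge3$, which is implied by the hypothesis (for $k\ge1$) --- shows that any $x$ in the range of $\Pi^{(i)}$ is orthogonal to any $y$ in the range of $P_j$ when $i\ne j$: writing $x = \sum_{v}\bigl(\bigotimes_{l\ne i}a^{(l)}_v\bigr)\otimes u_v$ and $y = \sum_{v}\bigl(\bigotimes_{l\ne j}a^{(l)}_v\bigr)\otimes w_v$, orthonormality on the $m-2$ registers $l\ne i,j$ collapses $\langle x,y\rangle$ to $\sum_v \langle u_v, a^{(i)}_v\rangle\,\langle a^{(j)}_v, w_v\rangle$, while the defining condition on $y$ (orthogonality to the range of $\Pi$) forces every $\langle a^{(j)}_v, w_v\rangle$ to vanish. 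Hence $\Pi^{(i)}\Pi^{(j)} = \Pi$ exactly in the idealized case --- the clean form of the intuition that truly randomizing an $(m-1)$-subset fails only ``on the one qubit it misses,'' and that these failures are mutually orthogonal for distinct missed qubits.

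The remaining --- and main --- work is to control the finite-$D$ error. The Gram matrix of the diagram vectors on $(\C^D)^{\otimes 2k}$ has diagonal entries $D^k$ and off-diagonal entries $D^{k-c}$ with $c\ge1$ the number of ``lost'' cycles/loops, so after normalization it is $\Id + E$ with $\opnorm{E} = O(k^2/D)$, which is meaningful since the hypothesis gives $D \ge 10k^2m^4 \gg k^2$. Writing $\Pi = A(A^\dagger A)^{-1}A^\dagger$ with $A$ the matrix of normalized diagram vectors and expanding $(A^\dagger A)^{-1} = (\Id+E)^{-1}$ around $\Id$, one obtains that $\Pi$ --- and, by the same argument applied to the $(m-1)$-qubit group, each $\Pi^{(j)}$ --- lies within operator norm $O(k^2/D)$ of its idealized counterpart; feeding these estimates into the idealized identity $\Pi^{(i)}\Pi^{(j)} = \Pi$ yields $\opnorm{\Pi^{(i)}\Pi^{(j)} - \Pi} = O(k^2/2^m)$, which under the hypothesis is comfortably below $\frac{\sqrt{10}\,k}{2^{m/2}}$, completing the proof. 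I expect this last step to be the delicate one: pinning down the linear independence and the exact diagonal/off-diagonal pattern of the (walled) Brauer diagram Gram matrices in the large-$D$ regime, and then propagating the $O(k^2/D)$ perturbation through the inverse Gram matrices with clean, explicit constants --- all while treating the orthogonal (Brauer) and unitary (walled Brauer) cases uniformly. The reductions in the first paragraph and the orthogonality calculation in the second are, by comparison, routine bookkeeping, and I would carry the $\SOgp{2^m}$ and $\SUgp{2^m}$ cases side by side throughout, the sole real difference being which diagram algebra indexes the invariant subspace.
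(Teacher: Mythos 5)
Your first-paragraph reduction is sound and closely parallels the paper's: pass to $P_j := \Pi^{(j)} - \Pi$, expand $(\sum_j P_j)^2$, and reduce to a per-pair bound on $\opnorm{P_iP_j}$; you also correctly note that $\opnorm{P_iP_j}\leq\sqrt{10}k/2^{m/2}$ would suffice. The gap is in the final step, and it stems from conflating two different ``idealizations.'' Your Gram-matrix perturbation argument legitimately connects the true projection $\Pi$, up to error $O(k^2/D)$ in operator norm, to the \emph{frame operator} $\widehat{\Pi} := \sum_M \ket{\Phi_M}\!\bra{\Phi_M}$, where $\ket{\Phi_M}=\ket{\phi_M}^{\otimes m}$. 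But the identity you establish in your second paragraph holds only under the counterfactual assumption that the \emph{single-register} vectors $\ket{\phi_M}$ are orthonormal in $(\C^2)^{\otimes 2k}$ --- and they never are, no matter how large $D=2^m$ is: that register is fixed, and $\braket{\phi_M|\phi_{M'}} = 2^{\mathrm{cc}(M\cup M')-k}$ is $\Omega(1)$ whenever $M,M'$ differ by a transposition. Only the full $m$-fold tensors $\ket{\Phi_M}$ become nearly orthonormal. Consequently the frame operators do \emph{not} satisfy your ``idealized identity'': writing $J_M := \ket{\phi_M}\!\bra{\phi_M}$ and $\widehat{\Pi}^{(1)} = \Id_1 \otimes \sum_M J_M^{\otimes(m-1)}$ etc., a direct computation gives
\begin{equation*}
\widehat{\Pi}^{(1)}\,\widehat{\Pi}^{(m)} - \widehat{\Pi} \;=\; \sum_{M\neq M'} J_{M'} \otimes (J_M J_{M'})^{\otimes(m-2)} \otimes J_M,
\end{equation*}
a sum of up to $((2k-1)!!)^2$ nonzero rank-one cross terms whose operator norm you have not controlled (a triangle inequality is hopelessly lossy). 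So there is simply no real object into which to ``feed the $O(k^2/D)$ estimates'': the identity fails for the frame operators, and the orthonormal-register world is unreachable by a small perturbation.

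The paper gets around exactly this difficulty by never trying to make $\widehat{\Pi}^{(1)}\widehat{\Pi}^{(m)}-\widehat{\Pi}$ small. Instead it bounds $\opnorm{\wt{P}_1\wt{P}_m}^2 = \opnorm{P_m\wt{P}_1P_m}$, substitutes the approximation $\wt{P}_1 \approx \sum_M \ol{J}_M\otimes J_M^{\otimes(m-1)}$ --- crucially, a sum over a \emph{single} index $M$, not pairs --- then uses that each $Z_M := \Pi^{(m-1)}(\ol{J}_M\otimes J_M^{\otimes(m-2)})\Pi^{(m-1)}$ is PSD together with the order bound $Z_M\otimes J_M \leq Z_M\otimes\Id$ to strip off the last register, and recurses down to a quantity that vanishes exactly. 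Some device of that sort is what's needed here; your perturbative route, as written, does not close. (I also caution that whether your stronger per-pair claim $\opnorm{\Pi^{(i)}\Pi^{(j)}-\Pi}=O(k^2/D)$ is even \emph{true} is unclear; the paper's argument yields only $O(k/\sqrt{D})$ per pair, which is all that the reduction requires.)
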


 \Cref{sec:small-m} gives a  lower bound on $\tau_{k,m}$ which will be useful for ``small $m$'':

\begin{theorem} \label{thm:small-m}
%Let the sequence of groups $(\Ggp{n})_{n \geq 1}$ be either $(\SOgp{2^n})_{n \geq 1}$ or $(\Ugp{2^n})_{n \geq 1}$.
For all $m \geq 4$ and all $k \in \N^+$, we have that \Cref{ineq:mess} holds with 
$\tau_{k,m} \geq .04.$
%\rasnote{We could take $n_0+2$ to be the smallest value such that $BLAH$ is strictly greater than zero; that would be $n_0=1$, but we'll take $n_0=2$ because we don't have universality with 1-qubit gates.}
%
%\begin{equation} \label[ineq]{ineq:tau-bound-small-m}
%% WAS: \tau_{k,m} \geq \left(1 - {\frac 1 m}\right){\frac {2^{2m-3}+2^{m-2} - 3}{2^{2m-1} + 2^m - 4} }.
%\end{equation}
\end{theorem}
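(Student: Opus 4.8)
The plan is to reduce \Cref{ineq:mess} (with $\tau_{k,m}=.04$) to a bound on the second-largest eigenvalue of a ``local moment operator'' and then establish that bound along the representation-theoretic lines of \cite{HH21,BHH16}. Write $N=2^m$, and for $i\in[m]$ let $P_i=\E_{\bg\sim\Ggp{m-1}}[\rho^{k,k}_{2^m}(\bg_{[m]\setminus\{i\}})]$, the orthogonal projection of $(\C^N)^{\otimes 2k}$ onto the subspace $V_i$ of vectors fixed by the copy of $\Ggp{m-1}$ acting on qubits $[m]\setminus\{i\}$. Then $L_{\Ggp{m-1}\times\binom{[m]}{m-1}}(\rho^{k,k}_{2^m})=\Id-A$ with $A:=\frac1m\sum_{i=1}^mP_i$, a PSD operator with spectrum in $[0,1]$. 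Each $\Ggp{m-1}$-gate lies in $\Ggp m$, so $\Pi_m:=\E_{\bg\sim\Ggp m}[\rho^{k,k}_{2^m}(\bg)]\preceq P_i$ and $W:=\mathrm{range}(\Pi_m)\subseteq\bigcap_iV_i$; the reverse inclusion holds because the $(m-1)$-qubit subgroups on the $m$ coordinate blocks generate all of $\Ggp m$ (standard universality of local gates, valid for $m\ge 3$). Hence $W$ is exactly the top eigenspace of $A$, both $\Id-A$ and $\Id-\Pi_m$ vanish on $W$, $W^\perp$ is $A$-invariant, and \Cref{ineq:mess} with $\tau_{k,m}=.04$ is precisely the assertion $\opnorm{A|_{W^\perp}}\le .96$. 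Moreover, by \Cref{thm:large-m} the claim already holds whenever $k\le\frac{1}{\sqrt{10}m^2}2^{m/2}$ (then $\tau_{k,m}\ge 1-2/m\ge 1/2$ for $m\ge 4$), so we may assume $k>\frac{1}{\sqrt{10}m^2}2^{m/2}$; in this remaining regime the local dimension $N/2$ is much smaller than $k^2$, which is the source of the difficulty.

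I would then pass to the explicit combinatorial models of these invariant subspaces coming from Schur--Weyl/Brauer duality, keeping the orthogonal and unitary cases parallel. For $\SOgp{N}$ one has $\rho^{k,k}_{2^m}\cong\rho^{2k}_{2^m}$ and $W$ is spanned by the pairing vectors $\prod_{\{a,b\}\in M}\ket{\Omega_N}_{a,b}$ over perfect matchings $M$ of $[2k]$, where $\ket{\Omega_N}=\sum_{j=1}^N\ket{jj}$; for $\SUgp{N}$, $W$ is spanned by the analogous vectors indexed by bijections of $[k]$ onto $[k]$. Two structural facts drive the argument. First, $A$ and $\Pi_m$ commute with the $S_{2k}$-action permuting the $2k$ tensor copies (for $\SUgp{N}$, with $S_k\times S_k$ together with the swap, using that complex conjugation preserves the Haar measure), so decomposing $(\C^N)^{\otimes 2k}$ into isotypic components for this action reduces $\opnorm{A|_{W^\perp}}\le .96$ to second-eigenvalue bounds on finitely many explicit finite matrices, one per irreducible constituent---the skeleton used in \cite{BHH16,HH21}. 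Second, splitting qubit $i$ off each of the $2k$ copies gives $(\C^N)^{\otimes 2k}\cong(\C^{N/2})^{\otimes 2k}\otimes(\C^2)^{\otimes 2k}$ with $P_i=\Pi_{N/2}\otimes\Id$, and the factorization $\ket{\Omega_N}=\ket{\Omega_{N/2}}\otimes\ket{\Omega_2}$ exhibits $W\subseteq V_i$ and links the $N$- and $(N/2)$-dimensional pairing pictures.

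The heart of the matter---and what I expect to be the main obstacle---is a quantitative, uniform-in-$k$ sharpening of ``$\bigcap_iV_i=W$'': a unit vector $v\perp W$ cannot satisfy $\E_i\norm{P_i v}^2>.96$; equivalently, a vector that looks like an $(N/2)$-dimensional pairing/permutation vector after erasing any single one of the $m$ qubits must already be a genuine $N$-dimensional one. Inside each $S_{2k}$-isotypic block I would recast $\opnorm{A|_{W^\perp}}\le .96$ as a positive-semidefiniteness statement for a difference of Gram-type matrices of pairing vectors in dimensions $N$ and $N/2$---working with Gram matrices, whose entries are explicit powers of $N$ or $N/2$ counting cycles of unions of matchings, rather than with bases; this is crucial because for $k\gg\sqrt N$ these vectors are wildly linearly dependent, and the analysis of \cite{HH21} itself required local dimension at least $8k^2$ and so does not apply here---and then verify it using the Weingarten/Brauer calculus together with the factorization $\ket{\Omega_N}=\ket{\Omega_{N/2}}\otimes\ket{\Omega_2}$. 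Since only the fixed constant $.04$ is claimed, the estimates can be very lossy, which is exactly what should let them survive the linear-dependence regime; the delicate points are (i) doing the Gram-matrix/Weingarten bookkeeping without invertibility, (ii) arranging the argument so that the two diagram algebras (Brauer for $\SOgp{N}$, walled Brauer for $\SUgp{N}$) run on the same rails, and (iii) handling the heavily linearly dependent regime $N/2\ll k^2$ not covered by \cite{HH21}. Finally I would check the base case $m=4$ (and small $k$) directly, since the general estimates may degenerate there.
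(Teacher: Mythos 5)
Your reduction (passing to $A=\frac1m\sum_iP_i$, identifying the top eigenspace with $W=\mathrm{range}(\Pi_m)$, and noting that \Cref{thm:large-m} already handles $k\le\frac{1}{\sqrt{10}m^2}2^{m/2}$) is correct and a reasonable setup. But the remaining regime is exactly where your plan stalls, and you say so yourself: items (i)--(iii) in your last paragraph are not technical loose ends but the entire content of the theorem. You propose to push the Gram-matrix/Weingarten analysis of the pairing vectors into the regime $N/2\ll k^2$, hoping that the fixed constant $.04$ makes the estimates survive. Nothing in the proposal indicates how the wild linear dependence of the $\ket{\Phi_M}$ (once $k\gtrsim\sqrt N$) is to be controlled; in that regime the Gram matrix is far from a perturbation of the identity, \Cref{lem:gram}-type arguments and \Cref{thm:kappa} are inapplicable, and ``lossy'' is not a substitute for an argument. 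So the proposal as written has a genuine gap at precisely the step it flags as the ``heart of the matter.''

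The paper avoids this entirely by taking a route that never mentions $k$-dependent representation theory. It endows $\Ggp m$ with the bi-invariant Riemannian metric, constructs an explicit coupling of two steps of the $\Ggp{m-1}\times\binom{[m]}{m-1}$ walk (\Cref{thm:couple}), and shows local $L^2$-Wasserstein contraction with a rate $1-\gamma_m$ depending only on $m$. Oliveira's theorem (\Cref{thm:oliveira}) upgrades local contraction to global contraction, and the $2k$-Lipschitz bound $\opnorm{\rho^{k,k}(X)-\rho^{k,k}(Y)}\le 2k\opnorm{X-Y}$ converts a Wasserstein bound into an operator-norm bound that is then killed by taking $\ell$th roots. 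The only representation theory used is for $\rho^{2,2}_{2^{m-1}}$ (in \Cref{lem:yummy2} via \Cref{prop:reptheory}), which sits comfortably in the nondegenerate Schur--Weyl regime for every $m\ge4$ and is independent of $k$. That is the idea your proposal is missing: one does not need to make the pairing-vector picture work uniformly in $k$; one needs to decouple $k$ from the contraction estimate altogether, which the optimal-transport argument does.
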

(We note that \Cref{thm:small-m}'s requirement that $m \geq 4$ is why we take $n_0=4$ in \Cref{thm:XXX}.)

Given \Cref{thm:large-m} and \Cref{thm:small-m}, we get the desired lower bound on $\tau_{k,n_0+1} \cdots \tau_{k,n}$ from a routine computation:
%\footnote{While the claim allows $n_0$ to be any fixed constant, we only need the two particular values $n_0 = XXX$ and $n_0=YYY$ corresponding to $\SOgp{\cdot}$ and $\Ugp{\cdot}$, respectively.}

%\rasnote{\Cref{lem:technical} feels particular-choice-of-$\Ggp{\cdot}$-specific too since it involves the particular nudgy bounds on $\tau_{k,m}$}
\begin{lemma}  \label{lem:technical}
%Let the sequence of groups $(\Ggp{n})_{n \geq 1}$ be either $(\SOgp{2^n})_{n \geq 1}$ or $(\Ugp{2^n})_{n \geq 1}$.
For any constant $n_0 \geq 4$, for all $n$ and all $k \in \N^+$
%$1 \leq k \leq 2^{2^n/3}$, 
we have 
$\tau_{k,n_0+1} \cdots \tau_{k,n} \geq {\frac 1 {n \cdot \poly(k)}}.$

% WAS:

%Taking $n_0=3$, 
%for all sufficiently large $n$ and all $k \geq 1$,
%%$1 \leq k \leq 2^{2^n/3}$, 
%we have 
%$\tau_{k,5} \cdots \tau_{k,n} \geq 1/\poly(k,n)$.\rnote{Rocco wants a double-check that this is cool for all $k$, indep of $n$.}
\end{lemma}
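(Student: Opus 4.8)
The plan is to combine the two regime-dependent bounds from \Cref{thm:large-m} and \Cref{thm:small-m} by splitting the product $\tau_{k,n_0+1}\cdots\tau_{k,n}$ at the threshold where the ``large-$m$'' bound becomes useful. Concretely, set $m^\star$ to be roughly the smallest integer with $2^{m^\star/2} \geq C k (m^\star)^2$ for a suitable absolute constant $C$ (so that the hypothesis $k \leq \frac{1}{\sqrt{10}\,m^2}2^{m/2}$ of \Cref{thm:large-m} holds for all $m \geq m^\star$); note $m^\star = O(\log k)$. For $n_0 < m < m^\star$ I would use the crude bound $\tau_{k,m} \geq .04$ from \Cref{thm:small-m}, contributing a factor of at least $(.04)^{m^\star - n_0 - 1} = (.04)^{O(\log k)} = 1/\poly(k)$ to the product. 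For $m \geq m^\star$ I would use \Cref{thm:large-m}, so the tail of the product is at least $\prod_{m \geq m^\star}\bigl(1 - \tfrac1m - \tfrac{\sqrt{10}\,km}{2^{m/2}}\bigr)$.

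The main work is then to show this tail product is $\geq 1/(n\cdot\poly(k))$. I would handle the two subtracted terms separately by a union-type argument on $\log(1/\text{product})$: since $\log(1-x)\geq -2x$ for $x \leq 1/2$, it suffices to bound $\sum_{m=m^\star}^{n}\frac1m$ and $\sum_{m=m^\star}^{n}\frac{\sqrt{10}\,km}{2^{m/2}}$. The first sum is at most $\ln n + O(1) = \ln n + O(1)$, contributing a factor of $1/O(n)$ to the product. The second sum is a geometrically-decaying-times-polynomial series; by the choice of $m^\star$ each term is at most (a constant times) $1/m^2$ or even summable to a constant, so $\sum_{m\geq m^\star}\frac{\sqrt{10}\,km}{2^{m/2}} = O(1)$, contributing only a constant factor. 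Multiplying the three contributions — $1/\poly(k)$ from the small-$m$ range, $1/O(n)$ from the harmonic piece, and $\Omega(1)$ from the geometric piece — yields $\tau_{k,n_0+1}\cdots\tau_{k,n} \geq \frac{1}{n\cdot\poly(k)}$ as claimed.

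A couple of edge cases need a sentence each: if $n \leq m^\star$ (i.e.\ $k$ is large relative to $n$), then the entire product is covered by the \Cref{thm:small-m} bound and equals at least $(.04)^{n - n_0 - 1}$; since in this regime $n = O(\log k)$, this is again $1/\poly(k)$ and in particular $\geq 1/(n\cdot\poly(k))$. Also one should check that $m^\star \geq n_0 + 1 = 5$ can be assumed (if not, the small-$m$ range is empty and only the tail estimate is needed), and that the hypothesis of \Cref{thm:large-m} is genuinely met for all $m \geq m^\star$ up to $m = n$ — this is exactly how $m^\star$ was chosen.

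The step I expect to be the main (though still routine) obstacle is getting the constant $C$ in the definition of $m^\star$ right so that simultaneously (i) $k \leq \frac{1}{\sqrt{10}\,m^2}2^{m/2}$ holds for every $m \geq m^\star$, and (ii) the geometric tail $\sum_{m \geq m^\star}\frac{\sqrt{10}\,km}{2^{m/2}}$ is bounded by an absolute constant; both are driven by the same inequality $2^{m/2} \gtrsim k\,m^2$, so a single sufficiently large $C$ works, but one has to verify the series bound carefully (e.g.\ by comparing $\frac{km}{2^{m/2}}$ against $\frac{C'}{m^2}$ term-by-term for $m \geq m^\star$ and using $\sum 1/m^2 = O(1)$). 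None of this requires new ideas beyond \Cref{thm:large-m} and \Cref{thm:small-m}; it is the promised ``routine computation.''
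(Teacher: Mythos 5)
Your decomposition is the same as the paper's: split the product at a threshold $m^\star = O(\log k)$, use $\tau_{k,m}\geq .04$ from \Cref{thm:small-m} below the threshold (contributing $(.04)^{O(\log k)} = 1/\poly(k)$), and apply \Cref{thm:large-m} above it. This is exactly what the paper does with $\ell=\lfloor 4\log_2(60k)\rfloor$. The large-$k$ / small-$n$ edge case you flag is also handled the same way.

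There is, however, a quantitative slip in your tail estimate that costs you a factor of $n$. You bound $\prod_{m\geq m^\star}\bigl(1-\tfrac1m - \tfrac{\sqrt{10}km}{2^{m/2}}\bigr)$ by applying $\log(1-x)\geq -2x$ to the combined error $x_m=\tfrac1m + \tfrac{\sqrt{10}km}{2^{m/2}}$. Exponentiating, the harmonic part then gives
\begin{equation*}
  \exp\Bigl(-2\sum_{m=m^\star}^n \tfrac1m\Bigr) = \Theta\bigl((m^\star/n)^2\bigr),
\end{equation*}
which is $\Theta(1/n^2)$ up to $\polylog(k)$ factors, not the $1/O(n)$ you assert. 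Combined with the $1/\poly(k)$ from the small-$m$ range, your final bound is $\Omega\bigl(1/(n^2\poly(k))\bigr)$, which does not match the lemma's claimed $1/(n\cdot\poly(k))$.

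The factor of $2$ must not be allowed to hit the harmonic part. The paper avoids it by keeping $\prod_{m>\ell}(1-\tfrac1m)$ exact --- it telescopes to $\ell/n$ --- and only applying the crude exponential bound to the (summable) geometric piece via $1-\tfrac1m-2^{-m/5}\geq(1-\tfrac1m)\exp(-2^{1-m/5})$. You could equally well fix your version by replacing $\log(1-x)\geq -2x$ with $\log(1-x)\geq -x-x^2$ (valid for, say, $0\leq x\leq 1/2$): then $\exp(-\sum x_m - \sum x_m^2)=\Omega(m^\star/n)$ since $\sum_{m\geq m^\star} x_m^2 = O(1)$, recovering the linear $1/n$ dependence. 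Either way, the rest of your argument (in particular the bound $\sum_{m\geq m^\star}\tfrac{\sqrt{10}km}{2^{m/2}}=O(1)$ via your choice of $m^\star$) is fine.
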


\begin{proof}
Fix $n_0 \geq 4$ and take any $n,k \geq 1$.
Defining $\ell = \lfloor 4 \log_2(60 k)\rfloor \geq 20$,  by \Cref{thm:small-m} we have
\begin{equation} \label{eq:mama}
\tau_{k,n_0+1} \cdots \tau_{k,\ell} \geq
 (.04)^{\ell}  = (.04)^{O(\log k)} \geq \frac{1}{\poly(k)}.
\end{equation}
This proves the result if $n \leq \ell$.  Otherwise, it remains to show that 
\begin{equation} \label[ineq]{ineq:mama}
    \tau_{k,\ell+1} \cdots \tau_{k,n} \geq 1/n.
\end{equation}
For $m \geq \ell+1$ we have $k \leq \frac{1}{60} 2^{m/4} \leq \frac{1}{\sqrt{10} m^2}2^{m/2}$, so we are eligible to use the bound from \Cref{thm:large-m}.  Then using 
\begin{equation}
    \frac{\sqrt{10} k m}{2^{m/2}} \leq \frac{\sqrt{10}  m}{60 \cdot 2^{m/4}} \leq 2^{-m/5}, \quad 1-\frac1m - 2^{-m/5} \geq \parens*{1-\frac1m}\exp(-2^{1-m/5})
\end{equation}
(the last inequality using $m \geq \ell \geq 20$),
we conclude
\begin{equation}
    \tau_{k,\ell+1} \cdots \tau_{k,n} \geq \prod_{m=\ell+1}^n \parens*{1 - \frac1m} \exp(- 2^{1-m/5}) = \frac{\ell}{n} \exp\parens*{-\sum_{m=\ell+1}^n 2^{1-m/5}} \geq \frac{1}{n}
\end{equation}    
(using $\ell \geq 20$), confirming \Cref{ineq:mama}.
% Let
% \begin{equation}
% \label{eq:ell-of-k}
% \ell(k):= \max\{n_0+1,2\log_2 k, \text{~the largest  integer $i$ such that $k \geq 2^{i/2}/(2Ci^2)$}
% \},
% \end{equation}
% where $C$ is the constant from \Cref{thm:large-m}.
% %For all $k \geq 1$ we have that $\ell(k) \geq 24$ (below we only need that $\ell(k) \geq 3$) and $\ell(k) = \Theta(\log(k+1))$. Combining these bounds on $\ell(k)$ with the observation that for $m \geq 5$ we have that ${\frac {2^{2m-3}+2^{m-2} - 3}{2^{2m-1} + 2^m - 4} } \geq {\frac 1 5},$
% Since $\ell = O(\log k)$, by \Cref{thm:small-m} we have that for $n_0+1 \leq \ell \leq \ell(k)$,
% \begin{equation} \label{eq:mama}
% \tau_{k,n_0+1} \cdots \tau_{k,\ell} \geq
% \prod_{j=n_0+1}^{\ell} .029 \geq {\frac 1 {\poly(k)}}.
% \end{equation}
% This gives \Cref{lem:technical} for $n \leq \ell(k)$. For $n > \ell(k)$, to lower bound $\tau_{k,\ell(k)+1} \cdots \tau_{k,n}$ we observe that if $j > \ell(k)$ then ${\frac {C j k}{2^{j/2}}} < 1/(2j)$. Since $j > \ell(k) \geq 2\log_2 k$, combining this with \Cref{thm:large-m} we get that $\tau_{k,j} \geq 1 - {\frac {3/2} j}$; consequently,
% \begin{equation} \label{eq:papa}
% \tau_{k,\ell(k)+1} \cdots \tau_{k,n} \geq 
% \prod_{j=\ell(k)+1}^n \left(1 - {\frac {3/2} j}\right) \geq \prod_{j=\ell(k)+1}^n \left(1 - {\frac 1 j}\right)^2 \geq {\frac 1 {n^2}}. \qedhere
% \end{equation}
\end{proof}

\medskip

\noindent {\bf Second step: From ``truly random'' non-local $n_0$-qubit gates to ``pseudorandom'' non-local $n_0$-qubit gates.}
The next lemma, proved in \Cref{sec:Pn0}, may be viewed as saying that (suitably) \emph{pseudo}-randomizing constantly many randomly chosen qubits is ``not much worse'' than \emph{truly} randomizing those qubits.  

\begin{lemma} \label{lem:statement-G-new}
%Under the assumptions of \Cref{thm:well-mixing-single-pseudo-step-2},
There is an absolute constant
%\footnote{More precisely, we will have one constant $n_0 = 3$ for $\SOgp{\cdot}$ and a different constant $n_0=2$ for $\Ugp{\cdot}$. \red{check this}} 
$n_0=4$ such that
for $n \geq n_0+1$, we have
$$
\forall k \in \N^+, \quad L_{\widetilde{P}_{n_0}\times \binom{[n]}{n_0}}(\rho^{k,k}_{2^n}) \geq \kappa_{n_0} \cdot 
L_{\Ggp{n_0} \times \binom{[n]}{n_0}}(\rho^{k,k}_{2^n}),
$$ where $\kappa_{n_0}$ is an absolute constant (depending only on $n_0$).
\end{lemma}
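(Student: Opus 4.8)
The plan is to strip off the $n-n_0$ ``spectator'' qubits and reduce \Cref{lem:statement-G-new} to a single spectral-gap inequality for the \emph{fixed} group $\Ggp{n_0}$, which is precisely the sort of bound supplied by the Bourgain--Gamburd theorem.

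First I would set up the reduction. Fix $e \in \binom{[n]}{n_0}$ and consider $\iota_e \colon \Ggp{n_0} \to \Ggp{n}$, $g \mapsto g_e$. This is an injective group homomorphism with image inside $\Ggp{n}$: the matrix $g_e$ is unitary, it is real whenever $g$ is (so the $\SOgp{}$ case is fine), and $\det(g_e) = \det(g)^{2^{n-n_0}} = 1$. Reordering the $2nk$ tensor factors of $(\C^{2^n})^{\otimes k} \otimes \overline{(\C^{2^n})}^{\otimes k}$ so that the $k$ direct and $k$ conjugate copies of the qubits indexed by $e$ come first, one checks that $\rho^{k,k}_{2^n} \circ \iota_e$ is unitarily equivalent, via a fixed permutation matrix $\Sigma_e$ independent of $g$, to $\rho^{k,k}_{2^{n_0}} \otimes \Id$ acting on $(\C^{2^{n_0}})^{\otimes 2k} \otimes (\C^{2^{n-n_0}})^{\otimes 2k}$. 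Hence, for any symmetric distribution $\calP$ on $\Ggp{n_0}$ and this fixed $e$, one has $\Sigma_e\, L_{\calP \times e}(\rho^{k,k}_{2^n})\, \Sigma_e^\dagger = L_{\calP}(\rho^{k,k}_{2^{n_0}}) \otimes \Id$, and, by linearity of the expectation over the choice of subset, $L_{\calP \times \binom{[n]}{n_0}}(\rho^{k,k}_{2^n}) = \E_{\be}[\, L_{\calP \times \be}(\rho^{k,k}_{2^n}) \,]$. Since conjugation by a unitary, tensoring with $\Id$, and averaging all preserve the PSD order, \Cref{lem:statement-G-new} would follow, with the same constant, from the single-group bound
\begin{equation} \label[ineq]{ineq:n0-gap-plan}
\forall k \in \N^+, \quad L_{\widetilde{P}_{n_0}}(\rho^{k,k}_{2^{n_0}}) \geq \kappa_{n_0}\cdot L_{\Ggp{n_0}}(\rho^{k,k}_{2^{n_0}}).
\end{equation}

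To prove \Cref{ineq:n0-gap-plan} I would proceed as follows. It is equivalent to $\opnorm{\E_{\bg \sim \widetilde{P}_{n_0}}[\rho^{k,k}_{2^{n_0}}(\bg)] - \Pi} \leq 1 - \kappa_{n_0}$, where $\Pi$ is the projection onto the $\Ggp{n_0}$-invariants. By Peter--Weyl, $\rho^{k,k}_{2^{n_0}}$ decomposes into irreducibles, $\Pi$ is the projection onto the trivial isotypic component, and it suffices to bound $\opnorm{\E_{\bg \sim \widetilde{P}_{n_0}}[\pi(\bg)]} \leq 1 - \kappa_{n_0}$ uniformly over every \emph{nontrivial} irreducible unitary $\pi$ occurring in $\rho^{k,k}_{2^{n_0}}$ --- the crucial point being that one constant $\kappa_{n_0}$ works for all such $\pi$, hence is independent of $k$. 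For $\SUgp{2^{n_0}}$ this is exactly the Bourgain--Gamburd spectral gap theorem~\cite{BG12} (and its extension by Benoist--de~Saxc\'{e}~\cite{BS16}), applicable because $P_{n_0}$ is a finite symmetric set of matrices with algebraic entries that generates a dense subgroup. For $\SOgp{2^{n_0}}$ one first passes to the compact simple group $\PSOgp{2^{n_0}}$ --- legitimate because $(-\Id)^{\otimes k}\otimes\overline{(-\Id)}^{\otimes k} = (-1)^{2k}\Id = \Id$, so every irreducible occurring in $\rho^{k,k}_{2^{n_0}}$ factors through $\PSOgp{2^{n_0}}$ and the image of $\langle P_{n_0}\rangle$ is dense in it --- and then invokes the same theorem. (The lazy modification $\widetilde{(\cdot)}$ serves only to push the spectrum of $\E[\pi(\bg)]$ into $[0,1]$, so that the one-sided gap near $+1$ delivered by the theorem is all that is required.) This uniform-over-irreps spectral gap for $\Ggp{n_0}$ is the content of \Cref{cor:bs}, established in \Cref{sec:Pn0}.

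The main obstacle is not the reduction but the input it relies on: verifying the hypotheses of \cite{BG12,BS16} for our explicit constant-size gate set --- namely that $\langle P_{n_0}\rangle$ is dense in $\SUgp{2^{n_0}}$ (resp.\ that its image is dense in $\PSOgp{2^{n_0}}$), together with the bookkeeping needed to pass between $\SOgp{2^{n_0}}$ and its projective quotient while keeping the algebraicity of matrix entries required by the Diophantine hypothesis of~\cite{BS16}. Granting \Cref{cor:bs}, all that remains of the proof of \Cref{lem:statement-G-new} is the reduction above together with the Peter--Weyl unpacking.
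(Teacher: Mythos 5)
Your proposal is correct and follows essentially the same route as the paper: invoke \Cref{cor:bs} (Benoist--de~Saxc\'e applied to $\Ggp{n_0}$ or, in the orthogonal case, to $\PSOgp{2^{n_0}}$) to get the fixed-size spectral gap $L_{\widetilde{P}_{n_0}}(\rho^{k,k}_{2^{n_0}}) \geq \kappa_{n_0}\, L_{\Ggp{n_0}}(\rho^{k,k}_{2^{n_0}})$ uniformly in $k$, then tensor both sides with the identity on the $n-n_0$ spectator qubits, insert the embedded factor at each position $e \in \binom{[n]}{n_0}$, and average over~$e$, using that conjugation by a fixed unitary, tensoring with~$\Id$, and averaging all preserve the PSD order. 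Your additional explicitness about the unitary equivalence $\rho^{k,k}_{2^n}\circ\iota_e \cong \rho^{k,k}_{2^{n_0}}\otimes\Id$ is just an unpacking of what the paper does implicitly, and your Peter--Weyl discussion merely recaps the proof of \Cref{cor:bs} rather than adding a needed step.
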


 \Cref{thm:XXX} follows from \Cref{lem:iter}, \Cref{lem:technical} and \Cref{lem:statement-G-new}.

\subsection{Proof of \Cref{lem:iter}} \label{sec:proof-of-lem-iter}

\begin{lemma} [Restatement of \Cref{lem:iter}]
    Fix a positive integer $n_0 \geq 4$.  Suppose that for $n_0 < m \leq n$ we have 
    \begin{equation}    \label[ineq]{ineq:mess-restated}
        \forall k \in \N^+, \quad L_{\Ggp{m-1} \times \binom{[m]}{m-1}}(\rho^{k,k}_{2^m}) \geq \tau_{k,m} \cdot L_{\Ggp{m}}(\rho^{k,k}_{2^m}).
    \end{equation}
    Then
    \begin{equation}    \label[ineq]{ineq:concl-restated}
 \forall k \in \N^+, \quad L_{\Ggp{n_0} \times \binom{[n]}{n_0}}(\rho^{k,k}_{2^n}) \geq \parens*{\prod_{n_0 < m \leq n} \tau_{k,{m}}}
        \cdot L_{\Ggp{n}}(\rho^{k,k}_{2^n}).
    \end{equation}
\end{lemma}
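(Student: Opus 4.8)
The plan is to prove the statement by induction on $n$, telescoping the single-step hypothesis \Cref{ineq:mess-restated} down from level $n$ to level $n_0$. The base case $n = n_0$ is trivial: the left-hand and right-hand sides of \Cref{ineq:concl-restated} coincide (the product over $n_0 < m \leq n_0$ is empty, hence equals $1$). For the inductive step, assume \Cref{ineq:concl-restated} holds at level $n-1$, i.e.
\begin{equation*}
L_{\Ggp{n_0} \times \binom{[n-1]}{n_0}}(\rho^{k,k}_{2^{n-1}}) \geq \parens*{\prod_{n_0 < m \leq n-1} \tau_{k,m}} \cdot L_{\Ggp{n-1}}(\rho^{k,k}_{2^{n-1}}),
\end{equation*}
and we want to deduce the level-$n$ statement. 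The key bridging device I would use is the following ``tensoring-up'' or ``lifting'' principle: if $Q$ is a multiset of $(n-1)$-qubit gates and $R$ is a multiset of gates acting on even fewer qubits, and we know $L_{\widetilde{R}'}(\rho^{k,k}_{2^{n-1}}) \geq c \cdot L_{\Ggp{n-1}}(\rho^{k,k}_{2^{n-1}})$ on the $(n-1)$-qubit system (where $R'$ is the appropriately embedded version of $R$ inside $[n-1]$), then applying the same gate pattern on a uniformly random $(n-1)$-subset of $[n]$ qubits, one gets $L_{R' \times \binom{[n]}{n-1}}(\rho^{k,k}_{2^n}) \geq c \cdot L_{\Ggp{n-1} \times \binom{[n]}{n-1}}(\rho^{k,k}_{2^n})$. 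In words: an operator-norm / PSD inequality between Laplacians on the $(n-1)$-qubit system, averaged over which $(n-1)$ of the $n$ qubits it acts on, lifts verbatim to the $n$-qubit system because the averaging over $\binom{[n]}{n-1}$ commutes with the inequality (it is an average of conjugates, hence preserves the PSD order) and the representation $\rho^{k,k}$ restricted to gates supported on a fixed $(n-1)$-subset decomposes compatibly. I would state and prove this lifting lemma first, as it is the conceptual heart of the argument.

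Granting the lifting principle, the inductive step is a two-line chain. Apply the lifting principle to the level-$(n-1)$ inductive hypothesis to get
\begin{equation*}
L_{\Ggp{n_0} \times \binom{[n]}{n_0}}(\rho^{k,k}_{2^n}) \geq \parens*{\prod_{n_0 < m \leq n-1} \tau_{k,m}} \cdot L_{\Ggp{n-1} \times \binom{[n]}{n-1}}(\rho^{k,k}_{2^n}),
\end{equation*}
where I should be slightly careful that choosing a random $n_0$-subset of $[n]$ is the same as first choosing a random $(n-1)$-subset of $[n]$ and then a random $n_0$-subset of that — this nested-sampling identity is what makes the $\binom{[n]}{n_0}$ average on the left match up correctly. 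Then apply the single-step hypothesis \Cref{ineq:mess-restated} at $m = n$, again lifted/averaged over $\binom{[n]}{n-1}$ (trivially, since it is already phrased with that average), to get $L_{\Ggp{n-1} \times \binom{[n]}{n-1}}(\rho^{k,k}_{2^n}) \geq \tau_{k,n} \cdot L_{\Ggp{n}}(\rho^{k,k}_{2^n})$. Composing the two inequalities (both in the PSD order, and PSD inequalities $A \geq cB \geq 0$ with $c > 0$ compose in the obvious way) yields exactly \Cref{ineq:concl-restated} at level $n$, completing the induction.

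The main obstacle I anticipate is making the lifting principle fully rigorous — specifically, verifying that averaging a Laplacian inequality over the choice of which qubits the gates act on genuinely preserves the PSD order, and that the ``Haar on $\Ggp{m-1}$ embedded in $m$ qubits, then averaged over position'' object on the left of \Cref{ineq:mess-restated} is literally the operator $L_{\Ggp{m-1}\times\binom{[m]}{m-1}}$ appearing in the notation of \Cref{not:P}. The PSD-preservation is genuinely true because $\rho^{k,k}$ is a representation, so relocating a gate to a different set of qubits acts on $\E[\rho^{k,k}(\bg_{\be})]$ by conjugation by a fixed permutation-of-tensor-factors unitary, and $\E_{\be}[\,\cdot\,]$ is then an average of unitary conjugates of a single PSD operator inequality, hence still PSD; I would spell this out carefully since it is the one place where the representation-theoretic structure (as opposed to pure operator algebra) is actually used. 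The nested-sampling identity for subsets, and the compatibility of $\rho^{k,k}_{2^n}$ with $\rho^{k,k}_{2^{n-1}}$ under restriction to a fixed subset of qubits, are routine but worth stating explicitly so the telescoping product of the $\tau_{k,m}$ is visibly correct.
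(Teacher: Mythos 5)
Your proposal is correct and is essentially the same argument as the paper's proof of \Cref{lem:iter}: you phrase it as formal induction on $n$, while the paper presents it as an explicit iteration starting from $m=n_0+1$, but in both cases the step is (i) tensor with an identity qubit (which preserves PSD order), (ii) average over the insertion position, observing that this turns $\binom{[m]}{n_0}$ into $\binom{[m+1]}{n_0}$ on the left and $\Ggp{m}$ into $\Ggp{m}\times\binom{[m+1]}{m}$ on the right via the nested-sampling identity, and (iii) chain with the hypothesis \Cref{ineq:mess-restated} at level $m+1$ by transitivity of the PSD order. The two technical points you flagged as needing care — that tensoring and averaging over positions preserves the PSD inequality, and that the subset averages compose correctly — are exactly the two points the paper spells out in \Cref{ineq:16}--\Cref{ineq:19}.
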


\begin{proof}
For readability we simply write $\tau_i$ in this proof to stand for $\tau_{k,i}$.
Also for readability we express the lemma as 
    \begin{align}
         \textnormal{(randomizing $m-1$ out of $m$ qubits)} &\geq \tau_{m} \cdot \textnormal{(randomizing all $m$ qubits)}\quad \forall \ n_0 < m \leq n \label[ineq]{ineq:cazh-n0}\\
        \implies  \quad
        \textnormal{(randomizing $n_0$ out of $n$ qubits)} &\geq \tau_{n_0+1} \cdots \tau_{n} \cdot \textnormal{(randomizing all $n$ qubits),} \nonumber
    \end{align}
    with the modifier ``vis-as-vis all $\rho^{k,k}_{2^m}$'' being implied.
    The $m = n_0+1$ case of \Cref{ineq:cazh-n0} is
    \begin{equation}   \label[ineq]{ineq:f1}
        \textnormal{(randomizing $n_0$ out of $n_0+1$ qubits)} \geq \tau_{n_0+1} \cdot \textnormal{(randomizing all $n_0+1$ qubits)}.
    \end{equation}
    From this, by adding an ignored $(n_0+2)$th qubit, we are able to conclude
    \begin{multline} \label[ineq]{ineq:f2}
        \textnormal{(randomizing $n_0$ out of the first $n_0+1$ of $n_0+2$ qubits)} \\
        \geq \tau_{n_0+1} \cdot \textnormal{(randomizing the first $n_0+1$ of  $n_0+2$ qubits)}.
    \end{multline}
    To derive this implication more formally, start with \Cref{ineq:f1}, which says that for all $k \in \N^+$,
    \begin{equation}    \label[ineq]{ineq:16}
        \E_{\substack{\bg \sim \Ggp{n_0} \\ \be \sim \binom{[n_0+1]}{n_0}}}[\Id - 
        \bg_{\be}^{\otimes k,k}% \otimes \overline{\bg}_{[3]\setminus \bi}^{\otimes k}
        ]
        \geq \tau_{n_0+1} \cdot 
%        \E_{\bh \sim G_{{4}}}[\Id - \bh^{\otimes k,k}% \otimes \overline{\bh}^{\otimes k}
        \E_{\bh \sim \Ggp{n_0+1}}[\Id - \bh^{\otimes k,k}% \otimes \overline{\bh}^{\otimes k}
        ].
    \end{equation}
    %where $\bi^{c}$ denotes the complement set $[3] \setminus \{\bi\}$.
    We now consider tacking on a $(n_0+2)$th tensor factor that is ignored by both~$\bg_e$ and by~$\bh$.
    Since $A \geq B \implies A \otimes \Id \geq B \otimes \Id$,  we can tensor-product both sides of \Cref{ineq:16} by $\Id^{\otimes k,k}$ (where $\Id$ denotes the $2\times2$ identity matrix) to conclude
    \begin{equation}    \label[ineq]{ineq:17}
        \E_{\substack{\bg \sim 
        \Ggp{n_0} 
        \\ \be \sim \binom{[n_0+1]}{n_0} \in [n_0+2]_{n_0}}}[\Id - 
        \bg_{\be}^{\otimes k,k}% \otimes \overline{\bg}_{\bi^{c}}^{\otimes k}
        ]
        \geq \tau_{n_0+1} \cdot 
        \E_{\substack{\bh \sim 
        \Ggp{n_0+1}\\ 
        f \coloneqq [n_0+1] \in [n_0+2]_{n_0+1}}}[\Id - \bh_f^{\otimes k,k} %\otimes \overline{\bh}^{\otimes k}
        ],
    \end{equation}
    and this is the meaning of \Cref{ineq:f2}.
    Indeed, we can insert the ignored $(n_0+2)$th qubit at any position, not just the last one; i.e., for any $j \in [n_0+2]$,
    \begin{equation}    \label[ineq]{ineq:18}
        \E_{\substack{\bg \sim 
        \Ggp{n_0} \\ 
        \be \sim \binom{[n_0+2]\setminus j}{n_0}}}[\Id -  \bg_{\be}^{\otimes k,k}% \otimes \overline{\bg}_{\bi^{c}}^{\otimes k}
        ]
        \geq \tau_{n_0+1} \cdot 
        \E_{\substack{\bh \sim 
        \Ggp{n_0+1}\\ 
        f \coloneqq [n_0+2] \setminus j}}[\Id - \bh_{f}^{\otimes k,k}. %\otimes \overline{\bh}^{\otimes k}
        ]
    \end{equation}
        If we now average the above (PSD-order) inequality over $\bj \sim [n_0+2]$ we get
    \begin{equation}    \label[ineq]{ineq:19}
        \E_{\substack{\bg \sim 
        \Ggp{n_0} \\
         \be \sim \binom{[n_0+2]}{n_0}}}[\Id -  \bg_{\be}^{\otimes k,k}% \otimes \overline{\bg}_{\bi^{c}}^{\otimes k}
        ]
        \geq \tau_{n_0+1} \cdot 
        \E_{\substack{\bh \sim 
        \Ggp{n_0+1}\\ 
        \boldf \sim \binom{[n_0+2]}{n_0+1}}}[\Id - \bh_{\boldf}^{\otimes k,k} %\otimes \overline{\bh}^{\otimes k}
        ],
    \end{equation}
    which we would express as
    \begin{equation}\label[ineq]{ineq:f3}
        \textnormal{(randomizing $n_0$ out of $n_0+2$ qubits)} \\
        \geq \tau_{n_0+1} \cdot \textnormal{(randomizing $n_0+1$ out of $n_0+2$ qubits)}.        
    \end{equation}
    But the $m = n_0+2$ case of our hypothesis \Cref{ineq:cazh-n0} is
    \begin{equation}   \label[ineq]{ineq:f10}
        \textnormal{(randomizing $n_0+1$ out of $n_0+2$ qubits)} \geq \tau_{n_0+2} \cdot \textnormal{(randomizing all $n_0+2$ qubits)},
    \end{equation}
    so chaining this together with \Cref{ineq:f3} (using the PSD-ordering fact $A \geq B$, $B \geq C \implies A \geq C$) gives
    \begin{equation}\label[ineq]{ineq:ffff}
        \textnormal{(randomizing $n_0$ out of $n_0+2$ qubits)} \\
        \geq \tau_{n_0+1} \cdot \tau_{n_0+2} \cdot \textnormal{(randomizing all $n_0+2$ qubits)}.
    \end{equation}
    Iterating this argument completes the proof of the lemma.
\end{proof}

\subsection{Proof of \Cref{lem:statement-G-new}} \label{sec:Pn0}

An ingredient we need for \Cref{lem:statement-G-new} is the existence of a suitable finite ``gate set'' with useful properties. This is provided by the following lemma, which follows from known universality results in quantum computing (see \Cref{sec:gateset}):

\begin{lemma} \label{lem:gateset}
There is an absolute constant $n_0=4$ for which there is a finite multiset $P_{n_0} \subset \SOgp{2^{n_0}}$, closed under negations and inverses, with two properties:
    \begin{itemize}
        \item [(A)] (There is a basis in which) every matrix in~$P_{n_0}$ has algebraic entries.
        \item [(B)] Finite products of elements of~$P_{n_0}$ are dense in~$\SOgp{2^{n_0}}$.
\end{itemize}
The same statement is true for $\SUgp{2^{n_0}}$ (also with $n_0=4$).
\end{lemma}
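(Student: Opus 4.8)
The plan is to write down explicit finite gate sets with algebraic entries, verify by hand that (with $n_0=4$) they lie inside the \emph{special} group and are closed under negation and inverses, invoke a standard quantum-universality statement to get density modulo the center, and finally upgrade this to genuine density using compactness and connectedness. I expect the only real difficulty to be the density claim in the orthogonal case, which is a real-orthogonal analogue of the usual universality of $\mathrm{CNOT}$ together with a generic one-qubit gate.

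\textbf{The gate sets.} For $\SUgp{2^{n_0}}$, take $P_{n_0}$ to be the closure under negations and inverses of the set of all $n_0$-qubit circuits consisting of a single $\mathrm{CNOT}$ (on some ordered pair of qubits) or a single $\mathrm{H}$, $\mathrm{S}$, or $\mathrm{T}$ (on some qubit), in the sense of \Cref{not:gatepositioning}; for $\SOgp{2^{n_0}}$, take the analogous closure of all $n_0$-qubit circuits consisting of a single $\mathrm{CNOT}$ or a single $\mathrm{Q}$. Every entry then lies in $\Z[\tfrac1{\sqrt2},e^{i\pi/4}]$ (respectively, is rational), so (A) already holds in the standard computational basis and the parenthetical ``there is a basis'' is superfluous. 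Compatibility with ``$\subset\SUgp{2^{n_0}}$'' (resp.\ ``$\subset\SOgp{2^{n_0}}$'') is exactly where $n_0=4$ enters: a one-qubit gate $g$ embedded into $2^{n_0}$ dimensions has determinant $(\det g)^{2^{n_0-1}}=(\det g)^{8}$, which equals $1$ for $g\in\{\mathrm{H},\mathrm{S},\mathrm{T},\mathrm{Q}\}$ since $\det g\in\{-1,i,e^{i\pi/4},1\}$; and $\mathrm{CNOT}$ on two of four qubits permutes the basis states by a product of $2^{n_0-2}=4$ disjoint transpositions, so also has determinant $1$. Negation by $-\Id$ on a $2^{n_0}\times2^{n_0}$ matrix preserves the determinant, inversion and transposition preserve the entry ring, and the closure of a finite set under the two involutions ``negate'' and ``invert'' stays finite; so $P_{n_0}$ has all the required structural properties. (This bookkeeping is consistent with \Cref{thm:small-m}'s requirement $m\ge4$, which is the other reason for $n_0=4$.)

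\textbf{Density in the unitary case.} Here (B) is precisely the well-known fact that $\mathrm{CNOT}$ with $\mathrm{H},\mathrm{S},\mathrm{T}$ (i.e.\ the Clifford$+\mathrm{T}$ gate set) is universal: for every $U\in\Ugp{2^{n_0}}$ and $\delta>0$ some product of our gates is within $\delta$ of $e^{i\phi}U$ for a phase $\phi$; equivalently, the image of $\langle P_{n_0}\rangle$ under the quotient $q\colon\SUgp{2^{n_0}}\to\PSUgp{2^{n_0}}=\mathrm{PU}(2^{n_0})$ is dense. To finish, I would argue as follows. The closure $\overline{\langle P_{n_0}\rangle}$ is a compact subgroup of $\SUgp{2^{n_0}}$, so $q(\overline{\langle P_{n_0}\rangle})$ is a closed subgroup containing the dense subgroup $q(\langle P_{n_0}\rangle)$, hence equals $\PSUgp{2^{n_0}}$; therefore $\overline{\langle P_{n_0}\rangle}\cdot Z=\SUgp{2^{n_0}}$, where $Z$ is the finite (order $2^{n_0}$) center. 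Thus $\overline{\langle P_{n_0}\rangle}$ has finite index, so it is a finite union of closed cosets and hence open, and an open subgroup of the connected group $\SUgp{2^{n_0}}$ is everything; so $\overline{\langle P_{n_0}\rangle}=\SUgp{2^{n_0}}$, which is (B).

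\textbf{Density in the orthogonal case, and the main obstacle.} The task reduces to showing $\langle P_{n_0}\rangle$ has full Lie algebra $\frak{so}(2^{n_0})$, since then $\overline{\langle P_{n_0}\rangle}$ is a closed subgroup of $\SOgp{2^{n_0}}$ of full dimension, hence open, hence all of the connected group $\SOgp{2^{n_0}}$. By Niven's theorem $\tfrac1\pi\arccos(3/5)$ is irrational, so $\mathrm{Q}$ has infinite order and $\langle\mathrm{Q}\rangle$ is dense in the circle it generates; consequently, for each one-qubit slot $i$, $\overline{\langle P_{n_0}\rangle}$ contains the whole one-parameter subgroup $\{\exp(tG_i):t\in\R\}$, where $G_i$ is the (embedded) skew-symmetric generator of $\mathrm{Q}_i$. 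Since $\overline{\langle P_{n_0}\rangle}$ is invariant under conjugation by the finite group generated by the $\mathrm{CNOT}$'s, it also contains $\{\exp(t\,cG_ic^{-1}):t\in\R\}$ for every word $c$ in the $\mathrm{CNOT}$'s, so its Lie algebra contains the Lie subalgebra of $\frak{so}(2^{n_0})$ generated by $\{G_i\}\cup\{cG_ic^{-1}\}$. The hard part --- and, I expect, the actual content of \Cref{sec:gateset} --- is checking that this subalgebra is all of $\frak{so}(2^{n_0})$; this is a finite Lie-bracket computation and is the real-orthogonal counterpart of the textbook fact that $\mathrm{CNOT}$ plus a generic one-qubit gate is universal for $\Ugp{2^{n_0}}$. (One could instead try to quote a ``real quantum computation is universal'' theorem, but care is needed that it yields topological density in $\SOgp{2^{n_0}}$ rather than merely computational universality.) Granting that computation completes the proof of (B), and hence of \Cref{lem:gateset}.
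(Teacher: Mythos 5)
Your construction agrees with the paper's (same gate sets, same definition of $P_{n_0}$ as the closure under negations and inverses), and two of your observations actually improve on the paper's exposition. The determinant calculation---that a $1$-qubit gate embedded on $n_0=4$ qubits has determinant $(\det g)^{2^{n_0-1}}=(\det g)^8$, and that $\mathrm{CNOT}$ embedded on $4$ qubits is a product of $4$ transpositions---shows that all the gates already lie in $\SUgp{2^4}$ (resp.\ $\SOgp{2^4}$) with no adjustment needed; the paper instead tacks on a final step of ``multiplying elements of $P'_4$ by suitable complex values of unit norm to have determinant one,'' which your calculation shows is superfluous. Likewise your unitary density argument (pass to $\PSUgp{2^{n_0}}$, use compactness to get a closed subgroup containing a dense one, pull back to a finite-index hence open subgroup of the connected $\SUgp{2^{n_0}}$) is more careful than the paper's one-line appeal to the standard universality fact, which, as usually stated, only gives density up to global phase.

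However, your treatment of (B) in the orthogonal case is genuinely incomplete, and you say so yourself. The Lie-algebra approach you sketch---using Niven's theorem to get a dense one-parameter subgroup from $\mathrm{Q}$, conjugating its generator by $\mathrm{CNOT}$ words, and showing the resulting skew-symmetric matrices Lie-generate all of $\frak{so}(2^{n_0})$---is reasonable in outline, but the final step (``a finite Lie-bracket computation'') is exactly the content that must be supplied, and you never supply it. The paper avoids this entirely by citing Shi's theorem (\cite[Thm.~3.1]{Shi02}), which the paper records precisely as the statement that finite products of $\mathrm{Q}$'s and $\mathrm{CNOT}$'s (applied to various qubits) are dense in $\SOgp{2^{n_0}}$. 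Your parenthetical worry about whether that theorem gives topological density rather than mere computational universality is a fair thing to check, but the paper's formulation of \Cref{fact:shi} already asserts the topological version. So the cleanest fix for your proposal is simply to invoke Shi's theorem, exactly as the paper does, rather than attempt the $\frak{so}(2^{n_0})$ bracket calculation from scratch.
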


\Cref{lem:gateset} allows us to use a deep result of Benoist and de~Saxc\'{e}~\cite{BS16}, which extended earlier work of Bourgain--Gamburd~\cite{BG12} (for the case of the special unitary group) to a broader range of groups.
The main result of \cite{BS16} is as follows:

\begin{theorem} \label{thm:bs}
    (\cite[Consequence of Theorem~1.2]{BS16}.) 
    For $n \geq 1$ let $\Ggp{n} \subseteq \SUgp{2^n}$ be a connected compact simple Lie group.
    %(e.g.,\footnote{See \cite{PSOsimple}} $G_\ell = \PSOgp{2^\ell}$ for $\ell > 2$).
    % or $G_\ell = \PSUgp{2^\ell}$
   Fix a positive integer $n_0$ and suppose that $P_{n_0} \subset \Ggp{n_0}$ satisfies properties (A) and (B) of \Cref{lem:gateset}.
   Then there exists a constant $\kappa > 0$ such that  
     \begin{equation}
         \opnorm{\E_{\bg \sim \widetilde{P_{n_0}}}[\mathrm{reg}(\bg)] - \E_{\bg \sim \Ggp{n_0}} [\mathrm{reg}(\bg)]} \leq 1 - \kappa, \label{eq:BdS}
     \end{equation}
     where $\mathrm{reg}$ denotes the regular representation of~$\Ggp{n_0}$.
     Equivalently, 
     $
         L_{\widetilde{P_{n_0}}}(\mathrm{reg}) \geq \kappa  \cdot L_{\Ggp{n_0}}(\mathrm{reg}),
    $
    or
    \begin{equation}    \label[ineq]{ineq:bdsrando}
        \textnormal{($\widetilde{P_{n_0}}$-pseudorandomizing in $2^{n_0}$ dimensions)} \geq \kappa 
        \cdot\textnormal{(randomizing $2^{n_0}$ dimensions)} \quad \textnormal{[vis-a-vis~$\mathrm{reg}$]}.
    \end{equation}    
\end{theorem}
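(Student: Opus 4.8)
The plan is to obtain \Cref{thm:bs} as a short unwinding of the spectral gap theorem of Benoist--de~Saxc\'e, the only substantive input being \cite[Theorem~1.2]{BS16} itself; everything on our side is bookkeeping about the regular representation together with the (harmless) laziness of $\widetilde{P_{n_0}}$.

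First I would recall the structure of $\mathrm{reg}$. It is the representation of the compact group $\Ggp{n_0}$ on $L^2(\Ggp{n_0})$ by translation, so for any symmetric probability measure $\nu$ on $\Ggp{n_0}$ the operator $\E_{\bg \sim \nu}[\mathrm{reg}(\bg)]$ is the corresponding self-adjoint Markov (convolution) operator on $L^2(\Ggp{n_0})$. By the standard averaging fact (the same one underlying the treatment of $\Pi = \E_{\bg \sim \Ggp{n}}[\rho(\bg)]$ in \Cref{sec:BHH-extraction-stuff}), $\E_{\bg \sim \Ggp{n_0}}[\mathrm{reg}(\bg)]$ is the orthogonal projection onto the $\Ggp{n_0}$-invariant vectors of $L^2(\Ggp{n_0})$, which by Peter--Weyl is the one-dimensional space of constant functions; write $L^2_0$ for its orthogonal complement, the mean-zero functions. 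Since every $\mathrm{reg}(g)$ acts as the identity on constants and preserves $L^2_0$ (translation preserves integrals), both $\E_{\bg \sim \widetilde{P_{n_0}}}[\mathrm{reg}(\bg)]$ and $\E_{\bg \sim \Ggp{n_0}}[\mathrm{reg}(\bg)]$ are block diagonal for the splitting $\C\mathbf{1}\oplus L^2_0$ and they agree (both equal the identity) on $\C\mathbf{1}$; hence
\[
 \opnorm{\E_{\bg \sim \widetilde{P_{n_0}}}[\mathrm{reg}(\bg)] - \E_{\bg \sim \Ggp{n_0}}[\mathrm{reg}(\bg)]} \;=\; \opnorm{\,T\,}, \qquad T := \E_{\bg \sim \widetilde{P_{n_0}}}[\mathrm{reg}(\bg)]|_{L^2_0},
\]
so it suffices to bound $\opnorm{T}$ away from $1$. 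Next I would invoke \cite[Theorem~1.2]{BS16} with the group $\Ggp{n_0}$ and the finite symmetric set $P_{n_0}$: properties (A) and (B) of \Cref{lem:gateset} are exactly the ``algebraic entries'' and ``topological generation'' hypotheses required there, $P_{n_0}$ is symmetric because it is closed under inverses, and $\Ggp{n_0}$ is connected, compact and simple by assumption. That theorem then supplies an absolute constant $\eps = \eps(\Ggp{n_0},P_{n_0}) > 0$ with $\opnorm{\E_{\bg \sim P_{n_0}}[\mathrm{reg}(\bg)]|_{L^2_0}} \le 1 - \eps$ (equivalently, $P_{n_0}$ has a spectral gap in $L^2(\Ggp{n_0})$; even the weaker statement that the spectrum of this operator on $L^2_0$ lies below $1-\eps$ would suffice below). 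Passing to the lazy walk is then immediate: by \Cref{def:twiddle} we have $\E_{\bg \sim \widetilde{P_{n_0}}}[\mathrm{reg}(\bg)] = \tfrac12 \Id + \tfrac12 \E_{\bg \sim P_{n_0}}[\mathrm{reg}(\bg)]$, so $T$ is positive semidefinite with largest eigenvalue at most $\tfrac12 + \tfrac12(1-\eps) = 1 - \eps/2$; hence $\opnorm{T} \le 1 - \eps/2$ and we take $\kappa := \eps/2$, which is \Cref{eq:BdS}.

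Finally I would note that the three displayed forms in \Cref{thm:bs} are equivalent: since $\E_{\bg \sim \Ggp{n_0}}[\mathrm{reg}(\bg)]$ is the orthogonal projection onto constants, $L_{\Ggp{n_0}}(\mathrm{reg})$ is the orthogonal projection onto $L^2_0$, and rewriting $\opnorm{T} \le 1 - \kappa$ blockwise on $L^2_0$ (where $L_{\Ggp{n_0}}(\mathrm{reg})$ acts as the identity), exactly as in the operator-norm/Laplacian conversion of \Cref{sec:BHH-extraction-stuff}, gives $L_{\widetilde{P_{n_0}}}(\mathrm{reg}) \ge \kappa \cdot L_{\Ggp{n_0}}(\mathrm{reg})$ and then \Cref{ineq:bdsrando}. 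As for the main obstacle: the deep analytic work lives entirely in \cite{BS16}, and on our side the only point genuinely requiring care is checking that $P_{n_0}$ meets the precise hypotheses of \cite[Theorem~1.2]{BS16} --- which is the sole purpose of \Cref{lem:gateset} --- together with making sure we extract the spectral estimate in a form compatible with the lazy walk (laziness is exactly what turns a possibly one-sided control on the spectrum of $\E_{\bg \sim P_{n_0}}[\mathrm{reg}(\bg)]$ into the genuine operator-norm bound \Cref{eq:BdS}, matching the ``eigenvalues in $[0,1]$'' viewpoint of \Cref{def:twiddle}). I would also stress, in the spirit of the footnote about \cite{KM15,BG12}, that no $N$-independence is being claimed here: $\kappa$ may depend on $\Ggp{n_0}$, i.e.\ on the fixed constant $n_0=4$, and this is harmless precisely because \cite{BS16} is only ever invoked at this constant dimension, never for the $n$-dependent groups. (In the orthogonal case one will apply this statement with $\Ggp{n_0}$ taken to be the projective special orthogonal group, as flagged in the overview; that refinement belongs to \Cref{cor:bs} rather than to the proof of \Cref{thm:bs} itself.)
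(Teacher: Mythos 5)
The paper gives no proof of \Cref{thm:bs} at all — it is presented purely as a citation of \cite[Theorem~1.2]{BS16} — so there is no ``paper approach'' to compare against; your unwinding supplies exactly the bookkeeping the paper elides, and it is correct. The three points you make are the right ones and deserve to be spelled out: the Haar-averaged $\mathrm{reg}$ operator is the orthogonal projection onto constants, so both sides are block diagonal with respect to $\C\mathbf{1}\oplus L^2_0$ and agree on $\C\mathbf{1}$, reducing the claim to a norm bound on $L^2_0$; the hypotheses of \cite[Theorem~1.2]{BS16} (symmetric finite set, algebraic entries, topological generation, connected compact simple group) are exactly what properties (A) and (B) of \Cref{lem:gateset} together with the standing assumptions provide; and the laziness of $\widetilde{P_{n_0}}$, via $\E_{\bg \sim \widetilde{P_{n_0}}}[\mathrm{reg}(\bg)] = \tfrac12\Id + \tfrac12\E_{\bg \sim P_{n_0}}[\mathrm{reg}(\bg)]$, makes the restricted operator PSD and so upgrades a one-sided spectral estimate to the operator-norm bound \Cref{eq:BdS} with $\kappa = \eps/2$. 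Your final observation --- that $\kappa$ may depend on $n_0$ and this is harmless because $n_0$ is the fixed constant $4$ --- is precisely the lesson the paper draws in its footnote on the error in \cite{KM15} regarding \cite{BG12}, and it is worth flagging as you do.
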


We remark that (as noted by \cite{BHH16}) a weaker form of \Cref{eq:BdS}, with the $k$-wise tensor product representation in place of the regular representation and $\kappa$ depending on $k$, has been known at least since \cite{ArnoldKrylov62}; however, the stronger quantitative bound of \Cref{eq:BdS} is essential for our purposes.
%\rasnote{Well \dots if the quantitative dependence on $k$ in~\cite{ArnoldKrylov62} were $1/\poly(k)$ I think we'd be fine actually.  But I think there's no reason to think that the dependence in \cite{ArnoldKrylov62} is that good, so I think we're okay saying what's written above. If you agree, delete this comment.}

\Cref{thm:bs} yields the following useful corollary:

\begin{corollary} \label{cor:bs}
For $n_0=4$, $\Ggp{n_0}=\SOgp{2^{n_0}}$, and $P_{n_0} \subset \Ggp{n_0}$ satisfying properties (A) and (B) of \Cref{lem:gateset}, 
    %For $\ell > 2$, let $P_\ell$ be a finite multiset of matrices in $\R^{2^\ell \times 2^\ell}$ with algebraic entries, closed under inverses and negations, such that finite products of matrices from~$P_\ell$ are dense in $\SOgp{2^\ell}$. Then 
    there is a constant $\kappa > 0$ such that for all $k \in \N^+$ we have     
     $
         L_{\widetilde{P_{n_0}}}(\mathrm{\rho}^{k,k}_{2^{n_0}}) \geq \kappa \cdot L_{\Ggp{n_0}}(\mathrm{\rho}^{k,k}_{2^{n_0}}).
    $
    That is, vis-a-vis any $\rho^{k,k}_{2^{n_0}}$, we have 
    \begin{equation}    \label[ineq]{ineq:bdsrand}
        \textnormal{($\widetilde{P_{n_0}}$-pseudorandomizing $n_0$ qubits)} \geq \kappa \cdot\textnormal{($\Ggp{n_0}$-randomizing $n_0$ qubits)}.
    \end{equation}    
%    (The analogous result holds for $P_\ell \subset \SUgp{2^\ell}$ closed under inverses and $2^\ell$th roots of unity.)
The same is true for $n_0=4$, $\Ggp{n_0}=\SUgp{2^{n_0}}$.
\end{corollary}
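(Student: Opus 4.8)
The plan is to derive the bound on $\rho^{k,k}_{2^{n_0}}$ from the regular-representation bound supplied by \Cref{thm:bs}, exploiting the elementary principle that the regular representation of a compact group ``contains everything.'' By Peter--Weyl the finite-dimensional unitary representation $\rho^{k,k}_{2^{n_0}}$ of $\Ggp{n_0}$ splits into isotypic components indexed by irreducible representations $\pi$, each of which already occurs inside $\mathrm{reg}$; so it is enough to read off what \Cref{thm:bs} says on each isotypic block of $\mathrm{reg}$ and carry it over. Note that the resulting constant $\kappa$ will be independent of $k$, since the single measure $\widetilde{P_{n_0}}$ governs all representations at once --- this $k$-uniformity is exactly the feature of \Cref{thm:bs} that a merely qualitative gap would not give.

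The first step is the block decomposition. On the $\pi$-isotypic subspace of $\mathrm{reg}$, the operator $\mathrm{reg}(g)$ acts as $\pi(g)\otimes\Id$, so $L_{\widetilde{P_{n_0}}}(\mathrm{reg})$ and $L_{\Ggp{n_0}}(\mathrm{reg})$ act there as $L_{\widetilde{P_{n_0}}}(\pi)\otimes\Id$ and $L_{\Ggp{n_0}}(\pi)\otimes\Id$. For the trivial $\pi$ both vanish; for every nontrivial $\pi$ one has $\E_{\bg\sim\Ggp{n_0}}[\pi(\bg)]=0$ by Schur orthogonality, hence $L_{\Ggp{n_0}}(\pi)=\Id$. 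Thus the conclusion $L_{\widetilde{P_{n_0}}}(\mathrm{reg})\geq\kappa\cdot L_{\Ggp{n_0}}(\mathrm{reg})$ of \Cref{thm:bs} is equivalent to: $L_{\widetilde{P_{n_0}}}(\pi)\geq\kappa\cdot\Id$ for every nontrivial irreducible $\pi$ of $\Ggp{n_0}$. Applying the identical decomposition to $\rho^{k,k}_{2^{n_0}}$ then finishes: on its trivial isotypic part both Laplacians are $0$, while on the sum of its nontrivial isotypic parts $L_{\Ggp{n_0}}(\rho^{k,k}_{2^{n_0}})$ equals the identity and $L_{\widetilde{P_{n_0}}}(\rho^{k,k}_{2^{n_0}})$ is block-diagonal with each block $\geq\kappa\cdot\Id$; combining the two parts yields $L_{\widetilde{P_{n_0}}}(\rho^{k,k}_{2^{n_0}})\geq\kappa\cdot L_{\Ggp{n_0}}(\rho^{k,k}_{2^{n_0}})$ for all $k\in\N^+$.

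It remains to verify the hypotheses of \Cref{thm:bs}. For $\Ggp{n_0}=\SUgp{2^{n_0}}$ the group is already a connected compact simple Lie group and the multiset $P_{n_0}$ from \Cref{lem:gateset} satisfies (A) and (B), so \Cref{thm:bs} applies verbatim. For $\Ggp{n_0}=\SOgp{2^{n_0}}$ we instead work with the adjoint group $\PSOgp{2^{n_0}}=\SOgp{2^{n_0}}/\{\pm\Id\}$, which for $n_0=4$ is connected, compact, and simple; since $2k$ is even we have $(-\Id)^{\otimes k,k}=\Id$, so $\rho^{k,k}_{2^{n_0}}$ descends to a genuine representation $\overline{\rho}^{\,k,k}$ of $\PSOgp{2^{n_0}}$, and the image $\overline{P_{n_0}}$ of $P_{n_0}$ under $\SOgp{2^{n_0}}\to\PSOgp{2^{n_0}}$ is a finite inverse-closed multiset inheriting (A) and (B) (density of finite products passes through a continuous surjection, and the entries remain algebraic once $\PSOgp{2^{n_0}}$ is realized through a faithful polynomial representation such as the adjoint). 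Running \Cref{thm:bs} for $\PSOgp{2^{n_0}}$ and $\overline{P_{n_0}}$ gives the regular-representation gap for $\PSOgp{2^{n_0}}$; since the pushforwards of $\widetilde{P_{n_0}}$ and of Haar measure on $\SOgp{2^{n_0}}$ are precisely the lazy version of $\overline{P_{n_0}}$ and Haar measure on $\PSOgp{2^{n_0}}$, the Laplacians $L_{\widetilde{P_{n_0}}}(\rho^{k,k}_{2^{n_0}})$ and $L_{\Ggp{n_0}}(\rho^{k,k}_{2^{n_0}})$ coincide with those of $\overline{\rho}^{\,k,k}$ over $\PSOgp{2^{n_0}}$, and the previous paragraph applies.

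The Laplacian bookkeeping above is routine; the part that needs genuine care is the last paragraph --- confirming that the projective group $\PSOgp{2^{n_0}}$, equipped with $\overline{P_{n_0}}$, really does meet the ``compact simple Lie group with dense, algebraic generating set'' hypotheses of \Cref{thm:bs}/\cite{BS16}, in particular that property (A) survives the passage to the adjoint realization and that $\PSOgp{2^{n_0}}$ (rather than $\SOgp{2^{n_0}}$ itself) is the correct form for which the Benoist--de~Saxc\'{e} theorem is available.
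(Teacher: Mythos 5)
Your proof is correct and follows essentially the same route as the paper's: pass from the regular-representation gap of \Cref{thm:bs} to $\rho^{k,k}_{2^{n_0}}$ via the fact that every irreducible occurs in $\mathrm{reg}$, and handle $\SOgp{2^{n_0}}$ by transferring to the simple group $\PSOgp{2^{n_0}}$, using that $P_{n_0}$ is closed under negation so the relevant Laplacians are unchanged under the quotient. You merely spell out the Peter--Weyl isotypic block-diagonalization (reducing to $L_{\widetilde{P_{n_0}}}(\pi)\geq\kappa\,\Id$ on nontrivial $\pi$) and the persistence of property~(A) in the adjoint realization of $\PSOgp{2^{n_0}}$ more explicitly than the paper's terse proof does.
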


\begin{proof}
We first note that since all irreducible representations appear in the regular representation\footnote{For a concrete proof in the case of $\Ggp{\ell} = \SOgp{2^\ell}$, see e.g.~\cite[Lem.~6.1]{KM15}.}, the conclusion of \Cref{thm:bs} also holds for any $\rho^{k,k}_{2^\ell}$ representation. Since the special unitary group is connected, compact, and simple\footnote{Recall that the Lie algebra of the special unitary group is simple, and that Benoiste and de~Saxc\'{e} remark, following their Theorem~1.2 in \cite{BS16}, that ``\emph{For us, a compact simple Lie group will be a compact real Lie group whose Lie algebra is simple.}''}, this immediately gives \Cref{cor:bs} in the case $\Ggp{n_0} = \SUgp{2^{n_0}}$.

For the special orthogonal case,
while $\Ggp{n_0} = \SOgp{2^{n_0}}$ is not simple, the projective special orthogonal group $\PSOgp{2^{n_0}} = \SOgp{2^{n_0}}/\{\pm 1\}$ is a connected compact simple Lie group.
Writing $P'_{n_0}$ to denote the multiset of elements of $\PSOgp{2^{n_0}}$ corresponding to $P_{n_0}$,  \Cref{thm:bs} gives us that
     \begin{equation}
         \opnorm{\E_{\bg \sim \widetilde{P'_{n_0}}}[\rho^{k,k}_{2^{n_0}}(\bg)] - \E_{\bg \sim \PSOgp{2^{n_0}}} [\rho^{k,k}_{2^{n_0}}(\bg)]} \leq 1 - \kappa. \label{eq:BdS-PSO}
     \end{equation}
Now recalling that $\rho^{k,k}_{2^{n_0}}(g)=g^{\otimes k} \otimes g^{\otimes k}$, since $P_{n_0}$ is closed under negation there is no need to distinguish between $\PSOgp{2^{n_0}}$ and $\SOgp{2^{n_0}}$ in either of the expectations appearing in \Cref{eq:BdS-PSO}, i.e. we have
\begin{equation}
\E_{\bg \sim \widetilde{P'_{n_0}}}[\rho^{k,k}_{2^{n_0}}(\bg)] = 
\E_{\bg \sim \widetilde{P_{n_0}}}[\rho^{k,k}_{2^{n_0}}(\bg)],
\quad \quad \quad
\E_{\bg \sim \PSOgp{2^{n_0}}} [\rho^{k,k}_{2^{n_0}}(\bg)]=
\E_{\bg \sim \SOgp{2^{n_0}}} [\rho^{k,k}_{2^{n_0}}(\bg)],
\label{eq:doesnt-matter-projective-or-not}
\end{equation}
which gives \Cref{cor:bs} for the case $\Ggp{n_0} = \SOgp{2^{n_0}}$.
%\blue{(Moreover, for the $\rho^{k,k}_{2^\ell}$ representations (in the context of \Cref{thm:bs}), because $k+k$ is even there is no need to distinguish between $\PSOgp{2^\ell}$ and $\SOgp{2^\ell}$ provided $P_\ell$ is closed under negation in the case that $\Ggp{\cdot}$ is the special orthogonal group.)\rasnote{Some dumb questions:  is $\PSOgp{2^\ell}$ simple? Can we apply the blue reasoning to $\PSUgp{2^\ell}$ and $\SUgp{2^\ell}$?} 
% %(or between $\PSUgp{2^\ell}$ and $\SUgp{2^\ell}$ provided $P_\ell$ is closed under multiplication by $2^\ell$th roots of unity).
%}
\end{proof}

With \Cref{cor:bs} in hand, now we are ready to prove \Cref{lem:statement-G-new}:

\medskip

\emph{Proof of \Cref{lem:statement-G-new}.}
By \Cref{cor:bs}, we have
$L_{\widetilde{P_{n_0}}}(\mathrm{\rho}^{k,k}_{2^{n_0}}) \geq \kappa_{n_0} \cdot L_{\Ggp{n_0}}(\mathrm{\rho}^{k,k}_{2^{n_0}})$, i.e.
\begin{equation} \label{eq:sesame4}
\Id - \E_{\bh \sim\widetilde{P_{n_0}}}[\rho(\bh)] \geq
\kappa_{n_0} \left(
\Id - \E_{\bg \sim \Ggp{n_0}}[\rho(\bg)]
\right).
\end{equation}
    We consider tacking on $n-{n_0}$ tensor factors that are ignored by both~$\bg$ and by~$\bh$.
    Since $A \geq B \implies A \otimes \Id \geq B \otimes \Id$,  we can tensor-product both sides of \Cref{eq:sesame4} by the identity  to conclude
\begin{equation} \label{eq:sesame5}
\Id - \E_{\bh \sim\widetilde{P_{n_0}}}[\rho(\bh_{[n_0]})] \geq
\kappa_{n_0} \left(
\Id - \E_{\bg \sim \Ggp{n_0}}[\rho(\bg_{[n_0]})]\right).
\end{equation}
We can insert the ignored $n-n_0$ qubits at any positions, not just the last one; averaging the resulting inequalities, we get
\begin{align}
{\frac 1 {{n \choose n-n_0}}} \sum_{1 \leq i_1 < \cdots < i_{n_0} \leq n} 
\left(
\Id - \E_{\bh \sim \widetilde{P_{n_0}}}[\rho(\bh_{(i_1,\dots,i_{n_0})})]
\right)
&\geq 
\kappa_{n_0} \cdot
{\frac 1 {{n \choose n-n_0}}} \sum_{1 \leq i_1 < \cdots < i_{n_0} \leq n} 
\left(
\Id - \E_{\bg \sim \Ggp{n_0}}[\rho(\bg_{(i_1,\dots,i_{n_0})})] 
\right),
\label{eq:sesame6}
\end{align}
which is what \Cref{lem:statement-G-new} asserts.
\qed

\subsubsection{Proof of \Cref{lem:gateset}} \label{sec:gateset}

We first consider $\SOgp{2^4}$; so we must show that there is a finite multiset $P_{4} \subset \SOgp{2^4}$, closed under inverses, that satisfies conditions (A) and (B) of \Cref{lem:gateset}.

Define the 1- and 2-qubit gates
    \begin{equation} \label{eq:Q-and-CNOT}
    \mathrm{Q} \coloneqq \begin{bmatrix} 
        3/5 & -4/5 \\
        4/5 & \phantom{-}3/5 \\
    \end{bmatrix}
    \quad\text{and}\quad
    \mathrm{CNOT} \coloneqq  \begin{bmatrix} 
        1 & 0 \\
        0 & 0 
    \end{bmatrix} \otimes 
    \begin{bmatrix} 
        1 & 0 \\
        0 & 1 
    \end{bmatrix}  
    +
    \begin{bmatrix} 
        0 & 0 \\
        0 & 1 
    \end{bmatrix} \otimes 
    \begin{bmatrix} 
        0 & 1 \\
        1 & 0 
    \end{bmatrix},
    \end{equation}
and let $P_{4}$ be the following finite subset\footnote{Recall that $\mathrm{CNOT} \not \in \SOgp{2^2}$, but $\mathrm{CNOT} \otimes \Id_{4 \times 4} \in \SOgp{16}$.} of $\SOgp{2^4}$:
\begin{equation} \label{eq:P4prime}
        P_{4} := {\text{the closure of~}\{ \mathrm{Q}_{(j)} : j \in [4] \} \cup \{ \mathrm{CNOT}_{(i,j)} : i,j \in [4], i \neq j\}\text{~under inverses and negations}}.
\end{equation}
Clearly $P_4$ satisfies (A), and (B) follows from the following result from~\cite[Thm.~3.1]{Shi02}:
\begin{fact}    \label{fact:shi}
    The $1$- and $2$-qubit gates 
    \begin{equation}
    \mathrm{Q} \coloneqq \begin{bmatrix} 
        3/5 & -4/5 \\
        4/5 & \phantom{-}3/5 \\
    \end{bmatrix}
    \quad\text{and}\quad
    \mathrm{CNOT} \coloneqq  \begin{bmatrix} 
        1 & 0 \\
        0 & 0 
    \end{bmatrix} \otimes 
    \begin{bmatrix} 
        1 & 0 \\
        0 & 1 
    \end{bmatrix}  
    +
    \begin{bmatrix} 
        0 & 0 \\
        0 & 1 
    \end{bmatrix} \otimes 
    \begin{bmatrix} 
        0 & 1 \\
        1 & 0 
    \end{bmatrix}
    \end{equation}
    are together universal for quantum computing with real amplitudes.
    More precisely, recalling \Cref{eq:P4prime},
%    \begin{equation}    \label{eqn:p-prime}
%%        P'_{\ell} = \{ \mathrm{Q}_{(j)} : j \in [\ell] \} \cup \{ \mathrm{CNOT}_{(i,j)} : i,j \in [\ell], i \neq j\},
%        P'_{3} := \red{\text{the closure of~}\{ \mathrm{Q}_{(j)} : j \in [3] \} \cup \{ \mathrm{CNOT}_{(i,j)} : i,j \in [3], i \neq j\}\text{~under inverses and negations}}, 
%    \end{equation}
    we have that finite products of elements of $P_{4}$ are dense in~$\SOgp{2^{4}}$.
\end{fact}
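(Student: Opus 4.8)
The plan is to derive \Cref{fact:shi} by combining two essentially independent ingredients: a \emph{continuous} real-amplitude universality statement, and a \emph{number-theoretic} irrationality fact that promotes the single rotation~$\mathrm{Q}$ to the whole one-qubit rotation group. Write $R(\theta)$ for the $2\times 2$ real rotation by angle~$\theta$, so that $\mathrm{Q} = R(\theta_0)$ with $\cos\theta_0 = 3/5$ and $\sin\theta_0 = 4/5$. The first ingredient is: the subgroup of $\SOgp{2^4}$ generated by $\{R(\theta)_{(j)} : j \in [4],\ \theta \in \R\} \cup \{\mathrm{CNOT}_{(i,j)} : i \neq j\}$ is all of $\SOgp{2^4}$ (and in fact a dense subgroup would already suffice below). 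This is the real-amplitude analogue of the classical exact universality of $\mathrm{CNOT}$ together with all one-qubit gates, and is proved the same way: decompose a given $R \in \SOgp{2^4}$ into a product of Givens (planar) rotations $R_{a,b}(\phi)$ in the computational basis $\{\ket{a} : a \in \{0,1\}^4\}$; conjugate each $R_{a,b}(\phi)$ by the $\mathrm{CNOT}$-circuit realizing an $\mathbb{F}_2$-linear bijection $T$ with $T(a) \oplus T(b)$ equal to a standard basis vector, so that the rotated pair $\{T(a), T(b)\}$ differs in a single coordinate; and implement the resulting (multiply-)controlled one-qubit rotation out of $\mathrm{CNOT}$s and one-qubit rotations using the standard conjugation identity that builds a controlled-$R(\phi)$ on qubit~$\ell$ from $R(\pm\phi/2)_{(\ell)}$ and two $\mathrm{CNOT}$s, together with its iterates. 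Every generator here lies in $\SOgp{2^4}$ (recall that $\mathrm{CNOT}$ padded with an identity has determinant~$1$), so determinants are not an issue.

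The second ingredient is that $\zeta := e^{i\theta_0} = (3 + 4i)/5$ is not a root of unity, hence $\theta_0/\pi$ is irrational and the orbit $\{\, n\theta_0 \bmod 2\pi : n \in \Z \,\}$ is dense in $\R / 2\pi\Z$; equivalently $\overline{\langle \mathrm{Q} \rangle}$ is the full circle group of $2 \times 2$ rotations. To see that $\zeta$ is not a root of unity, work in the Gaussian integers $\Z[i]$, a unique factorization domain: there $3 + 4i = (2+i)^2$ and $5 = (2+i)(2-i)$, with $2+i$ and $2-i$ non-associate primes (each of norm~$5$). If $\zeta^n = 1$ for some $n \geq 1$ then $(2+i)^{2n} = (3+4i)^n = 5^n = (2+i)^n (2-i)^n$, and comparing the exponent of the prime $2+i$ on the two sides forces $2n = n$, which is impossible. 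Therefore, for each $j \in [4]$, the closure of $\langle \mathrm{Q}_{(j)} \rangle$ is the full one-qubit rotation subgroup acting on qubit~$j$.

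Finally I would assemble the pieces. Let $H := \overline{\langle P_4 \rangle} \leq \SOgp{2^4}$ denote the closure of the group generated by $P_4$; closing $P_4$ under inverses and negations as in \Cref{eq:P4prime} changes nothing here, since $-\Id$, $\mathrm{Q}^{-1}$ and $\mathrm{CNOT}^{-1}$ already lie in $\SOgp{2^4}$. Because $P_4$ contains $\mathrm{Q}_{(j)}$ and $\mathrm{CNOT}_{(i,j)}$ and $H$ is a closed subgroup, the second ingredient gives $H \supseteq \{ R(\theta)_{(j)} : \theta \in \R,\ j \in [4] \}$, while also $\mathrm{CNOT}_{(i,j)} \in H$ for all $i \neq j$; but by the first ingredient these elements already generate all of $\SOgp{2^4}$, so $H = \SOgp{2^4}$, which is exactly the asserted density. (The $\SUgp{2^{n_0}}$ case of \Cref{lem:gateset} would proceed identically, using in place of the real-amplitude statement the standard universality of $\mathrm{CNOT}$ together with $\mathrm{H}, \mathrm{S}, \mathrm{T}$, and in place of $\mathrm{Q}$ a product of these one-qubit gates generating a dense subgroup of a one-parameter rotation subgroup.)

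I expect the main obstacle to be the first ingredient: carefully implementing an arbitrary Givens rotation on the $16$-dimensional space using $\mathrm{CNOT}$s for the combinatorial ``addressing'' of pairs of basis vectors and one-qubit rotations for the analytic part, while keeping track of which basis permutations are $\mathrm{CNOT}$-realizable (exactly the $\mathbb{F}_2$-linear ones) and handling the mild subtleties of circuit synthesis over the reals --- for instance, the bit-flip $X$ has no real square root, so the usual Toffoli-from-$\sqrt{X}$ gadget must be replaced. By contrast, the irrationality argument and the final topological assembly are both short.
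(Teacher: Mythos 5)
The paper does not actually prove this statement --- it is invoked as a black box by citing Shi~\cite[Thm.~3.1]{Shi02} --- so your self-contained derivation is a genuinely different route, and in effect you are reconstructing the two ingredients of Shi's own argument. Your second ingredient is complete and correct: the Gaussian-integer factorization $3+4i=(2+i)^2$, $5=(2+i)(2-i)$ with $2\pm i$ non-associate primes cleanly shows $(3+4i)/5$ is not a root of unity, hence $\mathrm{Q}$ has infinite order and, by compactness of $\SOgp{2}$, generates a dense rotation subgroup. Your assembly step (closure $H$ of $\langle P_4\rangle$ is a closed subgroup containing all $R(\theta)_{(j)}$ and all $\mathrm{CNOT}_{(i,j)}$, hence all of $\SOgp{2^4}$) is also fine. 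The one place you have a genuine gap --- which you do flag --- is the first ingredient, and I'd caution that the issues there are a bit more than ``mild.'' The permutations of $\{0,1\}^n$ realizable from $\mathrm{CNOT}$ alone are exactly the $\mathbb{F}_2$-\emph{linear} bijections $\mathrm{GL}(n,\mathbb{F}_2)$, not the affine ones, so no $T$ in your construction moves $\ket{0^n}$ off the all-zeros string; moreover, the Givens rotation you land on after conjugating by $T$ is controlled on the \emph{specific} basis string $T(a)\!\restriction_{[n]\setminus\ell}$, which generically has some coordinates equal to $0$, and the usual fix (conjugating those controls by $X$) is unavailable since $X\notin\SOgp{2}$ --- one must first show $X_{(j)}\in\SOgp{2^4}$ (which it is, determinant $(-1)^8=1$) is itself reachable from your generators, which is not obvious. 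You would be better served replacing the Givens decomposition with a Lie-algebra argument: verify that the real Lie algebra generated by $\{\mathrm{i}Y_j : j\in[4]\}$ together with the adjoint action of the $\mathrm{CNOT}_{(i,j)}$'s and iterated brackets spans all $120$ skew-symmetric real Pauli strings (those with an odd number of $Y$ factors), i.e.\ all of $\mathfrak{so}(16)$, whence the closed subgroup generated contains $\exp(\mathfrak{so}(16))=\SOgp{2^4}$. This sidesteps the affine/control subtleties entirely and completes the first ingredient cleanly. Either way, I'd call this a correct argument modulo that fill-in; the paper's citation is of course shorter, but your route is more illuminating about \emph{why} Shi's theorem holds.
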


Next we turn to $\SUgp{2^4}$. Define the $1$-qubit Hadamard gate (denoted $\mathrm{H}$), phase gate (denoted $\mathrm{S}$), and ``$\pi/8$ gate'' (denoted $\mathrm{T}$) respectively as
\begin{equation} \label{eq:universal-unitary}
   \mathrm{H} \coloneqq {\frac 1 {\sqrt{2}}} \begin{bmatrix} 
        1 & \phantom{+}1 \\
        1 & -1 \\
    \end{bmatrix}, \quad 
   \mathrm{S} \coloneqq {\frac 1 {\sqrt{2}}} \begin{bmatrix} 
        1 & \phantom{+}1 \\
        1 & -1 \\
    \end{bmatrix}, \quad 
    \quad \text{and} \quad
   \mathrm{T} \coloneqq \begin{bmatrix} 
        1 & 0 \\
        0 & e^{i \pi / 4} \\
    \end{bmatrix},    
\end{equation}
and recall the definition of $\mathrm{CNOT}$ from \Cref{eq:Q-and-CNOT}.  Now let $P'_{4}$ be the closure of $\{\mathrm{H}_{(j)}: j \in [4]\} \cup \{\mathrm{S}_{(j)}: j \in [4]\} \cup \{\mathrm{T}_{(j)}: j \in [4]\} \cup  \{ \mathrm{CNOT}_{(i,j)} : i,j \in [4], i \neq j\}$ under inverses and negations.
It is clear that $P'_4$ is a finite set of elements of $\Ugp{2^4}$, closed under inverses, satisfying (A).  The fact that $P'_4$ satisfies (B) follows from the well-known fact (see, e.g.,~\cite[Sec.~4.5.3]{NC10}) that $\mathrm{H}$, $\mathrm{S},$ $\mathrm{T}$ and $\mathrm{CNOT}$ together are universal for quantum computing. Finally, we obtain the desired set of elements $P_4 \subset \SUgp{2^4}$ by multiplying elements of $P'_4$ by suitable complex values of unit norm to have determinant one.
%http://theory.caltech.edu/~preskill/ph219/chap5_13.pdf starting at 5.80

%https://reader.elsevier.com/reader/sd/pii/S0020019000000843?token=6CDD1D83DB09D622B5BFA448A232879438E72153FA1C60801B0316A5ED1D55D13EABD0F4D4706835BD122AA4DC6AFD75&originRegion=us-east-1&originCreation=20230208222437

%Nielsen and Chuang 4.5.3

%!TEX root = main.tex

\section{Lower bounding $\tau_m$ for large $m$
%Projector proof ($\tau_n$ bound for large $n$)
} \label{sec:large-m}

In this section we prove \Cref{thm:large-m}, restated below, using simplifications of techniques introduced in~\cite{HH21}:

\begin{theorem} [Restatement of \Cref{thm:large-m}] \label{thm:large-m-restatement}
Let the sequence of groups $(\Ggp{n})_{n \geq 1}$ be either $(\SOgp{2^n})_{n \geq 1}$ or $(\SUgp{2^n})_{n \geq 1}$.
Define the following operators on $(\C^{2^m})^{\otimes 2k}$:
\begin{equation}
  \Pi^{(m)}  = \E_{\bg \sim \Ggp{m}}[\rho^{k,k}_{2^m}(\bg)], \quad \Pi_{[m]\setminus i} \otimes \Id_i = \textnormal{($\Id_{2k \times 2k}$ on the $i$th tensor factor, $\Pi^{(m-1)}$ on the remainder)}.
\end{equation}
Then for all  $k \leq \frac{1}{\sqrt{10}m^2} 2^{m/2}$ we have
\begin{equation} \label[ineq]{ineq:large-m-tau-lower-bound}
        \opnorm{\mathop{\avg}_{i=1}^m \{\Pi_{[m] \setminus i} \otimes \Id_i\} - \Pi^{(m)}} \leq \frac1m + \frac{\sqrt{10} km}{2^{m/2}};
\end{equation}
equivalently, in the notation of \Cref{thm:large-m}, $\tau_m \geq 1 - (\frac1m + \frac{\sqrt{10} km}{2^{m/2}})$.
\end{theorem}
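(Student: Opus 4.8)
The plan is to estimate the operator-norm distance $\opnorm{\mathop{\avg}_{i=1}^m \{\Pi_{[m]\setminus i}\otimes \Id_i\} - \Pi^{(m)}}$ by exploiting structure of the projections $\Pi_{[m]\setminus i}\otimes \Id_i$ relative to the global projection $\Pi^{(m)}$. The first step is to record that $\Pi^{(m)}$ is the orthogonal projection onto the space of $\Ggp{m}$-invariants in $(\C^{2^m})^{\otimes 2k}$ and that each $\Pi_{[m]\setminus i}\otimes \Id_i$ is likewise an orthogonal projection, onto the subspace of vectors invariant under $\Ggp{m-1}$ acting on the tensor factors other than $i$; moreover $\Pi^{(m)} \preceq \Pi_{[m]\setminus i}\otimes \Id_i$ for every $i$, since a globally invariant vector is in particular invariant under each such embedded subgroup. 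Consequently $\Pi^{(m)}$ is a common sub-projection of the average $A \coloneqq \mathop{\avg}_i\{\Pi_{[m]\setminus i}\otimes\Id_i\}$, so $A - \Pi^{(m)}$ is PSD and equals $A$ restricted to $(\Img \Pi^{(m)})^\perp$; hence $\opnorm{A - \Pi^{(m)}}$ equals the largest eigenvalue of $A$ on the orthocomplement of the global invariant space. The goal therefore reduces to showing that on that orthocomplement, $A \preceq (\tfrac1m + \tfrac{\sqrt{10}km}{2^{m/2}})\Id$.

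The second step is the heart of the argument and is where I expect the main difficulty: bounding $\opnorm{A}$ away from $1$ on the non-invariant subspace. Following \cite{HH21}, I would pass to the commutant / Weingarten picture. Because $\Pi^{(m-1)}$ (equivalently $\Pi_{[m]\setminus i}\otimes \Id_i$) is an average of $\rho^{k,k}$ over a group, the operators we are juggling all lie in the algebra generated by the relevant representation, and one can compute the needed inner products via the Weingarten calculus for $\SOgp{N}$ or $\SUgp{N}$ (treating both cases uniformly, as the theorem demands). The key quantitative input is an estimate of the form: for $i \neq j$, the product $(\Pi_{[m]\setminus i}\otimes\Id_i)(\Pi_{[m]\setminus j}\otimes\Id_j)$ is, up to an error of order $k/2^{m/2}$ in operator norm, equal to $\Pi^{(m)}$ — i.e., the $m$ ``codimension-one'' invariant subspaces are nearly in the generic (``independent'') position relative to their common intersection, with the near-orthogonality controlled by the dimension $2^{m/2}$ appearing through Weingarten/Gram-matrix bounds, valid in the stated range $k \le \tfrac{1}{\sqrt{10}m^2}2^{m/2}$ that keeps the Gram matrix well-conditioned.

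Given such a pairwise estimate, the third step is a standard ``almost-orthogonal projections'' computation: writing $\Pi_i \coloneqq \Pi_{[m]\setminus i}\otimes \Id_i$ and $Q_i \coloneqq \Pi_i - \Pi^{(m)}$ (each a projection onto a subspace of $(\Img\Pi^{(m)})^\perp$), we have $A - \Pi^{(m)} = \mathop{\avg}_i Q_i$, and
\begin{equation}
\opnorm*{\mathop{\avg}_{i=1}^m Q_i} \le \frac1m + \frac1m \opnorm*{\sum_{i \neq j} Q_i Q_j}^{1/2}
\end{equation}
via $\bigl(\sum_i Q_i\bigr)^2 = \sum_i Q_i + \sum_{i\neq j} Q_i Q_j$ and $\opnorm{\sum_i Q_i} = \opnorm{(\sum_i Q_i)^2}^{1/2}$ wait — more carefully, one bounds $\opnorm{\sum_i Q_i} \le \sqrt{ \opnorm{\sum_i Q_i} + \opnorm{\sum_{i\ne j}Q_iQ_j}}$, so $\opnorm{\sum_i Q_i} \le \tfrac12 + \sqrt{\tfrac14 + \opnorm{\sum_{i\neq j}Q_iQ_j}}$, and each $\opnorm{Q_iQ_j} = \opnorm{\Pi_i\Pi_j - \Pi^{(m)}} \le \sqrt{10}\,k/2^{m/2}$ (absorbing constants) from step two, giving $\opnorm{\sum_{i\neq j}Q_iQ_j} \le m^2\sqrt{10}k/2^{m/2}$; dividing by $m$ and simplifying yields $\opnorm{A-\Pi^{(m)}} \le \tfrac1m + \tfrac{\sqrt{10}km}{2^{m/2}}$ after the elementary inequality $\tfrac12 + \sqrt{\tfrac14 + x} \le 1 + x$ for small $x$ and rescaling by $1/m$. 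The final step is bookkeeping: translate the operator-norm bound back into the language of $\tau_m$ via $L_{\Ggp{m-1}\times\binom{[m]}{m-1}} = \Id - A$ and $L_{\Ggp{m}} = \Id - \Pi^{(m)}$, using that $\opnorm{A - \Pi^{(m)}} \le c$ is equivalent (by the Fact relating operator-norm closeness to the PSD Laplacian inequality, since $\ker L_{\Ggp m} \subseteq \ker L_{\Ggp{m-1}\times\cdots}$) to $L_{\Ggp{m-1}\times\binom{[m]}{m-1}} \ge (1-c) L_{\Ggp m}$, i.e. $\tau_m \ge 1-c$. The main obstacle is genuinely step two: getting the uniform (both $\SO$ and $\SU$) Weingarten-type estimate on $\Pi_i\Pi_j - \Pi^{(m)}$ with the right $2^{m/2}$-scaling and the right range of $k$; everything else is soft linear algebra.
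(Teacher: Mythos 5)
Your high-level structure matches the paper's: you reduce to an almost-orthogonal-projections lemma on the shifted projections $\wt P_i = (\Pi_{[m]\setminus i}\otimes\Id_i) - \Pi^{(m)}$ (this is precisely \Cref{lem:projs} together with \Cref{cor:projs}), obtain $\opnorm{\avg_i \wt P_i}\le\tfrac1m + m\eps$ from the quadratic inequality once you know $\opnorm{\wt P_i\wt P_j}\le\eps$ for all $i\ne j$, and then translate back to $\tau_m$. All of that bookkeeping is correct, and \Cref{fact:ac} supplies the inclusion $\Img\Pi^{(m)}\subseteq\Img(\Pi_{[m]\setminus i}\otimes\Id_i)$ you need.

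The genuine gap is in your step two, which you yourself flag as the main obstacle. Asserting that ``one can compute the needed inner products via the Weingarten calculus'' is not a proof, and the paper's actual route is structurally different from a single Weingarten estimate. First, the paper proves a Gram-matrix bound (\Cref{lem:gram} plus \Cref{thm:kappa}): the frame operator $\sum_{M\in\calM}J_M^{\otimes\ell}$ on the matching vectors approximates $\Pi^{(\ell)}$ in operator norm to within $\kappa_\ell = \tfrac{10}{9}k^2/2^\ell$, via an elementary generating-function count of $\sum_M D^{\mathrm{cc}(M\cup M_0)}$. Note this error scales as $k^2/2^\ell$, \emph{not} the $k/2^{\ell/2}$ you posit for $\opnorm{\wt P_i\wt P_j}$, so the Gram estimate by itself does not give the pairwise bound. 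Second --- and this is the idea your sketch is missing --- the paper squares: it writes $\eps^2 = \opnorm{P_m\wt P_1 P_m}$, replaces $\wt P_1$ by $\sum_M \ol J_M\otimes J_M^{\otimes(m-1)}$ up to error $\kappa_{m-1}+\kappa_m$, uses a PSD-ordering step to peel off the $m$th tensor factor, and recognizes what remains as the $(m-1)$-level analogue $\opnorm{P^{(m-1)}\wt P_1^{(m-1)}P^{(m-1)}}$, which is \emph{exactly} zero by \Cref{rem:sq}. Unrolling gives $\eps^2\le\kappa_{m-2}+2\kappa_{m-1}+\kappa_m = 10k^2/2^m$, hence $\eps\le\sqrt{10}\,k/2^{m/2}$. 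This ``approximate, reduce one level, hit exact cancellation'' mechanism is what produces the square-root scaling; it is the real content of the section and cannot be absorbed into an invocation of Weingarten calculus without actually carrying it out.
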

We observe:
\begin{fact}  \label{fact:ac}
  $\Img \Pi^{(m)}$ is a subspace of $\Img(\Pi_{[m] \setminus i} \otimes \Id_i)$ for all~$i$.
\end{fact}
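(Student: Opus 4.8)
The plan is to recognize both operators as group-averaging (invariance) projections and then invoke the elementary fact that a vector fixed by a whole group is fixed by every subgroup. First I would recall that $\Ggp{m-1}$ embeds in $\Ggp{m}$: writing $\iota_i\colon\Ggp{m-1}\hookrightarrow\Ggp{m}$ for the map that applies an $(m-1)$-qubit operator to qubits $[m]\setminus\{i\}$ and the identity to qubit~$i$ (this lands in $\SOgp{2^m}$, resp.\ $\SUgp{2^m}$, since tensoring with $\Id_2$ preserves orthogonality/unitarity and leaves the determinant at~$1$), I would observe that under the reorganization $(\C^{2^m})^{\otimes 2k}\cong\bigotimes_{j\in[m]}(\C^2)^{\otimes 2k}$ the operator $\rho^{k,k}_{2^m}(\iota_i(h))$ acts as the identity on the qubit-$i$ block and as $\rho^{k,k}_{2^{m-1}}(h)$ on the remaining $m-1$ blocks. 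Averaging over $\bh\sim\Ggp{m-1}$ then yields precisely the operator in the theorem statement: $\Pi_{[m]\setminus i}\otimes\Id_i=\E_{\bh\sim\Ggp{m-1}}[\rho^{k,k}_{2^m}(\iota_i(\bh))]$, i.e.\ the orthogonal projection onto the $\iota_i(\Ggp{m-1})$-invariant subspace of $(\C^{2^m})^{\otimes 2k}$.

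Granting this identification, the containment follows in two ways, and I would probably spell out the second. Slickly: $\Img\Pi^{(m)}$ consists of the vectors fixed by $\rho^{k,k}_{2^m}(g)$ for every $g\in\Ggp{m}$, hence in particular fixed by every $g\in\iota_i(\Ggp{m-1})$, hence contained in $\Img(\Pi_{[m]\setminus i}\otimes\Id_i)$. Computationally: since $\rho^{k,k}_{2^m}$ is a representation and Haar measure on $\Ggp{m}$ is left-invariant, $(\Pi_{[m]\setminus i}\otimes\Id_i)\,\Pi^{(m)}=\E_{\bh}\E_{\bg}[\rho^{k,k}_{2^m}(\iota_i(\bh)\,\bg)]=\E_{\bh}\E_{\bg}[\rho^{k,k}_{2^m}(\bg)]=\Pi^{(m)}$, using that $\iota_i(\bh)\,\bg$ is distributed as $\bg$ for each fixed value of $\bh$; and for a projection $P$ and any operator $Q$ the identity $QP=P$ forces $\Img P\subseteq\Img Q$ (apply $Q$ to $v=Pw$).

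There is essentially no obstacle here. The only points deserving a sentence of care are that $\Ggp{m-1}$ genuinely embeds in $\Ggp{m}$ in the stated way for both families $(\SOgp{2^n})_{n\geq 1}$ and $(\SUgp{2^n})_{n\geq 1}$, and that this embedding intertwines the $(k,k)$-tensor-power representations so that the Haar average over the embedded subgroup is literally the operator $\Pi_{[m]\setminus i}\otimes\Id_i$ of the theorem statement; once these are in place the proof is two lines.
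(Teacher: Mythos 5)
Your proof is correct. The paper states this fact without proof (an earlier draft even has a marginal note ``takes 1 second of thought to see this''), and what you supply is exactly the natural justification: identify $\Pi_{[m]\setminus i}\otimes\Id_i$ as the Haar average of $\rho^{k,k}_{2^m}$ over the embedded subgroup $\iota_i(\Ggp{m-1})\leq\Ggp{m}$, and then observe that $\Ggp{m}$-invariance implies invariance under any subgroup (or, equivalently, compute $(\Pi_{[m]\setminus i}\otimes\Id_i)\Pi^{(m)}=\Pi^{(m)}$ via left-invariance of Haar measure). This matches the spirit of the paper's earlier observation that $\rho(g_0)\Pi=\Pi$ for any $g_0$ in the group, so you can stop here.
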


\subsection{Identifying the projectors}
To prove \Cref{thm:large-m-restatement}, we will need to have a description of the projection operator~$\Pi^{(m)}$; luckily, this is provided by known representation theory.  To state the results we need some notation.
\begin{notation} 
   If $X \in \C^{r \times r}$ is a matrix, we write $\mathrm{vec}(X) \in \C^{r} \otimes \C^{r}$ for its vectorization; here $\mathrm{vec}$ is the linear map that takes $\ket{i}\!\bra{j}$ to $\ket{ij}$.
\end{notation}
\begin{fact}  \label{fact:vectorization}
  For matrices $R_0, R_1, S \in \C^{r \times r}$ it holds that $(R_0 \otimes R_1)\mathrm{vec}(S) = \mathrm{vec}(R_0 S R_1^\top)$.
\end{fact}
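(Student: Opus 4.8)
The plan is to verify this identity by reducing to rank-one matrices and then computing directly; it is the standard ``vec trick'' $\mathrm{vec}(ABC) = (A \otimes C^\top)\mathrm{vec}(B)$ specialized to $A = R_0$, $B = S$, $C = R_1^\top$. First I would observe that both sides of the claimed equation are linear in $S$ (the left-hand side obviously, and the right-hand side because $X \mapsto R_0 X R_1^\top$ and $\mathrm{vec}$ are both linear maps), so it suffices to check the identity for $S$ ranging over the basis $\{\ket{i}\!\bra{j}\}_{i,j \in [r]}$ of $\C^{r \times r}$.

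Next, for $S = \ket{i}\!\bra{j}$ I would unwind both sides using the definition $\mathrm{vec}(\ket{i}\!\bra{j}) = \ket{ij} = \ket{i} \otimes \ket{j}$. The left-hand side becomes $(R_0 \otimes R_1)(\ket{i} \otimes \ket{j}) = (R_0\ket{i}) \otimes (R_1\ket{j})$ by the definition of the Kronecker product. For the right-hand side, $R_0 \ket{i}\!\bra{j} R_1^\top = (R_0\ket{i})(\bra{j}R_1^\top) = (R_0\ket{i})(R_1\ket{j})^\top$, a rank-one matrix $uv^\top$ with $u = R_0\ket{i}$ and $v = R_1\ket{j}$. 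Finally I would invoke the elementary fact that $\mathrm{vec}(uv^\top) = u \otimes v$ for column vectors $u, v$: the $(a,b)$ entry of $uv^\top$ is $u_a v_b$, so $\mathrm{vec}$ sends it to $\sum_{a,b} u_a v_b \ket{ab} = u \otimes v$. This matches the left-hand side, completing the proof.

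There is essentially no obstacle here; the only points requiring care are conventions. The identity uses the (unconjugated) transpose $R_1^\top$, not $R_1^\dagger$, which is why in the computation $\bra{j}$ must be read as $\ket{j}^\top$ rather than $\ket{j}^\dagger$; and the choice fixing $\mathrm{vec}$ by $\ket{i}\!\bra{j} \mapsto \ket{ij}$ (rather than $\ket{ji}$) is exactly what forces $R_0$ to act on the first tensor factor and $R_1^\top$ on the second. With those conventions pinned down the verification is immediate, and I would present it in the two or three lines sketched above.
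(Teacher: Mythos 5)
Your proof is correct; the paper states this as a \emph{Fact} with no accompanying proof, treating it as the standard ``vec trick,'' and your linearity-plus-rank-one verification is exactly the routine argument one would give. You also correctly flag the two conventions (unconjugated transpose, and $\mathrm{vec}(\ket{i}\!\bra{j}) = \ket{ij}$ rather than $\ket{ji}$) that determine whether $R_1^\top$ or $R_1^\dagger$ appears and on which tensor factor each of $R_0, R_1$ acts.
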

\begin{notation}  \label{not:1}
    Having fixed some $D=2^m \in \N^+$, we write \begin{equation}
      \ket{\Phi} = D^{-1/2} \sum_{a=1}^D \ket{a} \otimes \ket{a} = D^{-1/2} \mathrm{vec}(\Id_{D \times D})
    \end{equation} for the maximally entangled state on $\C^D \otimes \C^D$.
\end{notation}
\begin{notation}
    For $k \in \N^+$, let $\calM_{2k}$ denote the set of all perfect matchings on~$[2k]$, and let $\calM_{2k}^{\textrm{bip}}$ denote the subset of all ``bipartite'' perfect matchings, meaning that each pair in the matching can be written as $\{i,j\}$ with $i \leq k$ and $j > k$.
\end{notation}
\begin{notation}  \label{not:2}
  For $M \in \calM_{2k}$, we introduce the unit vector
   \begin{equation}
    \ket{\Phi_M} = \bigotimes_{\{i,j\} \in M} \ket{\Phi}_{ij} \in (\C^D)^{\otimes 2k},
  \end{equation}
  where we abuse notation slightly by writing $\ket{\Phi}_{ij}$ for the maximally entangled state on the $i$th and $j$th tensor components.  
\end{notation}
Let us give two examples.  First, with $k = 3$:
\begin{equation}  \label{eqn:color}
  M = \{\{1,2\}, \{3,6\}, \{4,5\}\} \implies \ket{\Phi_{M}} = D^{-k/2} \sum_{a,b,c = 1}^D \ket{aabccb} = D^{-k/2}  \cdot \sum_{\substack{\chi : [2k] \to [D] \\ \text{all edges of $M$ monochromatic} \\ \text{for vertex-coloring $\chi$}}} \ket{\chi}.
\end{equation}
As a second example, with general~$k$:
\begin{equation}    \label{eqn:karate}
  M_0 = \{\{1,k+1\}, \{2,k+2\}, \dots, \{k, 2k\}\} \implies \ket{\Phi_{M_0}} = D^{-k/2} \mathrm{vec}(\Id_{D^k \times D^k}).
\end{equation}

It is not hard to show that every $\ket{\Phi_M}$ with $M \in \calM_{2k}$ (respectively, $M \in \calM^{\text{bip}}_{2k}$) is fixed by every $\rho_{D}^{k,k}(g)$ for $g \in \SOgp{D}$ (respectively, $g \in \SUgp{D}$).  To illustrate this for the particular~$M_0 \in \calM^{\text{bip}}_{2k} \subseteq \calM_{2k}$ from \Cref{eqn:karate}, we have that for $g \in \SOgp{D} \leq \SUgp{D}$,
\begin{equation}
  g^{\otimes k} \otimes \ol{g}^{\otimes k} \ket{\Phi_{M_0}}  = {\frac {g^{\otimes k} \otimes \ol{g}^{\otimes k} \mathrm{vec}(\Id_{D^k \times D^k})}{D^{k/2}}} = {\frac {
  \mathrm{vec}(g^{\otimes k} \mathrm{vec}(\Id_{D^k \times D^k}) (\ol{g}^{\otimes k})^\top )}{D^{k/2}}} = {\frac {\mathrm{vec}(\Id_{D^k \times D^k})}{D^{k/2}}} = \ket{\Phi_{M_0}},
\end{equation}
where we used \Cref{fact:vectorization} and $\ol{g}^\top = g^\dagger = g^{-1}$.
Given this fact, each $\ket{\Phi_M}$ must be fixed by the average representation~$\Pi^{(m)}$, and thus be in $\mathop{\mathrm{Im}} \Pi^{(m)}$.  On the other hand, it is elementary to show (e.g.,~\cite[Prop.~1]{BC20}) that $\mathop{\mathrm{Im}} \Pi^{(m)}$ is \emph{precisely} the set of vectors fixed by every operator in $\{\rho_{D}^{k,k}(g) : g \in \Ggp{m}\}$ (recall that $D=2^m$).  In turn, these are precisely the vectorizations of all matrices in the \emph{commutant} (centralizer) of
$
  \calA = \{g^{\otimes k} : g \in \Ggp{m}\}.
$
Finally, the commutants of tensor product representations of our groups have been identified under the umbrella of \emph{Schur--Weyl duality}.  
\begin{theorem} \label{thm:sherman}
  By Schur--Weyl duality for $\Ugp{D}$~\cite{Sch01,Wey39,Yua12}, $D = 2^m$, when $\Ggp{m} = \Ugp{2^m}$ the projector $\Pi^{(m)}$ has image equal to the span of $\ket{\Phi_{M}}$ for $M \in \calM_{2k}^{\mathrm{bip}}$.  The same is true when $\Ggp{m} = \SUgp{2^m}$, since $\Pi^{(m)}$ is unchanged in this case.\footnote{Observe that because of the conjugation in the definition of $\rho^{k,k}_{2^n}$, the expectation $\Pi^{(m)}$ is the same whether the expectation is taken over $\bg \sim \Ggp{n}=\SUgp{2^n}$ or $\bg \sim \Ggp{n}=\Ugp{2^n}$.}
  %\rnote{Point to a place that presumably already occurred that says that averaging $\rho^{k,k}$ over $\SUgp{\cdot}$ vs $\Ugp{\cdot}$ gives the same thing. \blue{Rocco: Somehow it wasn't clear to me where in \Cref{sec:BHH-extraction-stuff} was a good place for this. Another possibility would be to have a footnote in place here in the statement of \Cref{thm:sherman}, like the one below.}}

  By Schur--Weyl duality for $\SOgp{D}$~\cite{Bra37,Gro99}, $D = 2^m$, when $\Ggp{m} = \SOgp{2^m}$ and $k < 2^{m-1}$ the projector $\Pi^{(m)}$ has  image equal to the span of $\ket{\Phi_{M}}$ for $M \in \calM_{2k}$.  
\end{theorem}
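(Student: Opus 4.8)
The plan is to combine the description of $\Img \Pi^{(m)}$ developed above with classical invariant theory, so that the proof reduces to a purely combinatorial identification. Recall that we have already argued that $\Img\Pi^{(m)}$ equals the space of vectors in $(\C^D)^{\otimes 2k}$ fixed by every $\rho_D^{k,k}(g)$, $g \in \Ggp m$, and that under $\mathrm{vec}$ this space is carried to the commutant of $\calA = \{g^{\otimes k} : g \in \Ggp m\}$ acting on $(\C^D)^{\otimes k}$; we have also verified that each $\ket{\Phi_M}$ of the claimed type lies in $\Img\Pi^{(m)}$. Hence it remains only to establish the reverse containment, i.e.\ that the commutant of $\calA$ is spanned by operators whose vectorizations are exactly the advertised family of $\ket{\Phi_M}$'s. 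This is precisely what Schur--Weyl duality supplies, and what is left is bookkeeping of the $\mathrm{vec}$ correspondence.

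\emph{Unitary case.} For $g \in \Ugp D$ we have $\overline{g}^\top = g^\dagger = g^{-1}$, so by \Cref{fact:vectorization} a matrix $V$ has $\mathrm{vec}(V)$ fixed by $\rho_D^{k,k}(g) = g^{\otimes k}\otimes \overline{g}^{\otimes k}$ if and only if $V$ commutes with $g^{\otimes k}$. Schur--Weyl duality for $\Ugp D$~\cite{Sch01}, which holds for all $D$ and $k$, says the commutant of $\{g^{\otimes k} : g \in \Ugp D\}$ is spanned by the tensor-factor permutation operators $P_\sigma$, $\sigma \in S_k$. A direct computation in the standard basis shows that $\mathrm{vec}(P_\sigma)$ is a scalar multiple of $\ket{\Phi_{M_\sigma}}$ for the bipartite matching $M_\sigma = \{\{t,\, k+\sigma(t)\} : t \in [k]\}$, and as $\sigma$ ranges over $S_k$ the matching $M_\sigma$ ranges over all of $\calM_{2k}^{\mathrm{bip}}$; this gives the claim for $\Ugp{2^m}$. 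For $\SUgp{2^m}$, observe that $\rho_D^{k,k}$ is trivial on the center $\{e^{i\alpha}\Id : \alpha \in [0,2\pi)\}$ of $\Ugp D$ (indeed $\rho_D^{k,k}(e^{i\alpha}\Id) = e^{ik\alpha}e^{-ik\alpha}\,\Id = \Id$), so it factors through the projective unitary group $\Ugp D/\{e^{i\alpha}\Id\}$; since both $\Ugp D$ and $\SUgp D$ surject onto this quotient and the push-forward of the Haar measure is the Haar measure, $\Pi^{(m)}$ is literally the same operator whether we average over $\Ugp{2^m}$ or $\SUgp{2^m}$, so the previous sentence applies verbatim.

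\emph{Orthogonal case.} Here $g$ is real, so $\overline{g} = g$ and $\rho_D^{k,k}(g) = g^{\otimes 2k}$; thus $\Img\Pi^{(m)}$ is the space of $\Ggp m$-invariant vectors in $(\C^D)^{\otimes 2k}$. The first fundamental theorem of invariant theory for the orthogonal group (equivalently, Schur--Weyl/Brauer duality for $\Ogp D$~\cite{Bra37}), valid for all $D$ and $k$, states that the $\Ogp D$-invariant vectors in $(\C^D)^{\otimes 2k}$ are spanned by the matching vectors $\ket{\Phi_M}$, $M \in \calM_{2k}$; in the commutant picture, the commutant of $\{g^{\otimes k} : g \in \Ogp D\}$ is spanned by the Brauer-diagram operators, whose vectorizations are exactly these $\ket{\Phi_M}$. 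It remains to check that passing from $\Ogp D$ to $\SOgp D$ creates no new invariants under the hypothesis $k < 2^{m-1}$. Any $\SOgp D$-invariant vector that is not already $\Ogp D$-invariant must involve an odd number of contractions against the Levi--Civita tensor, which lives in $(\C^D)^{\otimes D}$ and antisymmetrizes $D$ tensor slots at once; a single such contraction already requires $D$ of the $2k$ slots, which is impossible when $2k < D$, i.e.\ when $k < 2^{m-1}$. Hence under this hypothesis the $\SOgp D$- and $\Ogp D$-invariant subspaces coincide, which is the claim for $\SOgp{2^m}$; this last step is Grood's refinement~\cite{Gro99} of Brauer duality to $\SOgp D$.

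\emph{Main obstacle.} The only genuinely delicate point is the $\SOgp D$-versus-$\Ogp D$ comparison, specifically verifying that $k < 2^{m-1}$ is exactly the threshold below which no Levi--Civita ($\epsilon$-tensor) invariant can appear in $(\C^D)^{\otimes 2k}$. Everything else is the standard dictionary between permutation or Brauer diagrams and maximally-entangled-state tensors via $\mathrm{vec}$, together with the routine observation that $\rho_D^{k,k}$ annihilates the center of $\Ugp D$.
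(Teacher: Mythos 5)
Your argument is correct and follows the same route the paper indicates: reduce, as the paper does in the paragraph preceding the theorem statement, to identifying the commutant of $\{g^{\otimes k} : g \in \Ggp{m}\}$, and then invoke Schur--Weyl (for $\Ugp{D}$) and Brauer (for $\Ogp{D}/\SOgp{D}$) duality, translating permutation and Brauer-diagram operators into the matching vectors $\ket{\Phi_M}$ via $\mathrm{vec}$. The paper states this theorem as a citation of~\cite{Sch01,Wey39,Yua12,Bra37,Gro99} rather than proving it; your proposal correctly fills in the standard dictionary, including the essential point that the $\SOgp{D}$- and $\Ogp{D}$-invariant subspaces of $(\C^D)^{\otimes 2k}$ coincide precisely when $2k < D$ (an $\epsilon$-tensor contraction requires $D$ slots), which matches both the hypothesis $k < 2^{m-1}$ and the paper's subsequent remark exhibiting $D=2$, $k=1$ as the first failure.
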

\begin{remark}
    The condition $k < 2^{m-1}$ in the previous theorem cannot be dropped.  For example,
    \begin{equation}
      \E_{\bg \sim \Ogp{2}}[\rho^{1,1}_{2}(\bg)] = \text{projection onto } \tfrac{1}{\sqrt{2}}(\ket{00} + \ket{11}),
    \end{equation}
    but 
    \begin{equation}
      \E_{\bg \sim \SOgp{2}}[\rho^{1,1}_{2}(\bg)] = \text{projection onto } \spn\{\tfrac{1}{\sqrt{2}}(\ket{00} + \ket{11}), \tfrac{1}{\sqrt{2}}(\ket{01} - \ket{10})\}.
    \end{equation}
\end{remark}

We have now identified a spanning set for $\mathop{\text{Im}} \Pi^{(m)}$, but working with it is complicated by the fact that it is not an orthonormal basis.  It is, however, relatively ``close'' to being so, as we now show (following and simplifying some arguments from~\cite[Lem.~17]{BHH16} and \cite[Lem.~9]{HH21}).  First, an elementary lemma in linear algebra:
\begin{lemma} \label{lem:gram}
  Let $W \in \C^{d \times t}$ have unit vector columns
  $\ket{w_1}, \dots, \ket{w_t}$, and suppose their Gram matrix~$W^\dagger W \in \C^{t \times t}$ is close to the identity, in the sense that $E = W^\dagger W - \Id$ has $\opnorm{E} \leq \kappa < 1$.
  (For example, this would hold if
  \begin{equation}  \label[ineq]{ineq:11}
    \norm{E}_{1 \mapsto 1} = \max_{j \in [t]} \sum_{i \neq j} \abs{\braket{w_i|w_j}} \leq \kappa,
  \end{equation}
  since generally $\norm{E}_{1 \mapsto 1} \geq \rho(E)  = \opnorm{E}$, as $E$ is Hermitian.)
  Then $WW^\dagger = \sum_i \ket{w_i}\!\bra{w_i}$ satisfies
  \begin{equation}  \label[ineq]{eqn:pproj}
    WW^\dagger \mathop{\approx}^\kappa \Pi_T,
  \end{equation}
  where $\Pi_T$ is the projector onto $T = \spn\{\ket{w_1}, \dots, \ket{w_t}\}$, and $X \mathop{\approx}^\kappa Y$ denotes $\opnorm{X-Y} \leq \kappa$.
\end{lemma}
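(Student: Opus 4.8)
The plan is to reduce everything to the singular value decomposition of~$W$, which makes \Cref{eqn:pproj} both transparent and tight. The parenthetical claim needs no further work: as the statement already notes, $\norm{E}_{1\mapsto1}\ge\rho(E)=\opnorm{E}$ because the spectral radius is dominated by every induced matrix norm and $E$ is Hermitian, so the hypothesis $\norm{E}_{1\mapsto1}\le\kappa$ implies $\opnorm{E}\le\kappa$; here $E=W^\dagger W-\Id$ has zero diagonal precisely because the $\ket{w_j}$ are unit vectors, which is why $\sum_i\abs{E_{ij}}=\sum_{i\neq j}\abs{\braket{w_i|w_j}}$. So it suffices to treat the stated hypothesis $\opnorm{E}\le\kappa<1$.

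First I would observe that $G:=W^\dagger W=\Id+E$ is then positive definite with spectrum contained in $[1-\kappa,1+\kappa]$; in particular $W$ has full column rank, so $\dim T=t$. Next, take the thin SVD $W=U\Sigma V^\dagger$, where $U\in\C^{d\times t}$ has orthonormal columns, $V\in\C^{t\times t}$ is unitary, and $\Sigma=\mathrm{diag}(\sigma_1,\dots,\sigma_t)$ with all $\sigma_j>0$. Then $G=V\Sigma^2V^\dagger$, so $\Sigma^2-\Id$ is unitarily conjugate to~$E$; and since $\Sigma V^\dagger$ is invertible we have $\mathrm{col}(W)=\mathrm{col}(U)=T$, so that $\Pi_T=UU^\dagger$, while $WW^\dagger=U\Sigma^2U^\dagger$. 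Subtracting,
\begin{equation}
    WW^\dagger-\Pi_T=U(\Sigma^2-\Id)U^\dagger,
\end{equation}
and because $U$ has orthonormal columns the right-hand side has operator norm $\opnorm{\Sigma^2-\Id}=\opnorm{E}\le\kappa$, which is exactly \Cref{eqn:pproj}.

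There is essentially no obstacle here — this is a short linear-algebra fact. The only thing worth flagging is that the ``obvious'' route, writing the exact projector as $\Pi_T=W(W^\dagger W)^{-1}W^\dagger$ and then bounding $WW^\dagger-\Pi_T=WG^{-1}EW^\dagger$ by a product of operator norms, costs a spurious factor of roughly $(1+\kappa)/(1-\kappa)$ and does not recover the clean constant~$\kappa$; passing through the SVD as above is what makes the estimate tight, which matters since \Cref{eqn:pproj} will be applied with $\kappa$ a small but not tiny constant.
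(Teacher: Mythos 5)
Your proof is correct and takes essentially the same approach as the paper: both arguments hinge on the fact that $WW^\dagger$ shares the nonzero eigenvalues of $W^\dagger W$ (with eigenvectors spanning $T$) and is zero on $T^\perp$, so $\opnorm{WW^\dagger - \Pi_T} = \max_j |\sigma_j^2 - 1| = \opnorm{E}$. The paper states this directly in terms of eigenvalues, while you package the identical reasoning through the thin SVD; the content is the same.
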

\begin{proof}
  By hypothesis, all eigenvalues~$\lambda$ of~$W^\dagger W$ satisfy $|\lambda - 1| \leq \kappa < 1$.
  Hence $WW^\dagger$ also has these~$t$ (nonzero) $\lambda$'s within~$\kappa $ of~$1$ as eigenvalues (associated to eigenvectors in~$T$), plus possibly additional eigenvalues of~$0$ (outside~$T$). 
  This confirms \Cref{eqn:pproj}.
\end{proof}
\begin{theorem} \label{thm:kappa}
    In the setting of $\Ggp{m} = \SOgp{D}$, $D = 2^m$ and provided $k^2 \leq \frac19 D$, we have%m \geq 2\log_2(3k)$
    \begin{equation}
        \sum_{M \in \calM_{2k}} \ket{\Phi_M}\!\bra{\Phi_M} \ \mathop{\approx}^{\kappa_m} \ \Pi^{(m)},
    \end{equation}
    where $\kappa_m \coloneqq \tfrac{10}{9} \frac{k^2}{D}$.   
    In the setting of $\Ggp{m} = \SUgp{2^m}$,  the same is true with $\calM_{2k}$ replaced by $\calM_{2k}^{\textnormal{bip}}$ (and one could replace $\kappa_m$ by $\tfrac{5}{9} \frac{k^2}{D}$, but we won't).
\end{theorem}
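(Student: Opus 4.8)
\emph{Proof plan.} The plan is to apply \Cref{lem:gram} with $W$ the matrix whose columns are the unit vectors $\ket{\Phi_M}$, $M \in \calM_{2k}$ (for $\SOgp D$; for $\SUgp D$ one uses $M \in \calM_{2k}^{\mathrm{bip}}$ throughout, and the argument is word-for-word the same). By \Cref{thm:sherman}, the span $T$ of these vectors is exactly $\Img\Pi^{(m)}$ — the hypothesis $k^2 \le \tfrac19 D$ forces $k < 2^{m-1}$, so the $\SOgp D$ clause of \Cref{thm:sherman} does apply — hence $\Pi_T = \Pi^{(m)}$ and $\sum_M\ket{\Phi_M}\!\bra{\Phi_M} = WW^\dagger$. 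So it remains to check the hypothesis of \Cref{lem:gram} with $\kappa = \kappa_m = \tfrac{10}{9}\tfrac{k^2}{D} < 1$, and by the sufficient condition \Cref{ineq:11} it is enough to show $\sum_{M' \ne M}\abs{\braket{\Phi_{M'}|\Phi_M}} \le \kappa_m$ for each fixed $M$.

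Next I would compute the overlaps. As in \Cref{eqn:color}, $\ket{\Phi_M} = D^{-k/2}\sum_\chi\ket{\chi}$, the sum over the $D^k$ colorings $\chi : [2k] \to [D]$ that are constant on every edge of $M$; the $\ket\chi$ are orthonormal, so $\braket{\Phi_{M'}|\Phi_M} = D^{-k}\cdot\#\{\chi : \chi \text{ constant on every edge of both } M \text{ and } M'\}$. A coloring is constant on every edge of the $2$-regular multigraph $M \cup M'$ on $[2k]$ iff it is constant on each connected component, and there are $D^{c(M\cup M')}$ of these, where $c(M\cup M')$ is the number of components; thus $\braket{\Phi_{M'}|\Phi_M} = D^{c(M \cup M') - k} \ge 0$. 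Since $M\cup M'$ is a disjoint union of even cycles, $c(M\cup M') = k$ exactly when $M = M'$, so setting $j = j(M,M') := k - c(M\cup M') \ge 1$ for $M \ne M'$, the sum to be bounded is $\sum_{j\ge1} N_j\, D^{-j}$ with $N_j := \#\{M' \in \calM_{2k} : j(M,M')=j\}$ (which is independent of $M$).

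The combinatorial heart is the generating-function identity $\sum_{M' \in \calM_{2k}} x^{c(M\cup M')} = \prod_{i=0}^{k-1}(x+2i)$, proved by induction on $k$: fixing an edge $e$ of $M$, a matching $M'$ either contains $e$ (giving a factor $x$ and reducing to $k-1$) or does not, in which case smoothing away the two endpoints of $e$ reduces to $k-1$ with $2(k-1)$ preimages (hence the factor $x + 2(k-1)$). Extracting the coefficient of $x^{k-j}$ gives $N_j = 2^j\!\!\sum_{0\le i_1 < \cdots < i_j \le k-1}\!\! i_1\cdots i_j \le \tfrac{2^j}{j!}\bigl(\sum_{i=0}^{k-1}i\bigr)^{\!j} = \tfrac{(k(k-1))^j}{j!} \le \tfrac{k^{2j}}{j!}$. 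Using $k^2/D \le 1/9$ this yields
\[
  \sum_{M' \ne M}\abs{\braket{\Phi_{M'}|\Phi_M}} \ \le\ \sum_{j\ge1}\frac{1}{j!}\Bigl(\frac{k^2}{D}\Bigr)^{\!j} \ =\ e^{k^2/D}-1 \ \le\ \frac{k^2}{D}\cdot\frac{e^{1/9}-1}{1/9} \ <\ \frac{10}{9}\cdot\frac{k^2}{D} \ =\ \kappa_m,
\]
and \Cref{lem:gram} then delivers $\sum_M\ket{\Phi_M}\!\bra{\Phi_M}\mathop{\approx}^{\kappa_m}\Pi^{(m)}$. For $\SUgp D$, encoding a pair $(M,M')$ of bipartite matchings by the permutation $\phi_{M'}^{-1}\phi_M \in S_k$ turns the generating function into $\prod_{i=0}^{k-1}(x+i)$, so the factor $2^j$ disappears and the same estimate goes through (now with room to spare, which is the source of the remark about the constant $\tfrac59$). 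I expect the main obstacle to be the generating-function identity (and its bipartite analogue) — in particular making the inductive "edge-smoothing" step precise — after which plugging into \Cref{lem:gram} and summing the series is routine.
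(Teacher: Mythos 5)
Your proof is correct and follows the same route as the paper: invoke Schur--Weyl duality (\Cref{thm:sherman}) to identify $\Img\Pi^{(m)}$ with $\spn\{\ket{\Phi_M}\}$, apply \Cref{lem:gram} via the $\norm{\cdot}_{1\to 1}$ criterion, reduce the overlaps to $D^{\mathrm{cc}(M\cup M_0)-k}$, and evaluate the generating function $\sum_{M'} D^{\mathrm{cc}(M'\cup M_0)} = \prod_{i=0}^{k-1}(D+2i)$. The only (cosmetic) differences are in how the product is established (your edge-smoothing induction vs.\ the paper's path-joining argument) and how it is bounded (your coefficient extraction $N_j \le k^{2j}/j!$ followed by summing the exponential series, vs.\ the paper's direct estimate $\prod(1+2i/D)\le\exp(k(k-1)/D)$); both land on the same $\kappa_m = \tfrac{10}{9}k^2/D$ under $k^2\le D/9$.
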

\begin{proof}
  The result for $\Ugp{2^m}$ (hence $\SUgp{2^m}$) appears in~\cite{BHH16}, and for $\Ogp{2^m}$ in~\cite{HH21}, but we present here a representation theory-free proof, focusing on the $\SOgp{2^m}$ case.

  We will employ  \Cref{lem:gram}, with the $\ket{w_i}$'s being the $\ket{\Phi_M}$'s, $M \in \calM_{2k}$.  In particular, we will establish the premise in   \Cref{ineq:11} with $\kappa = \kappa_m$.  By symmetry of all matchings in~$\calM$, the quantity inside the maximum is the same for every ``$\ket{w_j}$''; thus, we need only bound it for one particular choice, say the~$M_0$ from \Cref{eqn:karate}.  Thus we need to establish
  \begin{equation}  \label[ineq]{ineq:verfme}
    \sum_{M \in \calM_{2k}} \abs{\braket{\Phi_M|\Phi_{M_0}}} = 1 + \sum_{M \neq M_0} \abs{\braket{\Phi_M|\Phi_{M_0}}} \leq 1+\kappa_m.
  \end{equation}
  In computing $\braket{\Phi_M|\Phi_{M_0}}$, it is easy to see (e.g., from \Cref{eqn:color}) we get a contribution of~$D^{-k}$ from every vertex-coloring $\chi : [2k] \to [D]$ that makes all edges of $M$~and~$M_0$ monochromatic.  Since $M \cup M_0$ is a union of cycles, this is equivalent to a contribution of $D^{\text{cc}(M \cup M_0)}$, where $\text{cc}(\cdot)$ denotes the number of connected components.  Thus (cf.~\cite[(B10)]{HH21})
\begin{equation}
  D^k \cdot \sum_{\text{matchings } M} \abs{\braket{\Phi_M|\Phi_{M_0}}} =  D^k \cdot \sum_{\text{matchings } M} \braket{\Phi_M|\Phi_{M_0}} = \sum_{M} D^{\text{cc}(M \cup M_0)}.
\end{equation}
The summation on the right is just the generating function (with ``indeterminate''~$D$) for the number of connected components obtained when placing a matching (initially:~$M$) onto the endpoints of~$k$ labeled paths (initially:~$M_0$).  But this is a very simple exercise.  Take the first labeled path, with endpoints~$x,y$, and consider the vertex~$z$ to which~$x$ is matched.  There are~$2k-1$ possibilities for~$z$, with one of them ($z = y$) increasing the component count by~$1$, and the other $2k-2$ increasing the count by~$0$.  Thus the generating function picks up a factor of $(D^1 + (2k-2) \cdot D^0)$, and we reduce~$k$ to $k-2$.  We conclude that (cf.~\cite[(B12)]{HH21})
\begin{equation}
    \sum_{M} D^{\text{cc}(M \cup M_0)} = (D + (2k-2)) (D + (2k-4)) \cdots (D+2) D 
\end{equation}
and hence 
\begin{equation}
  \sum_{M} \abs{\braket{\Phi_M|\Phi_{M_0}}} = (1)\parens*{1 + \tfrac{2}{D}}\parens*{1 + \tfrac{4}{D}}\cdots \parens*{1 + \tfrac{2k-2}{D}} \leq \exp(\tfrac{k(k-1)}{D}) \leq 1+\tfrac{10}{9}\tfrac{k^2}{D} = 1 + \kappa_m,
  %\implies \sum_{M \neq M_0} \abs{\braket{M|M_0}} \leq \frac{k(k-1)}{D}.
\end{equation}
the last inequality holding because we have assumed $k^2 \leq  \frac{1}{9} D$.
Thus we have indeed verified \Cref{ineq:verfme}.

The case of $\Ggp{m} = \Ugp{2^m}$ is similar; we just need to compute the generating function for bipartite matchings, meaning $\calM_{2k}^{\text{bip}}$ replaces~$\calM$.  The bound for $\kappa_m$ becomes $(1)(1+\frac{1}{D})(1+\frac{2}{D}) \cdots (1+\frac{k-1}{D}) - 1$, which is only smaller (by a factor of about~$\frac12$).
\end{proof}

\subsection{Proof of \Cref{thm:large-m-restatement}}

In this section we establish \Cref{thm:large-m-restatement}.
We begin by proving some general facts about projectors that are nearly orthogonal to each other.
\begin{lemma} \label{lem:projs}
  Let $P_1, \dots, P_m$ be orthogonal projections, and write $A = \mathop{\avg}_{i=1}^m \{P_i\}$. Then
  \begin{equation}  \label[ineq]{ineq:proo}
     \opnorm{P_i P_j} \leq \eps\ \ \forall\ i \neq j \quad \implies \quad 
   \opnorm{A} \leq \tfrac{1}{m} + \min\{\sqrt{\eps}, m \eps\}.
  \end{equation}
\end{lemma}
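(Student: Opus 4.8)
The plan is to compress the whole statement into a single scalar quadratic inequality for $\lambda \coloneqq \opnorm{A}$, where $A = \frac1m\sum_{i=1}^m P_i$. Since each $P_i$ is an orthogonal projection, $A$ is self-adjoint and positive semidefinite — for any vector $v$ one has $\langle v, Av\rangle = \frac1m\sum_i \langle P_i v, P_i v\rangle \geq 0$ — so in the finite-dimensional setting at hand $\opnorm{A}$ is realized as $\langle v, Av\rangle$ for some unit eigenvector $v$. Fixing such a $v$, we record that $\sum_i \|P_i v\|^2 = \sum_i \langle v, P_i v\rangle = m\lambda$.

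First I would obtain a lower bound on $\|Bv\|$, where $B \coloneqq \sum_{i=1}^m P_i = mA$: by Cauchy--Schwarz, $\|Bv\| \geq \langle v, Bv\rangle = m\lambda$, hence $\|Bv\|^2 \geq m^2\lambda^2$. Next I would compute $\|Bv\|^2$ directly by expanding the double sum, $\|Bv\|^2 = \sum_{i,j}\langle v, P_i P_j v\rangle$, splitting it into the diagonal part $\sum_i \|P_i v\|^2 = m\lambda$ and the off-diagonal part, each of whose $m(m-1)$ terms satisfies $|\langle v, P_i P_j v\rangle| \leq \opnorm{P_i P_j} \leq \eps$. Combining the two estimates gives $m^2\lambda^2 \leq m\lambda + m(m-1)\eps$, i.e. $m\lambda^2 - \lambda - (m-1)\eps \leq 0$.

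Finally I would read the bound off the larger root of this quadratic, $\lambda \leq \frac{1 + \sqrt{1+4m(m-1)\eps}}{2m}$, using two different estimates for the square root. Applying $\sqrt{1+x} \leq 1 + x/2$ yields $\lambda \leq \frac1m + (m-1)\eps \leq \frac1m + m\eps$, while applying $\sqrt{1+x} \leq 1 + \sqrt x$ yields $\lambda \leq \frac1m + \frac{\sqrt{m(m-1)\eps}}{m} \leq \frac1m + \sqrt\eps$. Taking whichever is smaller gives $\opnorm{A} \leq \frac1m + \min\{\sqrt\eps, m\eps\}$, as claimed.

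I do not anticipate a real obstacle here. The only slightly delicate point is the assertion that the operator norm of the PSD operator $A$ is attained at a genuine eigenvector (which is what lets us work with the scalars $\|P_i v\|^2$); this is immediate in finite dimensions, the only case needed, and otherwise one simply works with an approximate maximizer and passes to the limit. The hypothesis $\opnorm{P_i P_j} \leq \eps$ enters in exactly one place and in the crudest possible way — a union bound over the $m(m-1)$ ordered off-diagonal pairs — which already suffices, so no finer control of the cross terms is required.
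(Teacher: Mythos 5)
Your argument is correct and essentially the same as the paper's: both exploit the identity $A^2 = \tfrac1m A + \tfrac1{m^2}\sum_{i\ne j} P_i P_j$, bound the off-diagonal contribution by $\eps$, and solve the resulting quadratic in $\opnorm{A}$. The only difference is cosmetic: you pair this identity against a top eigenvector to work with scalars, whereas the paper takes operator norms directly (using that $\opnorm{A^2}=\opnorm{A}^2$ for self-adjoint $A$).
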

\begin{proof}
  We have
  \begin{equation}
    A^2 = \frac1m A + \frac{1}{m^2} \sum_{i \neq j} P_i P_j;
    \quad\implies\quad
    \opnorm{A}^2 \leq \frac1m \opnorm{A} + \frac{m(m-1)}{m^2} \eps \leq \frac1m \opnorm{A} + \eps.
  \end{equation}
  Solving the quadratic inequality yields $\opnorm{A} \leq \frac{1}{2m} + \sqrt{\frac{1}{4m^2} + \eps}$, from which the result follows.
\end{proof}
\begin{corollary} \label{cor:projs}
  In the setting of \Cref{lem:projs}, let $P$ be an orthogonal projection with $\Img P \leq \Img P_i$ for all~$i$.
  Then \Cref{ineq:proo} holds with each instance of $P_i$ replaced by~$\wt{P}_i = P_i - P$.
\end{corollary}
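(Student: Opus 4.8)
The plan is to observe that each $\wt{P}_i = P_i - P$ is itself an orthogonal projection, so that \Cref{cor:projs} reduces immediately to \Cref{lem:projs} applied to the family $\wt{P}_1, \dots, \wt{P}_m$ in place of $P_1, \dots, P_m$. The only genuine ingredient is the elementary fact that whenever $Q, R$ are orthogonal projections with $\Img Q \subseteq \Img R$, one has $RQ = QR = Q$: for $v \in \Img Q$ we have $Qv = v \in \Img R$, hence $Rv = v$, so $RQ = Q$ everywhere, and then $QR = (RQ)^\dagger = Q^\dagger = Q$. Applying this with $Q = P$ and $R = P_i$ gives $P_i P = P P_i = P$ for every $i$.

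Next I would check that $\wt{P}_i$ is an orthogonal projection. It is clearly self-adjoint, and
\begin{equation}
  \wt{P}_i^2 = P_i^2 - P_i P - P P_i + P^2 = P_i - P - P + P = P_i - P = \wt{P}_i,
\end{equation}
so $\wt{P}_i$ is the orthogonal projection onto $\Img P_i \cap (\Img P)^\perp$. I would also record that $\mathop{\avg}_{i=1}^m\{\wt{P}_i\} = A - P$, and the identity
\begin{equation}
  \wt{P}_i \wt{P}_j = \wt{P}_i P_i P_j \wt{P}_j \qquad (i \neq j),
\end{equation}
which holds because $\wt{P}_i P_i = P_i - P P_i = P_i - P = \wt{P}_i$ and likewise $P_j \wt{P}_j = \wt{P}_j$; consequently $\opnorm{\wt{P}_i \wt{P}_j} \leq \opnorm{P_i P_j}$. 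In particular, whichever way one reads ``each instance of $P_i$ replaced by $\wt{P}_i$'', the hypothesis $\opnorm{\wt{P}_i\wt{P}_j} \leq \eps$ is available — either directly, or as a consequence of the original hypothesis $\opnorm{P_iP_j}\leq\eps$.

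Finally I would invoke \Cref{lem:projs} verbatim with the orthogonal projections $\wt{P}_1, \dots, \wt{P}_m$ and $\wt{A} := \mathop{\avg}_{i=1}^m\{\wt{P}_i\} = A - P$, concluding that $\opnorm{\wt{P}_i\wt{P}_j} \leq \eps$ for all $i\neq j$ implies $\opnorm{A - P} \leq \tfrac1m + \min\{\sqrt{\eps}, m\eps\}$, which is exactly \Cref{ineq:proo} with every $P_i$ replaced by $\wt{P}_i$. I do not expect any real obstacle: the whole argument is the one-line verification $P_iP = PP_i = P$, after which everything else is bookkeeping and a citation of \Cref{lem:projs}.
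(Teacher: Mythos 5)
Your proof is correct and takes exactly the paper's route: verify via $P_i P = P P_i = P$ that each $\wt{P}_i$ is an orthogonal projection, then invoke \Cref{lem:projs} verbatim for the family $\wt{P}_1,\dots,\wt{P}_m$. This is precisely the paper's one-line argument, with the bookkeeping (self-adjointness, $\wt{P}_i^2 = \wt{P}_i$, $\avg\{\wt{P}_i\} = A - P$) written out; the extra observation $\opnorm{\wt{P}_i\wt{P}_j}\le\opnorm{P_iP_j}$ is a nice bonus but not needed for the statement as used.
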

\begin{proof}
  It suffices to note that $\wt{P}_i^2 = \wt{P}_i$, since $P_i \cdot P = P \cdot P_i = P$.
  %In fact, $\wt{P_i P_j} \coloneqq P_i P_j - P = \wt{P}_i \wt{P}_j$ for any $i,j$, since $P_k \cdot P = P \cdot P_k = P$ for any~$k$.
\end{proof}
\begin{remark} \label{rem:sq}
  The identity used in the proof easily extends to $\wt{P}_{i_1} \wt{P}_{i_2} \cdots \wt{P}_{i_k} = P_{i_1} P_{i_2} \cdots P_{i_k} - P$.   Also, this identity remains true if any set of tildes is removed from the LHS (except for the set of all~$k$).
\end{remark}

Let us now study the particular orthogonal projectors involved in \Cref{thm:large-m-restatement}.  We wish to employ \Cref{cor:projs} with
\begin{equation}
  P_i \coloneqq \Pi_{[m] \setminus i} \otimes \Id_i, \quad i = 1 \dots m, \qquad P \coloneqq \Pi^{(m)}.
\end{equation}
\Cref{fact:ac} tells us \Cref{cor:projs}'s hypothesis is satisfied. We thus obtain
\begin{equation}  \label[ineq]{ineq:epsy}
  \opnorm{\mathop{\avg}_{i=1}^m \{P_i\} -  \Pi^{(m)}} \leq \frac1m + \min\{\sqrt{\eps}, m \eps\}, \quad \text{for } \eps = \max_{i \neq j} \braces*{\opnorm{\wt{P}_i \wt{P}_j}}.
\end{equation}
By symmetry of the $m$ tensor factors, we have $\eps = \opnorm{\wt{P}_1 \wt{P}_m}$, and hence
\begin{equation}  \label{eqn:puy}
  \eps^2 = \opnorm{(\wt{P}_1 \wt{P}_m)^\dagger (\wt{P}_1 \wt{P}_m)} = \opnorm{P_m \wt{P}_1 P_m},
\end{equation}
where we used \Cref{rem:sq} to get $\wt{P}_m \wt{P}_1 \wt{P}_1
\wt{P}_m = P_m \wt{P}_1 P_m$. 

Our goal will be to use \Cref{thm:kappa} (recall its~$\kappa_m$ notation) to establish the following:
\begin{align}
  \textbf{Claim: }  \quad \eps^2 = 
  \opnorm{P_m \wt{P}_1 P_m} &\leq \kappa_{m-2} + 2 \kappa_{m-1} + \kappa_m \label[ineq]{ineq:theclaim}\\
  &= \tfrac{10}{9} k^2(2^{2-m} + 2 \cdot 2^{1-m} \cdot 2^{-m}) = 10k^22^{-m} \eqqcolon \delta.
\end{align}
We will apply \Cref{thm:kappa} for $m-2, m-1, m$; its hypothesis will be satisfied even for $m-2$, since we have $k^2 \leq \frac19 2^{m-2}$ by virtue of the assumption $k^2 \leq \frac{1}{10m^4} 2^m$ in the theorem we're proving.  Moreover, this assumption implies that $\delta^{1/4} \leq 1/m$, meaning that 
\Cref{ineq:epsy} gives us the bound
\begin{equation}  
  \opnorm{\mathop{\avg}_{i=1}^m \{P_i\} -  \Pi^{(m)}} \leq \frac1m + \min\{\delta^{1/4}, m \delta^{1/2}\} = \frac1m + m \delta^{1/2} = \frac1m + \frac{\sqrt{10} km}{2^{m/2}},
\end{equation}
verifying \Cref{ineq:large-m-tau-lower-bound} and completing the proof of \Cref{thm:large-m-restatement}.  Thus it remains to establish \Cref{ineq:theclaim}.\\

To establish the claim, let us write $\calM$ for either $\calM_{2k}$ or $\calM_{2k}^{\text{bip}}$ (depending on~$\Ggp{m}$); and, for $M \in \calM$ let us write
\begin{equation}
  J_M = \ket{\phi_M}\!\bra{\phi_M}, \quad \text{where $\ket{\phi_M}$ is the $D = 2$ case of $\ket{\Phi_M}$ from \Cref{not:1,not:2}}.
\end{equation}
Then (up to tensor factoring reordering) we Have $J_M^{\otimes m} = \ket{\Phi_M}$, and hence \Cref{thm:kappa} tells us

\newcommand{\JM}{J_M} 
%It is unclear how to analyze this quantity, but
% given its tensor structure it is not far-fetched to think that it
% would be helpful if we had a decomposition of the $P_i$s into a tensor
% product of $m$ operators. There is no such exact decomposition, but we
% will see that for the particular groups we study (namely
% $(\SOgp{2^n})_{n \geq 1}$ and $(\SUgp{2^n})_{n \geq 1}$) admit an
% approximate such decomposition, based on the algebras associated to
% their representations. In Section XXX we will show how to do this, but
% for now we shall assume we have some operators $(\JM)_{M\in\calM}$
% acting on $\C^{\otimes 2k}$, where $\calM$ is some index set, that
% satisfy

\begin{equation}  \label[ineq]{ineq:kk}
  \sum_{M \in \calM} \JM^{\otimes m} \mathop{\approx}^{\kappa_m} \Pi^{(m)}.
\end{equation}
We will also use this to derive

\begin{equation}
  \sum_{M \in \calM} \JM^{\otimes(m-1)} \mathop{\approx}^{\kappa_{m-1}} \Pi^{(m-1)} \quad \implies \quad \sum_{M \in \calM} \Id_1 \otimes \JM^{\otimes(m-1)}  \mathop{\approx}^{\kappa_{m-1}} P_1,
\end{equation}

\noindent where the implication is by tensoring with~$\Id_1$ (which
doesn't change operator norm differences). Using \Cref{ineq:kk} again, and the
triangle inequality, we reach

\begin{equation}  \label[ineq]{ineq:yowz}
  \wt{P}_1 = P_1 - P \mathop{\approx}^{\kappa_{m-1} + \kappa_m}   \sum_{M \in \calM} \Id_1 \otimes \JM^{\otimes(m-1)}  - \sum_{M \in \calM} \JM^{\otimes m} = \sum_{M \in \calM} \ol{J}_M \otimes \JM^{\otimes(m-1)},
\end{equation}
where $\ol{J}_M \coloneqq \Id - \JM$.
Since $\opnorm{P_m} \leq 1$, we can further conclude
\begin{align}
  P_m \wt{P}_1 P_m &\mathop{\approx}^{\kappa_{m-1} + \kappa_m}  P_m \parens*{\sum_{M \in \calM} \ol{J}_M \otimes \JM^{\otimes(m-1)}} P_m\\
  &= (\Pi^{(m-1)} \otimes \Id_m) \parens*{\sum_{M \in \calM} \ol{J}_M \otimes \JM^{\otimes(m-2)} \otimes J_M} (\Pi^{(m-1)} \otimes \Id_m) \\
  &= \sum_{M \in \calM} \parens*{\Pi^{(m-1)} (\ol{J}_M \otimes \JM^{\otimes(m-2)}) \Pi^{(m-1)}}  \otimes J_M \label[ineq]{eqn:yup}.
\end{align}
Writing
\begin{equation}
  Z_M \coloneqq \Pi^{(m-1)} (\ol{J}_M \otimes \JM^{\otimes(m-2)}) \Pi^{(m-1)},
\end{equation}
we can put \Cref{eqn:yup} into \Cref{eqn:puy} to obtain
\begin{equation}  \label[ineq]{ineq:yum}
  \eps^2 \leq \kappa_{m-1} + \kappa_m + \opnorm{\sum_{M \in \calM}  Z_M \otimes J_M}.
\end{equation}
Now $Z_M$ is PSD, being a conjugation (by $\Pi^{(m-1)})$ of a PSD matrix: the tensor product of projections $J_M$ and~$\ol{J}_M$.
Since $0 \leq J_M \leq \Id$, we therefore conclude $0 \leq Z_M \otimes J_M \leq Z_M \otimes \Id_m$.
Summing this over $M$ yields 
\begin{equation}
  0 \leq \sum_{M \in \calM} Z_M \otimes J_M \leq \sum_{M \in \calM} Z_M \otimes \Id_m = \parens*{\sum_{M \in \calM} Z_M} \otimes \Id_m,
\end{equation}
and hence (from \Cref{ineq:yum})
\begin{equation}  
  \eps^2 \leq \kappa_{m-1} + \kappa_m + \opnorm{\sum_{M \in \calM} Z_M} = \kappa_{m-1} + \kappa_m + \opnorm{\Pi^{(m-1)} \parens*{\sum_{M \in \cal M} \ol{J}_M \otimes J_M^{\otimes(m-2)}} \Pi^{(m-1)}}. \label[ineq]{ineq:yoyoyo}
\end{equation}
We have effectively now reduced from $m$ tensor components to~$m-1$.
Indeed, suppose we had defined the ``$m-1$'' analogues of $P_1, P_2, \dots$ and $P$, calling them $P_1^{(m-1)}, P_2^{(m-1)}, \dots$ and $P^{(m-1)} = \Pi^{(m-1)}$.  
Then  \Cref{ineq:yowz} would tell us
\begin{equation}
  \wt{P}_1^{(m-1)} = P^{(m-1)}_1 - P^{(m-1)} \mathop{\approx}^{\kappa_{m-2} + \kappa_{m-1}} \sum_{M \in \cal M} \ol{J}_M \otimes J_M^{\otimes(m-2)},
\end{equation}
and putting this into \Cref{ineq:yoyoyo} (using $\opnorm{P^{(m-1)}} \leq 1$) yields
\begin{equation}
  \eps^2 \leq \kappa_{m-2} + 2 \kappa_{m-1} + \kappa_m + \opnorm{P^{(m-1)} \wt{P}_1^{(m-1)} P^{(m-1)}}.
\end{equation}
But $P^{(m-1)} \wt{P}_1^{(m-1)} P^{(m-1)}$ is in fact~$0$!  (In the notation of \Cref{cor:projs} this would be ``$P \cdot \wt{P}_1 \cdot P = 0$''.)
Thus we have established the claim, \Cref{ineq:theclaim}.

\ignore{
\subsection{Some stuff}
\begin{lemma} \label{lem:gram}
  Let $W \in \C^{d \times t}$ have unit vector columns
  $\ket{w_1}, \dots, \ket{w_t}$, and suppose their Gram matrix~$W^\dagger W \in \C^{t \times t}$ is close to the identity, in the sense that $E = W^\dagger W - \Id$ has $\opnorm{E} \leq \kappa < 1$.
  (For example, this would hold if
  \begin{equation}  \label[ineq]{ineq:11}
    \norm{E}_{1 \mapsto 1} = \max_{j \in [t]} \sum_{i \neq j} \abs{\braket{w_i|w_j}} \leq \kappa,
  \end{equation}
  since generally $\norm{E}_{1 \mapsto 1} \geq \rho(E)  = \opnorm{E}$, as $E$ is Hermitian.)
  Then $WW^\dagger = \sum_i \ket{w_i}\!\bra{w_i}$ satisfies
  \begin{equation}  \label[ineq]{eqn:pproj}
    \opnorm{WW^\dagger  - \Pi_T} \leq \kappa,
  \end{equation}
  where $\Pi_T$ is the projector onto $T = \spn\{\ket{w_1}, \dots, \ket{w_t}\}$.
\end{lemma}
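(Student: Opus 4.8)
The plan is to diagonalize $W^\dagger W$ and $WW^\dagger$ simultaneously through a (thin) singular value decomposition of $W$, which makes everything transparent. First, for the parenthetical remark: since each $\ket{w_i}$ is a unit vector, the diagonal of $E = W^\dagger W - \Id$ vanishes, so the quantity $\norm{E}_{1\mapsto1} = \max_j \sum_{i\neq j}\abs{\braket{w_i|w_j}}$ written in \Cref{ineq:11} is exactly the induced $\ell_1\mapsto\ell_1$ operator norm of $E$ (maximum absolute column sum). Every induced matrix norm dominates the spectral radius, and $E$ is Hermitian so $\opnorm{E}=\rho(E)$; hence $\opnorm{E}\le\norm{E}_{1\mapsto1}$, which is all the ``for example'' claims.

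For the main statement, write $W = U\Sigma V^\dagger$ with $U\in\C^{d\times t}$ having orthonormal columns, $V\in\C^{t\times t}$ unitary, and $\Sigma=\mathrm{diag}(\sigma_1,\dots,\sigma_t)$ the singular values. Then $W^\dagger W = V\Sigma^2V^\dagger$, so $E = V(\Sigma^2-\Id)V^\dagger$ and the hypothesis $\opnorm{E}\le\kappa$ is equivalent to $\abs{\sigma_i^2-1}\le\kappa$ for every $i$. Because $\kappa<1$, this forces $\sigma_i^2\ge 1-\kappa>0$, so $W$ has full column rank~$t$; consequently the $\ket{w_i}$ are linearly independent, $\dim T = t$, and moreover $\Img W = \Img U = T$ with $U$ a genuine isometry onto $T$, so that $\Pi_T = UU^\dagger$.

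Now $WW^\dagger = U\Sigma^2U^\dagger$, hence $WW^\dagger - \Pi_T = U(\Sigma^2-\Id)U^\dagger$; since conjugation by the isometry $U$ preserves operator norm, $\opnorm{WW^\dagger-\Pi_T} = \opnorm{\Sigma^2-\Id} = \max_i\abs{\sigma_i^2-1} \le \kappa$, which is \Cref{eqn:pproj}. There is no real obstacle here; the only point needing a word of care is that the hypothesis is $\kappa<1$ rather than merely finite, which is precisely what yields $\sigma_i>0$, i.e.\ $\dim T = t$, so that $\Pi_T$ and $WW^\dagger$ have matching rank (otherwise $WW^\dagger$ could be singular on part of $T$ and the bound would degrade by the full gap between $0$ and $1$). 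A decomposition-free rephrasing would instead observe that the nonzero eigenvalues of $WW^\dagger$ coincide with those of $W^\dagger W$ (all within $\kappa$ of $1$), that $\Img(WW^\dagger)=T$, and that $WW^\dagger$ and $\Pi_T$ both vanish on $T^\perp$; restricting to $T$ and $T^\perp$ separately then gives $\opnorm{WW^\dagger-\Pi_T}\le\kappa$ directly.
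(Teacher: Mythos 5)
Your proof is correct and is essentially the paper's argument, just made more explicit via the SVD: the paper directly observes that $WW^\dagger$ shares the $t$ nonzero eigenvalues of $W^\dagger W$ (all within $\kappa$ of $1$) on $T$ and is zero on $T^\perp$, which is exactly the ``decomposition-free rephrasing'' you give in your final sentences. The SVD presentation buys nothing new but is a perfectly clean way to see the simultaneous diagonalization; you also correctly flag the one subtle point, that $\kappa<1$ is what guarantees $\dim T = t$ so the rank of $WW^\dagger$ matches that of $\Pi_T$.
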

\begin{proof}
  By hypothesis, all eigenvalues~$\lambda$ of~$W^\dagger W$ satisfy $|\lambda - 1| \leq \kappa < 1$.
  Hence $WW^\dagger$ also has these~$t$ (nonzero) $\lambda$'s within~$\kappa $ of~$1$ as eigenvalues (associated to eigenvectors in~$T$), plus possibly additional eigenvalues of~$0$ (outside~$T$). 
  This confirms \Cref{eqn:pproj}.
\end{proof}

\rnote{Even though we can directly cite the following lemma from HHJ, their proof with representation theory severely irritated me, so I wrote this\dots}
Write $\ket{\Phi} = D^{-1/2} \sum_{a=1}^D \ket{a} \otimes \ket{a}$  for the maximally entangled state on $\C^D \otimes \C^D$, and for $M$ a matching on $[2k]$, introduce the unit vector
\begin{equation}
  \ket{\Phi_M} = \bigotimes_{\{i,j\} \in M} \ket{\Phi}_{ij} \in (\C^D)^{\otimes 2k},
\end{equation}
where we abuse notation slightly by Let$\ket{\Phi}_{ij}$ for the maximally entangled state on the $i$th and $j$th tensor components.  
As an example, with $k = 3$:
\begin{equation}  \label{eqn:color}
  M = \{\{1,2\}, \{3,6\}, \{4,5\}\} \implies \ket{\Phi_{M}} = D^{-k/2} \sum_{a,b,c = 1}^D \ket{aabccb} = D^{-k/2}  \cdot \sum_{\substack{\chi : [2k] \to [D] \\ \text{all edges of $M$ monochromatic} \\ \text{for vertex-coloring $\chi$}}} \ket{\chi}
\end{equation}

XXXXremark here that all such $\ket{\Phi_M}$, and therefore also their span, are fixed by all $\rho_{\SOgp{2^m}}^{k,k}(g)$ (here $D = 2^m$), and Brauer or Groot or whoever showed this is all of them when $D \geq 2k$ or whatever.  Also remark here same story for the unitary group, but when you restrict to the matchings in which every edge crosses the cut between $\{1, \dots, k\}$ and $\{k+1, \dots, 2k\}$.XXX

We want  to employ \Cref{lem:gram} via  \Cref{ineq:11}.  By symmetry, we may as well fix one matching, say $M_0 = \{\{1, k+1\}, \cdots, \{k, 2k\}\}$, and then bound
\begin{equation}
  \sum_{\text{matchings } M} \abs{\braket{\Phi_M|\Phi_{M_0}}} = 1 + \sum_{M \neq M_0} \abs{\braket{\Phi_M|\Phi_{M_0}}}.
\end{equation}
In computing $\braket{\Phi_M|\Phi_{M_0}}$, it is easy to see (e.g., from \Cref{eqn:color})  get a contribution of~$D^{-k}$ from every string vertex-coloring $\chi : [2k] \to [D]$ that makes all edges of $M$~and~$M_0$ monochromatic.  Since $M \cup M_0$ is a union of cycles, this is equivalent to a contribution of $D^{\text{cc}(M \cup M_0)}$, where $\text{cc}(\cdot)$ denotes the number of connected components.  Thus XXXput in ``cf HHJ B10''XXX
\begin{equation}
  D^k \cdot \sum_{\text{matchings } M} \abs{\braket{\Phi_M|\Phi_{M_0}}} =  D^k \cdot \sum_{\text{matchings } M} \braket{\Phi_M|\Phi_{M_0}} = \sum_{M} D^{\text{cc}(M \cup M_0)}.
\end{equation}
The summation on the right is just the generating function (with ``indeterminate''~$D$) for the number of connected components obtained when placing a matching (initially:~$M$) onto the endpoints of~$k$ labeled paths (initially:~$M_0$).  But this is a very simple exercise.  Take the first labeled path, with endpoints~$x,y$, and consider the vertex~$z$ to which~$x$ is matched.  There are~$2k-1$ possibilities for~$z$, with one of them ($z = y$) increasing the component count by~$1$, and the other $2k-2$ increasing the count by~$0$.  Thus the generating function picks up a factor of $(D^1 + (2k-2) \cdot D^0)$, and we reduce~$k$ to $k-2$.  We conclude that 
 XXXHHJ (B12)XXX
\begin{equation}
    \sum_{M} D^{\text{cc}(M \cup M_0)} = (D + (2k-2)) (D + (2k-4)) \cdots (D+2) D 
\end{equation}
and hence 
\begin{equation}
  \sum_{M} \abs{\braket{\Phi_M|\Phi_{M_0}}} = (1)\parens*{1 + \tfrac{2}{D}}\parens*{1 + \tfrac{4}{D}}\cdots \parens*{1 + \tfrac{2k-2}{D}} \leq \exp(\tfrac{k(k-1)}{D}) \leq 1+1.1\tfrac{k^2}{D},
  %\implies \sum_{M \neq M_0} \abs{\braket{M|M_0}} \leq \frac{k(k-1)}{D}.
\end{equation}
with the last inequality holding provided, say, $k^2 \leq \frac{1}{8} D$.

\subsection{Stuffs}
\begin{lemma} \label{lem:projs2}
  Let $P_1, \dots, P_n$ be orthogonal projections. Then
  \begin{equation}  \label[ineq]{ineq:proo2}
     \opnorm{P_i P_j} \leq \eps\ \ \forall\ i \neq j \quad \implies \quad 
   \opnorm{\mathop{\avg}_{i=1}^n \{P_i\}} \leq \tfrac{1}{n} + \min\{\sqrt{\eps}, n \eps\}.
  \end{equation}
  % \begin{equation}
  %   \opnorm{A} \leq \frac{1}{2n} + \sqrt{\frac{1}{4n^2} + \eps} \leq \frac{1}{n} + \min\{\sqrt{\eps}, n \eps\}.
  % \end{equation}
\end{lemma}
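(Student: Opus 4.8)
The plan is to reuse the argument already given for \Cref{lem:projs} essentially verbatim, since that proof never exploited anything about the projectors beyond $P_i^2 = P_i$ and self-adjointness; here the auxiliary projector $P$ of \Cref{cor:projs} is simply absent, so the bookkeeping is even lighter. Write $A = \frac1n\sum_{i=1}^n P_i$. First I would record that $A$ is self-adjoint, being an average of self-adjoint operators, so that $\opnorm{A}^2 = \opnorm{A^2}$ (for a self-adjoint operator the operator norm coincides with the spectral radius).

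Next I would expand, using $P_i^2 = P_i$,
\[
  A^2 = \frac1{n^2}\sum_{i,j}P_iP_j = \frac1n A + \frac1{n^2}\sum_{i\neq j}P_iP_j,
\]
then take operator norms and apply the triangle inequality together with the hypothesis $\opnorm{P_iP_j}\le\eps$ on each of the $n(n-1)$ cross terms to obtain $\opnorm{A}^2 \le \frac1n\opnorm{A} + \eps$.

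From this scalar quadratic inequality in $x=\opnorm{A}$ I would extract the two branches. Solving $x^2 - \tfrac1n x - \eps \le 0$ gives $x \le \tfrac1{2n} + \sqrt{\tfrac1{4n^2}+\eps} \le \tfrac1n + \sqrt{\eps}$, which is the $\sqrt{\eps}$ bound. For the $n\eps$ bound I would first dispose of the trivial case where all $P_i$ vanish, and otherwise pick a unit vector $v\in\Img P_j$ for some nonzero $P_j$ to see $\langle v, (\sum_i P_i)\, v\rangle = \sum_i \opnorm{P_i v}^2 \ge 1$; since $\sum_i P_i$ is PSD this forces $\opnorm{A}\ge\frac1n$, and plugging $\opnorm{A}\ge\frac1n$ into $\opnorm{A}\le\frac1n + \eps/\opnorm{A}$ (a rearrangement of the displayed inequality, valid since $\opnorm{A}>0$ in this case) yields $\opnorm{A}\le\frac1n + n\eps$. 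Taking the minimum of the two completes the proof.

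I do not anticipate a genuine obstacle: the statement is a strict special case of \Cref{lem:projs} (it is that lemma with the subtracted projector $P$ set to $0$ and $m$ renamed $n$), and the only points requiring a moment's care are the legitimacy of $\opnorm{A}^2=\opnorm{A^2}$, which needs self-adjointness and holds, and the harmless edge case $P_i\equiv 0$ in the $n\eps$ branch. Indeed, since \Cref{lem:projs} is already established, an even shorter route is simply to invoke it with $P$ the zero operator.
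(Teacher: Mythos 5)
Your proof is correct and follows the paper's proof of \Cref{lem:projs} essentially verbatim: expand $A^2$ via idempotence, bound the $n(n-1)$ cross terms by $\eps$, obtain the scalar quadratic $\opnorm{A}^2\le\tfrac1n\opnorm{A}+\eps$, and solve. The only (harmless) divergence is in the $n\eps$ branch: the paper reads both bounds off $\opnorm{A}\le\tfrac1{2n}+\sqrt{\tfrac{1}{4n^2}+\eps}$ (noting $\sqrt{\tfrac1{4n^2}+\eps}\le\tfrac1{2n}+n\eps$ by squaring), whereas you separately establish $\opnorm{A}\ge\tfrac1n$ and divide through; both work, and, as you observe, the fastest route is simply to invoke \Cref{lem:projs} directly (or \Cref{cor:projs} with $P=0$).
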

\begin{proof}
  Let$A = \mathop{\avg}_{i=1}^n \{P_i\}$, we have 
  \begin{equation}
    A^2 = \frac1n A + \frac{1}{n^2} \sum_{i \neq j} P_i P_j;
  %\end{equation}
  %taking norms and using the triangle inequality yields
  %\begin{equation}
    \quad\implies\quad
    \opnorm{A}^2 \leq \frac1n \opnorm{A} + \frac{n(n-1)}{n^2} \eps \leq \frac1n \opnorm{A} + \eps.
  \end{equation}
  Solving the quadratic inequality yields $\opnorm{A} \leq \frac{1}{2n} + \sqrt{\frac{1}{4n^2} + \eps}$, from which the result follows.
\end{proof}
\begin{corollary} \label{cor:projs2}
  In the setting of \Cref{lem:projs2}, let $P$ be an orthogonal projection with $\Img P \leq \Img P_i$ for all~$i$.
  Then \Cref{ineq:proo2} holds with each instance $P_i$ replaced by~$\wt{P}_i = P_i - P$.
\end{corollary}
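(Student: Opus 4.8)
The plan is to obtain the statement as an immediate consequence of \Cref{lem:projs2}, by checking that the shifted operators $\wt{P}_i = P_i - P$ are themselves orthogonal projections. Once this is verified, I would simply apply \Cref{lem:projs2} to the family $\wt{P}_1, \dots, \wt{P}_n$ in place of $P_1, \dots, P_n$: the hypothesis becomes $\opnorm{\wt{P}_i \wt{P}_j} \leq \eps$ for all $i \neq j$, and the conclusion becomes $\opnorm{\mathop{\avg}_{i=1}^n \{\wt{P}_i\}} \leq \tfrac1n + \min\{\sqrt{\eps}, n\eps\}$, which is exactly \Cref{ineq:proo2} with every $P_i$ replaced by $\wt{P}_i$. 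So the whole content of the corollary reduces to one structural fact about $\wt{P}_i$.

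The fact to check is that each $\wt{P}_i$ is Hermitian and idempotent. Hermiticity is immediate since $P_i$ and $P$ are both Hermitian. For idempotency, the only input needed is the hypothesis $\Img P \subseteq \Img P_i$: this says $P_i$ restricts to the identity on $\Img P$, and since every vector of the form $Px$ lies in $\Img P$, we get $P_i P = P$; taking adjoints (both operators being self-adjoint) gives $P P_i = P$ as well. Hence
\[
  \wt{P}_i^{\,2} = (P_i - P)^2 = P_i^2 - P_i P - P P_i + P^2 = P_i - P - P + P = P_i - P = \wt{P}_i .
\]
This is precisely the computation already used to prove \Cref{cor:projs}, so no genuinely new idea is required.

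I do not expect a real obstacle here: the argument is the one-line observation $\Img P \subseteq \Img P_i \Rightarrow P_i P = P = P P_i$, followed by the black-box use of \Cref{lem:projs2}, which was already stated for an arbitrary family of orthogonal projections. (As an aside, the same cancellation gives $\wt{P}_i \wt{P}_j = P_i P_j - P$, consistent with \Cref{rem:sq}, so one could instead phrase the hypothesis of \Cref{lem:projs2} directly in terms of $\opnorm{\wt{P}_i\wt{P}_j}$; but checking idempotency and then invoking \Cref{lem:projs2} as is seems cleanest.)
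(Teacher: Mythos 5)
Your proposal is correct and matches the paper's own proof, which likewise reduces the corollary to the single observation that $\wt{P}_i^2 = \wt{P}_i$ because $P_i P = P P_i = P$, and then invokes the lemma for the projections $\wt{P}_i$. Nothing further is needed.
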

\begin{proof}
  It suffices to note that $\wt{P}_i^2 = \wt{P}_i$, since $P_i \cdot P = P \cdot P_i = P$.
  %In fact, $\wt{P_i P_j} \coloneqq P_i P_j - P = \wt{P}_i \wt{P}_j$ for any $i,j$, since $P_k \cdot P = P \cdot P_k = P$ for any~$k$.
\end{proof}
\begin{remark} \label{rem:sq2}
  The identity used in the proof easily extends to $\wt{P}_{i_1} \wt{P}_{i_2} \cdots \wt{P}_{i_k} = P_{i_1} P_{i_2} \cdots P_{i_k} - P$.   Also, this identity remains true if any set of tildes is removed from the LHS (except for the set of all~$k$).
\end{remark}

XXXinsert words hereXXXX

We now have the following scenario:
There is a set $\calM$ of matchings on $[2k]$.
For each $M \in \calM$, let $\ket{\phi_M}$ denote the $D = 2$ case ofXXX the $\ket{\Phi_M}$ notationXXX; that is, the uniform superposition over $\ket{h}$ for $h \in \{0,1\}^{2k}$ ``$M$-respecting'' (meaning $h_i = h_j$ whenever $\{i,j\} \in M$).
(Up to reordering tensor factorsXXX, the $D = 2^n$ case of the  $\ket{\Phi_M}$ notation is $\ket{\phi_M}^{\otimes n}$.)
Let us write $\JM = \ket{\phi_M}\!\bra{\phi_M}$.
Writing also $\Pi^{(n)}$ for projection onto the span of $(\ket{\phi_M}^{\otimes n} : M \in \calM)$, we showedXXX that
\begin{equation}  \label[ineq]{ineq:kk2}
  \sum_{M \in \calM} \JM^{\otimes n} \mathop{\approx}^{\kappa_n} \Pi^{(n)},
\end{equation}
where the notation $\displaystyle A \mathop{\approx}^{\delta} B$ will mean $\opnorm{A-B} \leq \delta$.

We will employ the following notation:  For $i \subset [n]$, we will write $\Pi_{[n] \setminus i} \otimes \Id_i$ for the operator on $(\C^{2k})^{\otimes n}$ that applies $\Pi^{(n-1)}$ on all tensor factors except the $i$th, and applies $\Id_{2k \times 2k}$ on the $i$th.
\begin{fact}  \label{fact:ac2}
  $\Img \Pi^{(n)}$ is a subspace of $\Img(\Pi_{[n] \setminus i} \otimes \Id_i)$ for all~$i$.\rnote{Takes 1 second of thought to see this.}
\end{fact}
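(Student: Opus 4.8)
The plan is to use the characterization of $\Pi^{(m)}$ (equivalently the $\Pi^{(n)}$ of the displayed statement) as the orthogonal projection onto the space of all vectors in $(\C^{2^m})^{\otimes 2k}$ that are fixed by every $\rho^{k,k}_{2^m}(g)$, $g \in \Ggp{m}$ — a characterization we already have in hand from the discussion preceding \Cref{thm:sherman} (see also \cite[Prop.~1]{BC20}). The analogous statement one level down characterizes $\Pi^{(m-1)}$, and hence $\Pi_{[m]\setminus i}\otimes\Id_i$, as projecting onto the vectors fixed by $\rho^{k,k}_{2^m}(h_{[m]\setminus i})$ for all $h\in\Ggp{m-1}$, where $h_{[m]\setminus i}$ denotes $h$ acting on qubit-coordinates $[m]\setminus\{i\}$ and the identity on coordinate $i$. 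The whole point is then that $\{h_{[m]\setminus i}:h\in\Ggp{m-1}\}$ is a \emph{subgroup} of $\Ggp{m}$, so imposing fixedness under it is weaker than imposing fixedness under all of $\Ggp{m}$; the desired containment of images follows immediately.

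Concretely, I would first check that the embedding lands in the right group: for $h\in\Ggp{m-1}$ we have $\det(h\otimes\Id_2)=\det(h)^2=1$, and skew-symmetry/skew-Hermiticity of the Lie algebra is inherited, so $h_{[m]\setminus i}\in\Ggp{m}$ in both the $\SOgp{2^m}$ and $\SUgp{2^m}$ cases. Next, after reordering tensor factors so that the $i$-th qubit of each of the $2k$ copies of $\C^{2^m}$ is grouped together, one sees $\rho^{k,k}_{2^m}(h_{[m]\setminus i}) = \rho^{k,k}_{2^{m-1}}(h)\otimes\Id_i$, and averaging over $\bh\sim\Ggp{m-1}$ gives exactly $\Pi^{(m-1)}\otimes\Id_i = \Pi_{[m]\setminus i}\otimes\Id_i$. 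Then, using invariance of the Haar measure (which gives $\Pi^{(m)}\rho^{k,k}_{2^m}(g)=\Pi^{(m)}$ for every $g\in\Ggp{m}$),
\[
\Pi^{(m)}\bigl(\Pi_{[m]\setminus i}\otimes\Id_i\bigr) = \E_{\bh\sim\Ggp{m-1}}\bigl[\Pi^{(m)}\,\rho^{k,k}_{2^m}(\bh_{[m]\setminus i})\bigr] = \Pi^{(m)}.
\]
Taking adjoints (both operators are self-adjoint projections) yields $\bigl(\Pi_{[m]\setminus i}\otimes\Id_i\bigr)\Pi^{(m)}=\Pi^{(m)}$, so any $v$ with $\Pi^{(m)}v=v$ satisfies $\bigl(\Pi_{[m]\setminus i}\otimes\Id_i\bigr)v=v$, i.e.\ $v\in\Img(\Pi_{[m]\setminus i}\otimes\Id_i)$, which is the claim.

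There is essentially no obstacle here — as the marginal note suggests, it is a one-line observation. The only thing requiring a modicum of care is the tensor-factor bookkeeping: correctly identifying ``the $i$-th tensor factor'' in the definition of $\Pi_{[m]\setminus i}\otimes\Id_i$ with the collection of $i$-th qubits across the $2k$ copies, and confirming that the inclusion $\Ggp{m-1}\hookrightarrow\Ggp{m}$ genuinely respects the determinant-one (and Lie-algebra) constraints in both the orthogonal and unitary settings.
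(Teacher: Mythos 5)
Your proof is correct, and since the paper itself supplies no argument for this fact (the source has only a marginal remark to the effect that it is immediate), your write-up simply makes explicit the intended one-line observation: $\{h_{[m]\setminus i} : h\in\Ggp{m-1}\}$ is a subgroup of $\Ggp{m}$, so the subspace of vectors fixed by the full group sits inside the subspace fixed by the subgroup, and these subspaces are exactly the images of the two projectors by the characterization preceding \Cref{thm:sherman}. Your concrete verification (determinant/Lie-algebra check for the embedding, the identification $\rho^{k,k}_{2^m}(h_{[m]\setminus i}) = \rho^{k,k}_{2^{m-1}}(h)\otimes\Id_i$ after regrouping qubits, and the averaging step $\E_{\bh}[\Pi^{(m)}\rho^{k,k}_{2^m}(\bh_{[m]\setminus i})]=\Pi^{(m)}$) is a clean, self-contained way to pin it down, and is exactly the bookkeeping the paper's remark elides.
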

We now wish to employ \Cref{cor:projs2}, with
\begin{equation}
  P_i \coloneqq \Pi_{[n] \setminus i} \otimes \Id_i, \quad i = 1 \dots n, \qquad P \coloneqq \Pi^{(n)}.
\end{equation}
\Cref{fact:ac2} tells us \Cref{cor:projs2}'s hypothesis is satisfied.  
We thus obtain
\begin{equation}  \label[ineq]{ineq:epsy2}
  \opnorm{\mathop{\avg}_{i=1}^n \{P_i\} -  \Pi^{(n)}} \leq \frac1n + \min\{\sqrt{\eps}, n \eps\}, \quad \text{for } \eps = \max_{i \neq j} \braces*{\opnorm{\wt{P}_i \wt{P}_j}}.
\end{equation}
By symmetry of the $n$ tensor factors, we have $\eps = \opnorm{\wt{P}_1 \wt{P}_n}$, and hence
\begin{equation}  \label{eqn:puy2}
  \eps^2 = \opnorm{(\wt{P}_1 \wt{P}_n)^\dagger (\wt{P}_1 \wt{P}_n)} = \opnorm{P_n \wt{P}_1 P_n},
\end{equation}
where we used \Cref{rem:sq2} to get $\wt{P}_n \wt{P}_1 \wt{P}_1 \wt{P}_n = P_n \wt{P}_1 P_n$.
From \Cref{ineq:kk2} we can derive
\begin{equation}
  \sum_{M \in \calM} \JM^{\otimes(n-1)} \mathop{\approx}^{\kappa_{n-1}} \Pi^{(n-1)} \quad \implies \quad \sum_{M \in \calM} \Id_1 \otimes \JM^{\otimes(n-1)}  \mathop{\approx}^{\kappa_{n-1}} P_1,
\end{equation}
where  the implication is by tensoring with~$\Id_1$ (which doesn't change operator norm).
Using \Cref{ineq:kk2} again, and the triangle inequality, we reach
\begin{equation}  \label[ineq]{ineq:yowz2}
  \wt{P}_1 = P_1 - P \mathop{\approx}^{\kappa_{n-1} + \kappa_n}   \sum_{M \in \calM} \Id_1 \otimes \JM^{\otimes(n-1)}  - \sum_{M \in \calM} \JM^{\otimes n} = \sum_{M \in \calM} \ol{J}_M \otimes \JM^{\otimes(n-1)},
\end{equation}
where $\ol{J}_M \coloneqq \Id - \JM$.
Since $\opnorm{P_n} \leq 1$, we can further conclude
\begin{align}
  P_n \wt{P}_1 P_n \mathop{\approx}^{\kappa_{n-1} + \kappa_n}  P_n \parens*{\sum_{M \in \calM} \ol{J}_M \otimes \JM^{\otimes(n-1)}} P_n
  &= (\Pi^{(n-1)} \otimes \Id_n) \parens*{\sum_{M \in \calM} \ol{J}_M \otimes \JM^{\otimes(n-2)} \otimes J_M} (\Pi^{(n-1)} \otimes \Id_n) \\
  &= \sum_{M \in \calM} \parens*{\Pi^{(n-1)} (\ol{J}_M \otimes \JM^{\otimes(n-2)}) \Pi^{(n-1)}}  \otimes J_M \label[ineq]{eqn:yup2}.
\end{align}
Writing
\begin{equation}
  Z_M \coloneqq \Pi^{(n-1)} (\ol{J}_M \otimes \JM^{\otimes(n-2)}) \Pi^{(n-1)},
\end{equation}
we can put \Cref{eqn:yup2} into \Cref{eqn:puy2} to obtain
\begin{equation}  \label[ineq]{ineq:yum2}
  \eps^2 \leq \kappa_{n-1} + \kappa_n + \opnorm{\sum_{M \in \calM}  Z_M \otimes J_M}.
\end{equation}
Now $Z_M$ is PSD, being a conjugation (by $\Pi^{(n-1)})$ of a PSD matrix: the tensor product of projections $J_M$ and~$\ol{J}_M$.
Since $0 \leq J_M \leq \Id$, we therefore conclude $0 \leq Z_M \otimes J_M \leq Z_M \otimes \Id_n$.  
Summing this over $M$ yields 
\begin{equation}
  0 \leq \sum_{M \in \calM} Z_M \otimes J_M \leq \sum_{M \in \calM} Z_M \otimes \Id_n = \parens*{\sum_{M \in \calM} Z_M} \otimes \Id_n,
\end{equation}
and hence (from \Cref{ineq:yum2})
\begin{equation}  
  \eps^2 \leq \kappa_{n-1} + \kappa_n + \opnorm{\sum_{M \in \calM} Z_M} = \kappa_{n-1} + \kappa_n + \opnorm{\Pi^{(n-1)} \parens*{\sum_{M \in \cal M} \ol{J}_M \otimes J_M^{\otimes(n-2)}} \Pi^{(n-1)}}. \label[ineq]{ineq:yoyoyo2}
\end{equation}
We have effectively now reduced from $n$ tensor components to~$n-1$.  
Indeed, suppose we had defined the ``$n-1$'' analogues of $P_1, P_2, \dots$ and $P$, calling them $P_1^{(n-1)}, P_2^{(n-1)}, \dots$ and $P^{(n-1)} = \Pi^{(n-1)}$.  
Then  \Cref{ineq:yowz2} would tell us
\begin{equation}
  \wt{P}_1^{(n-1)} = P^{(n-1)}_1 - P^{(n-1)} \mathop{\approx}^{\kappa_{n-2} + \kappa_{n-1}} \sum_{M \in \cal M} \ol{J}_M \otimes J_M^{\otimes(n-2)},
\end{equation}
and putting this into \Cref{ineq:yoyoyo2} (using $\opnorm{P^{(n-1)}} \leq 1$) yields
\begin{equation}
  \eps^2 \leq \kappa_{n-2} + 2 \kappa_{n-1} + \kappa_n + \opnorm{P^{(n-1)} \wt{P}_1^{(n-1)} P^{(n-1)}}.
\end{equation}
But $P^{(n-1)} \wt{P}_1^{(n-1)} P^{(n-1)} = 0$!  (In the notation of \Cref{cor:projs2} this would be ``$P \cdot \wt{P}_1 \cdot P = 0$''.)
Thus
\begin{equation}
  \eps \leq \sqrt{ \kappa_{n-2} + 2 \kappa_{n-1} + \kappa_n}
\end{equation}
XXXwhich may be combined with \Cref{ineq:epsy2} to wrap this up; I think I have a square-root different from PedroXXX

\subsection{More}
In this section we show \Cref{thm:large-m}. We will actually show a
more general result that implies \Cref{thm:large-m} as a corollary,
but before we do so we introduce some linear algebra
concepts.

\begin{definition}
  Let $\{\ket{\phi}_i\}_i$ be a set of unit vectors that forms a
  basis. We call this set a $\kappa$-\textit{frame}, for $\kappa \geq
  0$ (the \textit{frame bound}), such that

  \begin{equation}
  \sum_i |\braket{\phi_j | \phi_i}| \leq (1 + \kappa) \qquad \text{for all } j.
  \end{equation}
\end{definition}

Notice that an orthonormal basis is a frame with $\kappa = 0$. Indeed,
this definition of frame gives us an ``approximate orthonormal
basis''.

\begin{definition}
  Let $\{\ket{\phi}_i\}_i$ be a frame and let $\{\ket{i}\}_i$ be
  an orthonormal basis of $\text{span}(\{\ket{\phi}_i\})$. Then, we
  define the \textit{analysis operator} $T$, and its adjoint the
  \textit{synthesis operator} $T^\star$, as follows

  \begin{equation}
    T = \sum_i \ket{i}\bra{\phi_i} \qquad T^\star = \sum_i \ket{\phi_i}\bra{i}.
  \end{equation}

  Additionally, we define the \textit{frame operator} $S$ as

  \begin{equation}
    S = T^\star T = \sum_i \ket{\phi_i}\bra{\phi_i}.
  \end{equation}
\end{definition}

Recall the notation $G_m$ defined in \Cref{not:gatepositioning}. To apply this
framework of frames to $L_{G_m}(\rho^{k,k}_{2^m})$ we want to first
understand its zero eigenvalue space, which is the same as the one
eigenvalue space of $\E_{\bg \sim G_m}(\rho^{k,k}_{2^m}(\bg))$. This
type of analysis was done before, for example in \cite{BV10}.\rnote{I kinda feel that if this proposition is literally exactly stated and proved elsewhere (e.g.\ in that Bordenave--Collins paper), we shouldn't also prove it ourselves.  (Particularly if it forces me to remember about measure theory :-)}

\begin{proposition} \label{prop:onespace}
  The one eigenvalue space of $\E_{\bg \sim
    G_m}(\rho^{k,k}_{2^m}(\bg))$ is spanned by all the vectors that
  are invariant under multiplication by $g^{\otimes k,k}$ for all $g
  \in G_m$.
\end{proposition}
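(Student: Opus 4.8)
The plan is to show that $\Pi := \E_{\bg \sim G_m}[\rho^{k,k}_{2^m}(\bg)]$ is an \emph{orthogonal projection}, and then to identify its image with the space of invariant vectors; since an orthogonal projection has only the eigenvalues $0$ and $1$, its $1$-eigenspace is precisely $\Img \Pi$, and the proposition follows. Throughout I write $\rho = \rho^{k,k}_{2^m}$ for brevity, and use that $\rho$ is a continuous finite-dimensional unitary representation of the compact group $G_m$, so that the $G_m$-indexed (operator-valued) Haar integral defining $\Pi$ is well-defined.

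First I would record the two basic invariance identities. By left-invariance of the Haar measure, for every $h \in G_m$ we have $\rho(h)\Pi = \E_{\bg}[\rho(h)\rho(\bg)] = \E_{\bg}[\rho(h\bg)] = \E_{\bg}[\rho(\bg)] = \Pi$, where we pulled the bounded linear operator $\rho(h)$ inside the integral and used that $\rho$ is a homomorphism; symmetrically $\Pi\rho(h) = \Pi$ by right-invariance. Next, using these, $\Pi^2 = \E_{\bh}[\rho(\bh)]\,\Pi = \E_{\bh}[\rho(\bh)\Pi] = \E_{\bh}[\Pi] = \Pi$, so $\Pi$ is idempotent; and since $\rho$ is unitary and $\bg \mapsto \bg^{-1}$ preserves Haar measure, $\Pi^\dagger = \E_{\bg}[\rho(\bg)^\dagger] = \E_{\bg}[\rho(\bg^{-1})] = \Pi$, so $\Pi$ is self-adjoint. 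Hence $\Pi$ is an orthogonal projection and its only eigenvalues are $0$ and $1$, with the $1$-eigenspace equal to $\Img\Pi$.

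It then remains to check the two inclusions $\Img\Pi = \{v : \rho(g)v = v \text{ for all } g \in G_m\}$. If $v$ is invariant, then $\Pi v = \E_{\bg}[\rho(\bg)v] = \E_{\bg}[v] = v$, so $v \in \Img\Pi$. Conversely, if $v \in \Img\Pi$, write $v = \Pi w$; then $\rho(g)v = \rho(g)\Pi w = \Pi w = v$ for every $g \in G_m$ by the first invariance identity, so $v$ is invariant. Combining, $\Img\Pi$ is exactly the span of the vectors fixed by all $g^{\otimes k,k}$, $g \in G_m$, which is the assertion of the proposition.

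The only genuinely delicate point is the measure-theoretic one --- making sense of the operator-valued Haar integral and justifying that a continuous linear operator may be moved through it --- but this is entirely routine for continuous finite-dimensional representations of compact Lie groups, so once it is granted the argument is a two-line exercise in linear algebra. (Indeed, this statement is exactly the fact quoted earlier from \cite{BC20}, so it could alternatively just be cited.)
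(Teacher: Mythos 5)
Your proof is correct, but it takes a genuinely different route from the paper's. The paper's argument fixes a unit eigenvector $\ket{x}$ of eigenvalue $1$ and writes $1 = \abs{\braket{x|\Pi|x}} = \abs{\E_{\bg}\braket{x|\rho(\bg)|x}} \leq \E_{\bg}\abs{\braket{x|\rho(\bg)|x}} \leq 1$, forcing $\abs{\braket{x|\rho(\bg)|x}} = 1$ (hence $\rho(\bg)\ket{x} = \ket{x}$) for almost every $\bg$, and then invokes continuity of $\bg \mapsto \rho(\bg)\ket{x}$ to upgrade ``almost every'' to ``every''. Your argument instead establishes up front that $\Pi$ is a self-adjoint idempotent using left-/right-invariance of Haar measure, which reduces the proposition to the purely algebraic identification $\Img\Pi = \{v : \rho(g)v = v\ \forall g\}$; the key step $\rho(g)\Pi = \Pi$ does both jobs (idempotence and identifying the image). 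The trade-off: the paper's argument works directly with the eigenvector and does not need to pre-establish that $\Pi$ is a projection, but it pays for this by introducing an equality-case analysis of an integral inequality and a measure-zero/continuity step, both of which your approach avoids entirely. Your route is arguably cleaner and is in fact consistent with how the paper treats $\Pi$ elsewhere --- the Fact in \Cref{sec:BHH-extraction-stuff} observes that $L_{\Ggp{n}}(\rho)$ (equivalently $\Id - \Pi$) is an orthogonal projection and uses precisely $\rho(g_0)\Pi = \Pi$ as the justification, and \Cref{sec:large-m} cites \cite[Prop.~1]{BC20} for the image identification --- so you are also right that the proposition could simply be cited.
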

\begin{proof}
  First note that if $\ket{x}$ is such that $g^{\otimes k,k} \ket{x} =
  \ket{x}$ for all $g \in G_m$, then trivially $\E_{\bg \sim
    G_m}(\rho^{k,k}_{2^m}(\bg)) \ket{x} = \ket{x}$.

  Now, $\ket{x}$ is a unit eigenvector of eigenvalue 1.

  \begin{equation}
  1 = |\braket{x | \E_{\bg \sim G_m}(\rho^{k,k}_{2^m}(\bg)) | x}| = |\E_{\bg \sim G_m} \braket{x | \bg^{k,k} | x}| \leq \E_{\bg \sim G_m} |\braket{x | \bg^{k,k} | x}|,
  \end{equation}

  \noindent where the inequality follows from Jensen's inequality and
  convexity of the Euclidean norm squared. Equality is achieved if an
  only if for all but measure-zero of the elements $\bg$ we have
  $|\braket{x | \bg^{k,k} | x}| = 1$, which implies $\bg^{k,k} \ket{x}
  = \ket{x}$. By continuity we can conclude that this measure-zero set
  is empty.
\end{proof}

This proposition says that in order to know the one eigenvalue space
of $\E_{\bg \sim G_m}(\rho^{k,k}_{2^m}(\bg))$ we only have to know the
invariants of $\rho^{\otimes k,k}_{2^m}$. As we will see, this is
something that has been widely studied in representation theory.

\begin{lemma} \label{lem:framebd}
  Let $\{\ket{\phi}_i\}_i$ be the set of vectors that are invariant
  under multiplication by $g^{\otimes k,k}$ for all $g \in G_m$. If
  this set is a $\kappa_m$-frame then

  \begin{equation}
    \opnorm{\E_{\bg \sim G_m}[\rho_{2^m}^{k,k}(\bg)] - S} \leq \kappa_m.
  \end{equation}
\end{lemma}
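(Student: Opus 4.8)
The plan is to reduce the statement directly to \Cref{lem:gram}. Write $\Pi^{(m)} = \E_{\bg \sim G_m}[\rho^{k,k}_{2^m}(\bg)]$ and let $T = \spn\{\ket{\phi_i}\}$ be the span of the given invariant vectors. By \Cref{prop:onespace}, $T$ is exactly the one-eigenvalue space of $\Pi^{(m)}$. Moreover $\Pi^{(m)}$ is an \emph{orthogonal} projection: for a unitary representation $\rho$, Haar-invariance gives $\E_{\bg}[\rho(\bg)]\E_{\bh}[\rho(\bh)] = \E_{\bg,\bh}[\rho(\bg\bh)] = \E_{\bg}[\rho(\bg)]$, so $\Pi^{(m)}$ is idempotent, and it is Hermitian since $\bg^{-1}$ is Haar whenever $\bg$ is. Since an orthogonal projection is determined by its image, $\Pi^{(m)} = \Pi_T$, the orthogonal projector onto $T$.

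Next I would match the frame operator $S = \sum_i \ket{\phi_i}\!\bra{\phi_i}$ with the operator $WW^\dagger$ of \Cref{lem:gram}, where $W$ is the matrix whose columns are the unit vectors $\ket{\phi_i}$. The $\kappa_m$-frame hypothesis says $\sum_i \abs{\braket{\phi_j|\phi_i}} \leq 1 + \kappa_m$ for every $j$; peeling off the diagonal term $\braket{\phi_j|\phi_j} = 1$, this is exactly $\max_j \sum_{i \neq j} \abs{\braket{\phi_i|\phi_j}} \leq \kappa_m$, i.e.\ the $\norm{E}_{1 \mapsto 1} \leq \kappa_m$ hypothesis of \Cref{ineq:11} applied to $E = W^\dagger W - \Id$. (In particular $\kappa_m < 1$ is forced, so the $\ket{\phi_i}$ are linearly independent and $W^\dagger W$ is invertible, and \Cref{lem:gram} genuinely applies.) Then \Cref{lem:gram} gives $S = WW^\dagger \mathop{\approx}^{\kappa_m} \Pi_T$, and combining with $\Pi_T = \Pi^{(m)}$ from the previous paragraph yields $\opnorm{\Pi^{(m)} - S} \leq \kappa_m$, which is the claim.

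There is no serious obstacle here: the real content lives in \Cref{lem:gram} and \Cref{prop:onespace}, and the proof is essentially bookkeeping. The two points that require genuine (if minor) care are: translating the ``$\kappa_m$-frame'' condition on $\{\ket{\phi_i}\}$ into the Gram-matrix/$\norm{\cdot}_{1\mapsto 1}$ hypothesis of \Cref{lem:gram}; and observing that $\Pi^{(m)}$ is the \emph{orthogonal} projection onto $T$ (not merely some operator with image $T$), which is exactly what lets us identify $\Pi_T$ with $\Pi^{(m)}$ and thereby turn the conclusion of \Cref{lem:gram} into the desired operator-norm bound.
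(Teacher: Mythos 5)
Your proof is correct and is essentially the paper's own argument: identify $\Pi^{(m)}$ as the orthogonal projector onto $T=\spn\{\ket{\phi_i}\}$ via \Cref{prop:onespace}, then compare $S=WW^\dagger$ to $\Pi_T$ by passing to the Gram matrix $W^\dagger W$, whose nonzero spectrum agrees with that of $WW^\dagger$. You package the second step by invoking \Cref{lem:gram}, whereas the paper's proof re-derives the same eigenvalue comparison inline through the analysis/synthesis-operator ($T$, $T^\star$) notation, so the two are the same; the one small nit is that $\kappa_m<1$ is not actually \emph{forced} by the frame hypothesis, but this is harmless since the frame definition already requires $\{\ket{\phi_i}\}$ to be a basis (hence linearly independent) and in all uses $\kappa_m$ is small.
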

\begin{proof}
  $\{\ket{i}\}_i$ be an orthonormal basis of
  $\text{span}(\{\ket{\phi}_i\})$. Using \Cref{prop:onespace} we know
  that this space is exactly the one eigenvalue space of $\E_{\bg \sim
    G_m}[\rho_{2^m}^{k,k}(\bg)]$ and since this is a projector its
  eigendecomposition gives $\sum_i \ket{i}\bra{i}$.

  Using the fact that $S = T^\star T$ and its adjoint $T T^\star$ have the same eigenvalues we obtain

  \begin{equation}
    \opnorm{\E_{\bg \sim G_m}[\rho_{2^m}^{k,k}(\bg)] - S} = \opnorm{T T^\star - \sum_i \ket{i}\bra{i}} = \opnorm{\sum_{i\neq j} \ket{i}\bra{j} \braket{\phi_i | \phi_j}} \leq \max_i{\sum_{i \neq j} \braket{\phi_i | \phi_j}} \leq \kappa,
  \end{equation}

  \noindent where the inequality follows from the fact that for a Hermitian matrix $M$ we have that $\opnorm{M} \leq \max_i \sum_j |M_{ij}|$.
\end{proof}

We can now combine these insights to prove the following
generalization of \Cref{thm:large-m}.

\begin{theorem}
  Suppose that for all dimensions $t \leq m$ the set of vectors that
  are invariant under multiplication by $g^{\otimes k,k}$ for all $g
  \in G_t$ is a $\kappa_t$-frame, where $\kappa_t$ is some function
  that depends on $t$. Then

  \begin{equation}
    \opnorm{\E_{\bg \sim G_{m - 1} \times \binom{[m]}{m - 1}}[\rho_{2^m}^{k,k}(\bg)] - \E_{\bg \sim G_m}[\rho_{2^m}^{k,k}(\bg)]} \leq \frac{1}{m} + m (\kappa_{m - 2} + 2\kappa_{m - 1} + \kappa_m).
  \end{equation}
\end{theorem}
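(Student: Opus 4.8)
\textbf{The plan} is to rerun the proof of \Cref{thm:large-m-restatement} essentially verbatim, replacing its concrete input (the overlap estimate of \Cref{thm:kappa}) by the abstract frame hypothesis. First I would fix $D=2^m$ and regroup $(\C^{D})^{\otimes 2k}=\parens*{(\C^{2})^{\otimes 2k}}^{\otimes m}$, so that the $m$ tensor factors are the $m$ qubits. Writing $\Pi^{(t)}=\E_{\bg\sim G_t}[\rho^{k,k}_{2^t}(\bg)]$ (which by \Cref{prop:onespace} is the orthogonal projection onto the $G_t$-invariant subspace), I would set $P_i=\Pi_{[m]\setminus i}\otimes\Id_i$ and $P=\Pi^{(m)}$. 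As in \Cref{thm:large-m-restatement} one checks $\E_{\bg\sim G_{m-1}\times\binom{[m]}{m-1}}[\rho^{k,k}_{2^m}(\bg)]=\mathop{\avg}_{i=1}^m\{P_i\}$, that each $P_i$ is an orthogonal projection, and that $\Img P\subseteq\Img P_i$ for all $i$ (\Cref{fact:ac}); hence, writing $\wt{P}_i=P_i-P$, \Cref{cor:projs} reduces the whole problem to bounding $\eps=\max_{i\neq j}\opnorm{\wt{P}_i\wt{P}_j}$, and by symmetry of the tensor factors together with \Cref{rem:sq} it suffices to bound $\eps^2=\opnorm{P_m\wt{P}_1 P_m}$.

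\textbf{Next} I would plug in the frame hypothesis, which is only needed for $t\in\{m-2,m-1,m\}$. Applying \Cref{lem:gram} to the $\kappa_t$-frame of $G_t$-invariant vectors (this is \Cref{lem:framebd}) gives $\opnorm{\Pi^{(t)}-S_t}\leq\kappa_t$, where $S_t=\sum_i\ket{\phi_i}\!\bra{\phi_i}$ is the frame operator. By Schur--Weyl duality (\Cref{thm:sherman}) the invariants are the $\ket{\Phi_M}$, $M\in\calM$ (with $\calM=\calM_{2k}$ for $\SOgp{2^m}$, $\calM^{\mathrm{bip}}_{2k}$ for $\SUgp{2^m}$), and they tensor-factor over qubits as $\ket{\Phi_M}\!\bra{\Phi_M}=J_M^{\otimes t}$, $J_M=\ket{\phi_M}\!\bra{\phi_M}$ the $D=2$ version, so $S_t=\sum_{M\in\calM}J_M^{\otimes t}$. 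Then I would run the telescoping ``$m\mapsto m-1$'' reduction of \Cref{thm:large-m-restatement} with $\kappa_t$ now being the abstract frame bound: with $\ol{J}_M=\Id-J_M$, the triangle inequality gives $\wt{P}_1\mathop{\approx}^{\kappa_{m-1}+\kappa_m}\sum_{M}\ol{J}_M\otimes J_M^{\otimes(m-1)}$; conjugating by $P_m$ (operator norm $\leq 1$) gives $P_m\wt{P}_1 P_m\mathop{\approx}^{\kappa_{m-1}+\kappa_m}\sum_M Z_M\otimes J_M$, where $Z_M=\Pi^{(m-1)}(\ol{J}_M\otimes J_M^{\otimes(m-2)})\Pi^{(m-1)}$ is PSD; using $0\leq J_M\leq\Id$ to replace $J_M$ by $\Id$ on the trailing factor, $\opnorm{\sum_M Z_M\otimes J_M}\leq\opnorm{\sum_M Z_M}=\opnorm{\Pi^{(m-1)}(\sum_M\ol{J}_M\otimes J_M^{\otimes(m-2)})\Pi^{(m-1)}}$; and recognizing $\sum_M\ol{J}_M\otimes J_M^{\otimes(m-2)}\mathop{\approx}^{\kappa_{m-2}+\kappa_{m-1}}\wt{P}_1^{(m-1)}$ (the same identity one level down), with $\Pi^{(m-1)}\wt{P}_1^{(m-1)}\Pi^{(m-1)}=0$ (the ``$P\wt{P}_1 P=0$'' fact from the proof of \Cref{cor:projs}), the three error contributions add up to $\eps^2\leq\kappa_{m-2}+2\kappa_{m-1}+\kappa_m$.

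\textbf{To finish}, I would feed $\eps\leq\parens*{\kappa_{m-2}+2\kappa_{m-1}+\kappa_m}^{1/2}$ into \Cref{cor:projs}, giving $\opnorm{\mathop{\avg}_{i=1}^m\{P_i\}-\Pi^{(m)}}\leq\tfrac1m+\min\{\sqrt{\eps},\,m\eps\}$; in the regime where each $\kappa_t=O(k^2/2^t)$ (as in \Cref{thm:kappa}) the $m\eps$ branch is the operative one, yielding $\tfrac1m+m\parens*{\kappa_{m-2}+2\kappa_{m-1}+\kappa_m}^{1/2}$ --- the form in which the bound is actually applied (substituting the explicit $\kappa_t$ recovers \Cref{thm:large-m-restatement}; the error term of the displayed statement is to be read with this square root).

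\textbf{The hard part} is the telescoping estimate $\opnorm{P_m\wt{P}_1 P_m}\leq\kappa_{m-2}+2\kappa_{m-1}+\kappa_m$: one must verify that conjugating $\wt{P}_1$ by $P_m$ really does collapse the $m$-factor problem to its $(m-1)$-factor version --- the two load-bearing moves being the replacement of $J_M$ by $\Id$ on the trailing factor (legitimate only because each $Z_M$ is PSD) and the vanishing $\Pi^{(m-1)}\wt{P}_1^{(m-1)}\Pi^{(m-1)}=0$ at the bottom of the recursion --- and that each of the three levels incurs exactly one frame bound. A secondary subtlety is that the abstract hypothesis supplies only the scalars $\kappa_t$, so the tensor factorization $S_t=\sum_{M\in\calM}J_M^{\otimes t}$ that drives the recursion must still be imported from the Schur--Weyl description of the invariant subspace; in effect the theorem is a clean repackaging of the mechanism behind \Cref{thm:large-m-restatement}, with the concrete overlap computation abstracted into the single phrase ``$\kappa_t$-frame''.
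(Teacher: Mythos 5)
Your proof is correct and follows essentially the same route as the paper's: reduce to bounding the pairwise overlaps $\opnorm{\wt{P}_i\wt{P}_j}$ via \Cref{lem:projs}/\Cref{cor:projs}, telescope the $m$-factor problem to the $(m-1)$-factor problem using the frame operators, and collect a $\kappa_t$ penalty at each of the three levels. The only presentational difference is that you route the final step through \Cref{cor:projs}, whereas the paper's proof of this particular (superseded) statement redoes the quadratic $\gamma_m^2 \le \tfrac1m\gamma_m + \opnorm{Q_1 Q_m - \Pi^{(m)}}$ by hand --- these are the same argument.

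One substantive point in your favor: you correctly flag that the theorem's displayed bound is missing a square root. The paper's own proof establishes $\opnorm{Q_1 Q_m - \Pi^{(m)}}^2 \le \kappa_{m-2}+2\kappa_{m-1}+\kappa_m$, so what feeds into the quadratic is $\opnorm{Q_1 Q_m - \Pi^{(m)}} \le (\kappa_{m-2}+2\kappa_{m-1}+\kappa_m)^{1/2}$, yet the very next line of the paper substitutes $\kappa_{m-2}+2\kappa_{m-1}+\kappa_m$ in its place, silently dropping the exponent. The correct conclusion is $\tfrac1m + m\,(\kappa_{m-2}+2\kappa_{m-1}+\kappa_m)^{1/2}$, which is exactly what you derived and exactly what the paper's replacement theorem \Cref{thm:large-m-restatement} obtains after plugging in the concrete $\kappa_t \approx \tfrac{10}{9}k^2/2^t$ from \Cref{thm:kappa} (giving $\tfrac1m + \sqrt{10}\,km\,2^{-m/2}$, a square root of the $\kappa$-sum). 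So your ``read with the square root'' caveat is not pedantry; it is the fix.

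Two small clean-ups: (i) in the final paragraph you feed ``$\eps$'' into \Cref{cor:projs} twice with two different meanings (first as the bound $\eps^2=\opnorm{P_m\wt P_1 P_m}$, then as the argument of $\min\{\sqrt\eps, m\eps\}$); keeping the two symbols distinct as in the paper (the inner one is $\eps$, the outer substitutes $\eps\le\delta^{1/2}$) would avoid a reader parsing error. (ii) Your remark that the abstract ``$\kappa_t$-frame'' hypothesis alone does not supply the tensor factorization $S_t=\sum_M J_M^{\otimes t}$ (which must still be imported from Schur--Weyl) is a correct observation: as stated, the theorem's hypothesis is incomplete unless one also assumes, as the paper implicitly does, that the frame vectors themselves factor across qubits.
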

\begin{proof}
  Let's first make the following definitions:

  \[
  \Pi^{(m)} := \E_{\bg \sim G_m}[\rho_{2^m}^{k,k}(\bg)] \qquad Q_i := \E_{\bg \sim G_{m - 1} \times ([m] \setminus \{i\})}[\rho_{2^m}^{k,k}(\bg)].
  \]

  Recall that from \Cref{fact:projop} we know that $\Pi^{(m)}$ and all
  of the $Q_i$ are orthogonal projectors. Furthermore, using
  \Cref{prop:onespace} we know that $\Pi^{(m)}$ projects to the space
  spanned by all the vectors that are invariant under multiplication
  by $g^{\otimes k, k}$ for all $g \in G_m$, and $Q_i$ projects to the
  space spanned by all the vectors that are invariant under
  multiplication by operators that apply $g^{\otimes k, k}$ on all
  qubits except the $i$th qubit, for all $g \in G_{m - 1}$.

  Using the above, our goal is to bound $\opnorm{\frac{1}{m} \sum_{i =
      1}^mQ_i - \Pi^{(m)}}$. Call this quantity $\gamma_m$ and
  observe:

  \begin{align}
    \gamma_m^2 &= \opnorm{\frac{1}{m} \sum_{i = 1}^mQ_i - \Pi^{(m)}}^2\\
    &= \opnorm{\parens*{\frac{1}{m} \sum_{i = 1}^mQ_i - \Pi^{(m)}}^\star \parens*{\frac{1}{m} \sum_{i = 1}^mQ_i - \Pi^{(m)}}}\\
    &= \opnorm{\frac{1}{m^2} \sum_{i, j = 1}^mQ_iQ_j - \frac{1}{m}\sum_{i = 1}^mQ_i\Pi^{(m)} - \Pi^{(m)}\frac{1}{m}\sum_{i = 1}^mQ_i + \parens*{\Pi^{(m)}}^2}.
  \end{align}

  Since the $Q_i$ project to a subspace of $\Pi^{(m)}$ we have that
  $\Pi^{(m)}Q_i = Q_i\Pi^{(m)} = \Pi^{(m)}$ for all $i$. Applying this to the above we get
  
  \begin{align}
    \gamma_m^2 &= \opnorm{\frac{1}{m^2} \sum_{i, j = 1}^mQ_iQ_j - \Pi^{(m)}}\\
    &= \opnorm{\frac{1}{m^2} \sum_{i}^m\parens*{Q_i - \Pi^{(m)}} + \frac{1}{m^2} \sum_{i \neq j}\parens*{Q_iQ_j - \Pi^{(m)}}}\\
    &\leq \frac{1}{m}\opnorm{\frac{1}{m} \sum_{i}^mQ_i - \Pi^{(m)}} + \frac{1}{m^2} \sum_{i \neq j}\opnorm{Q_iQ_j - \Pi^{(m)}}\\
    &\leq \frac{1}{m} \gamma_m + \opnorm{Q_1Q_m - \Pi^{(m)}} \label{eq:opbnde1},
  \end{align}

  \noindent where the last inequality follows from the symmetries of
  the $Q_i$.

  For $t \leq m$, let $\{\ket{\phi}_i\}_i$ be the set of vectors that
  are invariant under multiplication by $g^{\otimes k,k}$ for all $g
  \in G_t$. Given the structure of the $G_t$ as described in
  \Cref{not:gatepositioning}, we can assume some extra structure on
  the vectors $\ket{\phi_i}$, namely that they are of the form
  $\ket{\varphi_i}^{\otimes t}$ and $\ket{\varphi_i} \in
  (\C^2)^{\otimes 2k}$. Since these form a $\kappa_t$-frame let's
  write their frame operator as $S^{(t)} = \sum_i \varphi_i^{\otimes t}$, where
  $\varphi_i = \ket{\varphi_i} \bra{\varphi_i}$.
  
  Notice that the action of $Q_1$ is the same as applying $\Id$ to the
  first qubit and $\Pi^{(m - 1)}$ to the remaining ones, so $Q_1 = \Id
  \otimes \Pi^{(m - 1)}$. Thus, we can define $S_1 = \Id \otimes S^{(m
    - 1)}$ as the frame operator on all the qubits except the first
  one, and applying \Cref{lem:framebd} we obtain $\opnorm{Q_1 - S_1}
  \leq \kappa_{m - 1}$. We can similarly define $S_m$ and deduce
  $\opnorm{Q_m - S_m} \leq \kappa_{m - 1}$

  We will now focus on bounding the second term of \Cref{eq:opbnde1}.

  \begin{align}
    \opnorm{Q_1Q_m - \Pi^{(m)}}^2 &= \opnorm{Q_1Q_mQ_mQ_1 - \Pi^{(m)}}\\
    &= \opnorm{Q_1Q_mQ_1 - Q_1\Pi^{(m)}Q_1}\\
    &\leq \opnorm{Q_1S_mQ_1 - Q_1S^{(m)}Q_1} + \kappa_{m - 1} + \kappa_m ,
  \end{align}

  Let us focus on the first term of the above and apply the
  definitions of the frame operators and $Q_i$. Let $\varphi_i^\perp =
  \Id - \varphi_i$. We have
  
  \begin{align}
    \opnorm{Q_1S_mQ_1 - Q_1S^{(m)}Q_1} &= \opnorm{\sum_{i} \parens*{\Id \otimes \Pi^{(m-1)}} \parens*{(\varphi^{\otimes m - 1}_i \otimes \Id) - \varphi_i^{\otimes m}} \parens*{\Id \otimes \Pi^{(m-1)}}}\\
    &= \opnorm{\sum_{i} \varphi_i \otimes \parens*{\Pi^{(m - 1)} \parens*{\varphi^{\otimes m - 2}_i \otimes \varphi_i^\perp} \Pi^{(m - 1)}}} \label{eq:projeqpsd}.
  \end{align}

  Next, we will note the following claim to bound each of the terms in
  the sum above:

  \begin{claim}
    \begin{equation}
    \varphi_i \otimes \parens*{\Pi^{(m - 1)} \parens*{\varphi^{\otimes m - 2}_i \otimes \varphi_i^\perp} \Pi^{(m - 1)}} \leq \Id \otimes \parens*{\Pi^{(m - 1)} \parens*{\varphi^{\otimes m - 2}_i \otimes \varphi_i^\perp} \Pi^{(m - 1)}},
    \end{equation}

    \noindent where the $\leq$ is PSD order.
  \end{claim}
  \begin{proof}
    \renewcommand{\qedsymbol}{$\blacksquare$} The claim is equivalent
    to proving that $\varphi_i^\perp \otimes \parens*{\Pi^{(m - 1)}
      \parens*{\varphi^{\otimes m - 2}_i \otimes \varphi_i^\perp}
      \Pi^{(m - 1)}}$ is PSD, so we will do that. We use the fact that
    if $A \geq 0$ and $B \geq 0$ then $A \otimes B \geq 0$ and show
    that each of the components is PSD.

    We show that the first component is a projector, which implies it
    is PSD. Indeed, note that $\parens*{\varphi_i^\perp}^T = \Id -
    \parens*{\varphi_i}^T = \Id -
    \parens*{\ket{\varphi_i}\bra{\varphi_i}}^T = \Id - \varphi_i
    = \varphi_i^\perp$ and $\parens*{\varphi_i^\perp}^2 = \Id -
    2\varphi_i + \varphi_i^2 = \Id - \varphi_i =
    \varphi_i^\perp$, since $\varphi_i^2 =
    \ket{\varphi_i}\bra{\varphi_i}\ket{\varphi_i}\bra{\varphi_i}
    = \ket{\varphi_i}\bra{\varphi_i}$ (recall that
    $\ket{\varphi_i}$ is a unit vector).

    To show the second component is PSD we provide a decomposition
    into a product of an operator and its transpose.
    \begin{align}
      \parens*{\Pi^{(m - 1)} \parens*{\varphi^{\otimes m - 2}_i \otimes \varphi_i^\perp}}\parens*{\Pi^{(m - 1)} \parens*{\varphi^{\otimes m - 2}_i \otimes \varphi_i^\perp}}^T &= \Pi^{(m - 1)} \parens*{\varphi^{\otimes m - 2}_i \otimes \varphi_i^\perp}^2 \Pi^{(m - 1)}\\
      &= \Pi^{(m - 1)} \parens*{\varphi^{\otimes m - 2}_i \otimes \varphi_i^\perp} \Pi^{(m - 1)},
    \end{align}
    \noindent where we used the fact that both $\varphi_i$ and
    $\varphi_i^\perp$ are projectors.
  \end{proof}

  We now apply the above claim to \Cref{eq:projeqpsd}.

  \begin{align}
    \opnorm{Q_1S_mQ_1 - Q_1S^{(m)}Q_1} &\leq \opnorm{\sum_{i} \Id \otimes \parens*{\Pi^{(m - 1)} \parens*{\varphi^{\otimes m - 2}_i \otimes \varphi_i^\perp} \Pi^{(m - 1)}}}\\
    &\leq \opnorm{\sum_{i} \parens*{\Pi^{(m - 1)} \parens*{\varphi^{\otimes m - 2}_i \otimes \varphi_i^\perp} \Pi^{(m - 1)}}}.
  \end{align}

  We can now substitute the above into \Cref{eq:proj_5}.
  
  \begin{align}
    \opnorm{Q_1Q_m - \Pi^{(m)}}^2 &\leq \opnorm{\sum_{i} \parens*{\Pi^{(m - 1)} \parens*{\varphi^{\otimes m - 2}_i \otimes \varphi_i^\perp} \Pi^{(m - 1)}}} + \kappa_{m - 1} + \kappa_m\\
    &= \opnorm{\sum_{i} \parens*{\Pi^{(m - 1)} \parens*{\varphi^{\otimes m - 2}_i \otimes \Id} \Pi^{(m - 1)} - \Pi^{(m - 1)} \varphi^{\otimes m - 1}_i \Pi^{(m - 1)}}} + \kappa_{m - 1} + \kappa_m\\
    &= \opnorm{\Pi^{(m - 1)} \parens*{S^{(m-2)} \otimes \Id} \Pi^{(m - 1)} - \Pi^{(m - 1)} S^{(m-1)} \Pi^{(m - 1)}} + \kappa_{m - 1} + \kappa_m \\
    &\leq \opnorm{\Pi^{(m - 1)} \parens*{\Pi^{(m-2)} \otimes \Id} \Pi^{(m - 1)} - \Pi^{(m - 1)} \Pi^{(m-1)} \Pi^{(m - 1)}} + \kappa_{m - 2} + 2\kappa_{m - 1} + \kappa_m\\
    &= \kappa_{m - 2} + 2\kappa_{m - 1} + \kappa_m.
  \end{align}

  To finish the proof, we combine this with \Cref{eq:opbnde1} to obtain:

  \begin{equation}
    \gamma_m^2 \leq \frac{1}{m}\gamma_m + \kappa_{m - 2} + 2\kappa_{m - 1} + \kappa_m.
  \end{equation}

  We divide both sides by $\gamma_m$ and note that if $\gamma_m < 1 /
  m$ then we are done, otherwise we obtain the claimed bound. 
\end{proof}

\subsection{Old stuff}

In this section we show an analogue of \cite[Lem.~6]{HH21}, and our
proof is roughly as the original.  The following is equivalent to \Cref{thm:large-m}:\rasnote{shall we elaborate on this equivalence a bit? \Cref{sec:BHH-extraction-stuff} doesn't mention ``operator norm'' explicitly anywhere I think}

\begin{proposition} \label{prop:lgbound}
  For $2^m > k^2$ we have:
  \[
  \opnorm{\E_{\Ogp{2^{m-1}}\times\binom{[m]}{m - 1}}[\rho_{2^m}^{k,k}(\bg)] - \E_{\Ogp{2^m}}[\rho_{2^m}^{k,k}(\bg)]} \leq \frac{1}{m} + 3\frac{km}{2^{(m - 1) / 2}}.
  \]
\end{proposition}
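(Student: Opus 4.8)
The plan is to run the proof of \Cref{thm:large-m-restatement} essentially verbatim, with the special orthogonal group replaced by the full orthogonal group --- \Cref{prop:lgbound} is really just the orthogonal-group rephrasing of \Cref{thm:large-m}. Since $2^m > k^2$ forces $2k < 2^m$, Schur--Weyl (Brauer) duality for $\Ogp{2^m}$ identifies $\Pi^{(m)} \coloneqq \E_{\bg\sim\Ogp{2^m}}[\rho^{k,k}_{2^m}(\bg)]$ as the orthogonal projection onto $\spn\{\ket{\Phi_M} : M\in\calM_{2k}\}$ (and, since $2k < 2^m$, this coincides with the $\SOgp{2^m}$ projector of \Cref{thm:sherman}, no Levi--Civita invariants entering), and similarly $\E_{\bg\sim\Ogp{2^{m-1}}\times\binom{[m]}{m-1}}[\rho^{k,k}_{2^m}(\bg)] = \mathop{\avg}_{i=1}^m\{\Pi_{[m]\setminus i}\otimes\Id_i\}$, where $\Pi_{[m]\setminus i}\otimes\Id_i$ applies $\Pi^{(m-1)}$ on all tensor factors but the $i$th. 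So the quantity to bound is exactly the one treated in \Cref{thm:large-m-restatement}.

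First I would dispose of the large-$k$ range. If $k^2 \ge 2^{m-2}$ then $3km/2^{(m-1)/2} \ge 3m/\sqrt2 > 1$, while the left-hand side is at most $1$: writing $\wt P_i \coloneqq (\Pi_{[m]\setminus i}\otimes\Id_i) - \Pi^{(m)}$, \Cref{fact:ac} and \Cref{cor:projs} show the $\wt P_i$ are orthogonal projections, so the left-hand side equals $\opnorm{\mathop{\avg}_{i=1}^m\{\wt P_i\}} \le \mathop{\avg}_{i=1}^m\opnorm{\wt P_i} \le 1$. Hence in this range the inequality is trivial, and we may assume from now on that $k^2 < 2^{m-2}$.

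In this regime I would invoke the projector calculus of \Cref{thm:large-m-restatement}. Setting $P_i = \Pi_{[m]\setminus i}\otimes\Id_i$, $P = \Pi^{(m)}$, \Cref{fact:ac} gives $\Img P \subseteq \Img P_i$, so \Cref{cor:projs} yields $\opnorm{\mathop{\avg}_{i=1}^m\{P_i\} - \Pi^{(m)}} \le \tfrac1m + \min\{\sqrt\eps,\,m\eps\}$ with $\eps = \max_{i\neq j}\opnorm{\wt P_i\wt P_j}$, and by symmetry $\eps^2 = \opnorm{P_m\wt P_1 P_m}$. The crucial input is near-orthonormality of the matching vectors: exactly as in the proof of \Cref{thm:kappa}, for any $j$ with $k^2\le 2^j$, \Cref{lem:gram} together with the connected-component generating-function count over all of $\calM_{2k}$ gives $\sum_{M\in\calM_{2k}}\ket{\Phi_M}\!\bra{\Phi_M}\mathop{\approx}^{\kappa_j}\Pi^{(j)}$ with $\kappa_j = \prod_{i=1}^{k-1}(1 + 2i/2^j) - 1 \le \exp(k(k-1)/2^j) - 1 \le 2k^2/2^j$. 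Plugging this into the telescoping chain of \Cref{thm:large-m-restatement} --- replace $\wt P_1$ by $\sum_M\ol{J}_M\otimes J_M^{\otimes(m-1)}$ up to error $\kappa_{m-1}+\kappa_m$, conjugate by $P_m = \Pi^{(m-1)}\otimes\Id_m$, drop the last tensor factor using $0\le J_M\le\Id$, and use that the ``$(m{-}1)$-level'' term $\Pi^{(m-1)}\wt P_1^{(m-1)}\Pi^{(m-1)}$ vanishes --- yields $\eps^2 \le \kappa_{m-2}+2\kappa_{m-1}+\kappa_m \le 2k^2(2^{2-m}+2^{2-m}+2^{-m}) = 18k^2/2^m$, i.e.\ $\eps \le 3\sqrt2\,k/2^{m/2}$. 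Since $t\mapsto\min\{\sqrt t,\,m t\}$ is monotone, $\min\{\sqrt\eps,m\eps\} \le m\cdot 3\sqrt2\,k/2^{m/2} = 3km/2^{(m-1)/2}$, giving the claimed bound.

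The step I expect to be the main obstacle is this near-orthonormality estimate for the orthogonal matching vectors, and concretely squeezing its constant down to $\kappa_j\le 2k^2/2^j$: in the orthogonal case the commutant involves all $(2k-1)!!$ perfect matchings rather than only the bipartite ones of the unitary case, so one must carefully evaluate $\prod_{i=1}^{k-1}(1+2i/2^j)$ and bound it via $\exp(x)-1\le 2x$ for $0\le x\le1$, which is exactly where the hypothesis $k^2\le 2^{m-2}$ (used at $j=m,m-1,m-2$) is consumed. Everything else --- \Cref{fact:ac}, \Cref{lem:projs}, \Cref{cor:projs}, and the tensor-factor peeling --- carries over unchanged from the special orthogonal argument, so no genuinely new idea is needed beyond the orthogonal Schur--Weyl input.
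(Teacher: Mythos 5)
Your proof is correct and follows the paper's own route: the projector calculus of \Cref{lem:projs} and \Cref{cor:projs} reduces matters to bounding $\eps = \opnorm{\wt{P}_1\wt{P}_m}$, the near-orthonormality estimate (\Cref{lem:gram} together with the generating-function count underlying \Cref{thm:kappa}, here in the form $\kappa_j \leq 2k^2/2^j$, the constant of \cite{HH21}) plus the tensor-factor telescoping gives $\eps^2 \leq \kappa_{m-2}+2\kappa_{m-1}+\kappa_m = 9k^2/2^{m-1}$, and $\min\{\sqrt\eps,\,m\eps\}\leq m\eps$ finishes. The one thing you add, which is a genuine fix: applying the near-orthonormality bound at the $(m{-}2)$-qubit level requires $k^2 < 2^{m-2}$, strictly stronger than the stated hypothesis $k^2 < 2^m$, and the paper's own argument does not address the remaining range; your observation that the left side is always at most $1$ (being an average of the orthogonal projections $\wt{P}_i$) while the right side already exceeds $1$ once $k^2 \geq 2^{m-2}$ cleanly disposes of it.
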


Before giving the proof we need to introduce some known facts about these
operators. For a discussion and proofs of the facts we present here,
see \cite{CS06}. First, recall that
$\E_{\Ogp{2^m}}[\rho_{2^m}^{k,k}(\bg)]$ is an orthogonal projection
operator.\rasnote{Just a little type-checking to make sure I'm following: it orthogonally projects elements of $((\R^2)^{\otimes m})^{\otimes 2k}$ onto a subspace of $((\R^2)^{\otimes m})^{\otimes 2k}$ (right?)}

Now, let $M_{2k}$ be the set of all pair permutations on $2k$
elements, i.e. $\sigma \in M_{2k}$ partitions $[2k]$ into pairs
$((\sigma(1), \sigma(2)), \ldots, (\sigma(2k - 1), \sigma(2k)))$ where
$\sigma(2i - 1) < \sigma(2i)$ and $\sigma(1) < \sigma(3) < \ldots <
\sigma(2k - 1)$.

For $i, j \in [k]$, we define the vectors $\ket{\Omega_{ij}} \in
(\C^2)^{\otimes 2k}$ \rasnote{Minor: $(\C^2)^{\otimes 2k}$ or $(\R^2)^{\otimes 2k}$?} as maximally entagled states in the $i, j$ tensor
factors, so $\ket{\Omega_{ij}} = \frac{1}{\sqrt{2}} (\ket{0_i 0_j} +
\ket{1_i 1_j})$. We can think of these vectors as copies of a
single qubit in the $\rho^{k,k}$ representation.\rasnote{I want to understand this better} Given a pair
partition $\sigma \in M_{2k}$ we define the vector $\ket{\phi_\sigma}$
as $\bigotimes_{i = 1}^k \ket{\Omega_{\sigma(2i - 1)
    \sigma(2i)}}$.\rasnote{I think I know what we mean here but am having a little type-checking dissonance; if I think of each $\ket{\Omega_{\sigma(2i - 1)
    \sigma(2i)}}$ as an element of $(\R^2)^{\otimes 2k}$, I shouldn't be getting an element of $(\R^2)^{\otimes 2k}$ when I tensor $k$ of them together, but $\ket{\phi_\sigma}$ is an element of $(\R^2)^{\otimes 2k}$, right?}
     One can show\rasnote{do we want to show this, or give a reference?} that
$\E_{\Ogp{2^m}}[\rho_{2^m}^{k,k}(\bg)] \ket{\phi_\sigma}^{\otimes m} =
\ket{\phi_\sigma}^{\otimes m}$, for all $\sigma \in M_{2k}$\rasnote{perhaps add something like ``; i.e.~$\ket{\phi_\sigma}^{\otimes m}$ lies in the image space of $\E_{\Ogp{2^m}}[\rho_{2^m}^{k,k}(\bg)]$  for each $\sigma \in M_{2k}$.''}.

Given $\sigma \in M_{2k}$, let $\phi_\sigma$ denote $\phi_\sigma  :=
\ket{\phi_\sigma}\bra{\phi_\sigma}$.\rasnote{Type checking: this belongs to
$((\R^2)^{\otimes 2k}) \times ((\R^2)^{\otimes 2k})$, right; so an $m$-th tensor power of it belongs to
$((\R^2)^{\otimes 2k})^{\otimes m} \times ((\R^2)^{\otimes 2k})^{\otimes m}$
which synchs up with the fact that the projection operator is likewise an element of
$((\R^2)^{\otimes m})^{\otimes 2k} \times ((\R^2)^{\otimes m})^{\otimes 2k}$; right?
} The \emph{frame operator} $S$ is
defined as $S = \sum_{\sigma \in M_{2k}} \phi_\sigma^{\otimes m}$. We
have the following result from \cite{HH21}[Lemma 9]:

\begin{lemma} \label{lem:frame_bound}
  For $2^m > k^2$ we have the following bound:

  \[
  \opnorm{\E_{\bg \sim \Ogp{2^m}}[\rho_{2^m}^{k,k}(\bg)] - S} \leq \frac{2k^2}{2^m}.
  \]
\end{lemma}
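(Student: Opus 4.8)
This is exactly the orthogonal‑group case of \Cref{thm:kappa}, only with a cruder constant ($2k^2/2^m$ in place of $\tfrac{10}{9}k^2/2^m$) and a correspondingly weaker hypothesis ($2^m > k^2$ in place of $k^2 \le \tfrac19 2^m$), so the plan is simply to rerun the representation‑theory‑free argument given there. Write $D = 2^m$ and $\Pi := \E_{\bg \sim \Ogp{D}}[\rho^{k,k}_D(\bg)]$, and note that $\ket{\phi_\sigma}^{\otimes m}$ is (after reordering tensor factors) the vector $\ket{\Phi_\sigma}$ of \Cref{not:2} with $D = 2^m$, so $S = \sum_{\sigma \in M_{2k}} \ket{\phi_\sigma}^{\otimes m}\!\bra{\phi_\sigma}^{\otimes m}$. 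First I would record the structural input: each $\ket{\phi_\sigma}^{\otimes m}$ lies in $\Img \Pi$ (as noted just before the statement; the identity $\ol g^{\top} = g^{-1}$ used there is valid for all $g \in \Ogp D$), and conversely $\Img \Pi = \spn\{\ket{\phi_\sigma}^{\otimes m} : \sigma \in M_{2k}\}$ by Schur--Weyl/Brauer duality for the orthogonal group --- this is \Cref{thm:sherman} for $\SOgp D$ together with the fact just noted that these vectors are actually $\Ogp D$‑invariant, and its hypothesis $k < 2^{m-1}$ follows from $2^m > k^2 \ge 2k$. Moreover, since $D \ge 2k$, these $|M_{2k}|$ vectors are linearly independent (classical invariant theory). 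Hence, taking $W$ to have the unit vectors $\ket{\phi_\sigma}^{\otimes m}$ as columns, we are in the setting of \Cref{lem:gram}: $WW^\dagger = S$, $W$ has full column rank, and the projector $\Pi_T$ onto $T = \Img W$ equals $\Pi$.

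Next I would estimate the Gram matrix $G := W^\dagger W$, indexed by $M_{2k} \times M_{2k}$. Using $\braket{a^{\otimes m}|b^{\otimes m}} = \braket{a|b}^m$ and the $D = 2$ instance of the computation in \Cref{eqn:color}, one gets $G_{\sigma,\sigma'} = D^{\,\mathrm{cc}(\sigma\cup\sigma')-k} \ge 0$, where $\mathrm{cc}(\sigma\cup\sigma')$ is the number of connected components of the multigraph on $[2k]$ with edge set $\sigma\cup\sigma'$; in particular $G$ has diagonal $1$, so $E := G - \Id$ has nonnegative entries and vanishing diagonal. Since relabeling $[2k]$ acts transitively on matchings while simultaneously permuting the rows and columns of $G$, every row of $E$ has the same $\ell_1$‑norm, so with $\sigma_0 = \{\{1,k{+}1\},\dots,\{k,2k\}\}$,
\[
  \norm{E}_{1 \mapsto 1} \;=\; \sum_{\sigma \neq \sigma_0} D^{\,\mathrm{cc}(\sigma\cup\sigma_0)-k} \;=\; D^{-k}\sum_{\sigma \in M_{2k}} D^{\,\mathrm{cc}(\sigma\cup\sigma_0)} \;-\; 1 .
\]
The sum $\sum_\sigma D^{\mathrm{cc}(\sigma\cup\sigma_0)}$ is the generating function, in the indeterminate $D$, for the number of components formed by laying a matching onto the $2k$ endpoints of the $k$ labeled edges of $\sigma_0$, and the one‑line recursion from the proof of \Cref{thm:kappa} (match some endpoint of one of the $k$ current $\sigma_0$‑paths: its $\sigma$‑partner is either the other endpoint of that path --- closing a component, factor $D$ --- or one of the remaining $2k-2$ endpoints --- merging two paths, factor $2k-2$ --- and in either case one is left with $k-1$ paths) yields $\sum_\sigma D^{\mathrm{cc}(\sigma\cup\sigma_0)} = D(D+2)(D+4)\cdots(D+2k-2)$, hence $\norm{E}_{1 \mapsto 1} = \prod_{j=1}^{k-1}\bigl(1 + \tfrac{2j}{D}\bigr) - 1$.

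To finish, bound $\prod_{j=1}^{k-1}\bigl(1+\tfrac{2j}{D}\bigr) \le \exp\!\bigl(\sum_{j=1}^{k-1}\tfrac{2j}{D}\bigr) = \exp\bigl(\tfrac{k(k-1)}{D}\bigr)$; since $D = 2^m > k^2$ the exponent $t := k(k-1)/D$ lies in $[0,1)$, and there $e^t - 1 \le 2t$ (convexity of $e^t-1$, together with $e-1 < 2$), so $\norm{E}_{1 \mapsto 1} \le 2k(k-1)/D \le 2k^2/2^m$. Consequently $\opnorm{E} \le \norm{E}_{1 \mapsto 1} \le 2k^2/2^m$, a Hermitian matrix having operator norm at most its largest row $\ell_1$‑norm. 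Finally, since $W$ has full column rank, $S = WW^\dagger$ and $\Pi = \Pi_T$ commute and are simultaneously block‑diagonal over $T \oplus T^\perp$ --- both vanish on $T^\perp$, while on $T$ the eigenvalues of $S$ are exactly those of $G = \Id + E$ --- so $\opnorm{S - \Pi} = \opnorm{G - \Id} = \opnorm{E} \le 2k^2/2^m$, which is the claim. (This last step is precisely the content of \Cref{lem:gram}; the linear independence established in the first paragraph is what lets us run it even in the regime where $2k^2/2^m$ happens to exceed $1$, which is the only point where the hypothesis of \Cref{lem:gram} would otherwise bite.)

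The only genuine content here is the combinatorial evaluation of $\sum_\sigma D^{\mathrm{cc}(\sigma\cup\sigma_0)}$ and the identification of $\Img\Pi$ via Brauer duality; both are standard --- indeed the former is literally reused from the proof of \Cref{thm:kappa} --- so I do not anticipate a real obstacle, and the remainder is routine operator‑norm bookkeeping.
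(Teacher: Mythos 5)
Your proposal is correct and essentially reruns the paper's own representation-theory-free proof of \Cref{thm:kappa}; the lemma under review is only cited from \cite{HH21}, so your reconstruction is the more informative document. The one place you genuinely add something is the final step: under the hypothesis $2^m > k^2$ the bound $2k^2/2^m$ can exceed $1$, so the hypothesis $\opnorm{E} \le \kappa < 1$ of \Cref{lem:gram} need not hold; you correctly observe that once linear independence of the frame vectors is known (and $D > k^2$ does give $D \ge 2k$ for every $k \ge 1$), the conclusion of \Cref{lem:gram} sharpens to the equality $\opnorm{WW^\dagger - \Pi_T} = \opnorm{W^\dagger W - \Id}$, which closes the gap. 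Your sandwich $\Img\Pi_{\Ogp D} \subseteq \Img\Pi_{\SOgp D} = \spn\{\ket{\Phi_M}\} \subseteq \Img\Pi_{\Ogp D}$ is the right way to pass from $\SOgp D$ (where \Cref{thm:sherman} is stated) to $\Ogp D$ (which the lemma concerns), and the chord bound $e^t - 1 \le (e-1)t \le 2t$ on $[0,1]$ correctly justifies the numerical step, using $t = k(k-1)/D < 1$. One cosmetic corner case: the implication ``$k < 2^{m-1}$ follows from $2^m > k^2 \ge 2k$'' relies on $k^2 \ge 2k$, which requires $k \ge 2$; at $k = m = 1$ the hypothesis of \Cref{thm:sherman} is not met, though the lemma holds trivially there since both $\E_{\bg\sim\Ogp 2}[\rho^{1,1}_2(\bg)]$ and $S$ equal the rank-one projection onto $\tfrac{1}{\sqrt 2}(\ket{00}+\ket{11})$, so the operator in question is zero.
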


We will use this fact to prove our original proposition.

\begin{proof}[Proof of \Cref{prop:lgbound}]
  Let's first make the following definitions:

  \[
  \Pi^{(m)} := \E_{\bg \sim \Ogp{2^m}}[\rho_{2^m}^{k,k}(\bg)] \qquad Q_i := \E_{\bg \sim \Ogp{2^{m - 1}} \times ([m] \setminus \{i\})}[\rho_{2^m}^{k,k}(\bg)].
  \]

  Using the above, our goal is to bound $\opnorm{\frac{1}{m} \sum_{i =
      1}^mQ_i - \Pi^{(m)}}$. Call this quantity $\gamma_m$ and observe:

  \begin{align}
    \gamma_m^2 &= \opnorm{\frac{1}{m} \sum_{i = 1}^mQ_i - \Pi^{(m)}}^2\\
    &= \opnorm{\parens*{\frac{1}{m} \sum_{i = 1}^mQ_i - \Pi^{(m)}}^T \parens*{\frac{1}{m} \sum_{i = 1}^mQ_i - \Pi^{(m)}}}\\
    &= \opnorm{\frac{1}{m^2} \sum_{i, j = 1}^mQ_iQ_j - \frac{1}{m}\sum_{i = 1}^mQ_i\Pi^{(m)} - \Pi^{(m)}\frac{1}{m}\sum_{i = 1}^mQ_i + \parens*{\Pi^{(m)}}^2} \label{eq:proj_1}\\
    &= \opnorm{\frac{1}{m^2} \sum_{i, j = 1}^mQ_iQ_j - \Pi^{(m)}} \label{eq:proj_2}\\
    &= \opnorm{\frac{1}{m^2} \sum_{i}^m\parens*{Q_i - \Pi^{(m)}} + \frac{1}{m^2} \sum_{i \neq j}\parens*{Q_iQ_j - \Pi^{(m)}}} \label{eq:proj_2.5}\\
    &\leq \frac{1}{m}\opnorm{\frac{1}{m} \sum_{i}^mQ_i - \Pi^{(m)}} + \frac{1}{m^2} \sum_{i \neq j}\opnorm{Q_iQ_j - \Pi^{(m)}}\\
    &= \frac{1}{m} \gamma_m + \opnorm{Q_1Q_m - \Pi^{(m)}} \label{eq:proj_3},
  \end{align}

  \noindent where in \Cref{eq:proj_1} we use the fact that all these
  operators are symmetric, in \Cref{eq:proj_2} we use the fact that
  $\Pi^{(m)}$ is a projector,\rasnote{I'm being dense; is it easy to see why the fact that $\Pi^{(m)}$ is a projector implies \Cref{eq:proj_2}?} \rasnote{I guess \Cref{eq:proj_2.5} (and also some of the later lines like \Cref{eq:proj_5.5}) use the fact that $Q_i$ is a projector - can we explain why that is the case?} and \Cref{eq:proj_3} follows \rasnote{Should \Cref{eq:proj_3} be $=$ or $\leq$ --- it seems to me that to have equality we would need a ${\frac {m-1} m}$ factor in front of the second term?} from the
  symmetries of the $Q_i$. We will now focus on bounding the second
  term of the above, using \Cref{lem:frame_bound}. Let $S_i$ be the
  frame operator on all the qubits except $i$, so formally $S_i =
  \sum_{\sigma \in M_{2k}} \phi_\sigma^{\otimes i-1} \otimes \Id
  \otimes \phi_\sigma^{\otimes m - i}$, where $\Id \in (\C^2)^{\otimes 2k}$.
  We have

  \begin{align}
    \opnorm{Q_1Q_m - \Pi^{(m)}}^2 &= \opnorm{Q_1Q_mQ_mQ_1 - \Pi^{(m)}} \label{eq:proj_5}\\
    &= \opnorm{Q_1Q_mQ_1 - Q_1\Pi^{(m)}Q_1} \label{eq:proj_5.5}\\
    &\leq \opnorm{Q_1S_mQ_1 - Q_1SQ_1} + \frac{2k^2}{2^{m - 1}} + \frac{2k^2}{2^m},
    \label{eq:proj_5.75}
  \end{align}
\red{where \Cref{eq:proj_5.75} is by two applications of \Cref{lem:frame_bound} and the triangle inequality}\rasnote{this okay?}.
Let us focus on the first term of \Cref{eq:proj_5.75} and apply the definitions of the
  frame operators and $Q_i$. For $\sigma \in M_{2k}$, let
  $\phi_\sigma^\perp = \Id - \phi_\sigma$. We have
  \begin{align}
    \opnorm{Q_1S_mQ_1 - Q_1SQ_1} &= \opnorm{\sum_{\sigma \in M_{2k}} \parens*{\Id \otimes \Pi^{(m-1)}} \parens*{\phi^{\otimes m - 1}_\sigma \otimes \Id - \phi_\sigma^{\otimes m}} \parens*{\Id \otimes \Pi^{(m-1)}}}\\
    &= \opnorm{\sum_{\sigma \in M_{2k}} \phi_\sigma \otimes \parens*{\Pi^{(m - 1)} \parens*{\phi^{\otimes m - 2}_\sigma \otimes \phi_\sigma^\perp} \Pi^{(m - 1)}}}. \label{eq:proj_4}
  \end{align}

  We proceed by observing that the terms in the sum are all positive
  semi-definite operators, which \red{implies the following claim:}\rasnote{This phrasing is a little confusing to me --- is the claim using the fact that the terms in the sum are PSD?}

  \begin{claim}
    \[
    \phi_\sigma^\perp \otimes \parens*{\Pi^{(m - 1)} \parens*{\phi^{\otimes m - 2}_\sigma \otimes \phi_\sigma^\perp} \Pi^{(m - 1)}} \geq 0.
    \]
  \end{claim}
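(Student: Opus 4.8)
The plan is to reduce the claim to the elementary fact that a tensor product of two positive semidefinite (PSD) operators is PSD. The operator in question has the form $\phi_\sigma^\perp \otimes Z_\sigma$, where $Z_\sigma \coloneqq \Pi^{(m-1)}\parens*{\phi_\sigma^{\otimes m-2}\otimes\phi_\sigma^\perp}\Pi^{(m-1)}$, so it suffices to verify that each of the two tensor factors $\phi_\sigma^\perp$ and $Z_\sigma$ is PSD on its own; then $\phi_\sigma^\perp \otimes Z_\sigma \geq 0$ follows from ``$A \geq 0,\ B \geq 0 \implies A \otimes B \geq 0$''.

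First I would check $\phi_\sigma^\perp \geq 0$. Since $\ket{\phi_\sigma}$ is a unit vector (a tensor product of maximally entangled states), $\phi_\sigma = \ket{\phi_\sigma}\bra{\phi_\sigma}$ is a rank-one orthogonal projection, so its complement $\phi_\sigma^\perp = \Id - \phi_\sigma$ is self-adjoint with $(\phi_\sigma^\perp)^2 = \Id - 2\phi_\sigma + \phi_\sigma^2 = \Id - \phi_\sigma = \phi_\sigma^\perp$; hence $\phi_\sigma^\perp$ is itself an orthogonal projection and in particular PSD. Next I would check $Z_\sigma \geq 0$. The operator $P \coloneqq \phi_\sigma^{\otimes m-2}\otimes\phi_\sigma^\perp$ is a tensor product of orthogonal projections, hence itself an orthogonal projection (self-adjoint and idempotent). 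Since $\Pi^{(m-1)}$ is also an orthogonal projection --- being the average of the unitary representation $\rho_{2^{m-1}}^{k,k}$ over $\Ogp{2^{m-1}}$, as recalled earlier --- and both $P$ and $\Pi^{(m-1)}$ are self-adjoint, we get $Z_\sigma = \Pi^{(m-1)} P \Pi^{(m-1)} = \parens*{P\,\Pi^{(m-1)}}^\dagger \parens*{P\,\Pi^{(m-1)}} \geq 0$. This completes the proof.

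I do not expect any genuine obstacle: the argument is pure linear algebra and needs no representation theory (and no dimension restriction such as $k < 2^{m-1}$). The two points worth a sentence of care are (i) lining up the tensor factors correctly, so that the ``$\otimes$'' in the statement matches the splitting of $\bigl((\C^2)^{\otimes 2k}\bigr)^{\otimes m}$ into its first factor $(\C^2)^{\otimes 2k}$ and the remaining $m-1$ factors on which $\Pi^{(m-1)}$ and $\phi_\sigma^{\otimes m-2}\otimes\phi_\sigma^\perp$ act (up to a harmless permutation of tensor factors), and (ii) recording explicitly that $\Pi^{(m-1)}$ is self-adjoint, which is precisely what makes $X \mapsto \Pi^{(m-1)} X \Pi^{(m-1)}$ positivity-preserving. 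This claim then feeds back into the surrounding computation: each summand $\phi_\sigma \otimes Z_\sigma$ is bounded above (in PSD order) by $\Id \otimes Z_\sigma$, which is the step that lets one pass from $Q_1 S_m Q_1 - Q_1 S Q_1$ to its ``$m-1$'' analogue and run the recursion on $m$.
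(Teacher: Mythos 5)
Your proof is correct and takes essentially the same approach as the paper: reduce to showing each tensor factor is PSD, observe that $\phi_\sigma^\perp$ is an orthogonal projection, and write the conjugated operator $\Pi^{(m-1)}\parens*{\phi_\sigma^{\otimes m-2}\otimes\phi_\sigma^\perp}\Pi^{(m-1)}$ as a Gram-type product $X^\dagger X$ using that $\phi_\sigma^{\otimes m-2}\otimes\phi_\sigma^\perp$ is itself a projection. The only cosmetic difference is that the paper exhibits the second factor as $A A^\top$ with $A = \Pi^{(m-1)}\parens*{\phi_\sigma^{\otimes m-2}\otimes\phi_\sigma^\perp}$ while you write $B^\dagger B$ with $B = \parens*{\phi_\sigma^{\otimes m-2}\otimes\phi_\sigma^\perp}\Pi^{(m-1)}$; these are the same up to an adjoint.
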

  \begin{proof}
    \renewcommand{\qedsymbol}{$\blacksquare$} We use the fact that if
    $A \geq 0$ and $B \geq 0$ then $A \otimes B \geq 0$ and show that
    each of the components is PSD.

    We show that the first component is a projector, which implies it
    is PSD. Indeed, note that $\parens*{\phi_\sigma^\perp}^T = \Id -
    \parens*{\phi_\sigma}^T = \Id -
    \parens*{\ket{\phi_\sigma}\bra{\phi_\sigma}}^T = \Id - \phi_\sigma
    = \phi_\sigma^\perp$ and $\parens*{\phi_\sigma^\perp}^2 = \Id -
    2\phi_\sigma + \phi_\sigma^2 = \Id - \phi_\sigma =
    \phi_\sigma^\perp$, since $\phi_\sigma^2 =
    \ket{\phi_\sigma}\bra{\phi_\sigma}\ket{\phi_\sigma}\bra{\phi_\sigma}
    = \ket{\phi_\sigma}\bra{\phi_\sigma}$ (recall that
    $\ket{\phi_\sigma}$ is a normalized vector).

    To show the second component is PSD we provide a decomposition
    into a product of an operator and its transpose.
    \begin{align*}
      \parens*{\Pi^{(m - 1)} \parens*{\phi^{\otimes m - 2}_\sigma \otimes \phi_\sigma^\perp}}\parens*{\Pi^{(m - 1)} \parens*{\phi^{\otimes m - 2}_\sigma \otimes \phi_\sigma^\perp}}^T &= \Pi^{(m - 1)} \parens*{\phi^{\otimes m - 2}_\sigma \otimes \phi_\sigma^\perp}^2 \Pi^{(m - 1)}\\
      &= \Pi^{(m - 1)} \parens*{\phi^{\otimes m - 2}_\sigma \otimes \phi_\sigma^\perp} \Pi^{(m - 1)},
    \end{align*}
    \noindent where we used the fact that both $\phi_\sigma$ and
    $\phi_\sigma^\perp$ are projectors.
  \end{proof}

  Applying the claim to \Cref{eq:proj_4} we obtain:\rasnote{To be sure I'm following: the next line is using that $\opnorm{A-B} \leq \opnorm{A}$ when $A,B,A-B$ are all positive semidefinite?}

  \begin{align}
    \opnorm{Q_1S_mQ_1 - Q_1SQ_1} &\leq \opnorm{\sum_{\sigma \in M_{2k}} \Id \otimes \parens*{\Pi^{(m - 1)} \parens*{\phi^{\otimes m - 2}_\sigma \otimes \phi_\sigma^\perp} \Pi^{(m - 1)}}}\\
    &\leq \opnorm{\sum_{\sigma \in M_{2k}} \parens*{\Pi^{(m - 1)} \parens*{\phi^{\otimes m - 2}_\sigma \otimes \phi_\sigma^\perp} \Pi^{(m - 1)}}}. \label{eq:grain}
  \end{align}

  We can now substitute the above into \Cref{eq:proj_5}.  Let $S^{(i)}$ be the frame
  operator on $i$ qubits, so formally $S^{(i)} = \sum_{\sigma \in
    M_{2k}} \phi_\sigma^{\otimes i}$.
    Combining \Cref{eq:proj_5.75} and \Cref{eq:grain} we get that

  \begin{align}
    \opnorm{Q_1Q_m - \Pi^{(m)}}^2 &\leq \opnorm{\sum_{\sigma \in M_{2k}} \parens*{\Pi^{(m - 1)} \parens*{\phi^{\otimes m - 2}_\sigma \otimes \phi_\sigma^\perp} \Pi^{(m - 1)}}} + \frac{3k^2}{2^{m - 1}}\\
    &= \opnorm{\sum_{\sigma \in M_{2k}} \parens*{\Pi^{(m - 1)} \parens*{\phi^{\otimes m - 2}_\sigma \otimes \Id} \Pi^{(m - 1)} - \Pi^{(m - 1)} \phi^{\otimes m - 1}_\sigma \Pi^{(m - 1)}}} + \frac{3k^2}{2^{m - 1}}\\
    &= \opnorm{\Pi^{(m - 1)} \parens*{S^{(m-2)} \otimes \Id} \Pi^{(m - 1)} - \Pi^{(m - 1)} S^{(m-1)} \Pi^{(m - 1)}} + \frac{3k^2}{2^{m - 1}} \\
    &\leq \opnorm{\Pi^{(m - 1)} \parens*{\Pi^{(m-2)} \otimes \Id} \Pi^{(m - 1)} - \Pi^{(m - 1)} \Pi^{(m-1)} \Pi^{(m - 1)}} + \frac{3k^2}{2^{m - 1}} + \frac{2k^2}{2^{m - 2}} + \frac{2k^2}{2^{m - 1}} \label{eq:spirits}\\
    &= \frac{9k^2}{2^{m - 1}},
  \end{align}
  where \Cref{eq:spirits} is by two applications of 
  \Cref{lem:frame_bound}.
  To finish the proof, we combine this with \Cref{eq:proj_3} to obtain:

  \begin{equation}
    \gamma_m^2 \leq \frac{1}{m}\gamma_m + \frac{3k}{2^{(m - 1) / 2}}.
  \end{equation}

  We divide both sides by $\gamma_m$ and note that if $\gamma_m < 1 /
  m$ then we are done, otherwise we obtain the claimed bound. 
  \rasnote{I'm freaked out by the fact that here the end of proof is an unshaded box but above it is a shaded box, even though both commands say backslash end left brace proof right brace.  what is happening}
\end{proof}

\ignore{

}
}

%!TEX root = main.tex

%\newcommand{\uth}{\bigskip \bigskip {\huge {\red{\bf{UP TO HERE}}}} \bigskip \bigskip}
\section{Lower bounding $\tau_m$ for small $m$}
\label{sec:small-m}

In this section we prove \Cref{thm:small-m}, restated below:

\begin{theorem} [Restatement of \Cref{thm:small-m}] \label{thm:small-m-restatement2}
% We need the inequality of this theorem to hold for $m > n_0+1$; we will take $n_0=2$, so we need it to hold for $m >3$. We prove something stronger, that it holds for all $m \geq 3.$
Let the sequence of groups $(\Ggp{n})_{n \geq 1}$ be either $(\SOgp{2^n})_{n \geq 1}$ or $(\SUgp{2^n})_{n \geq 1}$.
% be a compact connected Lie group.
For any $m \geq  4$ we have that
\begin{equation} \label{eq:small-m-tau-lower-bound}
        \forall k \in \N^+,  \quad \quad \quad  \opnorm{\E_{\bg \sim \Ggp{m-1}\times\binom{[m]}{m - 1}}[\rho_{2^m}^{k,k}(\bg)] - \E_{\bg \sim \Ggp{m}}[\rho_{2^m}^{k,k}(\bg)]} \leq \parens*{1 - (1-\tfrac{1}{m})\tfrac{1-2^{2-m}}{4-2^{3-m}}}^{1/4} \leq .96;
\end{equation}
equivalently, in the notation of \Cref{thm:small-m}, $\tau_m \geq .04$.
\end{theorem}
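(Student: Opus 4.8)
The plan is to reduce the statement to a single ``$2$-local'' operator-norm estimate and then to prove that estimate by adapting the ``small $m$'' analysis of \cite{HH21} (built in turn on \cite{BHH16}), in a way that handles $\SOgp{\cdot}$ and $\SUgp{\cdot}$ uniformly. Concretely, set $\Pi^{(m)} = \E_{\bg \sim \Ggp{m}}[\rho^{k,k}_{2^m}(\bg)]$ and, for $i \in [m]$, let $P_i = \Pi_{[m]\setminus i}\otimes\Id_i$ be the Haar twirl over $\Ggp{m-1}$ on the $m-1$ qubits other than the $i$-th, so that the operator inside the norm in \Cref{eq:small-m-tau-lower-bound} is exactly $\frac1m\sum_{i=1}^m P_i - \Pi^{(m)}$. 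By \Cref{fact:ac}, $\Img\Pi^{(m)}\subseteq\Img P_i$ for all $i$; so with $P := \Pi^{(m)}$ and $\wt{P}_i := P_i - P$, each $\wt{P}_i$ is an orthogonal projector, $P\wt{P}_i = \wt{P}_i P = 0$, $P_i P = PP_i = P$, and $\tfrac1m\sum_i P_i - \Pi^{(m)} = \tfrac1m\sum_i \wt{P}_i =: A$. Since $0\le\wt{P}_i\le\Id$ in the PSD order, $\gamma := \opnorm{A}\in[0,1]$, and the goal is $\gamma \le \bigl(1-(1-\tfrac1m)\tfrac{1-2^{2-m}}{4-2^{3-m}}\bigr)^{1/4}$, i.e.\ $\tau_m = 1-\gamma \ge .04$.

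First I would record the ``master inequality'', exactly as in the proof of \Cref{lem:projs}: expanding $A^2 = \tfrac1m A + \tfrac1{m^2}\sum_{i\ne j}\wt{P}_i\wt{P}_j$ and using symmetry of the $m$ tensor factors to set $\beta := \opnorm{\wt{P}_1\wt{P}_2} = \max_{i\ne j}\opnorm{\wt{P}_i\wt{P}_j}\in[0,1]$, one gets $\gamma^2 \le \tfrac1m\gamma + \tfrac{m-1}m\beta \le \tfrac1m + \tfrac{m-1}m\beta$ (using $\gamma\le1$). Squaring and invoking the elementary bound $(s+t\beta)^2 \le s+t\beta^2$, valid for $s+t=1$, $s,t\ge0$, $\beta\in[0,1]$ (it rearranges to $st(1-\beta)^2\ge0$), gives
\[
\gamma^4 \le \frac1m + \frac{m-1}m\beta^2 = 1 - \Bigl(1-\frac1m\Bigr)\bigl(1-\beta^2\bigr),
\]
so it suffices to prove $\beta^2 \le \tfrac{3-2^{2-m}}{4-2^{3-m}}$. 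Next, since $\wt{P}_1$ is a projector with $P\wt{P}_1 = \wt{P}_1 P = 0$ and $P_2 P P_2 = P$,
\[
\beta^2 = \opnorm{(\wt{P}_1\wt{P}_2)^\dagger(\wt{P}_1\wt{P}_2)} = \opnorm{\wt{P}_2\wt{P}_1\wt{P}_2} = \opnorm{P_2\wt{P}_1 P_2} = \opnorm{P_2 P_1 P_2 - \Pi^{(m)}}.
\]
Since $P_1$ twirls qubits $\{2,\dots,m\}$ and $P_2$ twirls qubits $\{1,3,\dots,m\}$, sharing the $(m-2)$-qubit block $\{3,\dots,m\}$ of dimension $D := 2^{m-2}$ (so $2^{2-m}=1/D$, $2^{3-m}=2/D$), the theorem reduces to the single \emph{crux} estimate $\opnorm{P_2 P_1 P_2 - \Pi^{(m)}} \le \tfrac{3D-1}{4D-2}$.

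The hard part is this crux estimate, and it is exactly where a different argument from \Cref{sec:large-m} is needed: in the ``small $m$'' regime $k$ may far exceed $2^{m-1}$, so the near-orthonormal maximally-entangled-state spanning set of \Cref{thm:sherman} and \Cref{thm:kappa} is unavailable. Following \cite{HH21} and \cite{BHH16}, the plan is to isolate the two ``boundary'' qubits $1,2$: regard $(\C^{2^m})^{\otimes 2k} \cong A^{\otimes 2k}\otimes B^{\otimes 2k}\otimes H^{\otimes 2k}$ with $A,B$ the $\C^2$'s of qubits $1,2$ and $H=\C^D$ the shared block, write $P_1 = \Id_{A^{\otimes 2k}}\otimes\Pi_{BH}$ and $P_2 = \Pi_{AH}\otimes\Id_{B^{\otimes 2k}}$ in terms of the $(m-1)$-qubit twirls $\Pi_{BH},\Pi_{AH}$, and then, working inside $\Img P_2$, bound the overlap between the directions fixed by $P_1$ and those not fixed by $\Pi^{(m)}$. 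These overlaps reduce to inner products of maximally-entangled-type vectors on the $D$-dimensional block $H$, each of size $O(1/D)$ --- which is precisely where the $2^{2-m}$ and $2^{3-m}$ terms enter --- leaving a base-case computation on a constant number of qubits that goes through for both groups since it uses only that they are compact, connected, with Lie algebras as in \Cref{eq:algySO} and \Cref{eq:algyU}. Combining the crux with the two reductions, and noting the final bound is monotone in $m\ge4$ and always below $.96$, completes the proof.
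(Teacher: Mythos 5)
Your reduction to the pairwise operator-norm bound $\beta^2 = \opnorm{P_2\wt{P}_1P_2} \le \tfrac{3D-1}{4D-2}$ (with $D=2^{m-2}$) is correct and cleanly done; the projector algebra, the identity $\wt{P}_2\wt{P}_1\wt{P}_1\wt{P}_2 = P_2\wt{P}_1P_2$, and the elementary inequality $(s+t\beta)^2 \le s+t\beta^2$ for $s+t=1$ are all fine, and if that pairwise bound held your computation does yield the stated $\gamma^4 \le 1-(1-\tfrac1m)\tfrac{1-2^{2-m}}{4-2^{3-m}}$. The problem is that you never prove the pairwise bound, and the method you gesture at cannot prove it.

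You correctly observe that in the small-$m$ regime ``the near-orthonormal maximally-entangled-state spanning set of \Cref{thm:sherman} and \Cref{thm:kappa} is unavailable'' (for $\SOgp{2^m}$ the Schur--Weyl span description itself requires $k<2^{m-1}$, and near-orthonormality requires $k^2\lesssim 2^m$), but your proposed proof of the crux then falls back on exactly this: ``overlaps reduce to inner products of maximally-entangled-type vectors on the $D$-dimensional block $H$, each of size $O(1/D)$.'' For $k$ large compared to $m$ the vectors $\ket{\Phi_M}$ are not even linearly independent, the Gram matrix is nowhere near the identity, and (in the $\SOgp{\cdot}$ case) they need not span $\Img\Pi^{(m)}$. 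So the indicated ``base-case computation on a constant number of qubits'' never materializes, and there is no uniform-in-$k$ argument given for $\opnorm{P_2P_1P_2-\Pi^{(m)}}\le\tfrac{3D-1}{4D-2}$. Note also that a bound on $\gamma$ does not entail a bound on $\beta$ (e.g.\ two coincident $\wt{P}_i$'s give $\beta=1$ while $\gamma$ can still be small), so the crux is strictly stronger than the theorem and genuinely needs its own argument --- which is exactly the hard part.

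The paper avoids this entirely by not attempting a direct operator-norm estimate on $\rho^{k,k}$ for general $k$. Instead it works on the Lie group as a Riemannian manifold: it builds an explicit two-step coupling (\Cref{thm:couple}), shows local $L^2$-Wasserstein contraction with rate $\eta=\sqrt{1-\gamma_m}$ via a second-order Taylor expansion in the Lie algebra (\Cref{lem:yummy1}, \Cref{lem:yummy2}), upgrades it to global contraction by Oliveira's theorem (\Cref{thm:oliveira}), and only then converts back to an operator-norm bound uniform in $k$ (via $2k$-Lipschitzness of $\rho^{k,k}$, \Cref{cor:kitten}, and taking $\ell\to\infty$). The crucial feature is that the only Haar twirl that ever needs to be computed in closed form is the one on $\rho^{2,2}_{2^{m-1}}$ acting on a fixed swap operator (\Cref{prop:reptheory}), where Schur--Weyl at tensor power $2$ is perfectly available regardless of how large $k$ is. So your reduction ``isolate a single pairwise overlap and control it via the spanning set'' is a large-$m$ technique; the small-$m$ theorem needs the entirely different coupling/Wasserstein route, and as written your proposal has a gap at precisely the step you flagged as hard.
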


\subsection{Metrics}

%We work in the $G_m$ setting from Sec~3 or whereverXXX, with the assumption that the $G_n$'s are compact connected Lie groups with associated Lie algebras~$\frak{g}_n$; 
As discussed in \Cref{sec:ini-gap}, for $\Ggp{m}=\SOgp{2^m}$ or $\Ggp{m}=\SUgp{2^m}$ we have that $\Ggp{m} \subseteq \Ugp{2^m}$ is a compact connected Lie group with associated Lie algebra~$\frak{g}_m$, where
\begin{align}
 \text{for~} \Ggp{m} = \SOgp{2^m},\ \frak{g}_m &= \{H \in \R^{2^m \times 2^m} : H \text{ skew-symmetric}\}, \label{eq:algy1} \\
 \text{for~}  \Ggp{m} = \SUgp{2^m},\ \frak{g}_m &= \{H \in \C^{2^m \times 2^m} : H \text{ skew-Hermitian},\ \tr H=0\}. \label{eq:algy2}
\end{align}

As per  \cite[Prop.~2.11.1]{Tao14}, $\Ggp{m}$ can be given the structure of a Riemannian manifold with a bi-invariant metric.
Moreover, $\Ggp{m}$~is totally geodesic within $\Ugp{2^m}$, hence the exponential map $\exp : \mathfrak{g}_m \to \Ggp{m}$ is surjective and Riemannian distance $\dRie$ within~$\Ggp{m}$ coincides with Riemannian distance within~$\Ugp{2^m}$.
This distance can be computed straightforwardly (see, e.g., \cite[within Lem.~1.3]{Mec19}), as follows:

\begin{itemize}
  \item The Riemannian distance is bi-invariant, so $\dRie(X,Y) = \dRie(\Id,Z)$ for $Z = Y X^{-1}$.
  \item Given $Z \in \Ggp{m}$, we can choose a unique $H \in \mathfrak{g}_m$ with $\exp(H) = Z$ such that the eigenvalues of $H$ are of the form $\mathrm{i} \theta_j$ for $\theta_j \in (-\pi, \pi]$. %(see e.g. \cite{HornJohnson:85}, Sec.~2.5.2, 2.5.4).
%\href{https://yutsumura.com/eigenvalues-of-real-skew-symmetric-matrix-are-zero-or-purely-imaginary-and-the-rank-is-even/}{https://yutsumura.com/eigenvalues-of-real-skew-symmetric-matrix-are-zero-or-purely-imaginary-and-the-rank-is-even/} 
%\href{https://en.wikipedia.org/wiki/Skew-Hermitian\_matrix}{https://en.wikipedia.org/wiki/Skew-Hermitian\_matrix}
We write $H = \log Z$ for this choice of~$H$.
  \item Then $\dRie(\Id,Z) = \|H\|_{\Fr} = (\sum_j \theta_j^2)^{1/2}$.
\end{itemize}
In other words,
\begin{equation}
  \dRie(X,Y) = \|\log (Y X^{-1})\|_{\Fr}.
\end{equation}

For the sake of computation it will be convenient to work not just with the Riemannian distance $\dRie$ on~$\Ggp{m}$, but also the (very similar) Frobenius distance~$d_{\Fro}$, where $d_{\Fro}(X,Y)$ denotes $\|X-Y\|_{\Fro}$.
In the above setup, now using bi-invariance of $d_{\Fro}$, we evidently have
\begin{equation}
  d_{\Fro}(X,Y) = \|\Id - Z\|_{\Fr} = \parens*{{\littlesum}_j |1-\exp(\mathrm{i}\theta_j)|^2 }^{1/2} = \parens*{{\littlesum}_j (2\sin(\theta_j/2))^2}^{1/2}.
\end{equation}
For some constant $c < .4 \leq 1$ we have the following numerical  inequality (for $|\theta| \leq \pi$):
\begin{equation}
  (2 \sin(\theta/2))^2 \leq \theta^2 \leq (2 \sin(\theta/2))^2 + c (2 \sin(\theta/2))^4.
\end{equation}
Using just $c \leq 1$, we may conclude\footnote{Here we are clarifying slightly the deduction of~\cite[eq.~(112a)]{BHH16}.}
\begin{equation} \label[ineq]{ineq:sqrr}
  d_{\Fro}(X,Y)^2 \leq \dRie(X,Y)^2 \leq d_{\Fro}(X,Y)^2 + d_{\Fro}(X,Y)^4.
\end{equation}
Finally, we will also use the operator-norm distance, $d_{\mathrm{op}}(X,Y) = \opnorm{X - Y}$, which satisfies $d_{\mathrm{op}}(X,Y) \leq d_{\Fro}(X,Y)$.

\bigskip

We now move on to considering (Borel) probability measures on  metric spaces (always assumed to be complete and separable).
First we recall some basic definitions:
\begin{definition} \label{def:coupling}
  A pair of jointly distributed random variables $(\bX,\bY)$ is a \emph{coupling} of probability distributions $\nu_1,\nu_2$ if $\bX$ (respectively,~$\bY$) has marginal distribution $\nu_1$ (respectively, $\nu_2$).
\end{definition}
  
\begin{definition} \label{def:wasserstein}
  On the metric space $(M,d)$, the  \emph{$L^p$-Wasserstein distance} between two measures $\nu_1$ and~$\nu_2$ is 
  \begin{equation} \label{eq:lpwasserstein}
    W_{d,p}(\nu_1,\nu_2) = \inf \braces*{
  \E[d(\bX,\bY)^p]^{1/p} \ : \ (\bX,\bY)\text{~is a coupling of~}(\nu_1,\nu_2)
    }.
  \end{equation}
  % Kantorovich--Rubenstein-duality~\cite{KR58,Edwards11} implies the identity
  % \begin{equation}
  %   W_{d,1}(\nu_1,\nu_2) = \sup \braces*{
  % \abs*{\E_{\bX \sim \nu_1}[f(\bX)] - \E_{\bX \sim \nu_2}[f(\bX)]}
  % \ : \ \text{Lipschitz } f : M \to \R}.
  % \end{equation}
\end{definition}

\begin{notation}
    If $\nu$ is a probability measure on metric space~$M$ and $K$ is a Markov transition kernel on~$M$, we write $K^\ell \nu$ for the probability measure on~$M$ resulting from starting with probability measure~$\nu$ and taking $\ell \in \N$ steps according to~$K$.
\end{notation}

\subsection{Oliveira's theorem and its consequences}
We now state a key result of Oliveira~\cite{Oliveira09} that says that on any \emph{length space} (see e.g.~\cite{Bridson1999}),  $L^2$-Wasserstein local contraction implies global contraction.  
As we only need the result in the particular case of compact, connected Lie groups (which are finite-diameter complete Riemannian manifolds), we state it only in this simpler context:
\begin{theorem} (Implied by \cite[Thm.~3]{Oliveira09}.) \label{thm:oliveira}
  Let $(M,d)$ be a finite-diameter complete Riemannian manifold, and let $K$ be a Markov transition kernel on~$M$ satisfying the following:
  \begin{equation}  \label[ineq]{ineq:oliveira-assumption}
    W_{d,2}(K \delta_X, K\delta_Y) \leq (\eta + o(1)) d(X,Y), \quad \text{with respect to $d(X,Y) \to 0$.}
  \end{equation}
  (Here $\delta_Z$ denotes the measure that puts all of its probability mass on $Z \in M$.)
  Then for all probability measures $\nu_1, \nu_2$ on~$M$ it holds that 
  %\rasnote{Changed $W_{\Rie,2}$ to $W_{d,2}$ below}
  \begin{equation} 
    W_{d,2}(K\nu_1,K\nu_2) \leq \eta \cdot W_{d,2}(\nu_1,\nu_2).
  \end{equation}
\end{theorem}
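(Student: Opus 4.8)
The plan is to deduce the global $L^2$-Wasserstein contraction from the infinitesimal hypothesis \Cref{ineq:oliveira-assumption} in two stages: first prove it for pairs of point masses by ``chaining'' the local estimate along a geodesic, and then bootstrap to arbitrary probability measures by the standard gluing-of-couplings argument.

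\textbf{Stage 1 (point masses).} Fix $X,Y \in M$. Since $M$ is a complete Riemannian manifold of finite diameter, Hopf--Rinow gives a minimizing geodesic $\gamma:[0,1]\to M$ with $\gamma(0)=X$, $\gamma(1)=Y$, parametrized proportionally to arc length, so that $d(\gamma(s),\gamma(t)) = |s-t|\,d(X,Y)$ for all $s,t$. Given $\eps>0$, \Cref{ineq:oliveira-assumption} provides $\delta>0$ with $W_{d,2}(K\delta_A,K\delta_B)\le(\eta+\eps)\,d(A,B)$ whenever $d(A,B)<\delta$. Pick $n$ with $d(X,Y)/n<\delta$, set $Z_i=\gamma(i/n)$, and use the triangle inequality for $W_{d,2}$ (valid via the gluing lemma for couplings; finiteness of the diameter makes all distances bounded, hence all Wasserstein distances finite):
\begin{equation}
  W_{d,2}(K\delta_X,K\delta_Y)\;\le\;\sum_{i=0}^{n-1} W_{d,2}(K\delta_{Z_i},K\delta_{Z_{i+1}})\;\le\;(\eta+\eps)\sum_{i=0}^{n-1} d(Z_i,Z_{i+1})\;=\;(\eta+\eps)\,d(X,Y).
\end{equation}
Letting $\eps\to0$ yields $W_{d,2}(K\delta_X,K\delta_Y)\le\eta\,d(X,Y)$ for all $X,Y$.

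\textbf{Stage 2 (general measures).} Let $\nu_1,\nu_2$ be probability measures on $M$ and let $(\bX,\bY)$ be an optimal coupling, i.e.\ $\E[d(\bX,\bY)^2]^{1/2}=W_{d,2}(\nu_1,\nu_2)$ (optimal couplings exist since $M$ is Polish and $d^2$ is lower semicontinuous). Using Stage 1 and a measurable selection, choose couplings $\gamma_{x,y}$ of $(K\delta_x,K\delta_y)$, depending measurably on $(x,y)$, with $\big(\int d^2\,d\gamma_{x,y}\big)^{1/2}\le\eta\,d(x,y)$. Now sample $(\bX,\bY)$ and then, conditionally on $(\bX,\bY)=(x,y)$, sample $(\bX',\bY')\sim\gamma_{x,y}$; then $(\bX',\bY')$ is a coupling of $(K\nu_1,K\nu_2)$, and by the tower rule
\begin{equation}
  \E\big[d(\bX',\bY')^2\big]\;=\;\E\big[\,\E[\,d(\bX',\bY')^2\mid\bX,\bY\,]\,\big]\;\le\;\eta^2\,\E\big[d(\bX,\bY)^2\big]\;=\;\eta^2\,W_{d,2}(\nu_1,\nu_2)^2.
\end{equation}
Taking square roots and infimizing over couplings of $(K\nu_1,K\nu_2)$ gives $W_{d,2}(K\nu_1,K\nu_2)\le\eta\,W_{d,2}(\nu_1,\nu_2)$, as claimed.

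\textbf{Main obstacle.} The geometry is easy; the delicate points are measure-theoretic: (i) the measurable dependence of the near-optimal couplings $\gamma_{x,y}$ on $(x,y)$, which follows from a Kuratowski--Ryll-Nardzewski-type selection theorem applied to the set-valued map sending $(x,y)$ to the (weakly compact, nonempty) set of optimal transport plans between $K\delta_x$ and $K\delta_y$, together with appropriate weak continuity of $K$; and (ii) making sure the $o(1)$ in \Cref{ineq:oliveira-assumption} is used uniformly across the finitely many geodesic sub-segments, which is automatic since it is a statement about the regime $d(A,B)\to0$ and we only invoke it on segments of length $<\delta$. Everything else reduces to the gluing lemma and the tower property. (Oliveira's original argument handles the more general length-space setting, where a minimizing geodesic need not exist and one instead chains along a curve of length $\le d(X,Y)+\eps$; the Riemannian case above is the special case we need.)
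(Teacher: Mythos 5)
The paper does not prove this theorem; it attributes it directly to Oliveira (``Implied by [Thm.~3] of~\cite{Oliveira09}''), so there is no in-paper argument to compare against. Your two-stage proof is correct: Stage~1 (chain the local estimate along a minimizing geodesic, which exists by Hopf--Rinow, and pass to the limit $\eps\to 0$) establishes the point-mass case, and Stage~2 (disintegrate an optimal coupling, attach optimal transport plans between $K\delta_x$ and $K\delta_y$, and apply the tower property) lifts it to general measures. This is essentially the argument Oliveira himself gives, specialized from length spaces to the Riemannian setting where minimizing geodesics are available (rather than $\eps$-almost-minimizing curves). You correctly identify the one genuinely delicate point: the measurable selection of optimal plans $(x,y)\mapsto\gamma_{x,y}$, which requires either a Feller-type regularity for $K$ or a more careful selection-theorem argument; since the paper only ever applies the result to kernels arising from left-multiplication by a Haar-distributed group element (which is continuous, hence Feller), this poses no obstacle in context. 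In short: correct proof, standard approach, matches Oliveira's route.
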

Iterating this yields the following:
\begin{corollary}
  In the setting of \Cref{thm:oliveira}, for any $\ell \in \N^+$ we have
  \begin{equation}
    W_{d,2}(K^\ell \nu_1,K^\ell \nu_2) \leq \eta^\ell \cdot W_{d,2}(\nu_1,\nu_2) \leq D \eta^\ell,
  \end{equation}
  where $D$ is an upper bound on the diameter of~$M$.
\end{corollary}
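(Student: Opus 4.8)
The plan is a straightforward induction on $\ell$, using \Cref{thm:oliveira} as the one-step contraction, followed by a trivial diameter bound for the base-level Wasserstein distance. The point to keep in mind is that the conclusion of \Cref{thm:oliveira} holds for \emph{every} pair of probability measures on $M$, not just for point masses; this is exactly what makes iteration legitimate.

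First I would set up the induction. The case $\ell = 1$ is precisely the conclusion of \Cref{thm:oliveira}: $W_{d,2}(K\nu_1, K\nu_2) \leq \eta \cdot W_{d,2}(\nu_1,\nu_2)$. For the inductive step, suppose $W_{d,2}(K^{\ell-1}\nu_1, K^{\ell-1}\nu_2) \leq \eta^{\ell-1} \cdot W_{d,2}(\nu_1,\nu_2)$. Applying \Cref{thm:oliveira} to the pair of measures $\nu_1' \coloneqq K^{\ell-1}\nu_1$ and $\nu_2' \coloneqq K^{\ell-1}\nu_2$ gives
\begin{equation}
  W_{d,2}(K^\ell \nu_1, K^\ell \nu_2) = W_{d,2}(K\nu_1', K\nu_2') \leq \eta \cdot W_{d,2}(\nu_1', \nu_2') \leq \eta \cdot \eta^{\ell-1} \cdot W_{d,2}(\nu_1,\nu_2) = \eta^\ell \cdot W_{d,2}(\nu_1,\nu_2),
\end{equation}
where the first equality uses $K^\ell = K \circ K^{\ell-1}$ (semigroup property of the transition kernel acting on measures), which is immediate from the definition of $K^\ell \nu$.

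It remains to bound $W_{d,2}(\nu_1,\nu_2) \leq D$. For this I would take any coupling $(\bX,\bY)$ of $(\nu_1,\nu_2)$ — for instance the independent (product) coupling, which exists since $M$ is complete and separable — and observe that $d(\bX,\bY) \leq D$ pointwise, since $D$ upper bounds the diameter of $M$. Hence $\E[d(\bX,\bY)^2]^{1/2} \leq D$, and taking the infimum over couplings in the definition of $W_{d,2}$ yields $W_{d,2}(\nu_1,\nu_2) \leq D$. Combining this with the displayed inequality gives $W_{d,2}(K^\ell\nu_1, K^\ell\nu_2) \leq \eta^\ell \cdot W_{d,2}(\nu_1,\nu_2) \leq D\eta^\ell$, as claimed.

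There is no real obstacle here; the only thing to be slightly careful about is that \Cref{thm:oliveira} is stated for arbitrary probability measures (so that the iteration is valid even though the intermediate measures $K^j\nu_i$ are no longer point masses), and that $M$ being a finite-diameter complete Riemannian manifold guarantees both that $D < \infty$ and that couplings exist. Everything else is bookkeeping.
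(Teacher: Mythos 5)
Your proof is correct and is exactly the iteration-plus-diameter argument the paper has in mind when it writes ``Iterating this yields the following''; the paper simply omits the routine induction and the observation that $W_{d,2}(\nu_1,\nu_2) \leq D$. Nothing to change.
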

We now specialize this corollary to the case where $(M,d)$ is $(\Ggp{m}, \dRie)$; combining it with \Cref{def:wasserstein} and using also $W_{{d_{\mathrm{op}}},1} \leq W_{d_{\Fro},1} \leq W_{d_{\Fro},2} \leq W_{\dRie,2}$, we may conclude:
\begin{corollary} \label{cor:puppy}
    Let $\Ggp{m}$ be a compact connected Lie group, and let $K$ be a Markov transition kernel on~$\Ggp{m}$ such that \Cref{ineq:oliveira-assumption} holds for $\dRie$ with constant~$\eta$.
    Then for any probability measures $\nu_1, \nu_2$ on~$\Ggp{m}$, and any $\ell \in \N^+$,  there is a coupling $(\bX,\bY)$ of the measures $K^\ell \nu_1, K^\ell \nu_2$ under which
    \begin{equation}
      \E[\opnorm{\bX - \bY}] \leq 2 D \eta^\ell
    \end{equation}
    (where $D$ is a bound on the $\dRie$-diameter of~$\Ggp{m}$, and the factor~$2$ accounts for the~$\inf$).
\end{corollary}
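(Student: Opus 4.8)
The plan is to chain the iterated form of Oliveira's theorem (the corollary stated immediately above \Cref{cor:puppy}) with the elementary metric comparisons recorded just before the statement, and then unwind the definition of the Wasserstein distance as an infimum over couplings. Essentially everything needed has already been assembled; the proof is just bookkeeping.

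Concretely, I would first invoke the preceding corollary: since $K$ satisfies \Cref{ineq:oliveira-assumption} for $\dRie$ with constant $\eta$, for every $\ell \in \N^+$ we have $W_{\dRie,2}(K^\ell \nu_1, K^\ell \nu_2) \leq \eta^\ell W_{\dRie,2}(\nu_1,\nu_2)$. Because $\Ggp{m}$ is compact and connected with $\dRie$-diameter at most $D$, every coupling $(\bX,\bY)$ of $\nu_1$ and $\nu_2$ has $\dRie(\bX,\bY) \leq D$ almost surely, so $W_{\dRie,2}(\nu_1,\nu_2) \leq D$ and hence $W_{\dRie,2}(K^\ell \nu_1, K^\ell \nu_2) \leq D\eta^\ell$. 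Next I would push this down to the operator-norm metric: \Cref{ineq:sqrr} gives $d_{\Fro} \leq \dRie$ pointwise, $d_{\mathrm{op}} \leq d_{\Fro}$ is immediate, and $W_{d,1} \leq W_{d,2}$ holds for any metric by Jensen's inequality inside the infimum, so $W_{d_{\mathrm{op}},1} \leq W_{d_{\Fro},1} \leq W_{d_{\Fro},2} \leq W_{\dRie,2}$ (exactly the chain noted before the statement), which yields $W_{d_{\mathrm{op}},1}(K^\ell\nu_1, K^\ell\nu_2) \leq D\eta^\ell$. Finally, by \Cref{def:wasserstein} this last quantity is the infimum of $\E[\opnorm{\bX-\bY}]$ over couplings $(\bX,\bY)$ of $K^\ell\nu_1$ and $K^\ell\nu_2$; picking a coupling that attains this infimum up to a factor of $2$ gives the claimed bound $\E[\opnorm{\bX-\bY}] \leq 2D\eta^\ell$.

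I do not expect any real obstacle here, as the statement is a direct consequence of \Cref{thm:oliveira} plus the metric comparisons. The only point requiring a small amount of care is that a Wasserstein infimum need not be attained by a single coupling (although on the Polish spaces under consideration an optimal coupling in fact does exist); the harmless factor of $2$ appearing in the conclusion is precisely what allows us to avoid having to argue attainment.
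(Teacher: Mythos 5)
Your proof matches the paper's argument exactly: the paper derives \Cref{cor:puppy} by specializing the iterated Oliveira corollary to $(\Ggp{m},\dRie)$, then using the chain $W_{d_{\mathrm{op}},1} \le W_{d_{\Fro},1} \le W_{d_{\Fro},2} \le W_{\dRie,2}$ together with \Cref{def:wasserstein} and the diameter bound, with the factor of $2$ absorbing the gap between the infimum and an actual coupling. No differences in approach, and no gaps.
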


Our next step is to get rid of the coupling in \Cref{cor:puppy}. To do this, we first observe that
the representation
%\rasnote{I made the notation $\rho^{k,k}_{2^m}$ throughout; is that what we want to use?}
$\rho^{k,k}_{2^m}$ is uniformly continuous on~$\Ggp{m}$ with respect to the operator-norm distance.
Concretely, from the identity
\begin{equation}
  g_1 \otimes \cdots \otimes g_{K} - 
  h_1 \otimes \cdots \otimes h_{K} 
  = \sum_{i = 1}^K g_1 \otimes \cdots \otimes g_{i-1} \otimes (g_i - h_i) \otimes h_{i+1} \otimes \cdots \otimes h_K
\end{equation}
and $\opnorm{X}, \opnorm{\overline{X}} = 1$ for $X \in \Ggp{m}$, as well as multiplicativity 
%\rasnote{to confirm, this is actual multiplicativity not submultiplicativity, right --- the $\leq 2$ in the next inequality comes from subadditivity of operator norm and the fact that we have $g_i - h_i$; correct?} 
of $d_{\mathrm{op}}$ with respect to tensor products, we may conclude that
\begin{equation}
  \opnorm{\rho^{k,k}_{2^m}(X) - \rho^{k,k}_{2^m}(Y)} \leq 2k \opnorm{X-Y}
\end{equation}
for any $X,Y \in \Ggp{m}$.
Using this, as well as the triangle inequality for $d_{\mathrm{op}}$, in \Cref{cor:puppy} yields:
\begin{corollary} \label{cor:dog}
  In the setting of \Cref{cor:puppy}, 
  \begin{equation}
    \opnorm{\E_{\bX \sim K^\ell \nu_1}[\rho^{k,k}_{2^m}(\bX)] - \E_{\bY \sim K^\ell \nu_2}[\rho^{k,k}_{2^m}(\bY)]} \leq 4k D \eta^\ell.
  \end{equation}
\end{corollary}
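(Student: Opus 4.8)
The plan is to combine the coupling supplied by \Cref{cor:puppy} with the Lipschitz bound on the representation $\rho^{k,k}_{2^m}$ that was established in the paragraph immediately preceding the corollary. First I would invoke \Cref{cor:puppy} to obtain a coupling $(\bX,\bY)$ of the measures $K^\ell\nu_1$ and $K^\ell\nu_2$ satisfying $\E[\opnorm{\bX-\bY}] \le 2D\eta^\ell$. Because $\bX$ has marginal distribution $K^\ell\nu_1$ and $\bY$ has marginal distribution $K^\ell\nu_2$ (the defining property of a coupling, \Cref{def:coupling}), linearity of expectation lets me rewrite the difference of the two marginal expectations as a single expectation over the joint law:
\[
  \E_{\bX \sim K^\ell\nu_1}[\rho^{k,k}_{2^m}(\bX)] - \E_{\bY \sim K^\ell\nu_2}[\rho^{k,k}_{2^m}(\bY)] = \E_{(\bX,\bY)}\bigl[\rho^{k,k}_{2^m}(\bX) - \rho^{k,k}_{2^m}(\bY)\bigr].
\]

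Next I would pass the operator norm inside the expectation. Since $\opnorm{\cdot}$ is a norm, it is convex, so Jensen's inequality (equivalently, the triangle inequality applied to the average defining the expectation) gives
\[
  \opnorm{\E_{(\bX,\bY)}\bigl[\rho^{k,k}_{2^m}(\bX) - \rho^{k,k}_{2^m}(\bY)\bigr]} \le \E_{(\bX,\bY)}\bigl[\opnorm{\rho^{k,k}_{2^m}(\bX) - \rho^{k,k}_{2^m}(\bY)}\bigr].
\]
Finally I would substitute the uniform-continuity estimate $\opnorm{\rho^{k,k}_{2^m}(X) - \rho^{k,k}_{2^m}(Y)} \le 2k\opnorm{X-Y}$ derived just before the corollary, and then use the coupling bound: the right-hand side is at most $2k\,\E_{(\bX,\bY)}[\opnorm{\bX-\bY}] \le 2k\cdot 2D\eta^\ell = 4kD\eta^\ell$, which is exactly the claimed inequality.

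There is essentially no technical obstacle here; the corollary is a short bookkeeping consequence of the two ingredients already in hand. The only point deserving a word of care is the first step — rewriting the difference of marginal expectations as a joint expectation over the coupling — which I would justify explicitly by noting that it uses nothing beyond the marginal property in \Cref{def:coupling} together with linearity of expectation (and that the expectations are of bounded operators, so everything is well-defined). Everything else is convexity of the operator norm plus the two quantitative bounds, so the proof is a three-line calculation.
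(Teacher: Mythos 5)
Your proof is correct and follows exactly the route the paper intends: apply the coupling from \Cref{cor:puppy}, rewrite the difference of marginal expectations as a joint expectation, push the operator norm inside by convexity, invoke the $2k$-Lipschitz bound on $\rho^{k,k}_{2^m}$, and plug in the coupling estimate. The paper compresses all of this into a single sentence (``Using this, as well as the triangle inequality for $d_{\mathrm{op}}$, in \Cref{cor:puppy} yields:''), so your write-up is just the unpacked version of the same argument.
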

\noindent (Note that in contrast with \Cref{cor:puppy}, here \Cref{cor:dog} does not feature any coupling between $K^\ell \nu_1$ and $K^\ell \nu_2$.)

Now we further specialize by taking $\nu_1 = \delta_{\Id}$ (the measure with all probability on the identity element $\Id \in \Ggp{m}$), taking $\nu_2$ to be Haar measure, and specifying that
\begin{equation}  \label{eqn:myK}
  K \text{ arises from left-multiplying by a random $\bg \sim \calP$},
\end{equation}
where $\calP$ is some symmetric probability distribution on~$\Ggp{m}$ as in \Cref{def:LW}.
Note that, whatever~$\calP$ is, we have $K^\ell \nu_2 = \nu_2$ (Haar measure), and
\begin{equation}
  \E_{\bX \sim K^\ell \nu_1}[\rho^{k,k}_{2^m}(\bX)] 
  = \E_{\substack{\bg_1, \dots, \bg_\ell \sim \calP \\ \text{independent}}}[\rho^{k,k}_{2^m}(\bg_\ell \cdots \bg_1)] = \E[\rho^{k,k}_{2^m}(\bg_\ell) \cdots \rho^{k,k}_{2^m}(\bg_1)] = \E_{\bg \sim \calP}[\rho^{k,k}_{2^m}(\bg)]^\ell.
\end{equation}
%\rasnote{\red{I think I'm okay with the way our notation uses both $\bg$ and $\bX$, but just wanted to bring it up for discussion at some point to make sure it is a conscious choice and not a historical accident. Same with using both $\nu_2$ and $G_m$ for Haar.}} 
From this and \Cref{cor:dog} we conclude the following:
\begin{corollary} \label{cor:kitten}
  Let $\calP$ be a symmetric probability distribution on~$\Ggp{m}$.
  Given $X,Y \in \Ggp{m}$, write $\calP^{(X)}$ (respectively,~$\calP^{(Y)}$) for the distribution of $\bg X$ (respectively, $\bg Y$) when $\bg \sim \calP$.
  Then supposing
  \begin{equation}
    W_{\dRie,2}(\calP^{(X)},\calP^{(Y)}) \leq (\eta + o(1)) \dRie(X,Y) \quad \text{with respect to $\dRie(X,Y) \to 0$,}
  \end{equation}
  it follows that 
  % with $\dRie(X,Y) = \eps$, there is a coupling $(\bX',\bY')$ of $\bg X$ and $\bg Y$ (for $\bg \sim \calP$) under which
  % \begin{equation}
  %   \E[\dRie(\bX',\bY')^2] \leq \eta^2 \cdot \eps^2 + O(\eps^3),
  % \end{equation}
  % where the $O(\cdot)$ hides a constant depending only on~$G_m$.
  % Then 
  for any $\ell, k \in \N^+$ we have
  \begin{equation}
    \opnorm{\E_{\bg \sim \calP}[\rho^{k,k}_{2^m}(\bg)]^\ell - \E_{\bg \sim \Ggp{m}}[\rho^{k,k}_{2^m}(\bg)]} \leq 4kD \cdot \eta^\ell.
  \end{equation}
  % and hence (taking $\ell$th roots and letting $\ell \to \infty$) 
  % \begin{equation}
  %   \opnorm{\E_{\bg \sim \calP}[\rho^{k,k}_{2^m}(\bg)] - \E_{\bg \sim G_m}[\rho^{k,k}_{2^m}(\bg)]} \leq \eta.
  % \end{equation}
\end{corollary}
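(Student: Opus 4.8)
The plan is to read off \Cref{cor:kitten} as the specialization of \Cref{cor:dog} to the random-walk kernel and measures already set up above. Concretely, take $(M,d)=(\Ggp{m},\dRie)$ — legitimate since a compact connected Lie group is a finite-diameter complete Riemannian manifold — let $D$ bound its $\dRie$-diameter, and let $K$ be the Markov transition kernel of \Cref{eqn:myK}, namely left-multiplication by an independent draw $\bg\sim\calP$. Then by definition $K\delta_X$ is the law of $\bg X$, which is exactly $\calP^{(X)}$, and likewise $K\delta_Y=\calP^{(Y)}$. Hence the hypothesis $W_{\dRie,2}(\calP^{(X)},\calP^{(Y)})\le(\eta+o(1))\dRie(X,Y)$ is literally the local-contraction assumption \Cref{ineq:oliveira-assumption} for $K$ with constant $\eta$, so \Cref{cor:puppy} and therefore \Cref{cor:dog} apply to this $K$ and to any probability measures $\nu_1,\nu_2$ on $\Ggp{m}$.

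Next I would instantiate $\nu_1=\delta_{\Id}$ and $\nu_2=$ Haar measure, and feed in two elementary identifications. First, Haar measure is invariant under left-translation, so $K^\ell\nu_2=\nu_2$ for every $\ell$, giving $\E_{\bY\sim K^\ell\nu_2}[\rho^{k,k}_{2^m}(\bY)]=\E_{\bg\sim\Ggp{m}}[\rho^{k,k}_{2^m}(\bg)]$. Second, running $K$ for $\ell$ steps starting from $\delta_{\Id}$ produces the law of the $\ell$-fold random product $\bg_\ell\cdots\bg_1$ with $\bg_1,\dots,\bg_\ell$ i.i.d.\ copies of $\calP$; since $\rho^{k,k}_{2^m}$ is a representation (hence multiplicative) and the $\bg_i$ are independent and identically distributed,
\[
  \E_{\bX\sim K^\ell\nu_1}[\rho^{k,k}_{2^m}(\bX)]=\E[\rho^{k,k}_{2^m}(\bg_\ell)\cdots\rho^{k,k}_{2^m}(\bg_1)]=\E_{\bg\sim\calP}[\rho^{k,k}_{2^m}(\bg)]^\ell .
\]
Substituting these two identities into the conclusion $\opnorm{\E_{\bX\sim K^\ell\nu_1}[\rho^{k,k}_{2^m}(\bX)]-\E_{\bY\sim K^\ell\nu_2}[\rho^{k,k}_{2^m}(\bY)]}\le 4kD\eta^\ell$ of \Cref{cor:dog} yields exactly the claimed bound.

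I do not expect a genuine obstacle here: all the real content — Oliveira's local-to-global $L^2$-Wasserstein contraction, and the passage from a Wasserstein coupling to an operator-norm estimate on $\rho^{k,k}_{2^m}$ via its $2k$-Lipschitz continuity — has already been carried out in \Cref{thm:oliveira}, \Cref{cor:puppy}, and \Cref{cor:dog}, and \Cref{cor:kitten} is just the repackaging of those into the shape that will be used to lower-bound $\tau_m$ in the small-$m$ regime. The only points deserving a sentence of care are the standard fact that $K^\ell\delta_{\Id}$ is the distribution of the $\ell$-fold random product (the definition of a group random walk driven by a fixed step distribution, which justifies the first equality in the display) and the trivial verification that the measure-theoretic hypotheses of \Cref{thm:oliveira} hold on the compact group $\Ggp{m}$; I would also note in passing that the symmetry of $\calP$ is not actually needed for this particular deduction (it matters upstream, ensuring $\E_{\bg\sim\calP}[\rho^{k,k}_{2^m}(\bg)]$ is Hermitian), whereas left-invariance of Haar measure and $\ell,k\ge1$ are.
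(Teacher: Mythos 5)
Your proof matches the paper's own argument step for step: instantiate the kernel $K$ of \Cref{eqn:myK}, observe that $K\delta_X = \calP^{(X)}$ and $K\delta_Y = \calP^{(Y)}$ so the hypothesis is exactly \Cref{ineq:oliveira-assumption}, then feed $\nu_1=\delta_{\Id}$ and $\nu_2={}$Haar into \Cref{cor:dog}, using Haar-invariance ($K^\ell\nu_2=\nu_2$) and multiplicativity of $\rho^{k,k}_{2^m}$ over i.i.d.\ products to rewrite the two expectations. Your side remark that symmetry of $\calP$ is not used in this particular deduction is accurate, though inessential to the result.
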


Our goal for the next section will be to establish the following: 
\begin{theorem} \label{thm:couple}
  Let $\nu_m$ denote the distribution $\Ggp{m-1} \times \binom{[m]}{m-1}$ on~$\Ggp{m}$, thought of as inducing a Markov chain on~$\Ggp{m}$ via left-multiplication.
  Fix any $X,Y \in \Ggp{m}$ with $\dRie(X,Y) = \eps \leq 1$, and let $\bX''$ (respectively,~$\bY''$) denote the result of taking \emph{two} independent steps from~$X$ (respectively,~$Y$) according to~$\nu_m$.
  Then there is a coupling of $\bX'',\bY''$ under which
  \begin{equation}
    \E[\dRie(\bX'', \bY'')^2] \leq (1 - \gamma_m) \eps^2 + O_m(\eps^3),
  \end{equation}
  where $\gamma_m = (1-\frac{1}{m})\gamma'_m$ with $\gamma'_m = \frac{1-2^{2-m}}{4 - 2^{3-m}}$, and the $O_m(\cdot)$ hides a constant depending only on~$m$.
\end{theorem}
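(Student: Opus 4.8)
The plan is to exhibit the coupling explicitly, reduce the claimed bound to a $k$-independent estimate on the top eigenvalue of a single positive operator $\mathcal{A}$ on the Lie algebra $\frak{g}_m$, and then evaluate that eigenvalue using Schur's lemma together with the Pauli-string grading of $\frak{g}_m$. By bi-invariance of $\dRie$ write $Y=\exp(H)X$ with $H=\log(YX^{-1})\in\frak{g}_m$, so that $\|H\|_{\Fr}=\dRie(X,Y)=\eps$. Grade $\frak{g}_m=\bigoplus_{\emptyset\neq T\subseteq[m]}\frak{g}_m^{(T)}$, where $\frak{g}_m^{(T)}$ is spanned by the skew-Hermitian Pauli strings whose support is exactly $T$ (keeping, in the $\SOgp{2^m}$ case, only those with an odd number of $Y$'s), and for $j\in[m]$ let $\Pi_{\ni j}$ and $\Pi_{\not\ni j}$ be the orthogonal projections of $\frak{g}_m$ onto $\bigoplus_{T\ni j}\frak{g}_m^{(T)}$ and $\bigoplus_{T\not\ni j}\frak{g}_m^{(T)}$ respectively; note $\Pi_{\not\ni j}$ projects onto the copy of $\frak{g}_{m-1}$ acting on the qubits $[m]\setminus\{j\}$.

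The coupling: for the first step left-multiply \emph{both} $X$ and $Y$ by the same draw $\bg_1\sim\nu_m$, so $\bX'=\bg_1X$ and $\bY'=\exp(H_1)\bX'$ with $H_1=\bg_1H\bg_1^{-1}$ — this contributes nothing to the distance but conjugates the ``difference direction'', and averaging over $\bg_1$ is exactly what will let Schur's lemma bite. Write $\bg_1=(\bh_1)_{[m]\setminus\{\bj_1\}}$, and for the second step let $\bg_2=(\bh_2)_{S_2}$ with $S_2=[m]\setminus\{\bj_2\}$ and set $\bg_2'=(\bh_2)_{S_2}\exp\bigl(-\Pi_{\not\ni\bj_2}(H_1)\bigr)$; since $\exp(-\Pi_{\not\ni\bj_2}(H_1))$ is, conditionally on $\bg_1$, a fixed element of $\Ggp{m-1}$ acting on $S_2$, one checks that $(\bg_1,\bg_2')$ is again a pair of independent $\nu_m$-draws, so $\bX''=\bg_2\bg_1X$, $\bY''=\bg_2'\bg_1Y$ is a valid coupling. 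A Baker--Campbell--Hausdorff estimate gives $\bY''(\bX'')^{-1}=(\bh_2)_{S_2}\exp\bigl(\Pi_{\ni\bj_2}(H_1)+R\bigr)(\bh_2)_{S_2}^{-1}$ with $\|R\|_{\Fr}=O_m(\eps^2)$, and since $\dRie$ to the identity is the Frobenius norm of the principal logarithm and conjugation is isometric,
\begin{equation}
  \E[\dRie(\bX'',\bY'')^2]\;\leq\;\E_{\bg_1,\bj_2}\bigl[\,\|\Pi_{\ni\bj_2}(\bg_1H\bg_1^{-1})\|_{\Fr}^2\,\bigr]+O_m(\eps^3)\;=\;\langle H,\mathcal{A}H\rangle+O_m(\eps^3),
\end{equation}
where $\mathcal{A}=\E_{\bg_1,\bj_2}[\,\mathrm{Ad}(\bg_1)^{-1}\Pi_{\ni\bj_2}\mathrm{Ad}(\bg_1)\,]$ is a positive self-adjoint operator on $\frak{g}_m$ with $\mathcal{A}\preceq\Id$. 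So it suffices to show the top eigenvalue of $\mathcal{A}$ is at most $1-\gamma_m$.

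To bound $\mathcal{A}$, condition on whether $\bj_2=\bj_1$. If $\bj_2=\bj_1$ (probability $1/m$) then $\Pi_{\ni\bj_1}$ commutes with $\mathrm{Ad}(\bg_1)$ and $\E_{\bj_1}[\Pi_{\ni\bj_1}]$ acts by $|T|/m\leq 1$ on $\frak{g}_m^{(T)}$, contributing a summand $\preceq\frac1m\Id$. If $\bj_2\neq\bj_1$, split $\frak{g}_m=\frak{g}_{m-1}\oplus\bigl(\bigoplus_{T\ni\bj_1}\frak{g}_m^{(T)}\bigr)$ into $\mathrm{Ad}((\bh_1)_{[m]\setminus\{\bj_1\}})$-invariant pieces and average over $\bh_1$: Schur's lemma makes the averaged operator scalar on each isotypic component. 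On the subalgebra $\frak{g}_{m-1}$ acting on $[m]\setminus\{\bj_1\}$ — here is where $m\geq4$ enters, since $\frak{so}(2^{m-1})$ is $\mathrm{Ad}$-irreducible exactly when $2^{m-1}\geq8$ (and $\frak{su}(2^{m-1})$ always is) — the scalar is $c_{m-1}:=\dim\{w\in\frak{g}_{m-1}:\bj_2\in\operatorname{supp}(w)\}/\dim\frak{g}_{m-1}$. On $\bigoplus_{T\ni\bj_1}\frak{g}_m^{(T)}$ one decomposes into isotypic components for the adjoint action on $[m]\setminus\{\bj_1\}$: the singleton piece $\frak{g}_m^{(\{\bj_1\})}$ is trivial and annihilated by $\Pi_{\ni\bj_2}$, and every other component is a copy of the ($\mathrm{Ad}$-irreducible) $\frak{g}_{m-1}$, possibly tensored with a small multiplicity space on which $\Pi_{\ni\bj_2}$ is the identity, so each resulting scalar is again $\leq c_{m-1}$. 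Hence $\mathcal{A}\preceq\bigl(\frac1m+\frac{m-1}{m}c_{m-1}\bigr)\Id$, giving $\E[\dRie(\bX'',\bY'')^2]\leq\bigl(\frac1m+\frac{m-1}{m}c_{m-1}\bigr)\eps^2+O_m(\eps^3)$, i.e. the asserted bound with $\gamma_m=(1-\frac1m)(1-c_{m-1})$.

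It then remains only to count: in the $\SOgp{2^{m-1}}$ case $\dim\frak{g}_{m-1}$ is the number of Pauli strings on $m-1$ qubits with an odd number of $Y$'s, $(4^{m-1}-2^{m-1})/2$, of which $(3\cdot4^{m-2}-2^{m-2})/2$ act nontrivially on a fixed qubit, whence $c_{m-1}=\frac{3\cdot2^{m-2}-1}{2^m-2}$ and $\gamma'_m=1-c_{m-1}=\frac{2^{m-2}-1}{2^m-2}=\frac{1-2^{2-m}}{4-2^{3-m}}$, exactly as stated; the $\SUgp{2^{m-1}}$ count gives a strictly larger $\gamma'_m$, so the stated bound holds uniformly for both families. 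I expect the main obstacle to be the representation-theoretic step — decomposing $\bigoplus_{T\ni\bj_1}\frak{g}_m^{(T)}$ into isotypic components for both $\SOgp{\cdot}$ and $\SUgp{\cdot}$ (handling the $Y$-parity bookkeeping in the orthogonal case) and verifying each Schur scalar is $\leq c_{m-1}$ for every $m\geq4$ — together with pinning down that $m\geq4$ is precisely the threshold for $\mathrm{Ad}$-irreducibility of $\frak{so}(2^{m-1})$; the remaining error bookkeeping (the Baker--Campbell--Hausdorff remainder $R$, staying within the injectivity radius of $\exp$, and the slack between $\dRie$ and $d_{\Fr}$ from \Cref{ineq:sqrr}) is routine and is what the $O_m(\eps^3)$ absorbs.
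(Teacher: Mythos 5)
Your proposal is correct, yields the exact constant, and arrives at the same $m\geq4$ threshold — but it travels a genuinely different representation-theoretic road from the paper. Two points of comparison and one small inaccuracy are worth recording.

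\paragraph{Where the approaches coincide.}
Your explicit second-step correction $\exp\bigl(-\Pi_{\not\ni\bj_2}(H_1)\bigr)$ is in fact identical to the paper's. In the proof of \Cref{lem:yummy1} the paper takes $h=\exp(-\tfrac12\eps\,\tr_m B)$ with $\eps B=\log Z=H_1$, and under the Pauli-basis identification of $\frak g_{m-1}$ inside $\frak g_m$ one has $\tr_j X = 2\,\Pi_{\not\ni j}(X)$, so $h=\exp(-\tfrac12\tr_j H_1)=\exp(-\Pi_{\not\ni j}(H_1))$. The paper's ``$\min_h$'' framing is only used to say that this specific $h$ is good enough; you skip the minimization and use the explicit $h$ directly, which is cleaner but operationally the same coupling. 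Your BCH/injectivity-radius bookkeeping replaces the paper's passage through $d_{\Fro}$ and \Cref{ineq:sqrr}; both land in the $O_m(\eps^3)$ term.

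\paragraph{Where they differ.}
To evaluate $\E_{\bg}[\,\|\tr_m(\mathrm{Ad}(\bg)A)\|_{\Fr}^2\,]$, the paper vectorizes, expresses the quantity as $\langle A\otimes A^\dagger, W\rangle$ for a conjugated swap $W$, and computes $\E[W]$ via the $k=2$ Schur--Weyl commutant (the three matching vectors $Q_2,Q_3,Q_4$ of \Cref{prop:reptheory}); this is the same machinery driving \Cref{sec:large-m}. You instead grade $\frak g_m$ by Pauli support, apply Schur's lemma to the \emph{adjoint} representation of $\Ggp{m-1}$, and count Pauli strings. Both routes are essentially self-contained; yours is arguably more elementary for this specific lemma (and makes the $\frac1m$ term from $\bj_1=\bj_2$ appear for free, rather than by ``giving up'' as the paper does), while the paper's is more portable because the same Schur--Weyl apparatus is reused elsewhere. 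Your identification of $m\geq4$ with simplicity of $\frak{so}(2^{m-1})$ is a nice conceptual counterpart to the paper's hypothesis $k<2^{m-1}$ (with $k=2$, $D=2^{m-2}$) in \Cref{thm:sherman}.

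\paragraph{One inaccuracy (non-fatal, and one you flagged).}
Your claim that in $\bigoplus_{T\ni\bj_1}\frak g_m^{(T)}$ ``every other component is a copy of the $\mathrm{Ad}$-irreducible $\frak g_{m-1}$'' is wrong in the $\SOgp$ case. Decomposing by $\sigma_{\bj_1}\in\{X,Y,Z\}$: the $\sigma_{\bj_1}\in\{X,Z\}$ blocks are indeed copies of $\frak{so}(2^{m-1})$, but the $\sigma_{\bj_1}=Y$ block (whose $\sigma'$ has an even number of $Y$'s) is the real symmetric matrices, decomposing as $\text{(trivial)}\oplus\mathrm{Sym}^2_0$, not $\Lambda^2$. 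One must therefore check separately that the Schur scalar on $\mathrm{Sym}^2_0$, namely
\begin{equation}
  \frac{\#\{\sigma'\neq I:\ \text{even $Y$'s},\ \sigma'_{\bj_2}\neq I\}}{\dim\mathrm{Sym}^2_0(2^{m-1})}
  =\frac{N(3N+2)}{4(N-1)(N+2)}, \qquad N\coloneqq 2^{m-1},
\end{equation}
is at most $c_{m-1}=\frac{3N-2}{4(N-1)}$; this reduces to $0\leq 2N-4$, true for all $N\geq2$, so your conclusion stands. Relatedly, your parenthetical ``possibly tensored with a small multiplicity space on which $\Pi_{\ni\bj_2}$ is the identity'' is not quite right either: the multiplicity matrix of the averaged operator need not be the identity, but it \emph{is} diagonal because $\Pi_{\ni\bj_2}$ (being determined only by $\sigma'$) respects the $\sigma_{\bj_1}$-block decomposition, and this is all Schur's lemma needs. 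Neither issue affects the bound, but they should be repaired in a written-out version.
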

This theorem establishes the hypothesis of \Cref{cor:kitten} with $\calP = \nu_m \ast \nu_m$ and $\eta = \sqrt{1-\gamma_m}$.  We can therefore easily derive the following (where the equality uses the fact that
$\E_{\bg \sim \Ggp{m}}[\rho^{k,k}_{2^m}(\bg)]$ is a projection operator):
\begin{equation}
  \opnorm{\E_{\bg \sim \nu_m}[\rho^{k,k}_{2^m}(\bg)]^{2\ell} - \E_{\bg \sim \Ggp{m}}[\rho^{k,k}_{2^m}(\bg)]} = \opnorm{\E_{\bg \sim \nu_m}[\rho^{k,k}_{2^m}(\bg)] - \E_{\bg \sim \Ggp{m}}[\rho^{k,k}_{2^m}(\bg)]}^{2\ell} \leq 4kD \cdot (1-\gamma_m)^{\ell/2}. 
\end{equation}
Taking $(2\ell)$th roots and then $\ell \to \infty$ thus yields \Cref{thm:small-m-restatement2}.

% Regarding this~$K$, the goal for the next section will be to prove the following:\rnote{there's some $n$ vs $m$ screwup here}
% \begin{theorem} \label{thm:couple2}
%   Fix any $X,Y \in G_m$ with $\dRie(X,Y) = \eps \leq 1$, and let $\bX''$ (respectively,~$\bY''$) be distributed according to $K^2 \delta_X$ (respectively,~$K^2 \delta_Y$).
%   Then there is a coupling of $\bX'',\bY''$ under which
%   \begin{equation}
%     \E[\dRie(\bX'', \bY'')^2] \leq (1 - \gamma_m) \eps^2 + O_m(\eps^3),
%   \end{equation}
%   where $\gamma_m = (1-\frac{1}{m})\gamma'_m$ with $\gamma' = \frac{1-2^{2-m}}{4 - 2^{3-m}}$, and the $O_m(\cdot)$ hides a constant depending only on~$m$.
% \end{theorem}
% This theorem implies that $K^2$ (as opposed to~$K$) satisfies \Cref{ineq:oliveira-assumption} with $\eta = 

\subsection{Proof of \Cref{thm:couple}}

% As noted in \cite[eq.~(10)]{Oliveira09}, by the compactness of~$G_n$ it follows that 
% \begin{equation}
%   d_{\Fro}(x,y) \leq \dRie(x,y) \leq d_{\Fro}(x,y) + C_n\cdot d_{\Fro}(x,y)^2
% \end{equation}
% for some constant~$C_n \geq 1$. 
% In particular, this also implies
% \begin{equation}  \label[ineq]{ineq:sqrr}
%   \dRie(x,y)^2 \leq d_{\Fro}(x,y)^2 + 3C_n \cdot d_{\Fro}(x,y)^3, \quad \text{whenever $d_{\Fro}(x,y) \leq 1$}.
% \end{equation}
%
%
%\bigskip
%
%Our goal for this section is to prove \Cref{thm:couple}.
%
% For brevity we write $\nu_m$ for the probability distribution $G_{m-1} \times \binom{[m]}{m-1}$XXXrecall this notationXXX, and we think of it as inducing a random walk on~$G_m$, in which taking one step from $X \in G_m$  consists of drawing~$\bg \sim \nu_m$ and stepping to $\bg X$.
% \begin{theorem} \label{thm:couple}
%   Fix any $X,Y \in G_m$ with $\dRie(X,Y) = \eps \leq 1$, and let $\bX''$ (respectively,~$\bY''$) denote the result of taking \emph{two} independent steps from~$X$ (respectively,~$Y$).
%   Then there is a coupling of $\bX'',\bY''$ under which
%   \begin{equation}
%     \E[\dRie(\bX'', \bY'')^2] \leq (1 - \gamma_m) \eps^2 + O_m(\eps^3),
%   \end{equation}
%   where $\gamma_m = (1-\frac{1}{m})\gamma'_m$ with $\gamma' = \frac{1-2^{2-m}}{4 - 2^{3-m}}$, and the $O_m(\cdot)$ hides a constant depending only on~$m$.
% \end{theorem}
% The remainder of this section is devoted to proving \Cref{thm:couple}.\\

We begin by describing the needed coupling. 
First, we use the same randomness to take one step from each of~$X,Y$; that is, we define
\begin{equation} \label{eq:XprimeYprime}
  \bX' = \bg_{[m]  \setminus \bi} \cdot X, \qquad \bY' = \bg_{[m]  \setminus \bi} \cdot Y, 
\end{equation}
where $\bi \sim [m]$, $\bg \sim \Ggp{m-1}$ are uniformly random and independent. 
To take the second steps, we first draw~$\bj \sim [m]$.  
Then, based on the outcomes $\bi, \bj, \bg$, we will deterministically define some 
\begin{equation}
  \bh = h(\bi, \bj, \bg) \in \Ggp{m-1}
\end{equation}
and then take
\begin{equation}
  \bX'' = (\wt{\bg} \bh)_{[m] \setminus \bj} \cdot \bX', \qquad 
  \bY'' = \wt{\bg}_{[m] \setminus \bj} \cdot \bY',
\end{equation}
where $\wt{\bg} \sim \Ggp{m-1}$ is drawn uniformly and independently of all other random variables.
This is a valid coupling, since for every outcome of $\bi, \bj, \bg$ the distributions of $\wt{\bg} \bh$ and $\wt{\bg}$ are identical.
Then
\begin{equation}  \label{eqn:minme}
  \dRie(\bX'', \bY'') = \dRie(\bh_{[m] \setminus \bj} \cdot  \bX',  \bY'),
\end{equation}
since $\dRie({\cdot},{\cdot})$ is unitarily invariant.
In case $\bi = \bj$, we will ``give up'' and simply define~$\bh = \Id$, in which case we get $\dRie(\bX'', \bY'') = \dRie(\bX', \bY') = \dRie(X,Y) = \eps$.  
Thus we have 
\begin{equation}
  \E[\dRie(\bX'', \bY'')^2] = \frac1m \eps^2 + \parens*{1 - \frac{1}{m}}\mathop{\avg}_{\bi \neq \bj} \braces*{ \E_{\bg \sim \Ggp{m-1}}\bracks*{\dRie\parens*{\bh_{[m] \setminus \bj} \cdot \bX', \bY'}^2}}.
\end{equation}
To complete the definition of~$\bh$, we specify the function $h$:
\begin{equation}
  \text{for $i \neq j$, we define } h=h(i,j,g) \text{ to minimize } d_{\Fro}\parens*{h_{[m] \setminus j} \cdot g_{[m]  \setminus i} \cdot X, g_{[m]  \setminus i} \cdot Y}^2;
\end{equation}
in other words, $\bh = h(\bi,\bj,\bg)$ minimizes $d_{\Fro}\parens*{\bh_{[m] \setminus \bj} \cdot \bX',\bY'}$.
With this choice of $h$, note that we have $d_{\Fro}\parens*{\bh_{[m] \setminus \bj} \cdot \bX', \bY'} \leq \eps \leq 1$ for every outcome of $\bi,\bj,\bg$, since $\bh = \Id$ is always an option (and $d_{\Fro}(\bX',\bY') = d_{\Fro}(X,Y) \leq \dRie(X,Y) = \eps$).  
Thus employing \Cref{ineq:sqrr} we may conclude
\begin{equation}
  \E[\dRie(\bX'', \bY'')^2] \leq \frac1m \eps^2 + \parens*{1 - \frac{1}{m}} \mathop{\avg}_{\bi \neq \bj} \braces*{\E_{\bg \sim \Ggp{m-1}}\bracks*{d_{\Fro}\parens*{\bh_{[m] \setminus \bj} \cdot \bX',\bY'}^2}} + \eps^4.
\end{equation}
Thus to complete the proof of \Cref{thm:couple}, it suffices to establish the following:
\begin{equation}  \label[ineq]{ineq:estme}
  \forall i \neq j, \qquad \E_{\bg \sim \Ggp{m-1}}\bracks*{\min_{h \in \Ggp{m-1}} \braces*{d_{\Fro}\parens*{h_{[m] \setminus j}, \bY' \cdot (\bX')^{-1}}^2}} \leq (1-\gamma'_m) \eps^2 + O_m(\eps^3).
\end{equation}
(Here we used $d_{\Fro}\parens*{h_{[m] \setminus j} \cdot \bX',  \bY'} = d_{\Fro}\parens*{h_{[m] \setminus j}, \bY' \cdot (\bX')^{-1}}$.)
Our proof of this will not have any particular dependence on $i,j$, so without loss of generality let us fix $i = 1$ and $j = m$.  
We establish \Cref{ineq:estme} via the below two lemmas.  (Here and subsequently the notation ``$\tr_i X$'' below denotes the partial trace corresponding to tracing out the $i$th qubit of $X$.)%, see e.g. Sections~2.4.3 and~8.3.1 of \cite{NC10}.)

\begin{lemma} \label{lem:yummy1}
  Fix any $Z \in \Ggp{m}$ with $\dRie(\Id,Z) = \eps$.
  Then
  \begin{equation}  \label[ineq]{ineq:yummy1}
    \min_{h \in \Ggp{m-1}} \braces*{d_{\Fro}\parens*{h \otimes \Id, Z}^2} \leq (1 - \tfrac12 \norm{\tr_m B}_{\Fr}^2) \eps^2 + O_m(\eps^3),
  \end{equation}
  where  $B = \tfrac{1}{\eps} \log Z \in \frak{g}_m$ satisfies $\|B\|_{\Fro} = 1$.
\end{lemma}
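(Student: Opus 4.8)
\textbf{Proof plan for \Cref{lem:yummy1}.}
The plan is to construct a near-optimal $h$ by orthogonally projecting the Lie-algebra direction $B$ onto the subalgebra of $\frak{g}_m$ that ignores the $m$th qubit. Set $\frak{h} = \{C \otimes \Id_2 : C \in \frak{g}_{m-1}\} \subseteq \frak{g}_m$; since $(\exp C) \otimes \Id_2 = \exp(C \otimes \Id_2)$, we have $\exp(\frak{h}) = \{h \otimes \Id_2 : h \in \Ggp{m-1}\}$, which is exactly the set over which the minimum in \Cref{ineq:yummy1} ranges. First I would compute, for an arbitrary $B \in \frak{g}_m$, the orthogonal projection (with respect to $\langle A, C\rangle = \tr(A^\dagger C)$) of $B$ onto the \emph{larger} subspace $\{Y \otimes \Id_2 : Y \text{ arbitrary}\}$: a one-line computation using the partial-trace identity $\tr((Y^\dagger \otimes \Id_2) B) = \tr(Y^\dagger \tr_m B)$ shows this projection is $\tfrac12 (\tr_m B) \otimes \Id_2$, of squared Frobenius norm $\tfrac12 \|\tr_m B\|_{\Fro}^2$. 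The key structural point is that $\tr_m B$ already belongs to $\frak{g}_{m-1}$: in the case $\Ggp{m}=\SOgp{2^m}$, $B$ real skew-symmetric forces $\tr_m B$ real skew-symmetric; in the case $\Ggp{m}=\SUgp{2^m}$, $B$ skew-Hermitian with $\tr B = 0$ forces $\tr_m B$ skew-Hermitian with $\tr(\tr_m B) = \tr B = 0$. Hence $\tfrac12(\tr_m B) \otimes \Id_2 \in \frak{h}$, and since the unconstrained projection of $B$ lands inside $\frak{h}$, it is also the orthogonal projection of $B$ onto $\frak{h}$. Writing $B = B_\parallel + B_\perp$ for this decomposition, we get $\|B_\parallel\|_{\Fro}^2 = \tfrac12\|\tr_m B\|_{\Fro}^2$ and, since $\|B\|_{\Fro} = 1$, $\|B_\perp\|_{\Fro}^2 = 1 - \tfrac12\|\tr_m B\|_{\Fro}^2$.

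Next I would take $h_\star = \exp(\tfrac{\eps}{2}\tr_m B) \in \Ggp{m-1}$, so that $h_\star \otimes \Id_2 = \exp(\eps B_\parallel)$, and bound $\min_h d_{\Fro}(h\otimes\Id_2, Z)^2 \le d_{\Fro}(h_\star\otimes\Id_2, Z)^2 = \|\exp(\eps B_\parallel) - \exp(\eps B)\|_{\Fro}^2$, using $Z = \exp(\eps B)$ (with the branch of $\log$ fixed as in the statement, so that $\eps B = \log Z$ and $\|\eps B\|_{\Fro} = \dRie(\Id, Z) = \eps$). A first-order Taylor estimate for $\exp$ — namely $\|\exp(A) - \Id - A\|_{\Fro} \le \|A\|_{\Fro}^2$ whenever $\|A\|_{\Fro}\le 1$, which applies to both $A = \eps B$ and $A = \eps B_\parallel$ since $\|B_\parallel\|_{\Fro}\le\|B\|_{\Fro}=1$ — gives $\exp(\eps B_\parallel) - \exp(\eps B) = -\eps B_\perp + R$ with $\|R\|_{\Fro} = O(\eps^2)$. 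Expanding the squared norm and using $\|B_\perp\|_{\Fro}\le 1$ then yields $\|\exp(\eps B_\parallel) - \exp(\eps B)\|_{\Fro}^2 = \eps^2\|B_\perp\|_{\Fro}^2 + O_m(\eps^3) = (1 - \tfrac12\|\tr_m B\|_{\Fro}^2)\eps^2 + O_m(\eps^3)$, which is \Cref{ineq:yummy1}. (In fact equality holds at the $\eps^2$ order, but only the stated inequality is needed downstream.)

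The step I expect to require the most care is the pair of structural checks in the first paragraph: that $\tr_m B$ remains in $\frak{g}_{m-1}$ in both the $\Ggp{m}=\SOgp{2^m}$ and $\Ggp{m}=\SUgp{2^m}$ settings, and that the orthogonal projection onto the constrained subspace $\frak{h}$ coincides with the projection onto the unconstrained $\{Y\otimes\Id_2\}$ (which follows from the elementary fact that if $P_V x$ lies in a subspace $W\subseteq V$, then $P_W x = P_V x$). Everything else — the partial-trace identities, the dimension-free Taylor remainder bound, Pythagoras $\|B\|_{\Fro}^2 = \|B_\parallel\|_{\Fro}^2 + \|B_\perp\|_{\Fro}^2$, and the observation $\tfrac12\|\tr_m B\|_{\Fro}^2 \le 1$ (from $\|\tr_m B\|_{\Fro}\le\sqrt{2}\|B\|_{\Fro}$), which guarantees the right-hand coefficient is nonnegative — is routine bookkeeping.
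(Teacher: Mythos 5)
Your proof is correct and follows essentially the same strategy as the paper's: choose $h$ to be the exponential of a scalar multiple of $\tr_m B$, then Taylor-expand to second order in $\eps$. Your organization is cleaner, though. By identifying $B_\parallel \coloneqq \tfrac12(\tr_m B)\otimes\Id$ upfront as the orthogonal projection of $B$ onto the ``ignore the $m$th qubit'' subspace (and checking it lies in the subalgebra $\{C\otimes\Id : C\in\frak{g}_{m-1}\}$, so that $\exp(\eps B_\parallel)$ is a legal competitor in the minimization), the scalar $\tfrac12$, the identity $\norm{B_\parallel}_{\Fr}^2 = \tfrac12\norm{\tr_m B}_{\Fr}^2$, and the final coefficient $\norm{B_\perp}_{\Fr}^2 = 1-\tfrac12\norm{\tr_m B}_{\Fr}^2$ all fall out of Pythagoras; the paper instead recovers the same arithmetic after the fact from the trace identities $\tr B^2 = -1$, $\tr T^2 = -2\norm{\tr_m B}_{\Fr}^2$, and $\tr(TB) = -\norm{\tr_m B}_{\Fr}^2$. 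A concrete bonus of your route: the paper's source sets $h \coloneqq \exp(c\,\eps\,\tr_m B)$ with a local macro defining $c \coloneqq -\tfrac12$, yet matching the paper's own displayed $\eps$- and $\eps^2$-coefficients (the $TB$ term, in particular) and reaching the stated bound forces $c = +\tfrac12$, so the macro is a sign typo. Your projection computation produces the correct $+\tfrac12$ automatically and leaves no opportunity for such a slip.
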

\begin{lemma} \label{lem:yummy2}
  For $m \geq 2$ and any $A \in \frak{g}_m$, writing $\delta = 2^{2-m}$, we have 
  \begin{equation} \label[ineq]{ineq:yummy2}
    \E_{\bg \sim \Ggp{m-1}}[\norm{\tr_m((\Id \otimes \bg_{[m]\setminus 1})A(\Id \otimes \bg_{[m]\setminus 1}^{\dagger}))}_{\Fr}^2] \geq \frac{1 - \delta}{2 - \delta}\|A\|_{\Fr}^2.
  \end{equation}
\end{lemma}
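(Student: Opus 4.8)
The plan is to evaluate the left-hand side of \Cref{ineq:yummy2} exactly, via Schur--Weyl/Brauer duality, and then bound it below. Write $D = 2^{m-1}$, so $\delta = 2/D$ and $\tfrac{1-\delta}{2-\delta} = \tfrac{D-2}{2(D-1)}$; the case $m=2$ ($\delta = 1$) is vacuous, and for $m \ge 4$ one has $D \ge 8$. The first step is a reduction that removes qubit~$1$, on which $\bg$ acts trivially: expand $A = \sum_a \sigma_a \otimes A_a$, where $\{\sigma_a\}$ is a Frobenius-orthogonal basis of $2\times 2$ matrices with $\|\sigma_a\|_\Fr^2 = 2$ (the Pauli basis in the unitary case, a real basis of real $2\times 2$ matrices in the orthogonal case) and each $A_a$ is an operator on $\C^D$, the space of qubits $2,\dots,m$. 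Because the $\sigma_a$ are orthogonal, $\|A\|_\Fr^2 = 2\sum_a \|A_a\|_\Fr^2$ and the left-hand side of \Cref{ineq:yummy2} equals $2\sum_a \E_{\bg}[\|\tr_{\mathrm q}(\bg A_a \bg^\dagger)\|_\Fr^2]$, where $\tr_{\mathrm q}$ is the partial trace over one qubit of $\C^D = \C^{D/2}\otimes\C^2$. Hence it suffices to prove, for every operator $M$ on $\C^D$,
\[
\E_{\bg \sim \Ggp{m-1}}\bigl[\|\tr_{\mathrm q}(\bg M \bg^\dagger)\|_\Fr^2\bigr] \;\geq\; \tfrac{D-2}{2(D-1)}\,\|M\|_\Fr^2 .
\]

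For this, apply the ``swap trick'' identity $\|\tr_B Y\|_\Fr^2 = \tr[(Y \otimes Y^\dagger)(\mathrm{SWAP}_{AA'} \otimes \Id_{BB'})]$ with $A = \C^{D/2}$ and $B = \C^2$: taking $Y = \bg M \bg^\dagger$ and averaging,
\[
\E_{\bg}\bigl[\|\tr_{\mathrm q}(\bg M \bg^\dagger)\|_\Fr^2\bigr] = \tr\bigl[\mathcal T(M \otimes M^\dagger)\, S\bigr], \qquad S \coloneqq \mathrm{SWAP}_{\C^{D/2}} \otimes \Id_{(\C^2)^{\otimes 2}},
\]
where $\mathcal T(N) \coloneqq \E_{\bg \sim \Ggp{m-1}}[(\bg\otimes\bg)\,N\,(\bg^\dagger\otimes\bg^\dagger)]$ is the Frobenius-orthogonal projection onto the commutant of $\{g \otimes g\}$. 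By Schur--Weyl duality this commutant is $\spn\{\Id, \mathrm{SWAP}_{DD'}\}$ when $\Ggp{m-1} = \SUgp{2^{m-1}}$, and (for $m \ge 4$, where $\SOgp{2^{m-1}}$ is simple) $\spn\{\Id, \mathrm{SWAP}_{DD'}, \omega\}$ when $\Ggp{m-1} = \SOgp{2^{m-1}}$, with $\omega = \ket{\Omega}\!\bra{\Omega}$ and $\ket{\Omega} = \sum_{i=1}^D \ket{ii}$. Writing $\mathcal T(M \otimes M^\dagger)$ in this basis is a $2\times 2$ (resp.\ $3\times 3$) linear-algebra exercise: the Gram matrix of $\{\Id, \mathrm{SWAP}_{DD'}, \omega\}$ has diagonal $D^2$ and off-diagonal $D$, and the coefficients are fixed by the scalars $\tr M$, $\|M\|_\Fr^2 = \tr(MM^\dagger)$, and (orthogonal case) $\tr(M^2)$. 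Plugging in the coefficients and the values $\tr S = 2D$, $\tr(\mathrm{SWAP}_{DD'}\,S) = D^2/2$, $\tr(\omega\,S) = D$, the orthogonal case collapses to
\[
\E_{\bg}\bigl[\|\tr_{\mathrm q}(\bg M \bg^\dagger)\|_\Fr^2\bigr] = \frac{(3D+2)(\tr M)^2 + (D+3)(D-2)\,\|M\|_\Fr^2 + (D-2)\,\tr(M^2)}{2(D-1)(D+2)} ;
\]
since $(\tr M)^2 \geq 0$ and $\tr(M^2) \geq -\|M\|_\Fr^2$ (a one-line Cauchy--Schwarz), the right-hand side is at least $\tfrac{(D-2)(D+2)}{2(D-1)(D+2)}\|M\|_\Fr^2 = \tfrac{D-2}{2(D-1)}\|M\|_\Fr^2$, exactly the claim. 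The unitary case is the same computation with the $\omega$-term absent, giving the (slightly better) bound $\tfrac{(D-2)(D+2)}{2(D-1)(D+1)}\|M\|_\Fr^2 \geq \tfrac{D-2}{2(D-1)}\|M\|_\Fr^2$.

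The main obstacle is bookkeeping in the orthogonal case. One must carry out the reduction over a \emph{real} operator basis on the qubit so the $A_a$ are genuinely real matrices, and correspondingly read $M^\dagger = M^\top$ and $\tr(M^2)\in\R$ throughout; the crucial sign $\tr(M^2) \geq -\|M\|_\Fr^2$ (rather than just $|\tr(M^2)| \leq \|M\|_\Fr^2$) is precisely what makes the constant come out to $\tfrac{1-\delta}{2-\delta}$ on the nose — it is attained whenever $M$ is skew-symmetric — and a worse constant here would propagate into \Cref{thm:couple} and \Cref{thm:small-m}. The other point is that the three-element description of the commutant is valid only when $\SOgp{2^{m-1}}$ is simple, i.e.\ $m \geq 4$ (for $m = 3$ the group $\SOgp 4$ carries an extra Hodge-star invariant and would need a separate treatment); since $n_0 = 4$ in the application only $m \geq 4$ is ever invoked, and $m = 2$ is trivial. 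Everything else is routine manipulation of $\mathrm{SWAP}$ operators and maximally-entangled vectors.
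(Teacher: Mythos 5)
Your proof is correct and recovers the exact constant, but it organizes the computation differently from \Cref{subsubsec:yummy2}. The paper keeps qubit~$1$ alive in the doubled picture, factors out $\mathrm{SWAP}_{1,1'}$, computes the Haar average by projecting $\mathrm{vec}(S)$ onto the Schur--Weyl fixed-point space (\Cref{prop:reptheory}), and then lower-bounds by discarding the $Q_3$ contribution and applying Cauchy--Schwarz to the $Q_2$ contribution. You instead expand $A$ over a real Frobenius-orthogonal operator basis of qubit~$1$ at the outset, reducing the lemma to a free-variable inequality for arbitrary real $D\times D$ matrices $M$, then average $M\otimes M^\dagger$ directly into the commutant of $\{g\otimes g\}$ by inverting the $3\times 3$ Gram matrix $D\bigl[(D-1)\Id+J\bigr]$. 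This yields the closed form
\begin{equation*}
  \frac{(3D+2)(\tr M)^2 + (D+3)(D-2)\,\|M\|^2_{\Fr} + (D-2)\,\tr(M^2)}{2(D-1)(D+2)},
\end{equation*}
and the bound follows from $(\tr M)^2 \geq 0$ and $\tr(M^2) \geq -\|M\|^2_{\Fr}$ (the latter tight exactly at antisymmetric~$M$). The advantage of your route is transparency: the degree-two invariants $\tr M$, $\|M\|^2_{\Fr}$, $\tr(M^2)$ appear by name, the tightness of the constant is immediate, and the coefficients above agree with the paper's $c_2,c_3,c_4$ after unwinding the first-qubit reduction; the paper's Cauchy--Schwarz estimate for $\langle A\otimes A^\dagger,\mathrm{SWAP}_{1,1'}\cdot Q_2\rangle$ is marginally shorter but hides this structure. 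Both arguments ultimately rely on the $\SOgp{D}$ commutant of $g\otimes g$ being $\spn\{\Id,\mathrm{SWAP},D\ket{\Phi}\!\bra{\Phi}\}$, which requires $2k<D$ with $k=2$; you make this explicit (flagging the $\SOgp{4}$ degeneracy), while the paper routes it through the hypothesis of \Cref{thm:sherman} and a footnote inside \Cref{prop:reptheory}.
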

To see how the above two lemmas imply \Cref{ineq:estme} (in the case $i = 1$, $j = m$), we first apply \Cref{lem:yummy1} with $Z$~being the outcome of $\bY' \cdot (\bX')^{-1}$. 
Writing $\bB = \frac1\eps \log(\bY' (\bX')^{-1})$  (recall that from \Cref{eq:XprimeYprime} this is  a random matrix depending on~$\bg$),  \Cref{lem:yummy1} tells us that
\begin{equation}
  \E_{\bg \sim \Ggp{m-1}}\bracks*{\min_{h \in \Ggp{m-1}} \braces*{d_{\Fro}\parens*{h_{[m] \setminus j} \otimes \Id_j, \bY' \cdot (\bX')^{-1}}^2}} \leq (1-\tfrac12 \E[\|\tr_m \bB\|_{\Fro}^2]) \eps^2 + O_m(\eps^3).
\end{equation}
But, for $\bg \sim \Ggp{m-1}$, we have
\begin{equation}
  \bB =  \tfrac{1}{\eps} \log \parens*{(\Id \otimes \bg_{[m]\setminus 1}) Y X^{-1}  (\Id \otimes  \bg_{[m]\setminus 1}^{\dagger})} = (\Id \otimes  \bg_{[m]\setminus 1})\parens*{ \tfrac{1}{\eps} \log (Y X^{-1})  } (\Id \otimes  \bg_{[m]\setminus 1}^{\dagger}).
\end{equation}
The result now follows by applying \Cref{lem:yummy2} with $A = \frac{1}{\eps} \log(Y X^{-1})$, which has $\|A\|_{\Fr} = 1$ since $\dRie(X,Y) = \eps$.

\subsubsection{Proof of \Cref{lem:yummy1}}
% Recall\rnote{hahahahaha} (from, e.g., \cite[Prop.~2.11.1]{Tao14}) that~$G_m$ is totally geodesic within $\Ugp{2^m}$, hence the exponential map $\exp : \mathfrak{g}_m \to G_m$ is surjective and Riemannian distance within~$G_m$ coincides with Riemannian distance within~$\Ugp{2^m}$.
% This distance can be compute simply (see, e.g., \cite[within Lem.~1.3]{Mec19}): given $Z \in G_m$ we can choose $B' \in \mathfrak{g}_m$ with $\exp(B') = Z$ such that the eigenvalues of $B'$ are of the form $\mathrm{i} \theta_j$ for $\theta_j \in (-\pi, \pi]$; then $\eps = \dRie(\Id,Z) = \|B'\|_{\Fr} = (\sum_j \theta_j^2)^{1/2}$.
% We may therefore define~$B = (1/\eps) B'$, and proceed to establish \Cref{ineq:yummy1}.

To prove \Cref{lem:yummy1}, it suffices to show that the particular choice
\newcommand{\cc}{-\tfrac12}
\begin{equation}
   h \coloneqq \exp(\cc \eps \tr_m B)
\end{equation}
satisfies \Cref{ineq:yummy1}.
We observe that since $\Ggp{m}$ is either $\SOgp{2^m}$ or $\SUgp{2^m}$, recalling \Cref{eq:algy1,eq:algy2} we have that $\tr_m B \in \mathfrak{g}_{m-1}$ since $B \in \mathfrak{g}_m$ (note that $\tr B=0$ implies $\tr(\tr_m B)=0$),
%\textbf{\red{XXXhere we appeal to a concrete property of the Lie algebras of all the U, SU, PSU, SO, O, blah blah groups, but not (obviously) for SymplecticXXX}} 
and hence indeed $h \in \Ggp{m-1}$ as required.
Now we must bound 
%\rasnote{Small notational issue: 
%%If I understand correctly the first, second, and third (and fifth, sixth and eighth) occurrences are $2 \times 2$ identity matrices which mean ``do nothing on the $m$-th qubit'' but the fourth (in the expression ``$2 \tr \Id$'') is $\Id$ over  $2^m \times 2^m$ matrices, right?
%the next line of math uses the same $\Id$ symbol to denote identity matrices of different dimensions; do we want to do this or do we want to use subscripts to indicate which qubits the various identity matrices are over? This would mean rewriting the above line as something like
%\[
%  d_{\Fro}\parens*{h \otimes \Id_m, Z}^2 = \la h \otimes \Id_m - Z, h \otimes \Id_m - Z\ra = 2 \tr \Id_{[1,m]} - \la h\otimes \Id_m,Z \ra - \la Z, h\otimes \Id_m \ra = 2 \tr \Id_{[1,m]} - 2\Re \la h\otimes \Id_m,Z \ra.
%\]
%This might help the reader parse what's happening quickly, but OTOH (a) maybe it's too cluttered, and (b) if we did this then it seems like for consistency we should make other changes, like write ''$\dRie(\Id_{[1,m]},Z) = \eps$'' in the hypothesis of \Cref{lem:yummy1}, replace the $\Id$ in \Cref{ineq:yummy1} with $\Id_m$, replace the $\Id$'s in \Cref{ineq:yummy2} with $\Id_1$'s, etc. 
%
%I'm happy to make these easy changes but I want to check with you guys about whether we have a collective notational preference for this (and whatever we choose, we should be consistent).
%}
\begin{equation}  \label{eqn:whatami}
  d_{\Fro}\parens*{h \otimes \Id, Z}^2 = \la h \otimes \Id - Z, h \otimes \Id - Z\ra = 2 \tr \Id - \la h\otimes \Id,Z \ra - \la Z, h\otimes \Id \ra = 2 \tr \Id - 2\Re \la h\otimes \Id,Z \ra.% 
  %= 2 \tr \Id - \la \exp(\cc \eps \tr_m B)\otimes \Id, \exp(\eps B) \ra.
  %
  % -  \la h \otimes \Id - Z, h \otimes \Id - Z\ra =
  %= 2 \tr(\Id) - 2 \Re \tr((h \otimes \Id) \cdot Z)
  %= 2\tr(\Id) - 2\Re \tr\parens*{(\exp(\tfrac12 \eps \tr_m B) \otimes \Id) \cdot \exp(\eps B)}.
\end{equation}
Recalling $Z = \exp(\eps B)$ where $\|B\|_{\Fr} = 1$, we abuse notation slightly by writing 
\begin{equation}
  Z = 1 + \eps B + \eps^2 B^2/2 + O_m(\eps^3),
\end{equation}
where ``$O_m(\eps^3)$'' stands for some matrix~$E$ satisfying $\|E\|_{\Fr} \leq C \eps^3$, with $C$ a constant depending only on~$m$ that may change from line to line.

We may similarly expand $h \otimes \Id = \exp(\cc \eps \tr_m B) \otimes \Id$, and upon substituting into~\Cref{eqn:whatami} and simplifying, we obtain
%\rnote{so Maple promises me\dots}
\begin{equation}  \label{eqn:putty}
  \eqref{eqn:whatami} = \Re\tr(T - 2B) \eps  + \Re\tr\parens*{TB  - \tfrac14 T^2 - B^2} \eps^2 + O_m(\eps^3), \qquad T \coloneqq (\tr_m B) \otimes \Id.
\end{equation}
Now $B, T$ are both in the Lie algebra for $\Ggp{m}$; i.e., they are skew-symmetric in the case $\Ggp{m} = \SOgp{2^m}$, and traceless skew-Hermitian  in the case $\Ggp{m} = \SUgp{2^m}$.
%\rasnote{Just to confirm, this is true for both $G=\SOgp{}$ and $G=\Ugp{}$; would it be true for every compact connected Lie group $G_m$ (or maybe we don't care)?}
Thus both have purely imaginary trace, meaning $\Re \tr(T-2B) = 0$.
Moreover, $B$~skew-Hermitian implies $\tr B^2 = \tr B(-B^\dagger) = -\la B,B\ra = -\|B\|_{\Fr}^2 = -1$, and similarly $\tr T^2 =  -\|T\|_{\Fr}^2 = -2 \norm{\tr_m B}_{\Fr}^2$.
Finally, 
\begin{equation}
  \tr(TB) = -\la T, B\ra = -\la (\tr_m B) \otimes \Id, B\ra = -\la \tr_m B, \tr_m B \ra = -\norm{\tr_m B}_{\Fr}^2.
\end{equation}
Putting these deductions into \Cref{eqn:putty} yields
\begin{equation}
  \eqref{eqn:whatami} = (1-\tfrac12\norm{\tr_m B}_{\Fr}^2)\eps^2 + O_m(\eps^3),
\end{equation}
completing the proof of \Cref{lem:yummy1}.

\subsubsection{Proof of \Cref{lem:yummy2}}  \label{subsubsec:yummy2}

Given $m$ qubits, we'll write $L = \{1, \dots, m-1\}$ for the system defined by the first~$m-1$ of them, and (slightly abusing notation) write~$m$ for the system defined by the $m$th one.
We will also write $L'$~and~$m'$ for duplicate copies of these systems, and given a subset $S \subseteq [m]$, we write $\textrm{SWAP}_{S,S'}$ to denote the operator that swaps the qubits in $S$ with the corresponding subset of qubits $1',\dots,m'.$ 
Now for any $(m-1)$-qubit operator~$C$ we have
%\rasnote{Even a pedant like me can't request that we define $\textrm{SWAP}_{L,L'}$, I think}
\begin{equation}  \label{eqn:s1}
  \norm{C}_{\Fro}^2 = \tr(C^\dagger C) = \tr((C_L^\dagger \otimes C_{L'}) \cdot \textrm{SWAP}_{L,L'}).
\end{equation}
In turn, if $C = \tr_m B$ for some operator $B$ on~$m$ qubits, we conclude
\begin{equation}  \label{eqn:s2}
  \norm{\tr_m B}_{\Fro}^2 
  = \tr(\tr_{m,m'}(B^\dagger \otimes B)\cdot \textrm{SWAP}_{L,L'}) 
  = \tr((B^\dagger \otimes B)\cdot (\textrm{SWAP}_{L,L'} \otimes \Id_{m,m'})).
\end{equation}
Next, if $B = H A H^\dagger$ for unitary~$H$, we may use the cyclic property of trace to conclude
\begin{equation}  \label{eqn:cocoa}
    \norm{\tr_m B}_{\Fro}^2 
    = \tr((A^\dagger \otimes A)\cdot W) = \la A \otimes A^\dagger, W \ra, \qquad W \coloneqq (H \otimes H) (\textrm{SWAP}_{L,L'} \otimes \Id_{m,m'}) (H^\dagger \otimes H^\dagger).
\end{equation}
(The above formula, specialized to $m=3$, essentially appears as~\cite[Eqn.~(103)]{BHH16}.)
Finally, suppose $H = \Id \otimes g$ for some $(m-1)$-qubit unitary~$g$.  
For notational clarity we break up the system $L$ into subsystems ``$1$'' and $K = \{2, \dots, m-1\}$,  writing $H = \Id_1 \otimes g_{K,m}$ and 
\begin{equation}
  H \otimes H = \Id_{1,1'} \otimes (g_{K,m} \otimes g_{K',m'}).
\end{equation}
Putting this into the definition of~$W$, we see that the two qubits labeled $1$~and~$1'$ are simply swapped by~$W$, and we have
% \rasnote{We could consider having a little quantum circuit diagram here? But I'm also okay not having such a diagram}
\begin{equation}
  W = \mathrm{SWAP}_{1,1'} \cdot \wh{W}, \qquad \wh{W} \coloneqq (g_{K,m} \otimes g_{K',m'}) S (g_{K,m}^\dagger \otimes g_{K',m'}^\dagger), \qquad S \coloneqq (\mathrm{SWAP}_{K,K'} \otimes \Id_{m,m'}). \label{eq:WS}
\end{equation}
Recalling \Cref{fact:vectorization},
%\rasnote{Let me make sure I can type-check this next formula successfully: $S$ is an operator on qubits $2,\dots,m$ and $2',\dots,m'$ (i.e. a $2^{2m-2} \times 2^{2m-2}$ dimensional matrix, with rows indexed by pairs $(a,b) \in (\zo^{m-1})\times (\zo^{m-1})$). Likewise $R_0$ and $R_1$. $\mathrm{vec}(S)$ is the vectorized version of this $2^{2m-2} \times 2^{2m-2}$ matrix, i.e. a glorious $2^{4m-4}$-dimensional vector with entries indexed by 4-tuples $(a,b,c,d) \in (\zo^{m-1})^4$.  $R_0 \otimes R_1$ is a $2^{4m-4} \times 2^{4m-4}$-dimensional matrix, with both rows and columns similarly indexed by such 4-tuples, so $(R_0 \otimes R_1) \mathrm{vec}(S)$ is a glorious $2^{4m-4}$-dimensional vector, as it should be.}
we see that
\begin{equation}
  \mathrm{vec}(\wh{W}) = \rho^{2,2}_{2^{m-1}}(g) \cdot \mathrm{vec}(S).
\end{equation}
In other words, $\wh{W}$ is the action of $g$ on $S$ under representation $\rho^{2,2}_{2^{m-1}}$, when we suitably use the ``matricized'' interpretation of this representation.
Finally, we are interested in the case that $\bg \sim \Ggp{m-1}$ is chosen ``uniformly'' (Haar measure on $\Ggp{m-1}$); then we conclude from the above equations that
%\rasnote{for consistency with the \Cref{lem:yummy2} statement, let's use $^\dagger$ on $\bg$ below instead of $^{-1}$?}
\begin{equation}  \label{eqn:putitin}
  \E_{\bg \sim \Ggp{m-1}}[\norm{\tr_m((\Id \otimes \bg)A(\Id \otimes \bg^{\dagger}))}_{\Fr}^2]
  = \langle A \otimes A^\dagger, \mathrm{SWAP}_{1,1'} \cdot S_0 \rangle, \qquad \mathrm{vec}(S_0) \coloneqq \E_{\bg \sim \Ggp{m-1}} [\rho^{2,2}_{2^{m-1}}(\bg)] \cdot \mathrm{vec}(S).
\end{equation}
We now compute $S_0$ (we note that a similar calculation for $\Ggp{m-1}=\Ugp{2^{m-1}}$ is given in~\cite[Eqn.~(61)]{HH21}):

\begin{proposition} \label{prop:reptheory}
  Let  $D = 2^{m-1}$, and define the following operators acting across systems $K \cup \{m\}$, $K' \cup \{m'\}$:
  \begin{equation}
      Q_2 = D \cdot \ket{\Phi}\!\bra{\Phi}, \quad
      Q_3 = \Id, \quad
      Q_4 = \mathrm{SWAP}
  \end{equation}
  (where   $\ket{\Phi} = D^{-1/2} \sum_{a \in \zo^{m-1}} \ket{a} \otimes \ket{a}$ is the maximally entangled state).
  Then 
  \begin{align}
    \Ggp{m-1} &= \SOgp{D} &\implies \phantom{Dc_2 \cdot Q_2 + {}} S_0 &= c_2 \cdot Q_2+ c_3 \cdot Q_3 + c_4 \cdot Q_4, \\
    \Ggp{m-1} &= \SUgp{D} &\implies  \phantom{Dc_2 \cdot Q_2 + {}} 
    S_0 &=c'_3 \cdot Q_3 + c'_4 \cdot Q_4,
  \end{align}  
  where the non-negative constants $c_2,c_3,c_4,c'_3,c'_4$ are given by
  \begin{gather}
    c_2 =   \frac{D/2-1}{(D-1)(D+2)}, 
  \quad c_3 = \frac{3D/2+1}{(D-1)(D+2)},
  \quad c_4 = \frac{(D/2-1)(D+3)}{(D-1)(D+2)}, \\
  \quad c'_3 = \frac{3D/2}{(D-1)(D+1)},
  \quad c'_4 = \frac{D^2/2 - 2}{(D-1)(D+1)} \geq c_4.
  \end{gather}
\end{proposition}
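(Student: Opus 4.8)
The plan is to recognize $S_0$ as a Hilbert--Schmidt orthogonal projection of the explicit operator $S = \mathrm{SWAP}_{K,K'}\otimes\Id_{m,m'}$ onto a low-dimensional subspace, and then to read off the coefficients by solving a small linear system. In detail: since $\mathrm{vec}$ is an isometry from $D^2\times D^2$ matrices (with the HS inner product $\langle X,Y\rangle=\tr(X^\dagger Y)$) onto $(\C^D)^{\otimes 4}$, and since $\E_{\bg\sim\Ggp{m-1}}[\rho^{2,2}_D(\bg)]$ is an orthogonal projection (a Hermitian idempotent, being the group average of a unitary representation), the defining relation $\mathrm{vec}(S_0)=\E_{\bg}[\rho^{2,2}_D(\bg)]\mathrm{vec}(S)$ says exactly that $S_0$ is the HS-orthogonal projection of $S$ onto the subspace $V$ whose matricization is the image of $\E_{\bg}[\rho^{2,2}_D(\bg)]$. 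First I would invoke \Cref{thm:sherman} (Schur--Weyl duality for $\SUgp{D}$, Brauer duality for $\SOgp{D}$, with tensor power $k=2$ and dimension $D=2^{m-1}$, using $2=k<D/2$, which is valid in the regime $m\geq 4$ that is all we need): this identifies $V$ with the span of the matricizations of $\ket{\Phi_M}$, $M\in\calM_4^{\mathrm{bip}}$ (resp.\ $M\in\calM_4$). For $k=2$ one checks directly from \Cref{fact:vectorization} that the matching $\{\{1,3\},\{2,4\}\}$ matricizes to $Q_3=\Id$, that $\{\{1,4\},\{2,3\}\}$ matricizes to $Q_4=\mathrm{SWAP}$, and that the remaining (non-bipartite) matching $\{\{1,2\},\{3,4\}\}$ matricizes to $\ket{\Phi}\!\bra{\Phi}=Q_2/D$; hence $V=\mathrm{span}\{Q_3,Q_4\}$ for $\SUgp D$ and $V=\mathrm{span}\{Q_2,Q_3,Q_4\}$ for $\SOgp D$. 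Consequently the claimed form of $S_0$ follows as soon as the normal equations $\langle Q_j,S_0\rangle=\langle Q_j,S\rangle$ hold for all relevant $j$; note this step uses only $V=\mathrm{span}\{Q_j\}$ and not linear independence, so there is no need to worry about Brauer-algebra degeneracies beyond ensuring $V$ is exactly this span.

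The remaining work is to compute the Gram data $\langle Q_i,Q_j\rangle$ and the right-hand sides $\langle Q_j,S\rangle$ on $\C^D\otimes\C^D$. Writing $D'=D/2=2^{m-2}$ and splitting $\C^D=\C^{D'}_K\otimes\C^2_m$, these are routine partial-trace evaluations: $\langle\Id,\Id\rangle=\langle\mathrm{SWAP},\mathrm{SWAP}\rangle=\langle Q_2,Q_2\rangle=D^2$ and $\langle\Id,\mathrm{SWAP}\rangle=\langle\Id,Q_2\rangle=\langle\mathrm{SWAP},Q_2\rangle=D$ (the last two using $\mathrm{SWAP}\ket{\Phi}=\ket{\Phi}$ and $\tr\ket{\Phi}\!\bra{\Phi}=1$); and $\langle\Id,S\rangle=\tr S=D'\cdot 4=2D$, $\langle\mathrm{SWAP},S\rangle=\tr(\Id_{K,K'}\otimes\mathrm{SWAP}_{m,m'})=(D')^2\cdot 2=D^2/2$, $\langle Q_2,S\rangle=D\,\bra{\Phi}S\ket{\Phi}=D$ (using $\mathrm{SWAP}_{K\cup m,\,K'\cup m'}\cdot S=\Id_{K,K'}\otimes\mathrm{SWAP}_{m,m'}$ and $S\ket{\Phi}=\ket{\Phi}$). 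Then the $\SUgp D$ system is $c_3'D^2+c_4'D=2D$, $c_3'D+c_4'D^2=D^2/2$, solved by $c_3'=\frac{3D/2}{(D-1)(D+1)}$, $c_4'=\frac{D^2/2-2}{(D-1)(D+1)}$; the $\SOgp D$ system is $c_2D+c_3+c_4=1$, $c_2+c_3D+c_4=2$, $c_2+c_3+c_4D=D/2$, whose pairwise differences give $c_2-c_3=-\frac{1}{D-1}$ and $c_2-c_4=-\frac{D/2-1}{D-1}$, and back-substitution yields $c_2=\frac{D/2-1}{(D-1)(D+2)}$, $c_3=\frac{3D/2+1}{(D-1)(D+2)}$, $c_4=\frac{(D/2-1)(D+3)}{(D-1)(D+2)}$.

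Finally I would record the sign claims: for $m\geq 4$ we have $D=2^{m-1}\geq 8$, so all numerators $D/2-1$, $3D/2+1$, $(D/2-1)(D+3)$, $3D/2$, $D^2/2-2$ and all denominators $(D-1)(D\pm2)$, $(D-1)(D+1)$ are positive; and $c_4'\geq c_4$ reduces, using $D^2/2-2=\frac{(D-2)(D+2)}{2}$ and $D/2-1=\frac{D-2}{2}$ and clearing the common positive factor, to $(D+2)^2\geq(D+1)(D+3)$, i.e.\ $4\geq 3$. I expect the only genuinely error-prone part to be the bookkeeping of the primed versus unprimed systems and the partial traces in the inner-product computations — in particular pinning down the two identities $S\ket{\Phi}=\ket{\Phi}$ and $\mathrm{SWAP}_{K\cup m,\,K'\cup m'}\cdot S=\Id_{K,K'}\otimes\mathrm{SWAP}_{m,m'}$ — since everything downstream is then a short linear-algebra computation; a secondary subtlety worth stating explicitly is the $m\geq 4$ hypothesis, needed so that \Cref{thm:sherman} gives $V$ as exactly the span of the $Q_j$ (for $\SOgp{4}$, i.e.\ $m=3$, the dimension $D=2k$ would introduce an extra Pfaffian-type commutant element).
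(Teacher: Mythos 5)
Your proposal is correct, and it reaches the same formulas while taking a genuinely different computational route from the paper. Both arguments begin identically: invoke \Cref{thm:sherman} with $k=2$, $D=2^{m-1}$ to identify $\mathrm{Im}\,\E_{\bg}[\rho^{2,2}_D(\bg)]$ with the span of the three (resp.\ two) matching vectors, and recognize that these matricize to $Q_2/D$, $Q_3$, $Q_4$ (resp.\ $Q_3,Q_4$). From there the paths diverge. The paper works purely in the vectorized picture: it introduces the auxiliary diagonal vector $\ket{\psi_{10}}=\sum_x\ket{x,x,x,x}$ so that $\ket{\psi_{1j}}=\ket{\varphi_{M_{1j}}}-\ket{\psi_{10}}$ become pairwise orthogonal, projects $\mathrm{vec}(S)$ onto $\mathrm{span}\{\psi_{1j}\}_{j=0,\dots,4}$ by four easy overlaps, and then subtracts off a carefully chosen vector $\ket{\tau}$ (orthogonal to every $\ket{\varphi_{M_{1j}}}$) to re-express the result in the $\{\varphi_{M_{1j}}\}$ basis — a slick linear-algebra maneuver that sidesteps any matrix inversion. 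You instead set up the normal equations $\langle Q_i,S_0\rangle=\langle Q_i,S\rangle$ for the Hilbert--Schmidt orthogonal projection and solve the $3\times 3$ (resp.\ $2\times 2$) system directly, using the Gram data $\langle Q_i,Q_j\rangle=D^2$ or $D$ and the readily-computed right sides. Your route is somewhat more mechanical and arguably more robust — it requires no clever auxiliary vectors, only verifying $S\ket{\Phi}=\ket{\Phi}$ and $\mathrm{SWAP}\cdot S=\Id_{K,K'}\otimes\mathrm{SWAP}_{m,m'}$ — at the small cost of inverting a Gram matrix, whereas the paper's approach is slightly shorter once the right $\psi,\tau$ are guessed. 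I verified your Gram entries, your right-hand sides, the solutions of both linear systems, and the inequalities $c_j\geq 0$, $c_4'\geq c_4$; all are correct. Two small points of calibration: (i) your remark that the normal-equations step does not require linear independence of the $Q_j$ is true for determining $S_0$ itself, but pinning down the specific coefficient vector $(c_2,c_3,c_4)$ in the statement does use that the Gram matrix is invertible (which holds here, as its eigenvalues are $D^2+2D$ and $D^2-D$ twice); (ii) your identification of the reason $m\geq 4$ is needed is on target — the hypothesis $k<D/2$ in \Cref{thm:sherman} becomes $2<2^{m-2}$, which fails exactly at $m=3$ (i.e.\ $D=2k=4$), the Brauer-algebra degeneracy threshold you flag.
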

% \begin{remark}
%   For the overall proof strategy to succeed, it suffices to determine that $c_4$ is bounded away from~$0$ for sufficiently large~$D$; and indeed, we have $c_4 \sim 1/2 > 0$.
% \end{remark}
\begin{proof}
  We recall from \Cref{thm:sherman} that\footnote{Note that here we are using $m \geq 4$.}
\begin{equation}
  \E_{\bg \sim \Ggp{m-1}}[\rho^{2,2}_{2^{m-1}}(\bg)] = \text{projection onto the span of } \{\ket{\varphi_M} : M \in \calM\},
\end{equation}
where we use the following notation:
\begin{gather}
  M_{12} = \{\{1,2\}, \{3,4\}\}, \quad
  M_{13} = \{\{1,3\}, \{2,4\}\}, \quad
  M_{14}= \{\{1,4\}, \{2,3\}\};\\
   \ket{\varphi_{M_{12}}} = \mathrm{vec}(Q_2) = \sum_{x,y \in \{0,1\}^{m-1}} \ket{x,x,y,y}, \label{eq:food}\\
    \ket{\varphi_{M_{13}}} = \mathrm{vec}(Q_3) = \sum_{x,y} \ket{x,y,x,y},  \quad
    \ket{\varphi_{M_{14}}} = \mathrm{vec}(Q_4) = \sum_{x,y} \ket{x,y,y,x};\\
    \Ggp{m-1} = \SOgp{2^{m-1}} \implies \calM = \{M_{12}, M_{13}, M_{14}\}, \qquad \Ggp{m-1} = \SUgp{2^{m-1}}  \implies \calM = \{M_{13}, M_{14}\}. \label{eq:pinky}
\end{gather}
%\red{(For the latter part of \Cref{eq:pinky}, observe that because of the conjugation in the definition of $\rho^{2,2}_{2^{m-1}}(g)=g^{\otimes 2} \otimes \overline{g}^{\otimes 2}$, we have that
%$\E_{\bg \sim \SUgp{2^{m-1}}}[\rho^{2,2}_{2^{m-1}}(\bg)] = 
%\E_{\bg \sim \Ugp{2^{m-1}}}[\rho^{2,2}_{2^{m-1}}(\bg)].$)
%\rasnote{This is because we can view a draw of $\bg \sim \Ugp{N}$ as $\bu \bg'$ where $\bg' \sim \SUgp{N}$ and $\bu$ is a random complex number of unit norm; since $\bu \overline{\bu}$ is always 1 we can just erase these.  Erase this comment too if it makes sense.}
%}
Let us further define
\begin{equation}
  \ket{\psi_{10}} = \sum_{x \in \{0,1\}^{m-1}} \ket{x,x,x,x} \quad \text{and} \quad  \ket{\psi_{1j}} = \ket{\varphi_{M_{1j}}} - \ket{\psi_{10}},
\end{equation}
so that the $\ket{\psi_{1j}}$'s are pairwise orthogonal, with $\braket{\psi_{10}|\psi_{10}} = D$ and $\braket{\psi_{1j}|\psi_{1j}} = D(D-1)$  for $j > 1$. 
Then, since from \Cref{eq:WS} we have
\begin{equation}
  \mathrm{vec}(S) = \sum_{\substack{x = (x',a) \in \{0,1\}^{m-2} \times \{0,1\}\\y = (y',b) \in \{0,1\}^{m-2} \times \{0,1\}}} \ket{(x',a),(y',b),(y',a),(x',b)},
\end{equation}
we can easily compute
\begin{equation}
  \bra{\psi_{10}}\mathrm{vec}(S) = D, \quad \bra{\psi_{12}}\mathrm{vec}(S) = 0, \quad 
  \bra{\psi_{13}}\mathrm{vec}(S) = D, \quad 
  \bra{\psi_{14}}\mathrm{vec}(S) = D(D/2-1).
\end{equation}
From this we conclude that the projection of $\mathrm{vec}(S)$ onto the span of the four $\ket{\psi_{1j}}$'s (which is also the span of $\ket{\psi_{10}}$ and the three $\ket{\varphi_{M_{1j}}}$'s) 
is
\begin{equation}
  \ket{\sigma} \coloneqq \ket{\psi_{10}} %+ \frac{0}{D(D-1)} \ket{\psi_{12}} 
  + \frac{1}{D-1} \ket{\psi_{13}} + \frac{D/2-1}{D-1} \ket{\psi_{14}}.
\end{equation}
Now one may easily verify that the following vector $\ket{\tau}$ is orthogonal to each $\ket{\varphi_{M_{1j}}} = \ket{\psi_{10}} + \ket{\psi_{1j}}$:
\begin{equation}
  \ket{\tau} = -(D-1)\ket{\psi_{10}} + \ket{\psi_{12}} + \ket{\psi_{13}} + \ket{\psi_{14}}.
\end{equation}
Thus we can bring $\ket{\sigma}$ into the span of the three $\ket{\varphi_{M_{1j}}}$'s by adding a suitable multiple of~$\ket{\tau}$ to zero out the $\ket{\psi_{10}}$ component as follows:
\begin{align}
  \ket{\sigma} + c\ket{\tau} &= (1-(D-1)c)\ket{\psi_{10}} + c\ket{\psi_{12}} + \parens*{\frac{1}{D-1} + c} \ket{\psi_{13}}
  + \parens*{\frac{D/2-1}{D-1} + c} \ket{\psi_{14}}\\
  &= \parens*{\frac{D/2-1}{D-1} - (D+2)c}\ket{\psi_{10}} + c \ket{\varphi_{M_{12}}} + \parens*{\frac{1}{D-1} + c}\ket{\varphi_{M_{13}}} + \parens*{\frac{D/2-1}{D-1} + c}\ket{\varphi_{M_{14}}},
\end{align}
and taking $c = \frac{D/2-1}{(D-1)(D+2)}$ we finally get that 
\begin{equation}
  %G_{m-1} = \SOgp{2^{m-1}} \implies \calM = \{M_{12},M_{13},M_{14}\} \implies \\
  %\mathrm{vec}(S_0) = 
  %\text{proj}_{\spn\{\ket{\varphi_{M_{12}}}, \ket{\varphi_{M_{13 }}}, \ket{\varphi_{M_{14}}}\}}(\mathrm{vec}(S)) =  
  \text{the projection of $\mathrm{vec}(S)$ onto the span of } \ket{\varphi_{M_{12}}}, \ket{\varphi_{M_{13 }}}, \ket{\varphi_{M_{14}}} \text{ is }  
  c_2 \ket{\varphi_{M_{12}}} + c_3 \ket{\varphi_{M_{13}}} + c_4 \ket{\varphi_{M_{14}}}.%, \\
  %\text{where} \quad  c_2 \coloneqq   \frac{D/2-1}{(D-1)(D+2)}, 
  %\quad c_3 \coloneqq \frac{3D/2+1}{(D-1)(D+2)},
  %\quad c_4 \coloneqq \frac{(D/2-1)(D+3)}{(D-1)(D+2)}.
\end{equation}
One can repeat the above using $\ket{\tau'} = -(D-1) \ket{\psi_{10}} + \ket{\psi_{13}} + \ket{\psi_{14}}$ in place of~$\ket{\tau}$ to similarly deduce
\begin{equation}
  \text{the projection of $\mathrm{vec}(S)$ onto the span of } \ket{\varphi_{M_{13 }}}, \ket{\varphi_{M_{14}}} \text{ is }  c'_3 \ket{\varphi_{M_{13}}} + c'_4 \ket{\varphi_{M_{14}}}.%, \\
  %\text{where }
  %\quad c'_3 \coloneqq \frac{3D/2}{(D-1)(D+1)},
  %\quad c'_4 \coloneqq \frac{D^2/2 - 2}{(D-1)(D+1)} \geq c_4.
\end{equation}
  The proof is complete. %may now be completed by putting  everything back into matrix form: we have $\ket{\varphi_{M_{14}}} = \mathrm{vec}(\mathrm{SWAP})$ (on two $m-1$-qubit systems); and, $\ket{\varphi_{M_{13}}} = \mathrm{vec}(\Id)$ (on two $m-1$-qubit systems); and, $\ket{\varphi_{M_{12}}} = D\ket{\Phi}\!\bra{\Phi}$ (i.e., $D$ times the projection onto the maximally entangled state). 
\end{proof}

Now we compute:
\begin{equation}
  \la A \otimes A^\dagger, \mathrm{SWAP}_{1,1'} \cdot Q_4 \ra = 
  \la A \otimes A^\dagger, \mathrm{SWAP}_{[m],[m]'} \ra = \|A\|_{\Fro}^2
\end{equation}
(similar to \Cref{eqn:s1}), and
\begin{equation}
  \la A \otimes A^\dagger, \mathrm{SWAP}_{1,1'} \cdot Q_3 \ra = 
  \la A \otimes A^\dagger, \mathrm{SWAP}_{1,1'} \cdot \Id_{[m]\setminus 1, [m]' \setminus 1'} \ra = \|\tr_{[m] \setminus 1} A\|_{\Fro}^2 \geq 0
\end{equation}
(similar to \Cref{eqn:s2}).
Finally, since (as can be easily verified)
  \begin{equation}
    \mathrm{SWAP}_{1,1'} \cdot Q_2 = \sum_{\substack{a,b \in \{0,1\} \\ x,y \in \{0,1\}^{m-1}}} \ket{(a,x), (b,x)}\!\bra{(b,y),(a,y)},
  \end{equation}  
we may conclude that
  \begin{equation}
    \la A \otimes A^\dagger, \mathrm{SWAP}_{1,1'} \cdot Q_2 \ra
    = \sum_{a,b,x,y} \braket{(b,y)|A^\dagger|(a,x)} \braket{(a,y)|A|(b,x)} \geq -\|A\|_{\Fr}^2
  \end{equation}
by Cauchy--Schwarz.
Putting these conclusions together with \Cref{eqn:putitin} and \Cref{prop:reptheory}, we get that for both $\Ggp{m-1} = \SOgp{2^{m-1}}$ and $\Ggp{m-1} = \SUgp{2^{m-1}}$ it holds that
\begin{equation}
  \E_{\bg \sim \Ggp{m-1}}[\norm{\tr_m((\Id \otimes \bg)A(\Id \otimes \bg^{-1}))}_{\Fr}^2] \geq (c_4 - c_2) \|A\|_{\Fr}^2 = \frac{D/2-1}{D-1} \|A\|_{\Fr}^2 = \frac{1 - 2^{2-m}}{2 - 2^{2-m}}\|A\|_{\Fr}^2,
\end{equation}
completing the proof of \Cref{lem:yummy2}.

\ignore{

\section{Older version of previous section}

In this section we prove \Cref{thm:small-m}, restated below:

\begin{theorem} [Restatement of \Cref{thm:small-m}] \label{thm:small-m-restatement}
% We need the inequality of this theorem to hold for $m > n_0+1$; we will take $n_0=2$, so we need it to hold for $m >3$. We prove something stronger, that it holds for all $m \geq 3.$
For any $m \geq  3,$ we have that
\begin{equation} \label{eq:small-m-tau-lower-bound}
        \forall k \in \N^+,  \quad \quad \quad  \opnorm{\E_{\SOgp{2^{m-1}}\times\binom{[m]}{m - 1}}[\rho_{2^m}^{k,k}(\bg)] - \E_{\SOgp{2^m}}[\rho_{2^m}^{k,k}(\bg)]} \leq 
1 - \left(1 - {\frac 1 m}\right){\frac {2^{2m-3}+2^{m-2} - 3}{2^{2m-1} + 2^m - 4} }\end{equation}
(equivalently, in the notation of \Cref{thm:small-m}, $
\tau_m \geq \left(1 - {\frac 1 m}\right){\frac {2^{2m-3}+2^{m-2} - 3}{2^{2m-1} + 2^m - 4} }$).
\end{theorem}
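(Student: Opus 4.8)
The plan is to run, for $\Ggp{m}=\SOgp{2^m}$, the $L^2$‑Wasserstein contraction argument developed above. Since we want an operator‑norm statement about $\E_{\bg\sim\nu_m}[\rho^{k,k}_{2^m}(\bg)]$ with $\nu_m=\SOgp{2^{m-1}}\times\binom{[m]}{m-1}$, by \Cref{cor:kitten} (applied with $\calP=\nu_m\ast\nu_m$) it suffices to prove a local, two‑step contraction, i.e.\ \Cref{thm:couple}: for $X,Y\in\SOgp{2^m}$ with $\dRie(X,Y)=\eps\le1$ there is a coupling of the two‑step distributions $\bX'',\bY''$ with $\E[\dRie(\bX'',\bY'')^2]\le(1-\gamma_m)\eps^2+O_m(\eps^3)$ for $\gamma_m=(1-\tfrac1m)\gamma'_m$. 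Granting this, \Cref{cor:kitten} with $\eta=\sqrt{1-\gamma_m}$ gives $\opnorm{\E_{\bg\sim\nu_m}[\rho^{k,k}_{2^m}(\bg)]^{2\ell}-\E_{\bg\sim\SOgp{2^m}}[\rho^{k,k}_{2^m}(\bg)]}\le 4kD(1-\gamma_m)^{\ell/2}$; because the Haar operator is an orthogonal projection the left‑hand side equals $\opnorm{\E_{\bg\sim\nu_m}[\rho^{k,k}_{2^m}(\bg)]-\E_{\bg\sim\SOgp{2^m}}[\rho^{k,k}_{2^m}(\bg)]}^{2\ell}$, so taking $2\ell$‑th roots and $\ell\to\infty$ yields $\opnorm{\cdots}\le(1-\gamma_m)^{1/4}$, which is the claimed bound.

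For the coupling I would drive both chains with the same first step $(\bi,\bg)$, $\bi\sim[m]$, $\bg\sim\SOgp{2^{m-1}}$, so that $\bX'=\bg_{[m]\setminus\bi}X$ and $\bY'=\bg_{[m]\setminus\bi}Y$ remain at $\dRie$‑distance $\eps$ and $\bY'(\bX')^{-1}$ is a conjugate of $YX^{-1}$. For the second step draw $\bj\sim[m]$: if $\bj=\bi$ (probability $1/m$) set the correction $\bh=\Id$ and ``give up''; otherwise let $\bh\in\SOgp{2^{m-1}}$ minimize $d_{\Fro}(h_{[m]\setminus\bj}\bX',\bY')$ and apply $(\widetilde{\bg}\bh)_{[m]\setminus\bj}$ to $\bX'$ and $\widetilde{\bg}_{[m]\setminus\bj}$ to $\bY'$ for a fresh independent $\widetilde{\bg}\sim\SOgp{2^{m-1}}$ (this preserves the marginals). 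By bi‑invariance of $\dRie$ the $\bj\ne\bi$ contribution reduces to an average over $i\ne j$ of $\E_{\bg}[\min_h d_{\Fro}(h_{[m]\setminus j},\bY'(\bX')^{-1})^2]$, and by \Cref{ineq:sqrr} the $d_{\Fro}^4$ discrepancy between $d_{\Fro}^2$ and $\dRie^2$ is $O_m(\eps^3)$ (all quantities being $\le\eps\le1$); so the whole statement reduces to proving, for a fixed pair $i\ne j$ (say $i=1$, $j=m$), that $\E_{\bg\sim\SOgp{2^{m-1}}}[\min_h d_{\Fro}(h_{[m]\setminus m},\bY'(\bX')^{-1})^2]\le(1-\gamma'_m)\eps^2+O_m(\eps^3)$.

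This last inequality splits as in \Cref{lem:yummy1} and \Cref{lem:yummy2}. For \Cref{lem:yummy1}: with $Z=\bY'(\bX')^{-1}=\exp(\eps B)$, $\norm{B}_{\Fr}=1$, take $h=\exp(-\tfrac12\eps\,\tr_m B)$, which lies in $\SOgp{2^{m-1}}$ because $\tr_m$ carries $\mathfrak{g}_m$ to $\mathfrak{g}_{m-1}$; expand $Z$ and $h\otimes\Id$ to second order in $\eps$ and use skew‑symmetry of $B$ and of $T:=(\tr_m B)\otimes\Id$ to kill the linear term and to evaluate $\tr B^2=-1$, $\tr T^2=-2\norm{\tr_m B}_{\Fr}^2$, $\tr(TB)=-\norm{\tr_m B}_{\Fr}^2$, obtaining $d_{\Fro}(h\otimes\Id,Z)^2=(1-\tfrac12\norm{\tr_m B}_{\Fr}^2)\eps^2+O_m(\eps^3)$. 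For \Cref{lem:yummy2}: here $B$ is the conjugate of $A:=\tfrac1\eps\log(YX^{-1})$, $\norm{A}_{\Fr}=1$, by $\Id\otimes\bg_{[m]\setminus1}$, so averaging over $\bg$ and vectorizing via \Cref{fact:vectorization} turns $\E_{\bg}[\norm{\tr_m((\Id\otimes\bg)A(\Id\otimes\bg^\dagger))}_{\Fr}^2]$ into $\langle A\otimes A^\dagger,\mathrm{SWAP}_{1,1'}\cdot S_0\rangle$ with $\mathrm{vec}(S_0)=\E_{\bg}[\rho^{2,2}_{2^{m-1}}(\bg)]\,\mathrm{vec}(S)$. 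Computing $S_0$ is the crux: by Schur--Weyl duality for $\SOgp{D}$ ($D=2^{m-1}$; applying it at tensor power $2$ needs $2<2^{m-2}$, i.e.\ $m\ge4$, so for the full displayed range I would either restrict to $m\ge4$ as in the later version or handle $m=3$ by a direct finite‑dimensional computation) $\E_{\bg}[\rho^{2,2}_{2^{m-1}}(\bg)]$ is the projection onto $\mathrm{span}\{\ket{\varphi_{M_{12}}},\ket{\varphi_{M_{13}}},\ket{\varphi_{M_{14}}}\}$, and projecting $\mathrm{vec}(S)$ onto this non‑orthogonal spanning set (the orthogonalization bookkeeping of \Cref{prop:reptheory}) gives $S_0=c_2Q_2+c_3Q_3+c_4Q_4$ with explicit rationals in $D$; finally $\langle A\otimes A^\dagger,\mathrm{SWAP}_{1,1'}Q_4\rangle=\norm{A}_{\Fr}^2$, $\langle A\otimes A^\dagger,\mathrm{SWAP}_{1,1'}Q_3\rangle=\norm{\tr_{[m]\setminus1}A}_{\Fr}^2\ge0$, and $\langle A\otimes A^\dagger,\mathrm{SWAP}_{1,1'}Q_2\rangle\ge-\norm{A}_{\Fr}^2$ by Cauchy--Schwarz, so $\E_{\bg}[\norm{\tr_m(\cdots)}_{\Fr}^2]\ge(c_4-c_2)\norm{A}_{\Fr}^2$. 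This rational function of $m$ (and how tightly one is willing to control the $Q_2$ term) is exactly what pins down the constant in the displayed bound.

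The step I expect to be the real obstacle is \Cref{lem:yummy2}/\Cref{prop:reptheory}: extracting a \emph{sharp} rational constant requires invoking Schur--Weyl for the orthogonal group correctly --- where, unlike the unitary case, the extra matching $M_{12}$ both enlarges the Haar‑fixed space and contributes a $c_2Q_2$ term that can \emph{decrease} the estimate --- and then carrying out the explicit projection of $\mathrm{vec}(S)$ onto the non‑orthonormal frame cleanly enough to read off $c_2,c_3,c_4$. Everything else --- the Oliveira‑type reduction, the validity of the coupling, and the Taylor expansion of \Cref{lem:yummy1} --- is routine once those two lemmas are available.
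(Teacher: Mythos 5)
Your proposal faithfully reproduces the paper's own argument: the two-step coupling of \Cref{thm:couple}, the reduction via Oliveira's theorem (\Cref{thm:oliveira} / \Cref{cor:kitten}), the explicit choice $h=\exp(-\tfrac12\eps\,\tr_m B)$ and Taylor expansion in \Cref{lem:yummy1}, and the vectorize-then-Schur--Weyl computation of $S_0$ in \Cref{lem:yummy2} and \Cref{prop:reptheory}. You also correctly flag the range issue: the Schur--Weyl identification at tensor power $2$ for $\SOgp{2^{m-1}}$ requires $2 < 2^{m-2}$, i.e.\ $m \geq 4$, which is exactly why the paper's statement of \Cref{thm:small-m} carries the hypothesis $m\geq4$.

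There is, however, a second mismatch with the displayed statement that your closing sentence (``which is the claimed bound'') glosses over, and it is not cosmetic. Running Oliveira on the two-step kernel $K=\nu_m\ast\nu_m$ gives $\opnorm{\E_{\nu_m}[\rho^{k,k}]^{2\ell}-\Pi}\le 4kD(1-\gamma_m)^{\ell/2}$, and since $\E_{\nu_m}[\rho^{k,k}]^{2\ell}-\Pi=(\E_{\nu_m}[\rho^{k,k}]-\Pi)^{2\ell}$ (not the $\ell$th power), taking $2\ell$th roots and $\ell\to\infty$ yields $(1-\gamma_m)^{1/4}$ --- a strictly weaker bound than the linear $1-\gamma'$ displayed in the statement. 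Moreover, the $\gamma'$ that appears in the displayed statement equals $\tfrac{(D+3)(D-2)}{4(D+2)(D-1)}$ (with $D=2^{m-1}$), which is strictly larger than the $\gamma'_m=\tfrac{D-2}{4(D-1)}$ your Cauchy--Schwarz estimate $\langle A\otimes A^\dagger,\mathrm{SWAP}_{1,1'}Q_2\rangle\geq-\|A\|_\Fro^2$ in \Cref{lem:yummy2} produces; you give up the $\tfrac{D+3}{D+2}$ factor by not exploiting any cancellation in the $Q_2$ term. So what you actually establish is the bound of \Cref{thm:small-m-restatement2} (namely $\bigl(1-(1-\tfrac1m)\gamma'_m\bigr)^{1/4}\leq .96$ for $m\geq4$), which is what the paper uses downstream; the sharper, fourth-root-free bound displayed in this statement does not follow from the argument as given and would require a separate (and at minimum corrected) derivation.
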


\subsection{Setup}

\begin{definition} \label{def:coupling}
A pair of jointly distributed random variables $(\bX,\bY)$ is a \emph{coupling} of probability distributions $\nu_1,\nu_2$ if $\bX$ (respectively,~$\bY$) has marginal distribution $\nu_1$ (respectively, $\nu_2$).
\end{definition}

\begin{definition} \label{def:wasserstein-old}
The \emph{$L^p$-Wasserstein distance}, with respect to distance function~$d$, between two distributions $\nu_1$ and $\nu_2$  is 
\begin{equation} \label{eq:lpwasserstein}
W_{d,p}(\nu_1,\nu_2) := \inf \left\{
\E[d(\bX,\bY)^p]^{1/p} \ : \ (\bX,\bY)\text{~is a coupling of~}(\nu_1,\nu_2)
\right\}.
\end{equation}
\end{definition}

We will be concerned with distributions over $\SOgp{2^m}$, and we will work with two different distance measures over $\SOgp{2^m}$:

\begin{definition} [Frobenius distance] \label{def:frob}
Given two matrices $A,B \in \SOgp{N}$ the \emph{Frobenius distance}, denoted $d_{\Fro}(A,B)$ or $\|A-B\|_F$,  is 
$\left(\sum_{i,j} |A_{ij}-B_{ij}|^2\right)^{1/2}.$ (We recall also the fact that $\|A\|_F=\sqrt{\tr(A A^\dagger)}$.)\rasnote{Will we have defined/recalled all this earlier? Not sure}
\end{definition}

\begin{definition} [Riemannian distance] \label{def:riem}
Given two matrices $A,B \in \SOgp{N}$ the \emph{Riemannian distance} between $A$ and $B$, denoted $\dRie(A,B)$, is the length of a geodesic curve between $A$ and $B$ where the curve is constrained to lie in $\SOgp{N}$, i.e.~
\begin{equation} \label{eq:riemannian-distance}
\dRie(A,B) = \min
\left\{
\int_0^1 \|\gamma'(t)\|_F dt :  \gamma(0)=A, \gamma(1)=B, 
\gamma(t) \in \SOgp{N} \text{~for all~}t \in [0,1]
\right\}.
\end{equation}
\end{definition}

It is immediate that $d_{\Fro}(A,B) \leq \dRie(A,B)$ and hence for any distributions $\nu_1,\nu_2$ over $\SOgp{2^m}$ we have $W_{\Fro,2}(\nu_1,\nu_2) \leq W_{\Rie,2}(\nu_1,\nu_2).$ We further recall that for small distances, Frobenius and Riemannian distance are the same up to first order; more precisely, we recall Equation~(10) of \cite{Oliveira09}, which states that for all $Z,W \in \SOgp{2^m}$, we have
\begin{equation}
\label{eq:equation-10-of-Oliveira}
\dRie(W,Z) \leq d_{\Fro}(W,Z) + C_m d_{\Fro}(W,Z)^2,
\end{equation}
where $C_m$ is a constant depending only on $m.$
\medskip

\noindent {\bf Notation.}
We write $\nu^{\ast \ell}$ to denote the $\ell$-fold convolution of a distribution $\nu$ on $\SOgp{N}$, i.e.
\begin{equation} \label{eq:convolution}
\nu^{\ast \ell} = \int \delta_{O_1 \dots O_{\ell}} \nu(\mathrm{d}O_1) \dots \nu(\mathrm{d}O_\ell).
\end{equation}
We write $\delta_Z$ to denote  a point-mass distribution at $Z \in \SOgp{N}$. 
For conciseness we write ``$\nu^{\aux}_m$'' to denote the distribution $\SOgp{2^{m-1}} \times {[m] \choose m-1}.$

\subsection{High-level approach}

The general approach follows  \cite[Lem.~7]{HH21}, which in turn follows \cite{BHH16}. In this subsection we describe the main ingredients and explain how they yield \Cref{thm:small-m-restatement}.

\begin{enumerate}

\item We first prove a Frobenius distance statement about how two steps of the random walk $\nu^{\aux}_{m}$ can be coupled (this is where the bulk of the work takes place):

\begin{lemma} \label{lem:eq-90-BHH-analogue-Frobenius}
Let $\eps > 0$ and fix $X,Y \in \SOgp{2^m}$ that satisfy $\dRie(X,Y) = \eps.$
There is a coupling of the two $\SOgp{2^m}$-valued random variables $\bX',\bY'$, where $\bX'$ is individually distributed as $(\nu_{\aux})^{\ast 2} \ast \delta_X$
and
$\bY'$ is individually distributed as $(\nu_{\aux})^{\ast 2} \ast \delta_Y$,
that satisfies
\begin{equation}
\label{eq:equation-90-BHH-Frobenius}
\E[d_{\Fro}(\bX',\bY')^2]
\leq \eta' \cdot d_{\Fro}(X,Y)^2 + O_{m}(\eps^3),
\quad \quad \text{where~}\eta' := 1 - \left(1 - {\frac 1 m}\right){\frac {2^{2m-3}+2^{m-2} - 3}{2^{2m-1} + 2^m - 4} }.
\end{equation}
Moreover, for every outcome $X',Y'$ of the random variables $\bX',\bY',$ we have that
\begin{equation} \label{eq:max-Frobenius-distance}
d_{\Fro}(X',Y')^2 \leq \eps^2 + O_m(\eps^3).
\end{equation}
\end{lemma}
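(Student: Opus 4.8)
This is the Frobenius-metric twin of \Cref{thm:couple}, so the plan is to rerun that two-step coupling argument with the bi-invariant metric $d_{\Fro}$ in place of $\dRie$ (legitimate because $d_{\Fro}\leq\dRie$ and the two agree to first order, by \Cref{ineq:sqrr}). \emph{First}, I would build the coupling exactly as in the proof of \Cref{thm:couple}: draw $\bi,\bj\sim[m]$ and $\bg,\wt{\bg}\sim\SOgp{2^{m-1}}$, all independent; take the \emph{same} first step $\bg_{[m]\setminus\bi}$ from both $X$ and $Y$, landing at $\bg_{[m]\setminus\bi}X$ and $\bg_{[m]\setminus\bi}Y$; then take second steps $(\wt{\bg}\bh)_{[m]\setminus\bj}$ on the $X$-side and $\wt{\bg}_{[m]\setminus\bj}$ on the $Y$-side, where $\bh=h(\bi,\bj,\bg)\in\SOgp{2^{m-1}}$ is a deterministic correction. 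This is a valid coupling of $(\nu_{\aux})^{\ast2}\ast\delta_X$ and $(\nu_{\aux})^{\ast2}\ast\delta_Y$ because, for every fixed $(\bi,\bj,\bg)$, both $\wt{\bg}\bh$ and $\wt{\bg}$ are Haar on $\SOgp{2^{m-1}}$. By left- and right-invariance of $d_{\Fro}$,
\[
  d_{\Fro}(\bX',\bY')=d_{\Fro}\bigl(\bh_{[m]\setminus\bj},\ \bZ\bigr),\qquad \bZ:=\bg_{[m]\setminus\bi}\,(YX^{-1})\,\bg_{[m]\setminus\bi}^{-1}.
\]
For $\bi=\bj$ I set $\bh:=\Id$ (``give up''), forcing $d_{\Fro}(\bX',\bY')=d_{\Fro}(X,Y)$, an event of probability $1/m$; for $\bi\neq\bj$ I let $\bh$ be the minimizer of $h\mapsto d_{\Fro}(h_{[m]\setminus\bj},\bZ)^2$. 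Since $\bh=\Id$ is always feasible, $d_{\Fro}(\bX',\bY')\leq d_{\Fro}(X,Y)\leq\eps$ for \emph{every} outcome, which already gives \Cref{eq:max-Frobenius-distance}.

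\emph{Second}, everything reduces to a single-gate estimate: for each fixed $i\neq j$,
\[
  \E_{\bg\sim\SOgp{2^{m-1}}}\Bigl[\min_{h\in\SOgp{2^{m-1}}} d_{\Fro}\bigl(h_{[m]\setminus j},\bZ\bigr)^2\Bigr]\ \leq\ (1-\gamma'_m)\,\eps^2+O_m(\eps^3),
\]
since averaging over the event $\bi\neq\bj$ and adding the $\tfrac1m\eps^2$ from $\bi=\bj$ gives $\E[d_{\Fro}(\bX',\bY')^2]\leq\bigl(\tfrac1m+(1-\tfrac1m)(1-\gamma'_m)\bigr)\eps^2+O_m(\eps^3)=\eta'\eps^2+O_m(\eps^3)$, and $\eps^2=d_{\Fro}(X,Y)^2+O(\eps^4)$ (again \Cref{ineq:sqrr}) turns this into $\eta'd_{\Fro}(X,Y)^2+O_m(\eps^3)$. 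By symmetry of the $m$ factors I fix $i=1$, $j=m$; then $\bZ=\bg_{[m]\setminus1}(YX^{-1})\bg_{[m]\setminus1}^{-1}$, so $\bB:=\tfrac1\eps\log\bZ=\bg_{[m]\setminus1}A\bg_{[m]\setminus1}^{-1}$ with $A:=\tfrac1\eps\log(YX^{-1})\in\frak{g}_m$, $\|A\|_{\Fr}=1$, and the estimate follows from the two ingredients behind \Cref{lem:yummy1,lem:yummy2}: (i) for $Z$ with $\dRie(\Id,Z)=\eps$ and $B=\tfrac1\eps\log Z$, the choice $h=\exp(-\tfrac12\eps\,\tr_m B)$ lies in $\SOgp{2^{m-1}}$ (since $\tr_m$ maps $\frak{g}_m$ into $\frak{g}_{m-1}$) and, expanding $Z$ and $h\otimes\Id$ to order $\eps^2$ in $d_{\Fro}(h\otimes\Id,Z)^2=2\tr\Id-2\Re\langle h\otimes\Id,Z\rangle$ — the $\eps^1$ term vanishing because $B$ and $(\tr_m B)\otimes\Id$ are skew-symmetric — gives $d_{\Fro}(h\otimes\Id,Z)^2\leq(1-\tfrac12\|\tr_m B\|_{\Fr}^2)\eps^2+O_m(\eps^3)$; and (ii) $\E_{\bg\sim\SOgp{2^{m-1}}}[\|\tr_m(\bg_{[m]\setminus1}A\bg_{[m]\setminus1}^{-1})\|_{\Fr}^2]\geq c(m)\,\|A\|_{\Fr}^2$ for an explicit $c(m)$ bounded away from $0$ (it tends to $\tfrac12$). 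Applying (i) with $Z=\bZ$ and then (ii) gives the one-gate estimate with $1-\gamma'_m=\E_{\bg}[1-\tfrac12\|\tr_m\bB\|_{\Fr}^2]\leq1-\tfrac12 c(m)$; tracking the precise minimizer in (i) and the precise $c(m)$ in (ii) produces the exact $\eta'$ of \Cref{eq:equation-90-BHH-Frobenius}.

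\textbf{The hard part} will be ingredient (ii), which is where the bulk of the computation lives (the coupling bookkeeping and the Taylor expansion in (i) are routine by comparison). The route is to vectorize: $\|\tr_m B\|_{\Fr}^2=\langle A\otimes A^\dagger,W\rangle$ for a swap-type operator $W$ assembled from $\bg$, and after averaging over $\bg\sim\SOgp{2^{m-1}}$ this becomes $\langle A\otimes A^\dagger,\mathrm{SWAP}_{1,1'}\cdot S_0\rangle$ where $\mathrm{vec}(S_0)=\E_{\bg\sim\SOgp{2^{m-1}}}[\rho^{2,2}_{2^{m-1}}(\bg)]\cdot\mathrm{vec}(S)$ for an explicit $S$. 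By Schur--Weyl duality for $\SOgp{\cdot}$ (\Cref{thm:sherman}, applicable once $2<2^{m-2}$, i.e.\ $m\geq4$), $\E_{\bg\sim\SOgp{2^{m-1}}}[\rho^{2,2}_{2^{m-1}}(\bg)]$ is the projection onto $\spn\{\ket{\varphi_{M_{12}}},\ket{\varphi_{M_{13}}},\ket{\varphi_{M_{14}}}\}$, and the crux — this is \Cref{prop:reptheory} — is to compute the projection $S_0=c_2Q_2+c_3Q_3+c_4Q_4$ of $\mathrm{vec}(S)$ onto this three-dimensional but \emph{non-orthogonal} span, via a Gram--Schmidt-style calculation in the auxiliary basis $\ket{\psi_{10}},\ket{\psi_{12}},\ket{\psi_{13}},\ket{\psi_{14}}$. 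Once $S_0$ is in hand the bound falls out of $\langle A\otimes A^\dagger,\mathrm{SWAP}_{1,1'}Q_4\rangle=\|A\|_{\Fr}^2$, $\langle A\otimes A^\dagger,\mathrm{SWAP}_{1,1'}Q_3\rangle=\|\tr_{[m]\setminus1}A\|_{\Fr}^2\geq0$, and $\langle A\otimes A^\dagger,\mathrm{SWAP}_{1,1'}Q_2\rangle\geq-\|A\|_{\Fr}^2$ (Cauchy--Schwarz), giving the lower bound $(c_4-c_2)\|A\|_{\Fr}^2$.
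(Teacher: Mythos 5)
Your proposal is correct as an argument, but it does not establish the specific constant $\eta'$ printed in the lemma. Conceptually you and the paper follow the same path: a two-step coupling with a ``give-up'' branch when $\bi=\bj$, an explicit proxy $h=\exp(\pm\tfrac12\eps\,\tr_{(\cdot)}B)$ for the minimizer, a second-order Taylor expansion, and a Schur-type computation lower-bounding the averaged partial-trace norm. But the paper's own proof of this lemma does not route through \Cref{lem:yummy1,lem:yummy2,prop:reptheory}; it instead applies Schur's lemma directly to the tensor representation $O\mapsto O\otimes O$ of $\SOgp{2^{m-1}}$, decomposed into its identity, symmetric-traceless, and antisymmetric isotypic components, and bounds the three resulting trace contributions term by term.

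The quantitative gap is real. With $D=2^{m-1}$, your route yields $\gamma'_m=\tfrac{D-2}{4(D-1)}$ (exactly the constant of \Cref{thm:couple}), whereas the stated $\eta'$ uses the strictly larger factor $\tfrac{(D+3)(D-2)}{4(D+2)(D-1)}$ in place of $\gamma'_m$, i.e.\ the lemma claims a strictly stronger contraction. The extra $\tfrac{D+3}{D+2}$ comes from the paper's assertion that the $\Pi_I$-contribution to $\E_{\bO}[\tr(T_{\bO}(H)^2)]$ is $\leq 0$ by skew-symmetry of $H$, whereas you bound the corresponding term by Cauchy--Schwarz as $\langle A\otimes A^\dagger,\mathrm{SWAP}_{1,1'}Q_2\rangle\geq-\|A\|_{\Fr}^2$ and therefore pay the $c_2$ penalty. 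So your closing sentence --- that ``tracking the precise minimizer in (i) and the precise $c(m)$ in (ii) produces the exact $\eta'$'' --- is false: your ingredients give the weaker constant of \Cref{thm:couple}, not the one in this lemma. To prove the lemma as stated you would have to recover the missing sign argument for the $\Pi_I$ term, or (more sensibly) weaken $\eta'$ to what your computation gives. Indeed, I would recommend the latter: the paper's sharpening rests on a misdescription of $\Pi_I$ (the rank-one projector onto $\spn\{\Id\}$ is treated as the rank-$2^{m-1}$ operator sending $\ket{\overline{a},\overline{a}}\mapsto 2^{1-m}\ket{\overline{a},\overline{a}}$), and with the correct $\Pi_I$ one can build a skew-symmetric $H$ making the $\Pi_I$-term strictly positive, so the Cauchy--Schwarz bound you use is the defensible one.
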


%
%\item Next we extend this to Riemannian distance. The formal statement here is the following:
%% which is a close analogue of \cite[Lem.~8]{HH21} (which is in turn based on \cite[Lem.~25]{BHH16}).  (The statement below differs from \cite{HH21} because it's about the special orthogonal group rather than the unitary group.)
%
%\begin{lemma} \label{lem:eq-90-BHH-analogue-Riemannian}
%Under the conditions of \Cref{lem:eq-90-BHH-analogue-Frobenius}, we further have
%\begin{equation}
%\label{eq:equation-90-BHH-Frobenius}
%\E[\dRie(\bX',\bY')^2]
%\leq \eta'^2 \cdot \dRie(X,Y)^2 + O_{m}(\eps^3),
%\quad \quad \text{where~}\eta' := \eta'.
%\end{equation}
%\end{lemma}
%

\item We use \Cref{lem:eq-90-BHH-analogue-Frobenius} to show that
taking two steps of the random walk on $\SOgp{2^m}$ associated with $\SOgp{2^{m-1}} \times {[m] \choose m-1}$ vis-a-vis $\rho^{k,k}_{2^m}$ is a ``local contraction'' for $L^2$-Wasserstein distance under the Riemannian distance. The formal statement is as follows:

\begin{lemma} \label{lem:lemma-8-HHJ-analogue}
We have 
\begin{equation} \label{eq:equation-85-BHH-analogue}
\limsup_{\eps \to 0} \sup_{X,Y \in \SOgp{2^m}} 
\left\{
{\frac {W_{\Rie,2}((\nu^{\aux}_m)^{\ast 2} \ast \delta_X,(\nu^{\aux}_m)^{\ast 2} \ast \delta_Y)}{\dRie(X,Y)}} : 
\dRie(X,Y) \leq \eps
\right\} 
\leq \eta'.
\end{equation}
\end{lemma}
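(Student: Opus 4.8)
The plan is to deduce \Cref{lem:lemma-8-HHJ-analogue} from the Frobenius coupling bound of \Cref{lem:eq-90-BHH-analogue-Frobenius} together with Oliveira's first-order comparison \Cref{eq:equation-10-of-Oliveira} between the Riemannian and Frobenius distances, in the same spirit as the corresponding step of \cite{HH21,BHH16}. Fix $X,Y \in \SOgp{2^m}$ with $\dRie(X,Y) = \eps$ (thought of as small), and let $(\bX',\bY')$ be the coupling produced by \Cref{lem:eq-90-BHH-analogue-Frobenius}, so that $\bX'$ is individually distributed as $(\nu^{\aux}_m)^{\ast 2}\ast\delta_X$ and $\bY'$ as $(\nu^{\aux}_m)^{\ast 2}\ast\delta_Y$. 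Since any single coupling is an upper bound for the Wasserstein infimum, it suffices to control $\E[\dRie(\bX',\bY')^2]$ under this particular coupling.

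First I would pass from Riemannian to Frobenius distance on the \emph{output} pair. By \Cref{eq:equation-10-of-Oliveira}, $\dRie(\bX',\bY') \le d_{\Fro}(\bX',\bY') + C_m\, d_{\Fro}(\bX',\bY')^2$, so squaring gives $\dRie(\bX',\bY')^2 \le d_{\Fro}(\bX',\bY')^2 + 2C_m\, d_{\Fro}(\bX',\bY')^3 + C_m^2\, d_{\Fro}(\bX',\bY')^4$. This is where the surely-true pointwise bound \Cref{eq:max-Frobenius-distance}, namely $d_{\Fro}(\bX',\bY') \le \eps + O_m(\eps^2)$, earns its keep: it lets me replace one factor in the cubic and quartic terms by $O_m(\eps)$, so that $\dRie(\bX',\bY')^2 \le (1 + O_m(\eps))\, d_{\Fro}(\bX',\bY')^2$ holds with probability one. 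Taking expectations, applying \Cref{eq:equation-90-BHH-Frobenius}, and then using $d_{\Fro}(X,Y) \le \dRie(X,Y) = \eps$, yields $\E[\dRie(\bX',\bY')^2] \le (1+O_m(\eps))(\eta'\eps^2 + O_m(\eps^3)) = \eta'\eps^2 + O_m(\eps^3)$.

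It then remains to divide by $\dRie(X,Y)^2 = \eps^2$, giving a ratio of at most $\eta' + O_m(\eps)$, and to take the supremum over all pairs with $\dRie(X,Y) \le \eps$ before letting $\eps \to 0$. For the supremum step I would simply rerun the whole estimate with the actual value of $\dRie(X,Y)$ in place of $\eps$ everywhere, which is legitimate precisely because every $O_m(\cdot)$ above comes from constants ($C_m$ and the implied constants in \Cref{lem:eq-90-BHH-analogue-Frobenius}) that depend only on $m$, hence uniformly in $X,Y$; this uniformity is exactly what makes the ``$o(1)$'' appearing in \Cref{eq:equation-85-BHH-analogue} legitimate. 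I do not expect any genuine obstacle in this deduction itself — the substantive work is all in \Cref{lem:eq-90-BHH-analogue-Frobenius}, where one must explicitly construct the two-step coupling (first step randomizing a qubit, second step pulling the two trajectories together via a well-chosen $h \in \SOgp{2^{m-1}}$) and carry out the representation-theoretic partial-trace estimate that produces the constant $\eta'$. The one thing to watch in the present lemma is that no error term is allowed to depend on anything but $m$.
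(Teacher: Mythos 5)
Your proof is essentially the paper's own proof of this lemma, step for step: couple via \Cref{lem:eq-90-BHH-analogue-Frobenius}, square \Cref{eq:equation-10-of-Oliveira}, use the almost-sure bound \Cref{eq:max-Frobenius-distance} to absorb the cubic and quartic correction terms into a $(1+O_m(\eps))$ prefactor, apply \Cref{eq:equation-90-BHH-Frobenius} together with $d_{\Fro}\le\dRie$, and divide by $\dRie(X,Y)^2=\eps^2$. The one caveat worth naming---and it applies equally to the paper's own writeup of this step---is that the calculation directly bounds $\E[\dRie(\bX',\bY')^2]/\dRie(X,Y)^2$, i.e.\ the \emph{squared} ratio $W_{\Rie,2}^2/\dRie^2$, whereas the left side of \Cref{eq:equation-85-BHH-analogue} is the unsquared ratio. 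From the estimate one can only conclude $\limsup\, W_{\Rie,2}/\dRie \le \sqrt{\eta'}$, and since $0<\eta'<1$ this $\sqrt{\phantom{x}}$ is a genuine loss; the current (non-deprecated) version of the small-$m$ argument in the paper carries it through carefully (e.g.\ $\eta=\sqrt{1-\gamma_m}$ in the derivation following \Cref{thm:couple}, and the $\tfrac14$-power in \Cref{thm:small-m-restatement2}), so you should either track the square root explicitly here or restate the lemma with $\sqrt{\eta'}$.
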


\item \Cref{lem:lemma-8-HHJ-analogue} is useful for us by virtue of the following result of Oliveira, which shows that a Riemannian $L^2$-Wasserstein local contraction yields a global contraction:

\begin{theorem} [Theorem~3 of \cite{Oliveira09}] \label{thm:oliveira-old}
Let $\nu$ be a probability measure on $\SOgp{2^m}$ such that
\begin{equation} \label{eq:oliveira-hypothesis}
\limsup_{\eps \to 0} \sup_{X,Y \in \SOgp{2^m}} 
\left\{
{\frac {W_{\Rie,2}(\nu \ast \delta_X,\nu \ast \delta_Y)}{\dRie(X,Y)}} : 
\dRie(X,Y) \leq \eps
\right\} 
\leq \eta.
\end{equation}
Then for all distributions $\nu_1,\nu_2$ on $\SOgp{2^m}$, we have
\begin{equation} \label{eq:olveira-conclusion}
W_{\Rie,2}(\nu \ast \nu_1,\nu \ast \nu_2) \leq \eta \cdot W_{\Rie,2}(\nu_1,\nu_2).
\end{equation}
\end{theorem}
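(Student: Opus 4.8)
The plan is to prove this exactly in the spirit of Oliveira, by splitting the statement into two independent pieces: (i) a \emph{gluing} reduction from arbitrary input measures $\nu_1,\nu_2$ to Dirac inputs, and (ii) a \emph{chaining} argument that upgrades the infinitesimal contraction hypothesis \eqref{eq:oliveira-hypothesis} to a genuine contraction over any finite distance, by walking along a minimizing geodesic and summing.

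For the gluing step I would fix an optimal $W_{\Rie,2}$-coupling $(\bU,\bV)$ of $\nu_1,\nu_2$ (it exists since $\SOgp{2^m}$ is compact, hence Polish, and the $2$-Wasserstein space over it is well behaved), so that $\E[\dRie(\bU,\bV)^2]=W_{\Rie,2}(\nu_1,\nu_2)^2$. For each pair $(X,Y)$ let $\pi_{X,Y}$ be an optimal coupling of $\nu\ast\delta_X$ and $\nu\ast\delta_Y$; a Borel measurable-selection argument makes $(X,Y)\mapsto\pi_{X,Y}$ measurable. Mixing the $\pi_{X,Y}$ over the law of $(\bU,\bV)$ produces a valid coupling of $\nu\ast\nu_1$ and $\nu\ast\nu_2$, whence
\[
W_{\Rie,2}(\nu\ast\nu_1,\nu\ast\nu_2)^2\ \le\ \E\big[\,W_{\Rie,2}(\nu\ast\delta_{\bU},\nu\ast\delta_{\bV})^2\,\big].
\]
So it suffices to prove $W_{\Rie,2}(\nu\ast\delta_X,\nu\ast\delta_Y)\le \eta\cdot\dRie(X,Y)$ for every $X,Y$.

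For the chaining step I would first unpack \eqref{eq:oliveira-hypothesis}: for every $\delta>0$ there is $\eps_0>0$ so that $W_{\Rie,2}(\nu\ast\delta_X,\nu\ast\delta_Y)\le(\eta+\delta)\dRie(X,Y)$ whenever $\dRie(X,Y)\le\eps_0$. Fix $X,Y$, set $L=\dRie(X,Y)$, and use Hopf--Rinow (here $\SOgp{2^m}$ is a compact connected Riemannian manifold) to get a constant-speed minimizing geodesic $\gamma\colon[0,1]\to\SOgp{2^m}$ with $\dRie(\gamma(s),\gamma(t))=|s-t|\,L$. Choosing an integer $n>L/\eps_0$ and the breakpoints $\gamma(i/n)$, and using that $W_{\Rie,2}$ is a genuine metric on probability measures (hence satisfies the triangle inequality), I get
\[
W_{\Rie,2}(\nu\ast\delta_X,\nu\ast\delta_Y)\ \le\ \sum_{i=1}^n W_{\Rie,2}\big(\nu\ast\delta_{\gamma((i-1)/n)},\,\nu\ast\delta_{\gamma(i/n)}\big)\ \le\ (\eta+\delta)\sum_{i=1}^n \tfrac{L}{n}\ =\ (\eta+\delta)L,
\]
since each consecutive distance equals $L/n\le\eps_0$. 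Letting $\delta\to0$ gives the Dirac-input bound, and feeding it into the gluing inequality yields $W_{\Rie,2}(\nu\ast\nu_1,\nu\ast\nu_2)^2\le\eta^2\E[\dRie(\bU,\bV)^2]=\eta^2 W_{\Rie,2}(\nu_1,\nu_2)^2$, as desired.

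I expect the only delicate points to be measure-theoretic: the Borel-measurable selection of optimal couplings in the gluing step, and—were one proving the statement for a general length space rather than this compact Lie group—replacing the exact geodesic by a chain of approximate midpoints built by repeated bisection, with an error that vanishes in the limit. In our concrete setting ($\SOgp{2^m}$ compact with a bi-invariant metric) both are standard, so the real content is just the geodesic chaining; I anticipate the main obstacle is bookkeeping the measurable selection rigorously rather than any conceptual difficulty.
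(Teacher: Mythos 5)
The paper does not prove this statement at all: it imports it verbatim as Theorem~3 of Oliveira's paper (and in \Cref{thm:oliveira} quotes a mild specialization of it), treating it as an external black box. So there is no in-paper proof to compare against --- you are, in effect, re-deriving Oliveira's result. That said, your re-derivation is correct, and it follows the same two-part structure as Oliveira's own argument: a ``gluing'' reduction from arbitrary input measures to Dirac inputs via an optimal coupling of $\nu_1,\nu_2$ plus a measurable selection of conditionally optimal couplings, followed by a ``chaining'' step that upgrades the infinitesimal contraction hypothesis to a genuine contraction over any finite distance by partitioning a minimizing geodesic and invoking the triangle inequality for $W_{\Rie,2}$. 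The unpacking of the $\limsup$ hypothesis is right (the supremum over $\{\dRie(X,Y)\le\eps\}$ is monotone in $\eps$, so the $\limsup$ is an $\inf$, giving the $(\eta+\delta)$ bound on an $\eps_0$-neighborhood for every $\delta>0$), and the exactness of $\dRie(\gamma(s),\gamma(t))=|s-t|L$ along a \emph{minimizing} geodesic is what makes the consecutive distances $L/n$ rather than merely $\le L/n$, so your chain telescopes cleanly.

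The one genuine difference from Oliveira is exactly the one you flag: he works in a general length space, where a minimizing geodesic need not exist, and so his chaining step constructs a chain of approximate midpoints by repeated bisection with an error that is sent to zero at the end. In your concrete setting --- $\SOgp{2^m}$ compact and connected with a bi-invariant Riemannian metric --- Hopf--Rinow hands you a true minimizing geodesic and makes the chaining step elementary. This is a legitimate simplification, not a gap, but it does mean your proof covers only the compact Riemannian case and not the full generality of the theorem as cited. The measure-theoretic points you note (existence of optimal couplings on a compact Polish space; Kuratowski--Ryll-Nardzewski selection for $(X,Y)\mapsto\pi_{X,Y}$) are indeed standard and are the only places where rigor needs bookkeeping; the conceptual content is exactly where you locate it.
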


\end{enumerate}

With the above ingredients in hand we prove \Cref{thm:small-m-restatement} as follows:

\medskip

\noindent \emph{Proof of \Cref{thm:small-m-restatement} using \Cref{lem:lemma-8-HHJ-analogue}  and \Cref{thm:oliveira-old}.}
By \Cref{lem:lemma-8-HHJ-analogue} we may apply
 \Cref{thm:oliveira-old} to $\nu_1 = \delta_I$ (here $I$ is the identity element of $\SOgp{2^m}$) and $\nu_2 = \SOgp{2^m}$ (the Haar distribution on $\SOgp{2^m}$), and we get that 
\begin{equation} \label{eq:pickle}
W_{\Rie,2}((\nu^{\aux}_m)^{\ast 2} \ast \delta_I,(\nu^{\aux}_m)^{\ast 2} \ast \SOgp{2^m}) \leq \eta' \cdot W_{\Rie,2}(\delta_I,\SOgp{2^m}).
\end{equation}
Iteratively combining $\ell$ applications of  \Cref{thm:oliveira-old} in this way, we get that for any $\ell \geq 1$,
\begin{equation} \label{eq:radish}
W_{\Rie,2}((\nu^{\aux}_m)^{\ast 2\ell} \ast \delta_I,(\nu^{\aux}_m)^{\ast 2\ell} \ast \SOgp{2^m}) \leq \eta'^\ell \cdot W_{\Rie,2}(\delta_I,\SOgp{2^m}).
\end{equation}
Since $\SOgp{2^m}$ has diameter at most $\pi\cdot 2^{m/2}$ under $\dRie$ (see e.g.~the end of the proof of Theorem~1 of \cite{Oliveira09}; in fact any finite bound as a function of $m$ would be sufficient for our purposes), we have that $W_{\Rie,2}(\delta_I,\SOgp{2^m}) \leq \pi\cdot 2^{m/2}$. Since $(\nu^{\aux}_m)^{\ast 2\ell} \ast \delta_I=(\nu^{\aux}_m)^{\ast 2\ell}$ and
$(\nu^{\aux}_m)^{\ast 2\ell} \ast \SOgp{2^m}=\SOgp{2^m}$, we get that
\begin{equation} \label{eq:equation-51-HHJ-analogue}
W_{\Rie,2}((\nu^\aux_{m})^{\ast 2 \ell}, \SOgp{2^m})
\leq
\eta'^\ell \cdot \pi\cdot 2^{m/2}.
\end{equation}

Next, we recall the following result from \cite{BHH16} (where it is proved for the unitary group) which we will prove later:
\begin{lemma} [Lemma~20 from \cite{BHH16}]
%Lemma~20 OF BHH
\label{lem:equation-64-HHJ}
For any distribution $\nu$ over $\SOgp{2^m}$, we have that  
\begin{equation}
\label{eq:equation-64-HHJ}
        \forall k \in \N^+,  \quad \quad \quad 
 \opnorm{\E_{\nu}[\rho_{2^m}^{k,k}(\bg)] - \E_{\SOgp{2^m}}[\rho_{2^m}^{k,k}(\bg)]}
\leq
2k W_{\Rie,2}(\nu,\SOgp{2^m}).
\end{equation}
\end{lemma}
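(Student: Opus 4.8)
The plan is to prove this by combining a near-optimal coupling with the Lipschitz continuity of $\rho^{k,k}_{2^m}$ that was already recorded in \Cref{sec:small-m}. Recall from there that for any $X,Y \in \SOgp{2^m}$ one has $\opnorm{\rho^{k,k}_{2^m}(X)-\rho^{k,k}_{2^m}(Y)} \le 2k\,\opnorm{X-Y}$, and that the three relevant metrics are nested, $d_{\mathrm{op}}(X,Y)\le d_{\Fro}(X,Y)\le \dRie(X,Y)$ (the last inequality from \Cref{ineq:sqrr}). Chaining these, $\rho^{k,k}_{2^m}$ is $2k$-Lipschitz as a map from $(\SOgp{2^m},\dRie)$ to the space of operators equipped with the operator-norm distance: $\opnorm{\rho^{k,k}_{2^m}(X)-\rho^{k,k}_{2^m}(Y)} \le 2k\,\dRie(X,Y)$.

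First I would fix $\delta>0$ and, using \Cref{def:wasserstein}, choose a coupling $(\bX,\bY)$ of $\nu$ and the Haar measure on $\SOgp{2^m}$ with $\E[\dRie(\bX,\bY)^2]^{1/2}\le W_{\Rie,2}(\nu,\SOgp{2^m})+\delta$. (Since $\SOgp{2^m}$ is compact one could alternatively invoke attainment of the Wasserstein infimum and take $\delta=0$, but the approximation route avoids that appeal.) Because $\bX$ has law $\nu$ and $\bY$ has the Haar law, linearity of expectation gives $\E_{\bg\sim\nu}[\rho^{k,k}_{2^m}(\bg)]-\E_{\bg\sim\SOgp{2^m}}[\rho^{k,k}_{2^m}(\bg)] = \E\!\left[\rho^{k,k}_{2^m}(\bX)-\rho^{k,k}_{2^m}(\bY)\right]$.

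Then I would pull the expectation outside the operator norm (triangle inequality / convexity of $\opnorm{\cdot}$), apply the Lipschitz bound pointwise, and finish with Cauchy--Schwarz to pass from the $L^1$ to the $L^2$ quantity:
\begin{align*}
\opnorm{\E\!\left[\rho^{k,k}_{2^m}(\bX)-\rho^{k,k}_{2^m}(\bY)\right]}
&\le \E\!\left[\opnorm{\rho^{k,k}_{2^m}(\bX)-\rho^{k,k}_{2^m}(\bY)}\right]\\
&\le 2k\,\E[\dRie(\bX,\bY)] \le 2k\,\E[\dRie(\bX,\bY)^2]^{1/2}.
\end{align*}
Combining with the choice of coupling gives the bound $2k\,(W_{\Rie,2}(\nu,\SOgp{2^m})+\delta)$, and letting $\delta\to 0$ yields the claim. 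The identical argument works verbatim for $\SUgp{2^m}$, though only the orthogonal case is needed here.

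There is no serious obstacle: every ingredient — the $2k$-Lipschitzness of the tensor-power representation, the metric comparisons, convexity of $\opnorm{\cdot}$, and the existence of near-optimal couplings — is either elementary or already established in \Cref{sec:small-m}. The only two points deserving a word of care are (i) justifying that a coupling within $\delta$ of the $W_{\Rie,2}$-infimum exists, which is immediate from the definition of the infimum (or from compactness if one wants exact attainment); and (ii) checking that the $L^1$-versus-$L^2$ step goes in the usable direction, which it does, since Cauchy--Schwarz gives $\E[\dRie]\le\E[\dRie^2]^{1/2}$ and $W_{\Rie,2}$ is precisely the $L^2$ quantity.
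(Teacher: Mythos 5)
Your proof is correct, and it takes a genuinely different route from the paper's, which is essentially the ``dual'' of yours. The paper's proof starts from the variational characterization of the operator norm, $\opnorm{M} = \max_{\|X\|_1 \le 1}\tr(MX)$, fixes an optimizing $X$, defines a \emph{scalar}-valued function $f(U)=\tr(U^{\otimes 2k}X)$, shows $f$ is $2k$-Lipschitz in Frobenius distance, and then invokes Kantorovich--Rubinstein duality to get $|\E_\nu[f]-\E_{\text{Haar}}[f]| \le 2k\,W_{\Fro,1} \le 2k\,W_{\Fro,2} \le 2k\,W_{\Rie,2}$. You instead work directly with the coupling (primal) side of the Wasserstein definition: pick a near-optimal coupling, rewrite the operator difference as $\E[\rho(\bX)-\rho(\bY)]$, pull the expectation out by Jensen/convexity of $\opnorm{\cdot}$, apply the pointwise $2k$-Lipschitz bound on the operator-valued map $\rho^{k,k}_{2^m}$ (already established in \Cref{sec:small-m}), and finish with Cauchy--Schwarz. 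Both proofs hinge on the same two ingredients --- the $2k$-Lipschitzness of the $k$-fold tensor-power representation and the comparison of $W_1$ with $W_2$ --- but yours avoids quoting Kantorovich--Rubinstein duality as a theorem and avoids the dual-norm certificate $X$, at the (mild) cost of introducing a coupling and a limiting $\delta\to 0$ argument. This is a clean and somewhat more elementary route; the paper's approach has the advantage that no coupling machinery is visible, but since the paper already uses couplings heavily in \Cref{sec:small-m}, your choice is arguably the more natural one in context.
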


Applying \Cref{lem:equation-64-HHJ} to \Cref{eq:equation-51-HHJ-analogue}, we get that for any
$\ell \geq 1$ we have
\begin{equation} \label{eq:kidney}
        \forall k \in \N^+,  \quad \quad \quad 
 \opnorm{\E_{(\nu^\aux_{m})^{\ast 2 \ell}}[\rho_{2^m}^{k,k}(\bg)] - \E_{\SOgp{2^m}}[\rho_{2^m}^{k,k}(\bg)]}
 \leq 2k \eta'^\ell \cdot \pi\cdot 2^{m/2}.
\end{equation}
Since $\E_{\SOgp{2^m}}[\rho_{2^m}^{k,k}(\bg)]$ is a projector, we have that
% \rasnote{Add some justification, or a backwards reference to an earlier justification (I guess the fact that this is an orthogonal projection operator is stated very early in \Cref{sec:large-m}) This is Equation~6 of \cite{HH21} and Equation~3 of \cite{BHH16}, whose justification, mapped to our setting, is ``it follows immediately from (the definition) and the fact that $\int_{\SOgp{2^m}} O^{\otimes 2\ell} \mu_{\mathrm{Haar}}(\mathrm{d}O)$ is a projector''}
\begin{equation} \label{eq:want-this}
  \forall k \in \N^+,  \quad \quad \quad 
 \opnorm{\E_{\nu^{\aux}_{m}}[\rho_{2^m}^{k,k}(\bg)] - \E_{\SOgp{2^m}}[\rho_{2^m}^{k,k}(\bg)]}^\ell 
 = 
 \opnorm{\E_{(\nu^\aux_{m})^{\ast 2 \ell}}[\rho_{2^m}^{k,k}(\bg)] - \E_{\SOgp{2^m}}[\rho_{2^m}^{k,k}(\bg)]},
\end{equation}
and hence we get that
\begin{equation} \label{eq:want-this-too}
  \forall k \in \N^+,  \quad \quad \quad 
 \opnorm{\E_{\nu^{\aux}_{m}}[\rho_{2^m}^{k,k}(\bg)] - \E_{\SOgp{2^m}}[\rho_{2^m}^{k,k}(\bg)]}^\ell 
 \leq 2k \eta'^\ell \cdot \pi\cdot 2^{m/2}.
\end{equation}
Taking the $\ell$-th root of both sides and letting $\ell \to \infty$, we get that
\begin{equation} \label{eq:small-m-tau-lower-bound-restatement}
        \forall k \in \N^+,  \quad \quad \quad  \opnorm{\E_{\nu^{\aux}_{m}}[\rho_{2^m}^{k,k}(\bg)] - \E_{\SOgp{2^m}}[\rho_{2^m}^{k,k}(\bg)]} \leq 
\eta'
 \end{equation}
 which gives \Cref{eq:small-m-tau-lower-bound} as desired and completes the proof of \Cref{thm:small-m-restatement} (modulo the proof of \Cref{lem:equation-64-HHJ}).
\qed

\medskip \noindent \emph{Proof of \Cref{lem:equation-64-HHJ}.} We follow the proof from \cite{BHH16} with trivial modifications.
Recalling that $d_{\Fro} \leq \dRie$, it is enough to establish
\begin{equation} \label{eq:equation-126-BHH}
  \forall k \in \N^+,  \quad \quad \quad 
 \opnorm{\E_{\nu}[\rho_{2^m}^{k,k}(\bg)] - \E_{\SOgp{2^m}}[\rho_{2^m}^{k,k}(\bg)]}
\leq
2k W_{\Fro,2}(\nu,\SOgp{2^m}),
\end{equation} 
which is what we do below.

We first recall the variational characterization of the operator norm, i.e.~the fact that for any $M$ we have\rasnote{Earlier on somewhere, record that $\|\cdot\|_1$ denotes the Schatten 1-norm, maybe mention that Schatten $\infty$-norm is the same as operator norm, etc.}
$\opnorm{M} = \max_{X: \|X\|_1 \leq 1} \tr(MX).$  Hence we may let $X$ be such that $\|X\|_1 \leq 1$ satisfies
\begin{equation} \label{eq:lemon}
\tr\left(\left( \E_{\nu}[\rho_{2^m}^{k,k}(\bg)] - \E_{\SOgp{2^m}}[\rho_{2^m}^{k,k}(\bg)]\right)X\right) 
=
\opnorm{\E_{\nu}[\rho_{2^m}^{k,k}(\bg)] - \E_{\SOgp{2^m}}[\rho_{2^m}^{k,k}(\bg)]},
\end{equation}
and we may define
\begin{equation} \label{eq:f}
f(U) := \tr(U^{\otimes 2k} X).
\end{equation}
Later we will prove:
\begin{claim} \label{claim:lipschitz}
The function $f$ is $2k$-Lipschitz with respect to Frobenius distance.
\end{claim}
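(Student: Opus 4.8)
The plan is to bound $f$ through the operator norm rather than through the (exponentially large) Frobenius norm of the tensor powers. By the variational characterization of the operator norm recalled just above, together with H\"older's inequality for Schatten norms, we have
\[
|f(U)-f(V)| = \abs{\tr\bigl((U^{\otimes 2k}-V^{\otimes 2k})X\bigr)} \leq \opnorm{U^{\otimes 2k}-V^{\otimes 2k}}\cdot\norm{X}_1 \leq \opnorm{U^{\otimes 2k}-V^{\otimes 2k}},
\]
using $\norm{X}_1\leq 1$. So it suffices to prove $\opnorm{U^{\otimes 2k}-V^{\otimes 2k}}\leq 2k\,\norm{U-V}_{\Fro}$ for $U,V \in \SOgp{2^m}$.

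For this I would invoke the telescoping identity (the same one used in \Cref{sec:small-m} when establishing uniform continuity of $\rho^{k,k}_{2^m}$):
\[
U^{\otimes 2k}-V^{\otimes 2k} = \sum_{i=1}^{2k} U^{\otimes(i-1)}\otimes(U-V)\otimes V^{\otimes(2k-i)}.
\]
Since the operator norm is multiplicative under tensor products and $U,V$ have operator norm $1$, each summand has operator norm exactly $\opnorm{U}^{i-1}\opnorm{U-V}\opnorm{V}^{2k-i}=\opnorm{U-V}$. The triangle inequality then gives $\opnorm{U^{\otimes 2k}-V^{\otimes 2k}}\leq 2k\,\opnorm{U-V}\leq 2k\,\norm{U-V}_{\Fro}$, the last step because $\opnorm{\cdot}\leq\norm{\cdot}_{\Fro}$. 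Chaining this with the displayed H\"older bound proves the claim.

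There is essentially no obstacle here; the only subtlety is the choice of norm. One must not attempt to bound $\norm{U^{\otimes 2k}-V^{\otimes 2k}}_{\Fro}$ directly, since $\norm{U^{\otimes 2k}}_{\Fro}=2^{mk}$ is exponentially large and the telescoped terms would inherit that blow-up. The argument works precisely because $X$ is controlled only in the trace norm, whose H\"older-dual is the operator norm, under which unitarity keeps each telescoped term bounded by $\opnorm{U-V}$. (The identical computation goes through for $\SUgp{2^m}$ with $U^{\otimes 2k}$ replaced by $\rho^{k,k}_{2^m}(U)=U^{\otimes k}\otimes\overline{U}^{\otimes k}$, using $\opnorm{\overline{U}}=1$.)
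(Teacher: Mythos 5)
Your proposal is correct and follows essentially the same route as the paper: H\"older's inequality (trace vs.\ operator norm) against $\norm{X}_1\leq 1$, then a telescoping bound $\opnorm{U^{\otimes 2k}-V^{\otimes 2k}}\leq 2k\opnorm{U-V}$, then $\opnorm{\cdot}\leq\norm{\cdot}_{\Fro}$. The only cosmetic difference is that the paper obtains the telescoping bound by iterating the two-factor inequality $\opnorm{A\otimes B - C\otimes D}\leq\opnorm{A-C}+\opnorm{B-D}$ rather than writing the full $2k$-term telescope in one step, which is the same argument.
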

Given \Cref{claim:lipschitz}, we have that $f/(2k)$ is 1-Lipschitz, and hence for any $k \in \N^+$,
\begin{align} 
\opnorm{\E_{\nu}[\rho_{2^m}^{k,k}(\bg)] - \E_{\SOgp{2^m}}[\rho_{2^m}^{k,k}(\bg)]}
= 2k \left|
\E_{\nu} [f(\bg)/(2k)] - 
\E_{\SOgp{2^m}}[f(\bg)/(2k)] \right | 
&\leq 2k W_{\Fro,1}(\nu,\SOgp{2^m})\label{eq:equation-129-BHH}\\
&\leq 2k W_{\Fro,2}(\nu,\SOgp{2^m}), \label{eq:equation-129-BHH-plus}
\end{align}
where \Cref{eq:equation-129-BHH-plus} holds because $W_{\Fro,1} \leq W_{\Fro,2}$ and \Cref{eq:equation-129-BHH} holds by the Kantorovich-Rubenstein-duality interpretation of Wasserstein $L^1$-distance (see \cite{KR58,Edwards11}), namely
\begin{equation}
\label{eq:kantorovich}
W_{\Fro,1}(\nu_1,\nu_2) =
\sup\left\{
\E_{\nu_1}[f(\bg)] - \E_{\nu_2}[f(\bg)] \ : \ f \text{~is~1-Lipschitz}
\right\}.
\end{equation}
This concludes the proof of \Cref{lem:equation-64-HHJ} modulo the proof of \Cref{claim:lipschitz}.
\qed

\medskip
\noindent \emph{Proof of \Cref{claim:lipschitz}.}
We observe that if $A,B,C,D \in \SOgp{2^m}$ then we have
\begin{align} 
\opnorm{A\otimes B - C \otimes D} &= 
\opnorm{A \otimes (B-D) + (A-C) \otimes D} \nonumber\\
&\leq \opnorm{A\otimes(B-D)} + \opnorm{(A-C) \otimes D} \nonumber\\
&\leq \opnorm{A} \opnorm{B-D} + \opnorm{D} \opnorm{A-C} \nonumber\\
&\leq \opnorm{B-D} + \opnorm{A-C}, \label{eq:pigeon}
\end{align}
where the last line used the fact that since $A,D \in \SOgp{2^m}$ they each have operator norm at most 1.
Iteratively applying \Cref{eq:pigeon}, we get that for $A,B \in \SOgp{2^m}$ we have
\begin{equation}
\label{eq:tensor-opnorm}
\opnorm{A^{\otimes 2k} - B^{\otimes 2k}} \leq 2k\opnorm{A-B}.
\end{equation}
Hence we have
\begin{align} 
|f(A)-f(B)| &=
|\tr((A^{\otimes 2k} - B^{\otimes 2k})X)| \tag{definition of $f$}\nonumber\\
&\leq \|X\|_1 \opnorm{A^{\otimes 2k} - B^{\otimes 2k}} \tag{using $\tr(A^\dagger B) \leq \|A\|_1 \|B\|_\infty$}\nonumber\\
&\leq \opnorm{A^{\otimes 2k} - B^{\otimes 2k}} \tag{since $\|X\|_1 \leq 1$}\nonumber\\
&\leq 2k \opnorm{A - B} \tag{by \Cref{eq:tensor-opnorm}} \nonumber\\
&\leq 2k \cdot d_{\Fro}(A,B). \tag{Frobenius norm dominates operator norm}
\end{align}
This conclude the proof of \Cref{claim:lipschitz}.
\qed

%\subsubsection{Proof of \Cref{lem:eq-90-BHH-analogue-Riemannian} using \Cref{lem:eq-90-BHH-analogue-Frobenius}.}
%
%XXX
%
%\bigskip\bigskip\bigskip\bigskip
%
%

\subsubsection{Proof of \Cref{lem:lemma-8-HHJ-analogue} using \Cref{lem:eq-90-BHH-analogue-Frobenius}.}

Let $\eps>0$ and let $\bX',\bY'$ be the coupled random variables from \Cref{lem:eq-90-BHH-analogue-Frobenius}. Recall from \Cref{eq:max-Frobenius-distance} that for every outcome $(X',Y')$ of $(\bX',\bY')$, we have
\begin{equation}
\label{eq:jello}
d_{\Fro}(X',Y') \leq \eps + O_m(\eps^{3/2}).
\end{equation}
We have that
\begin{align} 
\E[\dRie(\bX',\bY')^2] &\leq \E[d_{\Fro}(\bX',\bY')^2 + 2C_m d_{\Fro}(\bX',\bY')^3
+ C_m^2 d_{\Fro}(\bX',\bY')^4] \label{eq:flood}\\
&\leq \E[d_{\Fro}(\bX',\bY')^2 (1 + O_m(\eps) + O_m(\eps^{3/2}) + O_m(\eps^2) + O_m(\eps^3))] \label{eq:deluge}\\
&\leq (1 + O_m(\eps)) \cdot \left(\eta' \cdot d_{\Fro}(X,Y)^2 + O_{m}(\eps^3)\right) \label{eq:downpour}\\
&\leq (1 + O_m(\eps)) \cdot \eta' \cdot \dRie(X,Y)^2 
+ O_{m}(\eps^3)\label{eq:hardrain},
\end{align}
where \Cref{eq:flood} is by \Cref{eq:equation-10-of-Oliveira}, \Cref{eq:deluge} is by \Cref{eq:max-Frobenius-distance} and \Cref{eq:jello}, and \Cref{eq:downpour} is by \Cref{eq:equation-90-BHH-Frobenius}, and 
\Cref{eq:hardrain} is because $d_{\Fro} \leq \dRie.$
Consequently we have
\begin{equation} \label{eq:gonnafall}
{\frac {\E[\dRie(\bX',\bY')^2]}{\dRie(X,Y)^2}}
\leq (1 + O_m(\eps)) \cdot \eta' + 
{\frac {O_{m}(\eps^3)}{\dRie(X,Y)^2}}
=
(1 + O_m(\eps)) \cdot \eta' + 
O_{m}(\eps),
\end{equation}
where the last equation is because $\dRie(X,Y)=\eps$. Letting $\eps \to 0$ we get \Cref{eq:equation-85-BHH-analogue}, and \Cref{lem:lemma-8-HHJ-analogue} is proved. \qed

\subsection{Proof of \Cref{lem:eq-90-BHH-analogue-Frobenius}.}

Recall that $X$ and $Y$ are two fixed elements of $\SOgp{2^m}$ that satisfy $\dRie(X,Y) = \eps>0$, and that \Cref{lem:eq-90-BHH-analogue-Frobenius} analyzes a coupled pair $(\bX',\bY')$ where the individual distribution of $\bX'$ corresponds to taking two steps of the $\nu_\aux$ random walk starting from $\delta_X$ (and likewise for the individual distribution of $\bY'$).

After two steps of the $\nu_\aux$ random walk starting from $\delta_X$, the distribution of possible outcomes is a uniform mixture of the $m^2$ distributions
\begin{equation}
\label{eq:two-steps-of-X}
\left\{
\tilde{\bO}_{[1,i-1] \cup [i+1,m]}
\bO_{[1,j-1] \cup [j+1,m]} X
\right\}_{i,j\in [m]},
\end{equation}
where $\tilde{\bO}$ and $\bO$ are independent draws from $\SOgp{2^{m-1}}$ and the notation $\tilde{\bO}_S,\bO_S$, for $S \in {[m] \choose m-1}$, is as defined in \Cref{not:gatepositioning}.
The same holds for $Y$:  after two steps of the $\nu_\aux$ random walk starting from $\delta_Y$, the distribution of possible outcomes is a uniform mixture of the $m^2$ distributions
\begin{equation}
\label{eq:two-steps-of-Y}
\left\{
\tilde{\bO}_{[1,i-1] \cup [i+1,m]}
\bO_{[1,j-1] \cup [j+1,m]} Y
\right\}_{i,j\in [m]}.
\end{equation}
To create a nontrivial coupling, we consider the following distribution of outcomes after two steps of the random walk from $\delta_X$:
\begin{equation}
\label{eq:two-steps-of-X-coupled}
\bX' \text{~is a uniform mixture of~}
\left\{
\tilde{\bO}_{[1,i-1] \cup [i+1,m]}
V^{i,j}_{[1,i-1] \cup [i+1,m]}
\bO_{[1,j-1] \cup [j+1,m]} X
\right\}_{i,j\in [m]}
\end{equation}
where $V^{i,j}$ belongs to $\SOgp{2^{m-1}}$ (so $V^{i,j}_{[1,i-1] \cup [i+1,m]}$ is an element of $\SOgp{2^m}$ that may act on all $m$ qubits except the $i$-th one) and may depend on $\bO_{[1,j-1] \cup [j+1,m]}$ (and on $X$ and $Y$).  The distribution $\bY'$ is as described above, i.e.~a uniform mixture of the $m^2$ distributions in \Cref{eq:two-steps-of-Y}. Because of the final multiplication by $\tilde{\bO}_{[1,i-1] \cup [i+1,m]}$ in \Cref{eq:two-steps-of-X-coupled}, the joint distribution defined in \Cref{eq:two-steps-of-Y} and \Cref{eq:two-steps-of-X-coupled} is a valid coupling for any $V^{i,j}_{[1,i-1] \cup [i+1,m]}$ as stated above; the crux of the argument is choosing and analyzing a suitable $V^{i,j}_{[1,i-1] \cup [i+1,m]}$, which we do below.

%\medskip
% {\bf The choice of $V^{i,j}_{[1,i-1] \cup [i+1,m]}$.}
\subsubsection{ The choice of $V^{i,j}_{[1,i-1] \cup [i+1,m]}$.}

Recall from \Cref{eq:equation-90-BHH-Frobenius} that our goal is to upper bound the quantity $\E[d_{\Fro}(\bX',\bY')^2]=\E[\norm{\bX'-\bY'}_F^2]$. We have
\begin{align}
&\E[\norm{\bX'-\bY'}_F^2]\nonumber \\
&=
{\frac 1 {m^2}} \sum_{i,j \in [m]}
\E\left[
\norm{\tilde{\bO}_{[1,i-1] \cup [i+1,m]}V^{i,j}_{[1,i-1] \cup [i+1,m]} \bO_{[1,j-1] \cup [j+1,m]} X
-
\tilde{\bO}_{[1,i-1] \cup [i+1,m]} \bO_{[1,j-1] \cup [j+1,m]} Y}_F^2\right].
\label{eq:HHJ-55}
\end{align}
When $i=j$ we will take $V^{i,i}_{[1,i-1] \cup [i+1,m]}$ to be the identity matrix $\Id$,  so we get that $\E[\norm{\bX'-\bY'}_F^2]$ is equal to
\begin{align} 
&
{\frac 1 {m^2}}
\sum_{i\in [m]}
\E\left[
\norm{\tilde{\bO}_{[1,i-1] \cup [i+1,m]} \bO_{[1,j-1] \cup [j+1,m]} (X-Y)}_F^2\right] \nonumber\\
&+
{\frac 1 {m^2}} \sum_{i \neq j \in [m]}
\E\left[
\norm{\tilde{\bO}_{[1,i-1] \cup [i+1,m]}V^{i,j}_{[1,i-1] \cup [i+1,m]} \bO_{[1,j-1] \cup [j+1,m]} X
-
\tilde{\bO}_{[1,i-1] \cup [i+1,m]} \bO_{[1,j-1] \cup [j+1,m]} Y}_F^2\right].
\label{eq:falafel}
\end{align}
Since $\norm{M}_F^2  =  \norm{LM}_F^2$ for any $2^m \times 2^m$ matrix $M$ and any $L \in \SOgp{2^m}$, we have that \Cref{eq:falafel} is equal to
\begin{align} 
&
{\frac 1 m}
\norm{X-Y}_F^2 \nonumber\\
&+
{\frac 1 {m^2}} \sum_{i \neq j \in [m]}
\E\left[
\norm{V^{i,j}_{[1,i-1] \cup [i+1,m]} \bO_{[1,j-1] \cup [j+1,m]} X
-
 \bO_{[1,j-1] \cup [j+1,m]} Y}_F^2\right].
\label{eq:hummus}
\end{align}
Now we commit to our choice of $V^{i,j}\in \SOgp{2^{m-1}}$ when $i \neq j$: we take each such $V^{i,j}$ to be the element of $\SOgp{2^{m-1}}$ such that $V^{i,j}_{[1,i-1] \cup [i+1,m]}$ minimizes the quantity inside the expectation in \Cref{eq:hummus}. So \Cref{eq:hummus} is equal to
\begin{align}
&{\frac 1 m}
\norm{X-Y}_F^2 \nonumber \\
& +
{\frac 1 {m^2}} \sum_{i \neq j \in [m]}
\E\left[
\min_{V^{i,j} \in \SOgp{2^{m-1}}}
\norm{V^{i,j}_{[1,i-1] \cup [i+1,m]} \bO_{[1,j-1] \cup [j+1,m]} X
- 
 \bO_{[1,j-1] \cup [j+1,m]} Y}_F^2
\right]. \label{eq:tabbouleh}
\end{align}
%where in \Cref{eq:tabbouleh} and subsequently, ``$\min_{V^{i,j}_{[1,i-1] \cup [i+1,m]}}$'' always means ``min over all elements of $\SOgp{2^m}$ that do not act on the $i$-th qubit.''

%\medskip
%{\bf Bounding \Cref{eq:tabbouleh} by analyzing an explicit proxy for each minimizer.}
\subsubsection{An explicit proxy for each minimizer.}
Let us focus in on one of the $m(m-1)$ summands in the above sum (the analysis will generalize straightforwardly to the other summands as well). It will be notationally easiest for us to consider $i=1,j=m$.  In this case the summand we would like to bound is

\begin{equation} \label{eq:i-is-1-j-is-m}
\E\left[
\min_{V^{1,m} \in \SOgp{2^{m-1}}}
\norm{V^{1,m}_{[2,m]} \bO_{[1,m-1]} X
-
 \bO_{[1,m-1]} Y}_F^2
\right].
\end{equation}
We note that for any two special orthogonal matrices $O_1, O_2$ we
have
\begin{equation} \label{eq:SO-factoid}
\norm{O_1 - O_2}^2_F = \tr \parens*{(O_1 - O_2)(O_1 - O_2)^T} = \tr \parens*{O_1O_1^T + O_2O_2^T - O_2O_1^T - O_1O_2^T} = 2(\tr(\Id) - \tr(O_1O_2^T)),
\end{equation}
where we used the fact that $O_iO_i^T = \Id$ and $\tr(O_2O_1^T) =
\tr((O_2O_1^T)^T) = \tr(O_1O_2^T)$. Applying this to \Cref{eq:i-is-1-j-is-m}, we get that for any fixed outcome $O_{[1,m-1]}$ of the random variable $\bO_{[1,m-1]}$, we have
\begin{equation}
\label{eq:shakshuka}
\min_{V^{1,m}}
\norm{V^{1,m}_{[2,m]} O_{[1,m-1]} X
-
 O_{[1,m-1]} Y}_F^2
 =
 2\parens*{\tr(\Id) - \min_{V^{1,m}} \tr\parens*{V^{1,m}_{[2,m]}O_{[1,m-1]}XY^TO_{[1,m-1]}^T}}.
 \end{equation}

We control \Cref{eq:shakshuka} by describing an explicit valid choice for $V^{1,m}_{[2,m]}$, which we denote\rasnote{Not sure yet how our notational dust will settle --- if this $V'$ notation will be confusing vis-a-vis transposing/conjugating/whatever, rename} $V'$, and using the fact that the minimizer does at least as well as our explicit choice. The following notation will help us describe the transformation $V'$:  for $Z \in \SOgp{2^m}$ and $O$ an element of $\SOgp{2^{m-1}}$, we write $\conj_O(Z)$ to denote
\begin{equation} \label{eq:OZ}
\conj_O(Z) := O_{[1,m-1]}ZO_{[1,m-1]}^T,
\end{equation}
and we write $T_O(Z)$ to denote
\begin{equation} \label{eq:TZ}
T_O(Z) := \tr_1(\conj_O(Z)),
\end{equation}
where $\tr_1$ is the partial trace where the first qubit is traced out.
We will also use the facts that (i) if $H$ is a skew-symmetric matrix (meaning $H^T=-H$) then $\exp(H)$ is a special orthogonal matrix, and conversely (ii) every special orthogonal matrix can be written as $\exp(H)$ for some
skew-symmetric matrix.

Let $R$ denote the special orthogonal matrix $R=XY^T$; from (ii) we may write $R=\exp(M)$ for some skew-symmetric matrix $M$.
We recall the fact (Equation~1.30 of \cite{Gan-thesis}) that for special orthogonal matrices $A, B$
that satisfy $\norm{B^{-1}A - \Id}_F < 1$ we have $\dRie(A, B) =
\norm{\log B^{-1}A}_F$.\footnote{Here $\log$ denotes the matrix logarithm:  for $A$ such that $\norm{A-\Id}_F<1$, $\log A$ is defined as
\[
\log A = \sum_{k=1}^\infty (-1)^{k+1} {\frac {(A-\Id)^k}{k}}.
\]
}
In our setting we have that 
$\norm{X^{-1} Y - \Id}_F \leq \norm{Y-X}_F \norm{X^{-1}}_F \leq \norm{Y-X}_F=d_{\Fro}(X,Y) \leq \dRie(X,Y) = \eps,$ so we may apply this fact to get that
\begin{equation} \label{eq:pancake}
\dRie(X,Y)=\norm{\log Y^{-1} X}_F=\norm{\log Y^T X}_F = \norm{M^T}_F = \norm{M}_F = \eps.
\end{equation}
Since $R$ is special orthogonal we may write it as $R=\exp(\eps H)$ where $H$ is skew-symmetric and $\eps H=M$ (since $R=\exp(M)$), and by \Cref{eq:pancake} we have that $\norm{H}_F = 1.$

Now we define
\begin{equation} \label{eq:def-of-S}
S := \exp(\eps c T_O(H)),
\end{equation}
where we will fix the (scalar) value $c$ later; observe that $T_O(H)$, and hence also $S$, acts only on qubits 2 through $m$.  We define $V'$ to be the transformation
\begin{equation}
\label{eq:def-of-Vprime}
V' := S_{[2,m]}.
\end{equation}
(i.e.~$V'$ is obtained by viewing $S$ as an element of $\SOgp{2^m}$ acting on qubits $2,\dots,m$).

In order for \Cref{eq:def-of-Vprime} to be a valid choice, we must show that it belongs to $\SOgp{2^m}$, which is equivalent to showing that $S$ is special orthogonal; this in turn is equivalent to showing that $T_O(H)$ is skew-symmetric. We have
\begin{align} 
  T_O(H)^T &= \parens*{\tr_1 \conj_O(H)}^T = \tr_1 \parens*{\parens*{O_{[1,m-1]}HO_{[1,m-1]}^T}^T} = \tr_1 \parens*{O_{[1,m-1]}H^TO_{[1,m-1]}^T} \nonumber \\
  &= -\tr_1 \parens*{O_{[1,m-1]}HO_{[1,m-1]}^T} = -T_O(H), \label{eq:T-of-H-is-skew-symmetric}
\end{align}
so $T_O(H)$ is indeed skew-symmetric.

\subsubsection{Bounding the explicit proxy for a minimizer.}

With $V'$ in hand, referring back to \Cref{eq:shakshuka}, we would like to analyze the quantity
\begin{equation} \label{eq:lasagna}
\tr\parens*{V' \conj_O(R)}.
\end{equation}
Let us write\rasnote{check if this notation is defined earlier} $\angles{A,  B}$ to denote $\tr(AB^T).$ We observe that
\begin{equation} \label{eq:cavatelli}
  \tr\parens*{V'\conj_O(R)} = \angles*{(S \otimes \Id_1)^T, \conj_O(R)} =
  \angles*{(\exp(-\eps c T_O(H)) \otimes \Id_1, \conj_O(R)} =
  \angles*{\exp(-\eps c T_O(H)), T_O(R)},
\end{equation}
where the middle equality uses $T_O(H)^T=-T_O(H)$ and the last equality follows from the definition of tracing out the first qubit.
%\rasnote{this justification okay?}
By a Taylor expansion of the matrix exponential in the RHS of \Cref{eq:cavatelli}, we may write
\begin{equation} \label{eq:taylor-expansion-1}
\exp(-\eps c T_O(H)) =
\Id_{[2,m]} - \eps c T_O(H) + \frac{\eps^2}{2} (c T_O(H))^2 + O_m(\eps^3)
\end{equation}
where we abuse notation and write ``$O_m(\eps^3)$'' to denote some matrix $\Sigma$ that satisfies $\norm{\Sigma}_F = O_m(\eps^3)$.
%\rasnote{not super-important, but maybe there's some notation other than ``$O_m(\eps^3)$'' that could help the reader see visually when it's a matrix and when it's a number?} 
We perform a similar Taylor expansion of $T_O(R)=T_O(\exp(\eps H))$ and get that
\begin{equation} \label{eq:taylor-expansion-2}
T_O(R) =
T_O(\Id)+ \eps T_O(H) + \frac{\eps^2}{2} T_O(H^2) + O_m(\eps^3) 
=
2 \Id_{[2,m]}+ \eps T_O(H) + \frac{\eps^2}{2} T_O(H^2) + O_m(\eps^3).
\end{equation}
Substituting \Cref{eq:taylor-expansion-1} and \Cref{eq:taylor-expansion-2} into \Cref{eq:cavatelli}, we get that
\begin{equation} \label{eq:gnocchi}
  \tr\parens*{V'\conj_O(R)} 
  =\angles*{
  \Id_{[2,m]} - \eps c T_O(H) + \frac{\eps^2}{2} (c T_O(H))^2 + O_m(\eps^3),
  2 \Id_{[2,m]}+ \eps T_O(H) + \frac{\eps^2}{2} T_O(H^2) + O_m(\eps^3)
  }.
  \end{equation}
  We expand the terms that arise in the above inner product as follows:
  
\begin{itemize}
\item $\angles*{\Id_{[2,m]}, 2\Id_{[2,m]}} = 2^m$;
\item $\angles*{\Id_{[2,m]}, \eps T_O(H)} = \eps \tr(H) = 0$ (since $H$ is
  skew-symmetric we  have $\tr(H)=-\tr(H)=0$);
\item $\angles*{\Id_{[2,m]}, \frac{\eps^2}{2} T_O(H^2)} =
%\rasnote{should we add more justification for this equality?} 
\frac{\eps^2}{2} \tr(H^2) = 
-{\frac {\eps^2} 2} \tr(HH^T)= -{\frac {\eps^2} 2}\norm{H}_F^2=-{\frac {\eps^2} 2}$;
\item $\angles*{-\eps c T_O(H), 2\Id_{[2,m]}} = -2 \eps c \tr(H) = 0$;
\item %$\angles*{-\eps c T_O(H), \eps T_O(H)} = -\eps^2 c \trOp{\tr(H)^2}$,
$\angles*{-\eps c T_O(H), \eps T_O(H)} = -\eps^2 c \tr(T_O(H)^2)$;
\item $\angles*{\frac{\eps^2}{2} (c T_O(H))^2, 2\Id_{[2,m]}} = \eps^2 c^2 \tr(T_O(H)^2)$;
\end{itemize}
and all remaining terms can be collected in the $O_m(\eps^3)$ term. Combining these equalities, we have
\begin{equation} \label{eq:ravioli}
  \tr\parens*{V'\conj_O(R)} 
  =2^m - \frac{\eps^2}{2}  -\eps^2 c \tr(T_O(H)^2) + \eps^2 c^2 \tr(T_O(H)^2) + O_m(\eps^3).
  \end{equation}
  Since $T_O(H)$ is skew-symmetric $\tr(T_O(H)^2)$ is non-positive, so to maximize \Cref{eq:ravioli} we pick $c=1/2$, and we get that
\begin{equation} \label{eq:cavatappi}
  \tr\parens*{V'\conj_O(R)} 
  = 2^m - \frac{\eps^2}{2} \parens*{1 +  {\frac 1 2} \tr(T_O(H)^2)} + O_m(\eps^3).
  \end{equation}
Returning to \Cref{eq:shakshuka} and observing that the $\tr(\Id)$ in \Cref{eq:shakshuka} is $2^m$, this gives that
\begin{equation} 
\label{eq:fusilli}
(\ref{eq:shakshuka}) \leq
\eps^2 \left(1 +  {\frac 1 2} \tr(T_O(H)^2)\right) \pm O_m(\eps^3).
\end{equation}
%(note that since $H$ is skew-symmetric the quantity $- \eps^2 \tr(H^2)$ is non-negative). 

Having a handle on \Cref{eq:shakshuka} for a fixed $O=O_{[1,m-1]}$ is nice, but recall that our actual goal is to upper bound \Cref{eq:i-is-1-j-is-m}, which is the expectation of \Cref{eq:shakshuka} over a random $\bO=\bO_{[1,m-1]}$.
To do this we will use the following claim, which we prove later:

\begin{claim} \label{claim:lollipop}
\begin{equation} \label{eq:lollipop}
\Ex_{\bO}[\tr(T_{\bO}(H)^2)] 
\leq -c' := -{\frac {2^{2m-2} + 2^{m-1} - 6}{2^{2m-1} + 2^m - 4}},
%=\blue{\overbrace{-{\frac {c'} 2} \ignore{\tr(H^2)}}^{=A} + \overbrace{{\frac {c'} 2}\tr((H_{123} \otimes H_{\overline{123}})\mathbb{F}_{1:\overline{1}} \otimes \mathbb{I}_{23,\overline{23}})}^{=B} \overbrace{- \parens*{1- c'} \parens*{{\frac 1 {d^2}}\sum_{j,k} \sum_{i,\ell} H_{ijk,\ell jk}^2}}^{=C}},
\end{equation}
where the expectation is over $\bO$ distributed as a Haar-random element of $\SOgp{2^{m-1}}$.
\end{claim}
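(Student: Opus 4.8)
The plan is to prove \Cref{claim:lollipop} by the Schur--Weyl (Weingarten) computation already developed in \Cref{prop:reptheory}; indeed, after relabeling qubit $1$ and qubit $m$, \Cref{eq:lollipop} is essentially the special-orthogonal instance of \Cref{lem:yummy2}.

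First I would reduce the target. Since $T_{\bO}(H)$ is real skew-symmetric (\Cref{eq:T-of-H-is-skew-symmetric}), $\tr(T_{\bO}(H)^2)=-\tr(T_{\bO}(H)T_{\bO}(H)^{T})=-\norm{T_{\bO}(H)}_{\Fr}^2$, so \Cref{eq:lollipop} is equivalent to $\Ex_{\bO}[\norm{T_{\bO}(H)}_{\Fr}^2]\ge c'$. Write $\bar m=\{1,\dots,m-1\}$ and $K=\{2,\dots,m-1\}$, so $\bO=\bO_{[1,m-1]}$ acts on $\bar m$ and leaves qubit $m$ untouched. Using the ``swap trick'' for partial traces (as in the manipulations of \Cref{eqn:s1,eqn:s2}) together with skew-symmetry of $\conj_{\bO}(H)$, one gets $\norm{T_{\bO}(H)}_{\Fr}^2=-\tr\bigl((H\otimes H)\,\mathcal O^{T}(\mathrm{SWAP}_{\bar m,\bar m'}\otimes\Id_{m,m'})\,\mathcal O\bigr)$ with $\mathcal O=\bO_{[1,m-1]}\otimes\bO_{[1,m-1]}$ and $\mathrm{SWAP}_{\bar m,\bar m'}=\mathrm{SWAP}_{K,K'}\cdot\mathrm{SWAP}_{\{1\},\{1'\}}$. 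Averaging over $\bO$ and pulling the $m$th qubit out (since $\mathcal O$ does not touch it) gives $\Ex_{\bO}[\norm{T_{\bO}(H)}_{\Fr}^2]=-\tr\bigl((H\otimes H)\,(S_0\otimes\mathrm{SWAP}_{m,m'})\bigr)$, where $S_0:=\Ex_{\bO}\bigl[\mathcal O\,(\Id_{\{1\},\{1'\}}\otimes\mathrm{SWAP}_{K,K'})\,\mathcal O^{T}\bigr]$ is an operator on two copies of $\mathbb{C}^{2^{m-1}}$ — precisely the quantity ``$S_0$'' of \Cref{prop:reptheory}, with the roles of qubit $m$ there and qubit $1$ here interchanged.

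Then compute $S_0$ by Schur--Weyl, exactly as in \Cref{prop:reptheory}: by \Cref{fact:vectorization}, $\mathrm{vec}(S_0)=\Ex_{\bO}[\rho^{2,2}_{2^{m-1}}(\bO)]\cdot\mathrm{vec}(\Id_{\{1\},\{1'\}}\otimes\mathrm{SWAP}_{K,K'})$, and by \Cref{thm:sherman} (using $2^{m-1}>4$, i.e.\ $m\ge4$; the single case $m=3$ can be checked by hand) $\Ex_{\bO}[\rho^{2,2}_{2^{m-1}}(\bO)]$ is the projector onto the span of the three matching vectors, so $S_0=c_2Q_2+c_3Q_3+c_4Q_4$ with the same non-negative constants $c_2,c_3,c_4$ of \Cref{prop:reptheory}. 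Plugging this in and pairing with $H\otimes H$: the $Q_4$-term reassembles into the full swap and contributes $-\tr(H^2)=\norm{H}_{\Fr}^2=1$; the $Q_3$-term equals $-\tr\bigl((\tr_{\bar m}H)^2\bigr)=\norm{\tr_{\bar m}H}_{\Fr}^2\ge0$ (again using skew-symmetry); and the $Q_2$-term is $-c_2\sum_{z,w,a,b}H_{(wa)(zb)}H_{(wb)(za)}$, which by Cauchy--Schwarz is at least $-c_2\norm{H}_{\Fr}^2=-c_2$. Collecting, $\Ex_{\bO}[\norm{T_{\bO}(H)}_{\Fr}^2]\ge c_4-c_2+c_3\norm{\tr_{\bar m}H}_{\Fr}^2\ge c_4-c_2=\tfrac{2^{m-1}-2}{2(2^{m-1}-1)}$.

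The main obstacle is precisely the last constant-chasing step. The routine Cauchy--Schwarz bound above produces $c_4-c_2=\tfrac{1-2^{2-m}}{2-2^{2-m}}$, which is exactly the constant of \Cref{lem:yummy2} and is already bounded away from $0$ for $m\ge4$ (so it suffices for everything downstream); obtaining the slightly larger value $c'=c_4$ claimed would require a sharper treatment of the $Q_2$-cross-term that squeezes more out of the skew-symmetry of $H$, or a different organization of the calculation. I would therefore either settle for the weaker-but-sufficient constant $\tfrac{1-2^{2-m}}{2-2^{2-m}}$ (which is what later parts of the argument actually use), or invest effort in sharpening the cross-term bound; deciding exactly which constant is achievable is the delicate point, and everything else is bookkeeping already carried out in \Cref{prop:reptheory}.
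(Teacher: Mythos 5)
Your derivation takes exactly the route the paper uses and cites the very lemmas it is built on: express the quantity as $-\tr\bigl((H\otimes H)(S_0\otimes\mathrm{SWAP})\bigr)$, decompose $S_0 = c_2 Q_2 + c_3 Q_3 + c_4 Q_4$ via \Cref{prop:reptheory} and \Cref{thm:sherman}, use skew-symmetry to evaluate the $Q_4$ term as $c_4\norm{H}_{\Fr}^2=c_4$, note the $Q_3$ term is $c_3\norm{\tr_{\bar m}H}_{\Fr}^2\ge 0$, and bound the $Q_2$ cross-term. The Cauchy--Schwarz bound on $Q_2$ gives $c_4-c_2 = \tfrac{M/2-1}{M-1}$ where $M=2^{m-1}$, and you correctly identify that the stated $c'$ would require more: a short computation shows $c' = \tfrac{M^2+M-6}{2(M^2+M-2)} = \tfrac{(M/2-1)(M+3)}{(M-1)(M+2)} = c_4$ exactly, so \Cref{claim:lollipop} as written would follow only if the $Q_2$ cross-term could be dropped entirely, i.e.\ only if $\tr\bigl((H\otimes H)(Q_2\otimes\mathrm{SWAP})\bigr)\le 0$ for every real skew-symmetric $H$.

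That last inequality is false, and the constant $c'=c_4$ is in fact unattainable; your hesitation at the end was the right instinct, and ``settling'' for $c_4-c_2$ is not a compromise but the correct answer. Concretely, take $H = A\otimes\ket{0}\!\bra{0}$, with $A$ any real skew-symmetric $2^{m-1}\times 2^{m-1}$ matrix of unit Frobenius norm placed on the $m-1$ qubits that $\bO$ acts on and $\ket{0}\!\bra{0}$ on the remaining qubit. Then $H$ is skew-symmetric with $\norm{H}_{\Fr}=1$; the $Q_3$ contribution vanishes because $\tr_{\bar m}H = \tr(A)\,\ket{0}\!\bra{0} = 0$; and the $Q_2$ cross-term is $\tr\bigl((H\otimes H)(Q_2\otimes\mathrm{SWAP})\bigr) = D\,\bra{\Phi}(A\otimes A)\ket{\Phi}\cdot 1 = \norm{A}_{\Fr}^2 = 1$, saturating Cauchy--Schwarz with the unfavorable sign. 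Hence $\Ex_{\bO}[\tr(T_{\bO}(H)^2)] = c_2 - c_4 = -(c_4-c_2)$, which is strictly greater than $-c'=-c_4$ for all $m\ge 3$ since $c_2>0$. (The paper's own proof of the stated constant argues that the $Q_2$ term is pointwise sign-definite, but the step that applies $\Pi_I$ keeps a single diagonal basis vector rather than producing the full $\sum_w\ket{w,w}$; once that is corrected, sign-definiteness is lost and one is back to Cauchy--Schwarz.) So the bound you derive, $c_4-c_2 = \tfrac{1-2^{2-m}}{2-2^{2-m}}$, is sharp; it is precisely the constant in \Cref{lem:yummy2} and what \Cref{thm:small-m} actually consumes, and there is nothing further to sharpen.
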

\Cref{claim:lollipop} gives that
%\rasnote{Some of the following may need to change slightly depending on how \Cref{eq:lollipop} shakes out}Since $B,C \leq 0$, from \Cref{eq:lollipop} we get that $\Ex_{\bO}[\tr(T_{\bO}(H)^2)] \leq - {\frac {c'} 2}$, and consequently we have that
\begin{equation}
\label{eq:pappardelle}
(\ref{eq:i-is-1-j-is-m})
= \E_{\bO}[(\ref{eq:shakshuka})] \leq \eps^2 \left(1 - {\frac {c'} 2}\right) +O_m(\eps^3).
\end{equation}
Recalling \Cref{eq:HHJ-55} through \Cref{eq:tabbouleh}, we see that
\begin{align}
\E[\norm{\bX'-\bY'}_F^2]
&\leq
{\frac 1 m} \norm{X-Y}_F^2 + \left(1 - {\frac 1 m} \right)\eps^2\left(1  -  {\frac {c'} 2}\right) + O_m(\eps^3)\label{eq:manicotti}\\
&\leq
\eps^2 \left(1 - \left(1 - {\frac 1 m}\right){\frac {c'} 2} \right) + O_m(\eps^3)
\label{eq:orzo}\\
&= \eta' \cdot \eps^2  + O_m(\eps^3) \label{eq:fettuccine},
\end{align}
where \Cref{eq:orzo} uses $\norm{X-Y}_F^2 \leq \eps^2.$ We have almost reached \Cref{eq:equation-90-BHH-Frobenius}; it remains to observe that since
$d_{\Fro}(X,Y)=\norm{X-Y}_F  \leq \dRie(X,Y)=\eps,$ by \Cref{eq:equation-10-of-Oliveira} we have
$
\eps \leq \norm{X-Y}_F + C_m\eps^2$, so
\begin{equation}
\label{eq:gemelli}
\eps^2 \leq \left( \norm{X-Y}_F + C_m\eps^2\right)^2
= \norm{X-Y}_F^2 + O_m(\eps^3).
\end{equation}
Combining this with \Cref{eq:fettuccine} we get \Cref{eq:equation-90-BHH-Frobenius}.

It remains to establish \Cref{eq:max-Frobenius-distance}, namely that every outcome $(X',Y')$ of $(\bX',\bY')$ satisfies $\norm{X'- Y'}_F^2 \leq \eps^2 + O_m(\eps^3).$ Following \Cref{eq:HHJ-55} through \Cref{eq:tabbouleh}, we see that
\begin{equation} \label{eq:every-outcome}
\norm{X'-Y'}_F^2 = 
{\frac 1 m} \norm{X-Y}_F^2 + {\frac 1 {m^2}} \sum_{i \neq j \in [m]} \min_{V^{i,j}_{[1,i-1] \cup [i+1,m]}}
\norm{V^{i,j}_{[1,i-1] \cup [i+1,m]} O_{[1,j-1] \cup [j+1,m]} X
-
 O_{[1,j-1] \cup [j+1,m]} Y}_F^2.
\end{equation}
From \Cref{eq:fusilli} and the fact that $\tr(T_O(H)^2)$ is non-positive, we have that 
\begin{equation}
\label{eq:ziti}
(\ref{eq:shakshuka}) \leq 
%\eps^2 \tr(H^2) \pm O_m(\eps^3) = \eps^2 \|H\|_F^2  \pm O_m(\eps^3) =
\eps^2 \pm O_m(\eps^3),
\end{equation}
so substituting back into \Cref{eq:every-outcome} we get that
\begin{equation} \label{eq:spaghetti}
\norm{X'-Y'}_F^2 
\leq
{\frac 1 m} \norm{X-Y}_F^2 +
{\frac {m^2 - m}{m^2}} \left(\eps^2 \pm O_m(\eps^3)\right)
\leq \eps^2 \pm O_m(\eps^3),
\end{equation}
where for the last inequality we used that $\norm{X-Y}_F^2 = d_{\Fro}(X,Y)^2 \leq \dRie(X,Y)^2 = \eps^2.$
This concludes the proof of \Cref{lem:eq-90-BHH-analogue-Frobenius}, modulo the proof of \Cref{claim:lollipop}. \qed

\medskip

\subsubsection{Proof of \Cref{claim:lollipop}.}
Our goal is to analyze the expectation of
\begin{equation} \label{claim:lollipop-goal}
\tr\left(T_{\bO}(H)^2\right) =
\tr\left(\left(\tr_1\left(\bO_{[1,m-1]}H\bO_{[1,m-1]}^T\right)\right)^2\right).
\end{equation}
We begin by recalling (see Equation~(102) of \cite{BHH16}) that for any $m$-qubit operator $C \in \SOgp{2^m}$, we have\rasnote{Do we want to write a justification of this? There are notes on it from our 08-08 Slack, I can write a proof/justification if we think we should include one. A justification would be helpful for a reader like me who wanted to verify the whole argument in a more self-contained way, but it might also be ``obvious'' to more sophisticated readers (\cite{BHH16} stated the 3-qubit analogue without proof).}
\begin{equation} \label{eq:BHH-102}
\tr\left(\tr_1(C)^2\right) =
\tr
\left(
\left(C_{[1,m]} \otimes C_{[m+1,2m]}\right)
\left(\mathbb{F}_{[2,m]:[m+2,2m]}\right)
\right)
\end{equation}
where the trace on the right-hand size is over a $2m$-qubit system and $\mathbb{F}_{[2,m]:[m+2,2m]}$ is the $2m$-qubit operator which swaps qubits $[2,m]$ with qubits $[m+2,2m]$ (leaving qubits $1$ and $m+1$ unchanged). Taking $C=\bO_{[1,m-1]}H\bO_{[1,m-1]}$, we get that
\begin{equation}
\label{eq:towards-BHH-103}
\tr\left(T_{\bO}(H)^2\right)
=
\tr
\left(
\left(
\bO_{[1,m-1]}H_{[1,m]}\bO_{[1,m-1]}^T
\otimes 
\bO_{[m+1,2m-1]}H_{[m+1,2m]}\bO_{[m+1,2m-1]}^T
 \right)
\left(\mathbb{F}_{[2,m]:[m+2,2m]}\right)
\right),
\end{equation}
and by the cyclic property of the trace, we get that
\begin{equation}
\label{eq:BHH-103}
\tr\left(T_{\bO}(H)^2\right)
=
\tr
\left(
\left(
H_{[1,m]} \otimes H_{[m+1,2m]}
\right)
\left(
\bO_{[1,m-1]}^T \otimes \bO_{[m+1,2m-1]}^T
\right)
\left(\mathbb{F}_{[2,m]:[m+2,2m]}\right)
\left(
\bO_{[1,m-1]} \otimes \bO_{[m+1,2m-1]}
\right)
\right)
\end{equation}
The following quantum circuit diagram illustrates the $2m$-qubit system whose trace is being taken on the RHS of \Cref{eq:BHH-103}.

  \myfig{0.5}{qcircuit.pdf}{\label{figure:circuit}}{}

Regarding the expectation, from \Cref{eq:BHH-103} we have
\begin{align}
&\E_{\bO}[\tr\left(T_{\bO}(H)^2\right)]\nonumber \\
&=
\tr
\left(
\left(
H_{[1,m]} \otimes H_{[m+1,2m]}
\right)
\Ex_{\bO}\left[
\left(
\bO_{[1,m-1]}^T \otimes \bO_{[m+1,2m-1]}^T
\right)
\left(\mathbb{F}_{[2,m]:[m+2,2m]}\right)
\left(
\bO_{[1,m-1]} \otimes \bO_{[m+1,2m-1]}
\right)\right]
\right),\label{eq:apple}
\end{align}
so we would like to analyze the expectation on the RHS of \Cref{eq:apple}.
We tackle this by analyzing the following closely related expectation over a $(2m-2)$-qubit system with qubits $1,\dots,m-1,m+1,\dots,2m-1$ (i.e. omitting qubits $m$ and $2m$)
\begin{equation} \label{eq:BHH-104-analogue}
\Ex_{\bO \sim \SOgp{2^{m-1}}}\left[
\left(
\bO_{[1,m-1]}^T \otimes \bO_{[m+1,2m-1]}^T
\right)
\left(\mathbb{F}_{[2,m-1]:[m+2,2m-1]}\right)
\left(
\bO_{[1,m-1]} \otimes \bO_{[m+1,2m-1]}
\right)\right]
\end{equation}
(as \Cref{figure:circuit} suggests, it will be straightforward later to relate \Cref{eq:BHH-104-analogue} to the expectation in \Cref{eq:apple}).

To analyze \Cref{eq:BHH-104-analogue} we recall (a corollary of) Schur's lemma for the special orthogonal group (see \cite{Schur-Wikipedia}), which says that if $\rho$ is an irreducible representation of $\SOgp{2^{m-1}}$ on a vector space $V$ and $X$ is a linear mapping of $V$ into $V$, then
\begin{equation}
\label{eq:raw-Schur}
\E_{\bO \sim \SOgp{2^{m-1}}}
\left[\rho(\bO)^{-1} X \rho(\bO)\right] 
= {\frac {\tr(X)}{\dim(V)}} \Id.
\end{equation}
%\red{The relevant representation for \Cref{eq:apple} is the tensor product representation,}\rasnote{Is this right? I am not sure we should be thinking about the tensor here} but this is not irreducible. 
%\rasnote{This paragraph may be spelling things out in too much detail but I wanted to be sure things type-checked; let me know if there's something wrong here!} 
In the setting of \Cref{eq:BHH-104-analogue} the vector space $V$ is the space of real $2^{m-1} \times 2^{m-1}$ matrices. For $O \in \SOgp{2^{m-1}}$ we have that $\rho$ is the tensor representation\rasnote{Is this a good name/notation for this?} $\rho(O)=O_{[1,m-1]} \otimes O_{[m+1,2m-1]}$, which we view as a $2^{2m-2} \times 2^{2m-2}$ matrix which is a linear map from $V$ to $V$ (note that an element of $V$ is a $2^{2m-2}$-dimensional object); the linear map $\rho(O)$ acts on a $2^{m-1} \times 2^{m-1}$ matrix $\Sigma \in V$ by sending it to $O^T \Sigma O.$
The role of ``$X$'' is played by $\mathbb{F}_{[2,m-1]:[m+2,2m-1]}$, which, we remind the reader, in this setting is an operator on the $2m-2$ qubits $[2m] \setminus \{m,2m\}$ (i.e.~a $2^{2m-2} \times 2^{2m-2}$ matrix, which we view as a linear mapping from $V$ into itself) which happens to leave qubits $1$ and $m+1$ untouched. 

We recall (see e.g. Section~IV.1 of \cite{Zee16}) that the tensor representation $\rho$ is not irreducible; rather, it decomposes into three irreducible representations,
\begin{equation}
\label{eq:decomposition}
\rho = \rho_I \oplus \rho_S \oplus \rho_A,
\end{equation}
corresponding to the decomposition of the (real) vector space $V$ of $2^{m-1} \times 2^{m-1}$ matrices into three subspaces $I \oplus S \oplus A$, where $I$ is the 1-dimensional subspace of real multiples of the $2^{m-1} \times 2^{m-1}$ identity matrix, $S$ is the $(2^{m-1}(2^{m-1} + 1)/2 - 1)$-dimensional subspace of real symmetric traceless $2^{m-1} \times 2^{m-1}$ matrices, and $A$ is the $(2^{m-1}(2^{m-1} - 1)/2 - 1)$-dimensional subspace of real antisymmetric (hence also traceless) $2^{m-1} \times 2^{m-1}$ matrices.
(It is easy to verify that if $\Sigma \in V$ is a $2^{m-1} \times 2^{m-1}$ matrix that belongs to any one of these three subspaces, then the result of $\rho(O)$'s action on $\Sigma$, namely $O^T \Sigma O$, lies in the same subspace; for example, if $\Sigma = c \Id$, then we have $O^T \Sigma O = c O^T \Id O = c \Id = \Sigma.$) We write $\Pi_I, \Pi_S$ and $\Pi_A$ to denote the projectors onto these three subspaces.

Applying Schur's lemma to each of these three irreducible representations, we get that for a general $X$, 
\begin{equation} \label{eq:application-of-Schur}
(\ref{eq:raw-Schur})
= 
{\frac {\tr(X \Pi_I)}{1}} \Pi_I 
+
{\frac {\tr(X \Pi_S)}{2^{m-1}(2^{m-1} + 1)/2 - 1}} \Pi_S 
+
{\frac {\tr(X \Pi_A)}{2^{m-1}(2^{m-1} - 1)/2 }} \Pi_A.
\end{equation}
Viewed as a $2^{2m-2} \times 2^{2m-2}$ matrix, this is  a diagonal matrix with three diagonal  blocks of sizes 1, $2^{m-1}(2^{m-1} + 1)/2 - 1$ and $2^{m-1}(2^{m-1} - 1)/2$ respectively.
Below we first give a description of the projectors $\Pi_I,\Pi_S$ and $\Pi_A$, and then use this description to analyze \Cref{eq:application-of-Schur} for the specific $X=\mathbb{F}_{[2,m-1]:[m+2,2m-1]}$ of \Cref{eq:BHH-104-analogue}.

%Since $X$ is a $2^{2m-2} \times 2^{2m-2}$ matrix), it operates on a ``vector'' of dimension $2^{2m-2}$; we should think of it as operating on a $2^{m-1} \times 2^{m-1}$ matrix $\Sigma.$
As stated above, the projections $\Pi_I, \Pi_S, \Pi_A$ correspond to decomposing a $2^{m-1} \times 2^{m-1}$ matrix $\Sigma$ into 
\begin{equation}
\label{eq:sigma-decomposition}
\Sigma = \Sigma_I + \Sigma_S + \Sigma_A
\end{equation} 
where $\Sigma_I$ is a scalar multiple of the identity matrix, $\Sigma_S$ is symmetric and traceless, and $\Sigma_A$ is antisymmetric and traceless.
Let us consider each of these projections in turn:

\begin{itemize}

\item The projector $\Pi_A$ that maps $\Sigma$ to $\Sigma_A$ is the mapping that takes $\Sigma$ to ${\frac 1 2} \left(\Sigma - \Sigma^T\right)$. Since the operator that takes $\Sigma$ to $\Sigma^T$ is $\mathbb{F}_{[1,m-1]:[m+1,2m-1]}$ (because swapping the first $m-1$ qubits with the second $m-1$ qubits corresponds to exchanging rows with columns), we have that $\Pi_A = {\frac 1 2} \left(\Id - \mathbb{F}_{[1,m-1]:[m+1,2m-1]}\right).$

\item The projector $\Pi_I$ that maps $\Sigma$ to $\Sigma_I$ is the mapping that takes $\Sigma$ to ${\frac {\tr(\Sigma)}{\tr(\Id)}} \Id$ (note that here ``$\Id$'' is the $2^{m-1} \times 2^{m-1}$ identity matrix).

\item Finally, the projector $\Pi_S$ that maps $\Sigma$ to $\Sigma_S$ is the mapping that takes $\Sigma$ to ${\frac 1 2} \left(\Sigma + \Sigma^T\right) - {\frac {\tr(\Sigma)}{\tr(\Id)}} \Id$, i.e. $\Pi_S = {\frac 1 2} \left(\Id + \mathbb{F}_{[1,m-1]:[m+1,2m-1]}\right) - \Pi_I.$

\end{itemize}

Now recall that the $X$ we are concerned with is $X=\mathbb{F}_{[2,m-1]:[m+2,2m-1]},$ and recall that this is a $2^{2m-2} \times 2^{2m-2}$ matrix. In the next three claims we give expressions for the three traces in the RHS of \Cref{eq:application-of-Schur}.

\begin{claim} \label{claim:trace-of-XPi_A}
$\tr(\mathbb{F}_{[2,m-1]:[m+2,2m-1]} \Pi_A)= 2^{m-1} - 2^{2m-4}$.
\end{claim}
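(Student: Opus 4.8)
The plan is to compute $\tr(\mathbb{F}_{[2,m-1]:[m+2,2m-1]} \cdot \Pi_A)$ directly, using the two-term expression $\Pi_A = \tfrac12(\Id - \mathbb{F}_{[1,m-1]:[m+1,2m-1]})$ established just above the claim. By linearity this splits the trace into two pieces: $\tfrac12 \tr(\mathbb{F}_{[2,m-1]:[m+2,2m-1]})$ and $-\tfrac12 \tr(\mathbb{F}_{[2,m-1]:[m+2,2m-1]} \cdot \mathbb{F}_{[1,m-1]:[m+1,2m-1]})$. Both are traces of permutation-type operators on a $(2m-2)$-qubit system, so each equals $2$ raised to the number of ``free'' qubits (i.e.\ the number of cycles, counting fixed qubits as length-one cycles, in the qubit-permutation they induce).

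First I would handle $\tr(\mathbb{F}_{[2,m-1]:[m+2,2m-1]})$. This operator swaps the $m-2$ qubits in $\{2,\dots,m-1\}$ with the $m-2$ qubits in $\{m+2,\dots,2m-1\}$, and leaves qubits $1$ and $m+1$ fixed. On $2m-2$ qubits total, this qubit-permutation has $m-2$ transpositions (each contributing a factor $\tr(\mathrm{SWAP})=2$ on a $2$-qubit block) plus $2$ fixed qubits (each contributing $\tr(\Id_2)=2$); hence the trace is $2^{(m-2)} \cdot 2^{2} = 2^{m}$. Wait — more carefully: each $2$-qubit swap block has trace $2$ (since $\mathrm{SWAP}$ on $\C^2\otimes\C^2$ has trace $2$), and there are $m-2$ such blocks, and each of the $2$ untouched qubits contributes a factor $2$; so the trace is $2^{m-2}\cdot 2^{2} = 2^{m}$. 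So $\tfrac12 \tr(\mathbb{F}_{[2,m-1]:[m+2,2m-1]}) = 2^{m-1}$.

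Next I would compute $\tr(\mathbb{F}_{[2,m-1]:[m+2,2m-1]} \cdot \mathbb{F}_{[1,m-1]:[m+1,2m-1]})$. Composing the two qubit-permutations: $\mathbb{F}_{[1,m-1]:[m+1,2m-1]}$ swaps $i \leftrightarrow i+m$ for all $i \in \{1,\dots,m-1\}$, while $\mathbb{F}_{[2,m-1]:[m+2,2m-1]}$ swaps $i \leftrightarrow i+m$ for $i \in \{2,\dots,m-1\}$. Their product therefore acts trivially on qubits $\{2,\dots,m-1\}$ and on $\{m+2,\dots,2m-1\}$ (the two swaps cancel there), and swaps only the pair $\{1, m+1\}$. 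So as a permutation of $2m-2$ qubits it is a single transposition plus $2m-4$ fixed qubits: $2m-4$ fixed qubits give $2^{2m-4}$ and the one transposition gives $2$, for a total trace of $2 \cdot 2^{2m-4} = 2^{2m-3}$. Hence $-\tfrac12 \tr(\mathbb{F}_{[2,m-1]:[m+2,2m-1]} \cdot \mathbb{F}_{[1,m-1]:[m+1,2m-1]}) = -2^{2m-4}$. Combining, $\tr(\mathbb{F}_{[2,m-1]:[m+2,2m-1]} \Pi_A) = 2^{m-1} - 2^{2m-4}$, as claimed. The only real care needed is bookkeeping: being precise about which qubits are touched by which operator, and remembering that on the $(2m-2)$-qubit space the relevant trace identity is $\tr(\text{permutation operator}) = 2^{\#\text{cycles}}$ with fixed points counted as cycles; I expect that careful accounting — rather than any conceptual difficulty — is the main thing to get right.
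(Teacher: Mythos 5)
Your proof is correct and matches the paper's own argument: both expand $\Pi_A = \tfrac12(\Id - \mathbb{F}_{[1,m-1]:[m+1,2m-1]})$, observe that $\mathbb{F}_{[2,m-1]:[m+2,2m-1]}\cdot\mathbb{F}_{[1,m-1]:[m+1,2m-1]} = \mathbb{F}_{1:m+1}$, and evaluate the two resulting traces as $2^{m}$ and $2^{2m-3}$ by counting the qubits that are freely summed over (the paper phrases this as ``a swap exchanging $i$ pairs of qubits has trace $2^{2m-2-i}$,'' which is the same as your cycle count).
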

\begin{proof}
We have that
\begin{align} 
\tr(\mathbb{F}_{[2,m-1]:[m+2,2m-1]} \Pi_A)&= {\frac 1 2} \left(\tr\left(\mathbb{F}_{[2,m-1]:[m+2,2m-1]} - 
\mathbb{F}_{[1,m-1]:[m+1,2m-1]} \mathbb{F}_{[2,m-1]:[m+2,2m-1]} \right)
\right) \nonumber \\
&= {\frac 1 2} \tr\left(\mathbb{F}_{[2,m-1]:[m+2,2m-1]}\right) - {\frac 1 2} \tr\left( 
\mathbb{F}_{1:m+1}\right) = 2^{m-1} - 2^{2m-4},
\label{eq:trace-XPi_A} 
\end{align}
where for the final equality we used the fact that any swap $\mathbb{F}_{\cdot,\cdot}$ that exchanges $0 \leq i \leq m-1$ of the first $m-1$ qubits with $i$ of the last $m-1$ qubits has trace $2^{2m-2-i}.$ (To see this, observe that given a $(2m-2)$-bit string $(a_1, \dots a_{m-1}, b_1,\dots,b_{m-1})$ indexing a row of an $i$-bit swap matrix such as $\mathbb{F}_{[1,i]:[m+1,m+i]}$, the $((a_1, \dots a_{m-1}, b_1,\dots,b_{m-1}),(a_1, \dots a_{m-1}, b_1,\dots,b_{m-1}))$-entry of $\mathbb{F}_{[1,i]:[m+1,m+i]}$ is 1 if and only if $a_j=b_j$ for $j \in [i]$; this holds for a $1/2^i$ fraction of all $2^{2m-2}$ strings $(a_1, \dots a_{m-1}, b_1,\dots,b_{m-1}).$)
\end{proof}

\begin{claim} \label{claim:trace-of-XPi_I}
$\tr(\mathbb{F}_{[2,m-1]:[m+2,2m-1]} \Pi_I)= 1.$
\end{claim}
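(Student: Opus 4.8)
The plan is to use the explicit description of $\Pi_I$ given just above the claim, namely that $\Pi_I$ is the $2^{2m-2} \times 2^{2m-2}$ matrix implementing the linear map $\Sigma \mapsto \frac{\tr(\Sigma)}{\tr(\Id)}\,\Id$ on $2^{m-1}\times 2^{m-1}$ matrices $\Sigma$. Under the standard vectorization/matricization identification (cf.\ \Cref{fact:vectorization} and the discussion around \Cref{not:1}), this map is the rank-one operator $\Pi_I = \frac{1}{2^{m-1}}\,\mathrm{vec}(\Id)\,\mathrm{vec}(\Id)^\dagger$ acting across the two $(m-1)$-qubit systems $[1,m-1]$ and $[m+1,2m-1]$; equivalently $\Pi_I = \frac{1}{2^{m-1}}\sum_{x,y}\ket{x,x}\!\bra{y,y}$ where $x,y$ range over $\{0,1\}^{m-1}$. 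This is manifestly a rank-one projector ($\tr\Pi_I = 1$), consistent with the $1$-dimensional subspace $I$.

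Next I would compute the trace directly. We have
\begin{equation}
\tr\!\left(\mathbb{F}_{[2,m-1]:[m+2,2m-1]}\,\Pi_I\right) = \frac{1}{2^{m-1}}\sum_{x,y\in\{0,1\}^{m-1}} \bra{y,y}\,\mathbb{F}_{[2,m-1]:[m+2,2m-1]}\,\ket{x,x}.
\end{equation}
Writing $x = (x_1,x')$ and $y = (y_1,y')$ with $x_1,y_1 \in \{0,1\}$ and $x',y'\in\{0,1\}^{m-2}$, the swap $\mathbb{F}_{[2,m-1]:[m+2,2m-1]}$ sends $\ket{x_1,x',x_1,x'}$ (first component on qubits $[1,m-1]$, second on $[m+1,2m-1]$) to $\ket{x_1,x',x_1,x'}$ since swapping the $x'$ block between the two copies leaves it unchanged when both copies carry the same $x'$. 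Hence $\bra{y,y}\mathbb{F}\ket{x,x} = \braket{y_1,y'|x_1,x'}\cdot\braket{y_1,y'|x_1,x'} = \mathbb{1}[x=y]$, so the double sum collapses to $\sum_x 1 = 2^{m-1}$, giving $\tr(\mathbb{F}_{[2,m-1]:[m+2,2m-1]}\,\Pi_I) = \frac{1}{2^{m-1}}\cdot 2^{m-1} = 1$.

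The only subtlety — and the thing I'd be most careful about — is bookkeeping which qubits are acted on and the ordering conventions: $\Pi_I$ pairs qubit $\ell$ with qubit $m+\ell$ for $\ell\in[1,m-1]$, while $\mathbb{F}_{[2,m-1]:[m+2,2m-1]}$ swaps qubit $\ell$ with qubit $m+\ell$ only for $\ell\in[2,m-1]$, leaving qubits $1$ and $m+1$ alone. The point is that on the image of $\Pi_I$ every basis vector has its $[1,m-1]$ block equal to its $[m+1,2m-1]$ block, so any partial swap between corresponding qubits of the two blocks is the identity on that image; thus $\mathbb{F}_{[2,m-1]:[m+2,2m-1]}\,\Pi_I = \Pi_I$ and the trace is just $\tr\Pi_I = 1$. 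This observation actually gives the cleanest one-line proof: $\mathbb{F}_{[2,m-1]:[m+2,2m-1]}$ fixes $\mathrm{vec}(\Id)$ pointwise, hence $\mathbb{F}_{[2,m-1]:[m+2,2m-1]}\Pi_I = \Pi_I$, so the trace equals $\tr\Pi_I = 1$. I would present this slick version and relegate the index computation to a sentence of justification.
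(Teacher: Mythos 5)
Your proof is correct and rests on the same key observation as the paper's: the partial swap $\mathbb{F}_{[2,m-1]:[m+2,2m-1]}$ acts trivially on any basis vector of the form $\ket{x,x}$ (equivalently, it fixes $\mathrm{vec}(\Id)$), so only the identity of $\Pi_I$'s image contributes and the trace collapses to $\tr\Pi_I = 1$. The paper reaches the same conclusion by expanding the trace over the computational basis $\ket{\overline{a},\overline{b}}$ and noting $\Pi_I$ annihilates off-diagonal pairs; your ``slick version'' that $\mathbb{F}_{[2,m-1]:[m+2,2m-1]}\Pi_I = \Pi_I$ since $\mathbb{F}$ fixes $\mathrm{vec}(\Id)$ pointwise is a cleaner operator-level repackaging of the same computation, and your index bookkeeping is correct.
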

\begin{proof}
The matrix $\mathbb{F}_{[2,m-1]:[m+2,2m-1]} \Pi_I$ is a $2^{2m-2} \times 2^{2m-2}$ matrix  whose rows are indexed by pairs $(\overline{a},\overline{b})$ where $\overline{a}=(a_1,\dots,a_{m-1}) \in \{0,1\}^{m-1}$ and $\overline{b}=(b_1,\dots,b_{m-1}) \in \{0,1\}^{m-1}$ (and the columns have the same indexing).
We have
\begin{equation} \label{eq:trace-of-XPi_I}
\tr(\mathbb{F}_{[2,m-1]:[m+2,2m-1]} \Pi_I)
=
\sum_{\overline{a},\overline{b}} \langle \overline{a},\overline{b}| \mathbb{F}_{[2,m-1]:[m+2,2m-1]} \Pi_I |\overline{a}, \overline{b} \rangle;
\end{equation}
viewing $|\overline{a},\overline{b}\rangle$ as a $2^{m-1} \times 2^{m-1}$ matrix (a matrix with a single 1-entry that is in row $\overline{a}$ and column $\overline{b}$) and recalling the definition of $\Pi_I$, we see that $\Pi_I$ acts on $|\overline{a},\overline{b}\rangle$ by sending it to 0 if $\overline{a} \neq \overline{b}$ and sending it to ${\frac 1 {2^{m-1}}} |\overline{a},\overline{a} \rangle$ if $\overline{a}=\overline{b}$. Hence we have that
\begin{equation}
\label{eq:martello}
\langle \overline{a},\overline{b}| \mathbb{F}_{[2,m-1]:[m+2,2m-1]} \Pi_I |\overline{a}, \overline{b} \rangle
= \delta_{\overline{a}=\overline{b}} \cdot {\frac 1 {2^{m-1}}} \cdot
\langle \overline{a},\overline{a}| \mathbb{F}_{[2,m-1]:[m+2,2m-1]} |\overline{a}, \overline{a} \rangle
\end{equation}
and hence
\begin{align}
(\ref{eq:trace-of-XPi_I}) &=
{\frac 1 {2^{m-1}}} \sum_{\overline{a}}
\langle \overline{a},\overline{a}| \mathbb{F}_{[2,m-1]:[m+2,2m-1]} |\overline{a}, \overline{a} \rangle
= 
{\frac 1 {2^{m-1}}} \sum_{\overline{a}}
\langle \overline{a},\overline{a}|\overline{a}, \overline{a} \rangle = 1,
\end{align}
where the penultimate equality is because swapping any subset of bits in the first copy of $\overline{a}$ with the corresponding subset of bits in the second copy of $\overline{a}$ leaves $|\overline{a},\overline{a}\rangle$ unchanged.
\end{proof}

\begin{claim} \label{claim:trace-of-XPi_S}
$\tr(\mathbb{F}_{[2,m-1]:[m+2,2m-1]} \Pi_S)= 2^{m-1} + 2^{2m-4} - 1.$
\end{claim}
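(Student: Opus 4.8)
The plan is to avoid dealing with $\Pi_S$ directly and instead exploit the fact that the three projectors sum to the identity on the representation space, i.e.\ $\Pi_I + \Pi_S + \Pi_A = \Id$ (here $\Id$ is the identity on the $2^{2m-2}$-dimensional space $V$ of real $2^{m-1}\times 2^{m-1}$ matrices). Writing $X = \mathbb{F}_{[2,m-1]:[m+2,2m-1]}$ for brevity, this immediately gives
\begin{equation}
\tr(X\,\Pi_S) = \tr(X) - \tr(X\,\Pi_I) - \tr(X\,\Pi_A),
\end{equation}
so the claim reduces to evaluating $\tr(X)$ and invoking the two preceding claims.

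First I would record $\tr(X)$ using the swap-trace identity already established in the proof of \Cref{claim:trace-of-XPi_A}: a swap exchanging $i$ of the first $m-1$ qubits with $i$ of the last $m-1$ qubits of the $(2m-2)$-qubit system has trace $2^{2m-2-i}$. Since $X = \mathbb{F}_{[2,m-1]:[m+2,2m-1]}$ exchanges the $m-2$ qubits $2,\dots,m-1$ with the $m-2$ qubits $m+2,\dots,2m-1$, we get $\tr(X) = 2^{2m-2-(m-2)} = 2^m$. Then I would substitute \Cref{claim:trace-of-XPi_I} (which gives $\tr(X\,\Pi_I) = 1$) and \Cref{claim:trace-of-XPi_A} (which gives $\tr(X\,\Pi_A) = 2^{m-1} - 2^{2m-4}$) to obtain
\begin{equation}
\tr(X\,\Pi_S) = 2^m - 1 - \bigl(2^{m-1} - 2^{2m-4}\bigr) = 2^{m-1} + 2^{2m-4} - 1,
\end{equation}
as claimed.

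If one prefers not to use $\Pi_I + \Pi_S + \Pi_A = \Id$, the same answer follows from the explicit formula $\Pi_S = \tfrac12\bigl(\Id + \mathbb{F}_{[1,m-1]:[m+1,2m-1]}\bigr) - \Pi_I$ recorded just before the claim: the only new quantity needed is $\tr\bigl(X\,\mathbb{F}_{[1,m-1]:[m+1,2m-1]}\bigr)$, and since composing $X$ with $\mathbb{F}_{[1,m-1]:[m+1,2m-1]}$ cancels the swaps on qubits $2,\dots,m-1$ and leaves the single swap $\mathbb{F}_{1:m+1}$, the swap-trace identity gives $\tr\bigl(X\,\mathbb{F}_{[1,m-1]:[m+1,2m-1]}\bigr) = \tr(\mathbb{F}_{1:m+1}) = 2^{2m-3}$, whence $\tr(X\,\Pi_S) = \tfrac12(2^m + 2^{2m-3}) - 1 = 2^{m-1} + 2^{2m-4} - 1$, matching the first computation. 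I expect essentially no obstacle here: the statement is a short bookkeeping consequence of the two preceding claims together with the elementary swap-trace identity, and the only point requiring mild care is the qubit count, namely that the ambient system has $2m-2$ qubits (not $2m$), so an $i$-pair swap has trace $2^{2m-2-i}$.
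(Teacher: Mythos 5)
Your primary argument (using $\Pi_I + \Pi_S + \Pi_A = \Id$, so that $\tr(X\Pi_S) = \tr(X) - \tr(X\Pi_I) - \tr(X\Pi_A)$) is correct and is a genuinely different route from the paper's. The paper instead substitutes the explicit formula $\Pi_S = \tfrac12\bigl(\Id + \mathbb{F}_{[1,m-1]:[m+1,2m-1]}\bigr) - \Pi_I$ and expands the trace term by term, which is precisely your second, alternative computation, including the identification $\mathbb{F}_{[1,m-1]:[m+1,2m-1]}\,\mathbb{F}_{[2,m-1]:[m+2,2m-1]} = \mathbb{F}_{1:m+1}$ (with trace $2^{2m-3}$). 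Your first route is a touch cleaner because it recycles the already-established value of $\tr(X\Pi_A)$ from \Cref{claim:trace-of-XPi_A} instead of re-deriving the swap product; both rest on the same elementary swap-trace fact (an $i$-pair swap on the ambient $(2m-2)$-qubit system has trace $2^{2m-2-i}$), and your qubit count $\tr(X) = 2^{2m-2-(m-2)} = 2^m$ is correct. No gap.
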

\begin{proof}
Recalling that $\Pi_S = {\frac 1 2} \left(\Id + \mathbb{F}_{[1,m-1]:[m+1,2m-1]}\right) - \Pi_I$, we have
\begin{align}
\tr(\mathbb{F}_{[2,m-1]:[m+2,2m-1]} \Pi_S) &=
{\frac 1 2}\tr\left(\mathbb{F}_{[2,m-1]:[m+2,2m-1]}\right)
+ {\frac 1 2} \tr\left( \mathbb{F}_{1:m+1}\right)
- \tr\left(\mathbb{F}_{[2,m-1]:[m+2,2m-1]} \Pi_I \right) \nonumber\\
&= 2^{m-1} + 2^{2m-4} - 1,
\end{align}
where the last equality is by \Cref{eq:trace-XPi_A}  and \Cref{claim:trace-of-XPi_I}.
\end{proof}

For conciseness let us define $M:=2^{m-1}.$ Recalling \Cref{eq:BHH-104-analogue} and \Cref{eq:application-of-Schur} and putting the pieces together, by \Cref{claim:trace-of-XPi_A},  \Cref{claim:trace-of-XPi_I} and \Cref{claim:trace-of-XPi_S} we have that
\begin{align}
(\ref{eq:BHH-104-analogue}) 
% &=
 %\Pi_I + {\frac {2^{m-1} + 2^{2m-4} - 1}{2^{m-1}(2^{m-1} + 1)/2 - 1}} \Pi_S 
 %+ {\frac {2^{m-1} - 2^{2m-4}}{2^{m-1}(2^{m-1} - 1)/2 }} \Pi_A \nonumber\\
 &=
 \Pi_I + {\frac {M + M^2/4 - 1}{M(M + 1)/2 - 1}} \Pi_S 
 + {\frac {M - M^2/4}{M(M - 1)/2 }} \Pi_A \nonumber\\
 &= \Pi_I + {\frac {M + M^2/4 - 1}{M(M + 1)/2 - 1}}\left({\frac 1 2} \left(\Id + \mathbb{F}_{[1,m-1]:[m+1,2m-1]}\right) - \Pi_I\right)
 +  {\frac {M - M^2/4}{M(M - 1)/2 }}\left({\frac 1 2} \left(\Id - \mathbb{F}_{[1,m-1]:[m+1,2m-1]}\right)\right) \nonumber\\
 &= {\frac {M^2 - 2M}{2M^2 + 2M - 4}} \Pi_I + 
 {\frac {3M+2}{2M^2 + 2M - 4}} \Id +
 {\frac {M^2 + M - 6}{2M^2 + 2M - 4}} \mathbb{F}_{[1,m-1]:[m+1,2m-1]}, \label{eq:plum}
 \end{align}
 and hence by \Cref{eq:apple} we have
 \begin{align}
 \E_{\bO}[\tr\left(T_{\bO}(H)^2\right)]
&= \tr
\left(
\left(
H_{[1,m]} \otimes H_{[m+1,2m]}
\right)
\left(
\mathbb{F}_{m:2m} \otimes (\ref{eq:plum})
\right)
\right) \nonumber \\
&=
{\frac {M^2 - 2M}{2M^2 + 2M - 4}} \tr \left(
\left(
H_{[1,m]} \otimes H_{[m+1,2m]}
\right) 
\left(\mathbb{F}_{m:2m} \otimes 
(\Pi_I)_{[1,m-1] \cup [m+1,2m-1]}
 \right)\right) \label{eq:pear1}\\
& \ \ \ + {\frac {3M+2}{2M^2 + 2M - 4}}  \tr
\left(
\left(
H_{[1,m]} \otimes H_{[m+1,2m]}
\right) 
\mathbb{F}_{m:2m}\right)  \label{eq:pear2} \\
& \ \ \ + {\frac {M^2 + M - 6}{2M^2 + 2M - 4}} \tr
\left(
\left(
H_{[1,m]} \otimes H_{[m+1,2m]}
\right) 
\mathbb{F}_{[1,m]:[m+1,2m]}\right)
 \label{eq:pear3}
\end{align}

We first show that $(\ref{eq:pear1}) \leq 0$; for notational convenience we write
$\Pi'_I$ to denote $(\Pi_I)_{[1,m-1] \cup [m+1,2m-1]}$.

%\rasnote{check - this seems to be what \cite{BHH16} have (see their Equation~105) but if $H$ were the identity then wouldn't we have $\tr
%\left(
%\left(
%H_{[1,m]} \otimes H_{[m+1,2m]}
%\right) 
%\mathbb{F}_{[1,m]:[m+1,2m]}\right)=
%\tr\left(\Id \mathbb{F}_{[1,m]:[m+1,2m]}\right)=\tr\left(\mathbb{F}_{[1,m]:[m+1,2m]}\right)=2^m$? Oh wait, $\tr(H^2)$ is $2^m$ if $H$ is the $2^m \times 2^m$ identity, so this synchs up.
% }
 %The skew-symmetry of $H$ implies that $\tr
%\left(
%\left(
%H_{[1,m]} \otimes H_{[m+1,2m]}
%\right) 
%\mathbb{F}_{m:2m}\right) \leq 0,$ \rasnote{more justification?} so 

\begin{align}
(\ref{eq:pear1}) 
&= \tr\left(\left(
H_{[1,m]} \otimes H_{[m+1,2m]}
\right) 
\left(\mathbb{F}_{m:2m} \otimes \Pi'_I \right)\right) \nonumber \\
&=\sum_{\overline{a},\overline{b} \in \zo^m} 
\langle \overline{a}, \overline{b}|
(H_{\overline{a}} \otimes H_{\overline{b}})
(\mathbb{F}_{m:2m} \otimes \Pi'_I) |\overline{a},\overline{b}\rangle \nonumber \\
&=\sum_{\overline{a},\overline{b} \in \zo^m} 
\langle \overline{a}, \overline{b}|
(H_{\overline{a}} \otimes H_{\overline{b}})
\Pi'_I |(a_1,\dots,a_{m-1},b_m),(b_1,\dots,b_{m-1},a_m)\rangle \label{eq:peach}.
\end{align}
Recalling that $\Pi'_I |(a_1,\dots,a_{m-1},b_m),(b_1,\dots,b_{m-1},a_m)\rangle= {\frac 1 {2^{m-1}}} \delta_{(a_1,\dots,a_{m-1})=(b_1,\dots,b_{m-1})}|(a_1,\dots,a_{m-1},b_m),\overline{a}\rangle$, we have that
(\ref{eq:peach}) is equal to
\begin{align}
& {\frac 1 {2^{m-1}}}\sum_{\overline{a} \in \zo^m,b_m \in \zo}
\langle\overline{a},(a_1,\dots,a_{m-1},b_m)|
(H_{\overline{a}} \otimes H_{\overline{b}})|(a_1,\dots,a_{m-1},b_m),\overline{a})\rangle \nonumber \\
&=
 {\frac 1 {2^{m-1}}}\sum_{\overline{a} \in \zo^m,b_m \in \zo}
\langle\overline{a}| H |(a_1,\dots,a_{m-1},b_m)\rangle \cdot
\langle (a_1,\dots,a_{m-1},b_m)|
H|\overline{a}\rangle \nonumber\\
&=
{\frac 1 {2^{m-1}}}\sum_{\overline{a} \in \zo^m,b_m \in \zo}
-\left(\langle\overline{a}| H |(a_1,\dots,a_{m-1},b_m)\rangle \right)^2 \leq 0,
\end{align}
 where the last equality is by skew-symmetry of $H$.
 
 Next we show that $(\ref{eq:pear2}) \leq 0.$ For brevity let us write $\overline{a}$ to denote $(a_1,\dots,a_{m-1})$ and likewise $\overline{b}$ to denote $(b_1,\dots,b_{m-1})$. We have 
\begin{align}
(\ref{eq:pear2}) &= \tr
\left(
\left(
H_{[1,m]} \otimes H_{[m+1,2m]}
\right) 
\mathbb{F}_{m:2m}\right)\nonumber \\
&=\sum_{\overline{a},\overline{b} \in \zo^{m-1},a,b \in \zo} 
\langle \overline{a},a, \overline{b},b|
(H_{\overline{a},a} \otimes H_{\overline{b},b})
\mathbb{F}_{m:2m} |\overline{a},a,\overline{b},b\rangle \nonumber \\
&=\sum_{\overline{a},\overline{b},a,b} 
\langle \overline{a},a, \overline{b},b|
(H_{\overline{a},a} \otimes H_{\overline{b},b})
 |\overline{a},b,\overline{b},a\rangle \nonumber\\
 &= \sum_{\overline{a},\overline{b},a,b}
 \langle  \overline{a},a| H |\overline{a},b\rangle \cdot
\langle \overline{b},b|
H|\overline{b},a\rangle \nonumber\\
&= -\sum_{\overline{a},\overline{b},a,b}
 \langle  \overline{a},a| H |\overline{a},b\rangle \cdot
\langle \overline{b},a|
H|\overline{b},b\rangle \tag{skew-symmetry of $H$}\nonumber\\
 &= -\sum_{\overline{a},\overline{b}}
\left( 
\langle  \overline{a},0| H |\overline{a},0\rangle \cdot \langle \overline{b},0| H|\overline{b},0\rangle  +
\langle  \overline{a},0| H |\overline{a},1\rangle \cdot \langle \overline{b},0| H|\overline{b},1\rangle  \right. + \nonumber\\
&  \  \  \  \ \ \ \ \ \ \ \ \ 
\left. \langle  \overline{a},1| H |\overline{a},0\rangle \cdot \langle \overline{b},1| H|\overline{b},0\rangle +
\langle  \overline{a},1| H |\overline{a},1\rangle \cdot \langle \overline{b},1| H|\overline{b},1\rangle 
\right).
 \end{align}
Recalling that diagonal entries of $H$ are 0 because $H$ is skew-symmetric, we see that the above equals
\begin{align}
& -\sum_{\overline{a},\overline{b}}
\langle  \overline{a},0| H |\overline{a},1\rangle \cdot \langle \overline{b},0| H|\overline{b},1\rangle  
+ \langle  \overline{a},1| H |\overline{a},0\rangle \cdot \langle \overline{b},1| H|\overline{b},0\rangle \nonumber\\
&=
-2 \sum_{\overline{a},\overline{b}}
\langle  \overline{a},0| H |\overline{a},1\rangle \cdot \langle \overline{b},0| H|\overline{b},1\rangle  \tag{skew-symmetry of $H$} \nonumber\\
&=
-2 \left(\sum_{\overline{a}}
\langle  \overline{a},0| H |\overline{a},1 \rangle\right)^2 \leq 0.
\end{align}
 
% 
% \begin{align}
%&\tr
%\left(
%\left(
%H_{[1,m]} \otimes H_{[m+1,2m]}
%\right) 
%\mathbb{F}_{m:2m}\right)\nonumber \\
%&=\sum_{\overline{a},\overline{b} \in \zo^m} 
%\langle \overline{a}, \overline{b}|
%(H_{\overline{a}} \otimes H_{\overline{b}})
%\mathbb{F}_{m:2m} |\overline{a},\overline{b}\rangle \nonumber \\
%&=\sum_{\overline{a},\overline{b} \in \zo^m} 
%\langle \overline{a}, \overline{b}|
%(H_{\overline{a}} \otimes H_{\overline{b}})
% |(a_1,\dots,a_{m-1},b_m),(b_1,\dots,b_{m-1},a_m)\rangle \nonumber\\
% &= \sum_{\overline{a},\overline{b} \in \zo^m}
% \langle\overline{a}| H |(a_1,\dots,a_{m-1},b_m)\rangle \cdot
%\langle \overline{b}|
%H|(b_1,\dots,b_{m-1},a_m)\rangle \nonumber\\
% &= - \sum_{\overline{a},\overline{b} \in \zo^m}
% \langle\overline{a}| H |(a_1,\dots,a_{m-1},b_m)\rangle \cdot
%\langle (b_1,\dots,b_{m-1},a_m)|
%H|\overline{b}\rangle \nonumber\\
% & \text{\blue{what's the easiest way to see this is non-positive?}}
% \end{align}
Finally we turn to $(\ref{eq:pear3}).$
  \begin{align}
\tr
\left(
\left(
H_{[1,m]} \otimes H_{[m+1,2m]}
\right) 
\mathbb{F}_{[1,m]:[m+1,2m]}\right)
&=\sum_{\overline{a},\overline{b} \in \zo^m} 
\langle \overline{a}, \overline{b}|
(H_{\overline{a}} \otimes H_{\overline{b}})
\mathbb{F}_{[1,m]:[m+1,2m]} |\overline{a},\overline{b}\rangle \nonumber \\
&=\sum_{\overline{a},\overline{b} \in \zo^m} 
\langle \overline{a}, \overline{b}|
(H_{\overline{a}} \otimes H_{\overline{b}})
 |\overline{b},\overline{a}\rangle \nonumber\\
 &= \sum_{\overline{a},\overline{b} \in \zo^m}
 \langle\overline{a}| H |\overline{b}\rangle \cdot
\langle \overline{b}|
H|\overline{a}\rangle \nonumber\\
&=  \sum_{\overline{a},\overline{b} \in \zo^m} - \left(\langle \overline{a}|H|\overline{b}\rangle\right)^2 = -\|H\|_F^2 = -1. 
\label{eq:watermelon}
 \end{align}
 Combining \Cref{eq:pear1} through \Cref{eq:watermelon} and the fact that $m \geq 3$, this completes the proof of \Cref{claim:lollipop}. \qed
 
 }

%!TEX root = main.tex

\section{Pseudorandom products of operators} \label{sec:pseudorandom-walks}
In this section we generalize the ``derandomized squaring'' technique of Rozenman and Vadhan~\cite{RV05} so that it may be applied to random walks on groups, where the goal is to show rapid mixing of a particular representation.
We remark that the proofs are not really different from those in~\cite{RV05}, and that a similar generalization appeared recently in~\cite{JMRW22}.

\begin{notation}
    Throughout we will be considering noncommutative polynomials, with real coefficients, over symbols $u_1, \dots, u_{\c}, u_1^\dagger, \dots, u_{\c}^\dagger$.
    (These symbols will eventually be substituted by square matrices.)
    If $p$ is such a polynomial, its \emph{adjoint} $p^\dagger$ is formed in the natural way (i.e., $(u_i^\dagger)^\dagger = u_i$ and  $(u_i u_j)^\dagger = u_j^\dagger u_i^\dagger$, etc.), and we call $p$ \emph{self-adjoint} if $p^\dagger = p$.
\end{notation}
\begin{notation}
    We will also consider \emph{polynomial sequences}~$S = (s_1, \dots, s_m)$, where each $s_i$ is a polynomial in the $u_j$'s.  (Usually $s_i$ will in fact be a \emph{monomial}.)
\end{notation}
\begin{notation}
    If $\calU = (U_1, \dots, U_{\c})$ is a sequence of matrices, we write $S(\calU) = (s_1(\calU), \dots, s_m(\calU))$, where $s_j(\calU)$ is the matrix resulting from substituting $u_i = U_i$ for each~$i \in [\c]$.  
\end{notation}
\begin{notation}
    Given a polynomial sequence~$S$ we write $\avg(S)$, or $\avg \circ S$, for the polynomial $\frac{1}{m} \sum_{j=1}^m s_j$.
\end{notation}
\begin{definition}
    If $p$ is a polynomial over $u_1, \dots, u_{\c}$, we define
    \begin{equation}
        \|p\| = \sup_r \{\opnorm{p(\calU)} : \calU = (U_1, \dots, U_{\c}),\ U_j \in \C^{r \times r},\ \opnorm{U_j} \leq 1\ \forall j\},
    \end{equation}
    the largest operator norm that  $p$ can achieve when $u_1, \dots, u_{\c}$ are substituted with square matrices of bounded operator norm.
    More generally, if $S = (s_1, \dots, s_m)$ is a sequence of  polynomials we write $\|S\| = \max(\|s_1\|, \dots, \|s_m\|)$.
\end{definition}
\begin{definition}
    A \emph{directed graph} $G = (V,E)$ will consist of a finite \emph{sequence} of vertices~$V$, and a finite \emph{sequence} of edges~$E$ from $V \times V$ (so parallel edges and self-loops are allowed).
    Such a graph is \emph{undirected} if $E$ can be partitioned into pairs of the form $\{(i,j), (j,i)\}$. % (this being a singleton set ``self-loop'' if $i = j$).  
    %\emph{All graphs will be assumed undirected unless otherwise specified, but the $E$ for such graphs will still consist of directed edges.} 
    We say $G$ is \emph{$d$-out-regular} if for each $i \in V$ we have exactly $d$ elements of the form $(i,j)$ in~$E$; one can analogously define in-regularity, and the two concepts are the same for undirected graphs.  
    Note that if $G$ is an undirected $d$-regular graph,  then $|E| = d|V|$ (contrary to usual convention, as $E$ is still composed of directed edges).
\end{definition}
\begin{definition}  \label{def:q}
    Given a graph $G = (V,E)$, where $V = (1, 2, \dots, m)$, and given a polynomial sequence $S = (s_1, \dots, s_m)$, we define $q_G \circ S$ to be the polynomial sequence\footnote{One would have hoped for the more natural-looking ordering $s_i^\dagger s_j$, but alas we are forced to follow standard conventions: the length-$2$ path $(i,j)$ means ``first~$i$, then~$j$''; but, with operators acting on the left, $s_i^\dagger s_j$ means ``first do $s_j$, then do $s_i^\dagger$''.}
    \begin{equation}
            (s_j^\dagger s_i)_{(i,j) \in E}.
    \end{equation}
\end{definition}
\begin{remark}
    If $G$ is undirected, the polynomial $\avg(q_G \circ S)$ is self-adjoint.
\end{remark}
% \begin{notation}
%     Let $G$ be a graph with vertices~$V$ and edges~$E$.  Let $H$ be another graph with vertices~$E$ and edges~$F$.  
%     Then, extending \Cref{def:q}, we may define the composition
%    $q_H \circ q_G$.
%    E.g., if $V = (1, 2, \dots, n)$, then the sequence $q_H \circ q_G(m_1, \dots, m_n)$ has an element for each $((i,j),(k,\ell)) \in F$, namely $m_k^\dagger m_\ell m_j^\dagger m_i$.
% \end{notation}
\begin{fact}    \label{fact:quadratic}
    We always have $\|q_G \circ S\| \leq \|S\|^2$, and hence $\|S\| \leq 1 \implies \|q_G \circ S\| \leq 1$.
\end{fact}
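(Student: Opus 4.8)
\textbf{Proof plan for Fact~\ref{fact:quadratic}.}

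The plan is to unwind the definitions and reduce everything to the submultiplicativity of the operator norm together with the defining property of $\|\cdot\|$ as a supremum over bounded-operator-norm substitutions. First I would recall that $q_G \circ S$ is the polynomial sequence $(s_j^\dagger s_i)_{(i,j)\in E}$ from \Cref{def:q}, so that $\|q_G \circ S\| = \max_{(i,j)\in E} \|s_j^\dagger s_i\|$. Thus it suffices to show $\|s_j^\dagger s_i\| \leq \|s_i\| \cdot \|s_j\|$ for each pair, since then $\max_{(i,j)} \|s_j^\dagger s_i\| \leq \max_i \|s_i\| \cdot \max_j \|s_j\| = \|S\|^2$.

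For the bound $\|s_j^\dagger s_i\| \leq \|s_i\|\cdot\|s_j\|$, I would fix any $r$ and any substitution $\calU = (U_1,\dots,U_\c)$ with each $U_t \in \C^{r\times r}$ of operator norm at most $1$. Then $(s_j^\dagger s_i)(\calU) = s_j(\calU)^\dagger \, s_i(\calU)$ — here one uses that forming the adjoint of a polynomial commutes with substitution of matrices, i.e.\ $(s_j^\dagger)(\calU) = (s_j(\calU))^\dagger$, which is immediate from the definition of $p^\dagger$. By submultiplicativity of the operator norm and the fact that taking adjoints preserves operator norm,
\begin{equation}
  \opnorm{s_j(\calU)^\dagger \, s_i(\calU)} \leq \opnorm{s_j(\calU)^\dagger}\cdot \opnorm{s_i(\calU)} = \opnorm{s_j(\calU)}\cdot\opnorm{s_i(\calU)} \leq \|s_j\|\cdot\|s_i\|,
\end{equation}
where the last step uses the definition of $\|\cdot\|$ applied to $s_i$ and $s_j$ individually. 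Taking the supremum over all $r$ and all such $\calU$ gives $\|s_j^\dagger s_i\| \leq \|s_i\|\cdot\|s_j\|$, and hence $\|q_G\circ S\| \leq \|S\|^2$. The second assertion is then immediate: if $\|S\| \leq 1$ then $\|q_G \circ S\| \leq \|S\|^2 \leq 1$.

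There is essentially no obstacle here; the only point requiring a line of care is the interchange of ``take adjoint of the polynomial'' with ``substitute matrices,'' which follows by induction on the structure of the monomial/polynomial from the base cases $(u_t^\dagger)^\dagger = u_t$ and the anti-homomorphism property $(pq)^\dagger = q^\dagger p^\dagger$. This is routine enough that I would state it in one sentence rather than spell out the induction.
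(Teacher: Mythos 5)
Your proof is correct, and it supplies the straightforward argument that the paper leaves implicit (the paper states this as a ``Fact'' with no proof). You reduce to $\|s_j^\dagger s_i\| \leq \|s_j\|\cdot\|s_i\|$ via submultiplicativity of the operator norm, the adjoint-preserving property of the operator norm, and the compatibility of $(\cdot)^\dagger$ with substitution, which is exactly the intended justification.
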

\begin{definition}
    If $G = (V,E)$ is a $d$-out-regular directed graph with $V = (1, 2, \dots, \c)$, then the normalized adjacency matrix of~$G$ is
    \begin{equation}    \label{eqn:adj}
        A_G \coloneqq \frac{1}{d} \sum_{(i,j) \in E} \ket{j}\!\bra{i} = \c \cdot \avg(q_G \circ \calU), \quad \text{where } \calU = (\bra{1}, \dots, \bra{\c}).
    \end{equation}
%    Here in $q_G$ we're substituting  $u_j$ with the $j$th standard basis row vector, $\bra{j} \in \C^n$.
\end{definition}
% \begin{fact}
%     Let $G = (V,E)$ be a $d$-regular graph with $V = (1, 2, \dots, n)$ and let 
%     $U_1, \dots U_n \in \C^{r \times r}$.
%     Suppose we ``weight'' each edge $(i,j) \in E$ by the matrix $U_j^\dagger U_i$.  Then the resulting normalized adjacency matrix is
%     \begin{equation}
%         \diag(U)^\dagger \cdot (A_G \otimes \Id_{r \times r}) \cdot \diag(U),
%     \end{equation}
%     where $\diag(U) = \sum_{j=1}^n \ket{j}\!\bra{j} \otimes U_j$ is the block-diagonal operator on $\C^{n} \otimes \C^{r}$ formed by the~$U_j$'s.
%     Hence 
%     \begin{equation}
%         \avg \circ q_G(U_1, \dots, U_n) = \frac1n J^\dagger  \diag(U)^\dagger \cdot (A_G \otimes \Id_{r \times r}) \cdot \diag(U) J,
%     \end{equation}
%     where $J = \sum_{j=1}^n \ket{j} \otimes \Id_{r \times r}$.
% \end{fact}
\begin{fact}    \label{fact:fun}
    Let $G = (V,E)$ be an out-regular directed graph with $V = (1, 2, \dots, m)$ and let 
    $\calW = (W_1, \dots W_m)$ be a sequence of matrices from $\C^{r \times r'}$.\footnote{We only really care about $r' = r$, but we allow $r' \neq r$ for the sake of comparison with \Cref{eqn:adj}, where $r = 1$ and $r' = c$.}
    Then
    \begin{equation}    \label{eqn:funfact}
        %\E_{(\bi,\bj) \in E} [U_{\bj}^\dagger U_{\bi}] = 
        \avg(q_G(\calW)) = \frac1m \calW^\dagger (A_G \otimes \Id_{r \times r}) \calW,
    \end{equation}
    where we identify $\calW$ with $\sum_{j=1}^m \ket{j} \otimes W_j$, the $mr \times r'$ matrix formed by stacking the $W_j$'s into a column.
    %\begin{equation}
    %    \calU = \sum_{j=1}^n \ket{j} \otimes U_j.
    %\end{equation}
    (For $r'=r$, this identity is essentially the formula $w^\dagger A w = \sum_{ij} w_i^\dagger A_{ij} w_j$, but with entries from the ring~$\C^{r \times r}$.)
\end{fact}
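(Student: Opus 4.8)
\textbf{Plan of proof for \Cref{fact:fun}.}

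The plan is to prove the operator identity in \Cref{eqn:funfact} by direct expansion of both sides, tracking carefully how the directed edges of $G$, the normalized adjacency matrix $A_G$, and the stacking convention interact. The only subtlety is bookkeeping: matching the left-multiplication convention for operators (which is why \Cref{def:q} uses $s_j^\dagger s_i$ for the edge $(i,j)$) with the ordering of indices in the quadratic form $w^\dagger A w$.

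First I would write out the left-hand side. Since $G = (V,E)$ is $d$-out-regular with $|V| = m$, we have $|E| = dm$, and by \Cref{def:q} the polynomial sequence $q_G \circ S$, evaluated at $\calW$, is $(W_j^\dagger W_i)_{(i,j)\in E}$; hence
\begin{equation}
  \avg(q_G(\calW)) = \frac{1}{dm} \sum_{(i,j)\in E} W_j^\dagger W_i.
\end{equation}
Next I would write out the right-hand side using the stacking identification $\calW = \sum_{j=1}^m \ket{j}\otimes W_j$ (an $mr\times r'$ matrix). Then $\calW^\dagger = \sum_{i=1}^m \bra{i}\otimes W_i^\dagger$, and since $A_G = \frac1d\sum_{(i,j)\in E} \ket{j}\!\bra{i}$ by definition, we compute
\begin{equation}
  \calW^\dagger (A_G \otimes \Id_{r\times r}) \calW
  = \sum_{i',j'} \sum_{(i,j)\in E} \frac1d \parens*{\bra{i'}j\rangle\langle i|j'\rangle} \cdot W_{i'}^\dagger W_{j'}
  = \frac1d \sum_{(i,j)\in E} W_j^\dagger W_i,
\end{equation}
using $\langle i'|j\rangle = \delta_{i'=j}$ and $\langle i|j'\rangle = \delta_{i=j'}$ to collapse the sums over $i',j'$. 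Dividing by $m$ gives exactly $\avg(q_G(\calW))$, which is the claimed identity.

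I do not expect any genuine obstacle here; this is a routine verification. The one place to be careful is the index/ordering convention: a reader must check that the edge $(i,j)$ contributes $\ket{j}\!\bra{i}$ to $A_G$ (so that $A_G$ acts as the transition operator under left-multiplication), and that this pairs up correctly with the $W_j^\dagger W_i$ term from $q_G$ — i.e. the ``$i$-then-$j$'' reading of the path versus the ``$s_j^\dagger s_i$'' reading of the operator product, as flagged in the footnote to \Cref{def:q}. After confirming the consistency of these two conventions, the sanity check is the special case $r=1$, $r'=c$, $\calW=(\bra{1},\dots,\bra{c})$, where \Cref{eqn:funfact} reduces to $\avg(q_G\circ\calU) = \tfrac1c A_G$, matching \Cref{eqn:adj}.
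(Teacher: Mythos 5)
Your proof is correct and is exactly the direct verification the paper gestures at in its parenthetical (the block-matrix reading of $w^\dagger A w = \sum_{ij} w_i^\dagger A_{ij} w_j$); the paper does not spell out the computation, but your expansion of both sides, the careful matching of the edge convention $(i,j)\mapsto \ket{j}\!\bra{i}$ against $s_j^\dagger s_i$, and the sanity check against \Cref{eqn:adj} are all exactly right.
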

\begin{notation}
    We let $\K_m$ denote the complete (regular) undirected graph with self-loops on~$m$ vertices, which has $V = (1, 2, \dots, m)$ and $E = ((1,1), (1,2), \dots, (1,m), (2,1), \dots, (m,m))$.
    We may write $\K$ in place of $\K_m$ if the context is clear.
\end{notation}
\begin{fact}    \label{fact:sq}
    If $S = (s_1, \dots, s_m)$ is a polynomial sequence,
    \begin{equation}
        \avg(q_{\K_m} \circ S) = \avg(S)^\dagger \avg(S),
    \end{equation}
    the Hermitian-square of $\avg(S)$.  
    Hence if $\calU$ is a sequence of matrices, $\opnorm{\avg(q_{\K} \circ S(\calU))} = \opnorm{\avg(S(\calU))}^2$.
\end{fact}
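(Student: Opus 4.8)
\textbf{Proof plan for \Cref{fact:sq}.}
The plan is to unwind both sides directly from the definitions and recognize the right-hand side as a Hermitian square.  First I would recall \Cref{def:q}: for the complete graph with self-loops $\K_m$, the edge sequence $E$ runs over \emph{all} ordered pairs $(i,j) \in [m] \times [m]$, so $q_{\K_m} \circ S = (s_j^\dagger s_i)_{(i,j) \in [m]^2}$, a polynomial sequence of length~$m^2$.  Then by the definition of $\avg$ of a polynomial sequence (it averages its $m^2$ entries),
\begin{equation}
  \avg(q_{\K_m} \circ S) = \frac{1}{m^2} \sum_{i=1}^m \sum_{j=1}^m s_j^\dagger s_i.
\end{equation}
The key observation is that this double sum factors: $\sum_{i,j} s_j^\dagger s_i = \bigl(\sum_j s_j^\dagger\bigr)\bigl(\sum_i s_i\bigr) = \bigl(\sum_j s_j\bigr)^\dagger \bigl(\sum_i s_i\bigr)$, using only that the adjoint is additive and reverses products.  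Dividing by $m^2$ and recalling $\avg(S) = \frac1m \sum_{j=1}^m s_j$, we get $\avg(q_{\K_m} \circ S) = \avg(S)^\dagger \avg(S)$, which is the first claim.

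For the second sentence of the statement, I would substitute a matrix sequence $\calU$ into the polynomial identity just proved, obtaining $\avg(q_{\K} \circ S)(\calU) = \avg(S(\calU))^\dagger \avg(S(\calU))$ (substitution commutes with forming adjoints, sums, and products of polynomials).  Then one applies the elementary fact that for any matrix $X$ we have $\opnorm{X^\dagger X} = \opnorm{X}^2$ (e.g.\ since $X^\dagger X$ is PSD with largest eigenvalue $\opnorm{X}^2$), with $X = \avg(S(\calU))$.  This yields $\opnorm{\avg(q_{\K} \circ S(\calU))} = \opnorm{\avg(S(\calU))}^2$, completing the proof.

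There is essentially no obstacle here: the only thing to be careful about is the bookkeeping of which index in $s_j^\dagger s_i$ comes from the ``in'' versus ``out'' endpoint of an edge (the footnote to \Cref{def:q} flags exactly this point), but since $\K_m$ is symmetric and we sum over all ordered pairs, the distinction does not affect the final sum.  If one wanted to be maximally pedantic one could also note that $\avg(q_{\K_m}\circ S)$ is manifestly self-adjoint (consistent with the remark following \Cref{def:q}, since $\K_m$ is undirected), which serves as a sanity check on the factorization $\avg(S)^\dagger\avg(S)$.
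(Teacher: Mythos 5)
Your proof is correct and is essentially the only natural argument; the paper states \Cref{fact:sq} without proof, and your unwinding of \Cref{def:q} for $\K_m$ followed by the factorization $\sum_{i,j} s_j^\dagger s_i = (\sum_j s_j)^\dagger(\sum_i s_i)$ and the identity $\opnorm{X^\dagger X} = \opnorm{X}^2$ is exactly what the authors intend the reader to supply.
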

\begin{definition}
    Recall that a regular undirected graph $G$ is said to be a \emph{(2-sided) $\mu$-expander} if $\opnorm{A_G - A_\K} \leq \mu$.
\end{definition}
\begin{fact}    \label{fact:expand}
    Since $A_{\K}$ is the projection onto the $1$-dimensional subspace spanned by $\sum_j \ket{j}$, and since~$A_G$ also fixes this subspace, $G$ being a $\mu$-expander is  equivalent to $\opnorm{A_G - (1-\mu)A_{\K}} \leq \mu$.
\end{fact}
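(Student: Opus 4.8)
The plan is to diagonalize both $A_G$ and $A_{\K}$ simultaneously in block form with respect to a single orthogonal splitting of $\C^m$, and then read off the two operator norms block by block. Let $v = \tfrac{1}{\sqrt m}\sum_{j=1}^m \ket j$ be the (unit) uniform vector, let $L = \spn\{v\}$, and let $L^\perp$ be its orthogonal complement. First I would record, straight from the definition $A_{\K} = \tfrac1m\sum_{i,j}\ket j\bra i = vv^\dagger$, that $A_{\K}$ is the orthogonal projection onto $L$ — this is exactly the first clause of the Fact. Next, since $G$ is undirected and regular, $A_G$ is self-adjoint, and regularity gives $A_G v = v$ (every column of $A_G$ has entries summing to $1$); hence $L$ is $A_G$-invariant and, by self-adjointness, so is $L^\perp$. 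Thus both $A_G$ and $A_{\K}$ are block-diagonal for the decomposition $\C^m = L \oplus L^\perp$.

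Then I would simply compute the two blocks of each operator of interest. On $L$ both $A_G$ and $A_{\K}$ act as the identity, so $A_G - A_{\K}$ acts as $0$ on $L$, while $A_G - (1-\mu)A_{\K}$ acts as the scalar $\mu$ on $L$. On $L^\perp$, $A_{\K}$ acts as $0$ while $A_G$ acts as some self-adjoint operator $B := A_G|_{L^\perp}$; so both $A_G - A_{\K}$ and $A_G - (1-\mu)A_{\K}$ restrict to $B$ there. Since the operator norm of a block-diagonal operator is the maximum of the norms of its blocks (and $\mu \ge 0$), this yields $\opnorm{A_G - A_{\K}} = \opnorm B$ and $\opnorm{A_G - (1-\mu)A_{\K}} = \max\{\mu,\opnorm B\}$.

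Finally the equivalence drops out: $G$ is a $\mu$-expander iff $\opnorm{A_G - A_{\K}}\le\mu$ iff $\opnorm B \le \mu$ iff $\max\{\mu,\opnorm B\}\le\mu$ iff $\opnorm{A_G-(1-\mu)A_{\K}}\le\mu$. There is no genuine obstacle in this argument; the only two points requiring a moment's care are verifying that $A_G$ fixes $v$ (which is where regularity of $G$ enters) and the elementary fact that the operator norm splits as a maximum over an orthogonal block decomposition — everything else is bookkeeping.
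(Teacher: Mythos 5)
Your proof is correct and is exactly the argument the paper has in mind: the paper's parenthetical justifications ($A_{\K}$ is the projection onto $\spn\{\sum_j\ket j\}$, and $A_G$ fixes that line) are precisely the two observations needed to block-diagonalize both operators over $L\oplus L^\perp$ and read off the equivalence block by block.
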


The following result is essentially the same as \cite[Thm.~4.4]{RV05}:
\begin{proposition}
    Let $G$ be a $\mu$-expander on vertex set $V = (1, 2, \dots, m)$, let $S = (s_1, \dots, s_m)$ be a polynomial sequence with $\|S\| \leq 1$, and let $\calU = (U_1, \dots, U_{\c})$ be a sequence of matrices in~$\C^{r \times r}$ with $\opnorm{U_j} \leq 1$ for all~$j$.
    Then
    \begin{equation}
        \opnorm{\avg(q_G \circ S(\calU))} \leq (1-\mu) \opnorm{\avg(S(\calU))}^2 + \mu.
    \end{equation}
\end{proposition}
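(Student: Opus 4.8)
The plan is to mimic the proof of Theorem~4.4 in~\cite{RV05}, reducing everything to the key identity of \Cref{fact:fun} and then invoking the expander property in the form of \Cref{fact:expand}. First I would set $\calW = S(\calU) = (s_1(\calU), \dots, s_m(\calU))$, a sequence of matrices in $\C^{r \times r}$; since $\|S\| \leq 1$ and $\opnorm{U_j} \leq 1$ for all $j$, we have $\opnorm{s_i(\calU)} \leq 1$ for each $i$, i.e.\ $\calW$ is a sequence of contractions. Stacking the $W_j = s_j(\calU)$ into a column matrix (also called $\calW$, of shape $mr \times r$), \Cref{fact:fun} gives
\begin{equation}
    \avg(q_G \circ S(\calU)) = \avg(q_G(\calW)) = \tfrac1m \calW^\dagger (A_G \otimes \Id_{r \times r}) \calW.
\end{equation}

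Next I would use the expander hypothesis. By \Cref{fact:expand}, $G$ being a $\mu$-expander is equivalent to $\opnorm{A_G - (1-\mu)A_\K} \leq \mu$, so we split
\begin{equation}
    A_G \otimes \Id = (1-\mu)(A_\K \otimes \Id) + \bigl(A_G \otimes \Id - (1-\mu)(A_\K \otimes \Id)\bigr),
\end{equation}
where the second summand has operator norm at most $\mu$. Substituting into the displayed identity and applying the triangle inequality for $\opnorm{\cdot}$,
\begin{equation}
    \opnorm{\avg(q_G \circ S(\calU))} \leq (1-\mu)\,\opnorm{\tfrac1m \calW^\dagger (A_\K \otimes \Id)\calW} + \tfrac1m \opnorm{\calW^\dagger}\,\mu\,\opnorm{\calW}.
\end{equation}
For the first term, \Cref{fact:fun} applied with $\K$ in place of $G$ (or directly \Cref{fact:sq}) shows $\tfrac1m \calW^\dagger(A_\K \otimes \Id)\calW = \avg(q_\K \circ S(\calU)) = \avg(S(\calU))^\dagger \avg(S(\calU))$, whose operator norm is exactly $\opnorm{\avg(S(\calU))}^2$. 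For the second term, the column matrix $\calW = \sum_j \ket{j}\otimes W_j$ has $\calW^\dagger \calW = \sum_j W_j^\dagger W_j$, so $\opnorm{\calW}^2 = \opnorm{\sum_j W_j^\dagger W_j} \leq \sum_j \opnorm{W_j}^2 \leq m$; hence $\tfrac1m \opnorm{\calW^\dagger}\opnorm{\calW} = \tfrac1m \opnorm{\calW}^2 \leq 1$. Combining the two bounds yields
\begin{equation}
    \opnorm{\avg(q_G \circ S(\calU))} \leq (1-\mu)\,\opnorm{\avg(S(\calU))}^2 + \mu,
\end{equation}
as desired.

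The argument is almost entirely bookkeeping once the right objects are in place; there is no serious obstacle. The only point requiring a little care is the norm estimate $\opnorm{\calW}^2 \leq m$ for the stacked matrix, which one should double-check is consistent with the normalization conventions in \Cref{fact:fun} (in particular the $\tfrac1m$ prefactor there). One should also make sure the non-commutative substitution $u_i \mapsto U_i$ commutes appropriately with forming $q_G \circ S$ — i.e.\ that $(q_G \circ S)(\calU) = q_G(S(\calU))$ — which is immediate from the definitions of $q_G$ (\Cref{def:q}) and of substitution, since $q_G$ only reorders and adjoints the polynomials $s_i$. Everything else is the triangle inequality and submultiplicativity of the operator norm.
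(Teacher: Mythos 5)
Your proof is correct and follows essentially the same route as the paper's: both apply \Cref{fact:fun} to rewrite $\avg(q_G \circ S(\calU))$ as $\tfrac1m\calW^\dagger(A_G\otimes\Id)\calW$, split off $(1-\mu)A_\K$ using \Cref{fact:expand}, bound the remainder via submultiplicativity together with $\opnorm{\calW}\le\sqrt m$, and finish with \Cref{fact:sq}. The only cosmetic difference is that the paper first forms the difference $\avg(q_G(\calW)) - (1-\mu)\avg(q_\K(\calW))$ and then invokes the triangle inequality, whereas you split $A_G\otimes\Id$ inside the quadratic form before taking norms; the estimates are identical.
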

\begin{proof}
    Write $\calW = S(\calU) = (W_1, \dots, W_m)$, so $\opnorm{W_j} \leq 1$ for all~$j$.  
    Using \Cref{fact:fun} twice, we derive
    \begin{equation}
        \opnorm{\avg(q_G(\calW)) - (1-\mu)\avg(q_{\K}(\calW))} = \frac1m \opnorm{\calW^\dagger(\Delta \otimes \Id_{r \times r}) \calW}, \quad \text{where } \Delta = A_G - (1-\mu)A_{K}.
    \end{equation}
    We have $\opnorm{\Delta} \leq \mu$ by \Cref{fact:expand}, and $\opnorm{\calW} \leq \sqrt{\littlesum_j \opnorm{W_j}} \leq \sqrt{m}$.
    So by submultiplicativity of operator norm, the right-hand side above is at most~$\mu$, and the proof is complete from \Cref{fact:sq} and the triangle inequality.
\end{proof}
Iterating this, and using \Cref{fact:quadratic} to conclude that $\norm{q_{G_t} \circ \cdots \circ q_{G_1} \circ S} \leq 1$ whenever $\norm{S} \leq 1$, yields:
\begin{proposition} \label{prop:itera}
    Let $S = (s_1, \dots, s_m)$ be a polynomial sequence with $\|S\| \leq 1$ and let $\calU = (U_1, \dots, U_{\c})$ be a matrix sequence with $\opnorm{U_j} \leq 1$ for all~$j$.
    Moreover, let $G_1, G_2, \dots, G_t$ be a sequence of regular graphs, where $G_i = (V_i, E_i)$ is a $\mu_i$-expander  with $V_{i+1} = E_i$ (and $V_1 = (1, 2, \dots, m)$).
    Then
    \begin{equation}
        \opnorm{\avg(q_{G_t} \circ q_{G_{t-1}} \circ \cdots \circ q_{G_1} \circ S(\calU))} \leq f_{\mu_t} \circ f_{\mu_{t-1}} \circ \cdots \circ f_{\mu_1}(\opnorm{\avg(S(\calU))}),
    \end{equation}
    where $f_{\mu}(\lambda) = (1-\mu)\lambda^2 + \mu$.
    % \begin{equation}
    %     \beta(\mu_0, \mu_1, \dots, \mu_t) \coloneqq 
    %     (( \cdots (((1-\mu_0^2 + \mu_1)^2 + \mu_{2})^2 + \cdots)^2 + \mu_{t-1})^2 + \mu_{t}.
    % \end{equation}
    In particular, if $m = \c$, $S = (u_1, \dots, u_{\c})$, and we write $Q = q_{G_t} \circ \cdots \circ q_{G_1}$ and $F_{(\mu_1, \dots, \mu_t)} = f_{\mu_t} \circ  \cdots \circ f_{\mu_1}$, then
    \begin{equation}
        \opnorm{\avg(Q \circ \calU)} \leq F_{(\mu_1, \dots, \mu_t)}(\opnorm{\avg(\calU)}).
    \end{equation}
\end{proposition}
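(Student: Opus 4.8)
The plan is a straightforward induction on~$t$, with the preceding Proposition supplying the single-step bound and \Cref{fact:quadratic} keeping everything normalized along the way. First I would set $S^{(0)} = S$ and, for $1 \le i \le t$, $S^{(i)} = q_{G_i} \circ S^{(i-1)}$, so that $S^{(t)} = q_{G_t} \circ \cdots \circ q_{G_1} \circ S$ is exactly the sequence appearing on the left-hand side. By \Cref{def:q}, $S^{(i)}$ is indexed by the edge sequence $E_i$, which equals $V_{i+1}$ by hypothesis; hence $S^{(i)}$ is a polynomial sequence on the vertex set of $G_{i+1}$ and the composition typechecks at every stage. Moreover, since $\|S^{(0)}\| = \|S\| \le 1$, repeated application of \Cref{fact:quadratic} gives $\|S^{(i)}\| = \|q_{G_i} \circ S^{(i-1)}\| \le \|S^{(i-1)}\|^2 \le 1$, so each $S^{(i-1)}$ is a legitimate input to the preceding Proposition.

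Next I would introduce the abbreviation $\lambda_i = \opnorm{\avg(S^{(i)}(\calU))}$. Since each entry $s^{(i)}_j$ of $S^{(i)}$ satisfies $\opnorm{s^{(i)}_j(\calU)} \le \|s^{(i)}_j\| \le \|S^{(i)}\| \le 1$ (using $\opnorm{U_j}\le 1$), we have $\lambda_i \in [0,1]$ for all~$i$. Applying the preceding Proposition to the polynomial sequence $S^{(i-1)}$, the $\mu_i$-expander $G_i$, and the matrix sequence~$\calU$ yields, for each $i \in [t]$,
\[
  \lambda_i = \opnorm{\avg(q_{G_i} \circ S^{(i-1)}(\calU))} \le (1-\mu_i)\lambda_{i-1}^2 + \mu_i = f_{\mu_i}(\lambda_{i-1}).
\]
To chain these $t$ inequalities into $\lambda_t \le f_{\mu_t} \circ \cdots \circ f_{\mu_1}(\lambda_0)$, I would use that $f_\mu$ is nondecreasing on $[0,1]$ and maps $[0,1]$ into itself: for $0 \le \mu \le 1$ we have $f_\mu'(\lambda) = 2(1-\mu)\lambda \ge 0$ on $[0,1]$, while $f_\mu(0)=\mu$ and $f_\mu(1)=1$. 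Thus $\lambda_{i-1} \le f_{\mu_{i-1}}(\lambda_{i-2})$ and monotonicity of $f_{\mu_i}$ give $\lambda_i \le f_{\mu_i}(f_{\mu_{i-1}}(\lambda_{i-2}))$; iterating down to $\lambda_0 = \opnorm{\avg(S(\calU))}$ establishes the first displayed inequality. The ``in particular'' clause is then just the specialization $m = \c$, $S = (u_1, \dots, u_{\c})$, for which $S(\calU) = \calU$ and $q_{G_t} \circ \cdots \circ q_{G_1} \circ S$ evaluates to $Q \circ \calU$, giving $\opnorm{\avg(Q \circ \calU)} \le F_{(\mu_1,\dots,\mu_t)}(\opnorm{\avg(\calU)})$ with $F_{(\mu_1,\dots,\mu_t)} = f_{\mu_t} \circ \cdots \circ f_{\mu_1}$.

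I do not expect a genuine obstacle: all the real content sits in the one-step Proposition (which itself rests on \Cref{fact:fun}, \Cref{fact:expand}, and \Cref{fact:sq}). The only two places where sloppiness could cause trouble are (i) verifying that the normalization $\|S^{(i)}\| \le 1$ is preserved across stages, which is exactly what \Cref{fact:quadratic} buys us, and (ii) the monotonicity of $f_\mu$ on $[0,1]$ for $\mu\in[0,1]$, which is needed to legitimately compose the scalar recursions; both are elementary, so the proof is essentially a clean induction.
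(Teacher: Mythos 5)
Your proof is correct and takes essentially the same approach as the paper, whose entire argument is the one-line remark that one should iterate the preceding single-step proposition while using \Cref{fact:quadratic} to maintain $\|S^{(i)}\|\le 1$. You spell out the induction carefully, including the monotonicity of $f_\mu$ on $[0,1]$ needed to chain the scalar bounds, which the paper leaves implicit.
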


The work~\cite{RV05} also contains calculations very similar to the following (wherein the special number~$.11$ is chosen due to certain explicit expander constructions):
\begin{proposition} \label{prop:calcs}
    For $0 < \delta, \eps \leq 1$, we have $F_{\vec{\mu}}(1-\delta) \leq \eps$ for any sequence $\vec{\mu}$ that entrywise satisfies
    \begin{equation}
        (0, \dots, 0) \leq \vec{\mu} \leq (\vec{\mu}^{(1)}, \vec{\mu}^{(2)}), \qquad \vec{\mu}^{(1)} \coloneqq (\underbrace{.11, \dots, .11}_{\text{$\ell_1$ times}}), \qquad \vec{\mu}^{(2)} \coloneqq \tfrac14 (2^{-2}, 2^{-4},  2^{-8}, \dots, 2^{-2^{\ell_2}}),
    \end{equation}
    where $\ell_1 \geq \log_{2^{.8}}(1/\delta) +3$ (note: $2^{.8} \approx 1.74$) and $\ell_2 \geq \log_2 \log_2 (1/\eps)$.
\end{proposition}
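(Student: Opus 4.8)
The statement is an entirely elementary (if slightly fiddly) computation about iterating the map $f_\mu(\lambda) = (1-\mu)\lambda^2 + \mu$, so the plan is to track the trajectory $\lambda_0 = 1-\delta$, $\lambda_{t+1} = f_{\mu_{t+1}}(\lambda_t)$ and show it reaches $\eps$ in the prescribed number of steps. Note first the monotonicity of $f_\mu$ in both arguments on $[0,1]$: if $0 \le \mu \le \mu'$ and $0 \le \lambda \le \lambda' \le 1$ then $f_\mu(\lambda) \le f_{\mu'}(\lambda')$, and moreover $f_\mu$ maps $[0,1]$ into $[0,1]$. Consequently it suffices to prove the claim for the \emph{largest} admissible sequence $\vec\mu = (\vec\mu^{(1)}, \vec\mu^{(2)})$, since any entrywise-smaller sequence yields an entrywise-smaller (hence still $\le \eps$) trajectory, and we may as well take $\ell_1 = \lceil \log_{2^{.8}}(1/\delta)\rceil + 3$ and $\ell_2 = \lceil\log_2\log_2(1/\eps)\rceil$ exactly (more phases only help, by the same monotonicity together with $f_\mu(\lambda)\le\lambda$ for $\lambda$ near $1$ — actually one should just note $f_\mu(\lambda) \le \max\{\lambda,\mu\}$, so extra small-$\mu$ phases never increase things above $\eps$ once we are below $\eps$).

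\textbf{Phase 1 (driving $1-\delta$ down to a constant).} Here all $\mu_i = .11$, so $f(\lambda) = .89\lambda^2 + .11$. The fixed point is $\lambda = 1$, and one checks $1 - f(\lambda) = .89(1-\lambda^2) = .89(1-\lambda)(1+\lambda) \ge .89(1-\lambda)$, i.e. writing $\eta_t = 1-\lambda_t$ we get $\eta_{t+1} \ge .89\,\eta_t$, which is the wrong direction. The right estimate is the multiplicative \emph{contraction away from $1$}: as long as $\lambda_t \ge 1/2$ (say), $1-\lambda_{t+1} = .89(1-\lambda_t)(1+\lambda_t) \ge .89 \cdot \tfrac{3}{2}(1-\lambda_t) = 1.335(1-\lambda_t) \ge 2^{.4}(1-\lambda_t)$; and once $\lambda_t < 1/2$, one more step gives $\lambda_{t+1} = .89\cdot\tfrac14 + .11 < .34 < 1/2$, and indeed from then on $\lambda$ stays bounded below $1/2$ and in fact the iteration is a genuine contraction toward a value $\lambda^* < 1/2$. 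Wait — I should reconcile this with the claimed exponent $2^{.8}$: the cleanest route is to observe $1 - \lambda_{t+1} \ge 2^{.8}(1-\lambda_t)$ \emph{fails} for $\lambda_t$ close to $1$, so instead one runs the potential $\log\frac{1}{1-\lambda_t}$ only while $\lambda_t \le 1-c$ for a suitable constant; the correct bookkeeping, following \cite{RV05}, is to split off $O(1)$ steps to get from $1-\delta$ to $1 - \min\{\delta, c_0\}$ (harmless, absorbed in the ``$+3$''), then note $\eta_{t} \mapsto \eta_{t+1}$ satisfies $\eta_{t+1}/\eta_t \ge 1+ \lambda_t \cdot\text{(something)} \ge 2^{.8}$ is what one gets after being careful with the constant $.89$ versus the gap from $1$; concretely $.89(1+\lambda) \ge 2^{.8} \approx 1.741$ requires $\lambda \ge .956$, which is false near the end of the phase, so the honest statement is that \emph{most} of the phase contracts at rate $\ge 2^{.8}$ and the $+3$ buffer covers the tail. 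After $\ell_1 \ge \log_{2^{.8}}(1/\delta)+3$ steps we are guaranteed $\lambda_{\ell_1} \le 1/2$, in fact $\lambda_{\ell_1} \le 2^{-2}$ (this last inequality is where the constant $+3$ is actually tuned, using that $f_{.11}$ pushes any input $\le 1/2$ down to $\le 1/4$ in one step and that $2^{-2}$ is the value at which Phase 2 is designed to start).

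\textbf{Phase 2 (quadratic convergence from $2^{-2}$ to $\eps$).} Now $\mu_j = \tfrac14 2^{-2^j}$ for $j = 1,\dots,\ell_2$. The clean invariant to carry is $\lambda_{j} \le 2^{-2^{j}}$ at the start of step $j$ (so $\lambda_1 \le 2^{-2}$, matching the output of Phase 1). Given $\lambda_j \le 2^{-2^j}$, we have $f_{\mu_j}(\lambda_j) = (1-\mu_j)\lambda_j^2 + \mu_j \le \lambda_j^2 + \mu_j \le 2^{-2^{j+1}} + \tfrac14 2^{-2^j}$; the plan is to check $2^{-2^{j+1}} + \tfrac14 2^{-2^j} \le 2^{-2^{j+1}+1} \le 2^{-2^{j+1-1}} = 2^{-2^j}$ — hmm, that would only re-establish the invariant at the same level, not improve it. The correct index shift: show $f_{\mu_j}(\lambda_j) \le 2^{-2^{j+1}+1}$ and reindex so that after $\ell_2$ steps $\lambda_{\ell_2+1} \le 2^{-2^{\ell_2}} \le \eps$ by the choice $\ell_2 \ge \log_2\log_2(1/\eps)$ (the $\tfrac14$ in $\mu_j$ exactly absorbs the lower-order terms so that the doubly-exponential decay of $\lambda_j^2$ dominates). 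I would write this as: assume inductively $\lambda_j \le 2^{-2^j}$; then $\mu_j = \tfrac14 2^{-2^j} \le \tfrac14 \lambda_j$, so $f_{\mu_j}(\lambda_j) \le \lambda_j^2 + \tfrac14\lambda_j \le \lambda_j(\lambda_j + \tfrac14) \le \tfrac12\lambda_j$ when $\lambda_j \le 1/4$ — but that is merely geometric, not quadratic, and would cost $\log(1/\eps)$ steps, not $\log\log(1/\eps)$. So the genuine argument must keep $\mu_j$ tiny relative to $\lambda_j^2$: since $\lambda_j \le 2^{-2^j}$ gives $\lambda_j^2 \le 2^{-2^{j+1}}$ while $\mu_j = \tfrac14 2^{-2^j} = \tfrac14\sqrt{2^{-2^{j+1}}}$, near the end $\mu_j \gg \lambda_j^2$ and the invariant $\lambda_{j+1}\le 2^{-2^{j+1}}$ is \emph{false}; what is true and what \cite{RV05} uses is $\lambda_{j+1} \le 2\mu_j = \tfrac12 2^{-2^j}$...

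\textbf{Summary of strategy and the main obstacle.} Given the above false starts, the honest plan is: (1) reduce to the maximal sequence by monotonicity of $f$; (2) for Phase 1, introduce the potential $\Phi_t = 1-\lambda_t$ and show it grows by a factor $\ge 2^{.8}$ per step once $\lambda_t$ is bounded away from $1$ by a fixed constant, absorbing the first $O(1)$ steps and the last few steps into the ``$+3$'', concluding $\lambda_{\ell_1} \le 2^{-2}$; (3) for Phase 2, carry the invariant $\lambda_j \le 2^{-2^{j-1}}$ (indexed so $\lambda_1 \le 2^{-2^0}$ won't match — so rather $\lambda_1 \le 2^{-2^1} = 2^{-2}$), and verify the inductive step $f_{\mu_j}(\lambda_j) = (1-\mu_j)\lambda_j^2+\mu_j \le \lambda_j^2 + \mu_j \le 2^{-2^{j+1}} + \tfrac14 2^{-2^j} \le 2^{-2^{j+1}}\cdot(1 + \tfrac14 2^{2^j}) $, and since we need the \emph{next} bound to be $2^{-2^{j+1}}$ we must instead only claim $f_{\mu_j}(\lambda_j) \le 2^{-2^{j}}\cdot\tfrac12$ — i.e. accept that the second phase also contributes only the extra factor needed so that after $\log_2\log_2(1/\eps)$ steps, combining $\lambda_j^2$-doubling-of-exponent with the $\mu_j$ additions (which telescope to something $\le 2^{-2^{\ell_2}}$ because $\sum_{j\le\ell_2}\tfrac14 2^{-2^j} \le \tfrac12 2^{-2} < \eps$ once $\ell_2$ is large... no, that sum is a constant $\approx .094$, not $\le\eps$). \emph{The crux, and the main obstacle, is getting the Phase 2 bookkeeping exactly right}: one must track not $\lambda_j$ alone but verify that the doubly-exponentially-shrinking expander parameters $\mu_j$ were chosen precisely so that the errors they introduce at step $j$ are themselves killed (squared) by all subsequent steps, so that the final value is dominated by the \emph{last} $\mu_{\ell_2} = \tfrac14 2^{-2^{\ell_2}} \le \tfrac14\eps \le \eps$ plus $\lambda_{\ell_2}^2 \le \eps^2 \le \eps$ — this is the standard ``derandomized squaring recursion'' analysis of \cite{RV05}, which I would reproduce essentially verbatim, and the only real work is checking the two numerical constants ($2^{.8}$ and the ``$+3$'', the ``$\tfrac14$'' and the starting level $2^{-2}$) mesh at the Phase 1 / Phase 2 handoff. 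I expect no conceptual difficulty — just careful constant-chasing — and I would present it as two short lemmas (one per phase) plus a one-line composition.
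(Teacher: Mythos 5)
Your overall plan matches the paper exactly: reduce by monotonicity of $f_\mu$ to the maximal sequence $(\vec\mu^{(1)},\vec\mu^{(2)})$, show Phase~1 (constant $\mu=.11$) drives $\lambda$ from $1-\delta$ down to $1/4$ in roughly $\log_{2^{.8}}(1/\delta)$ steps via multiplicative growth of the gap $1-\lambda$, then show Phase~2 (doubly-exponentially shrinking $\mu_j$) takes $1/4$ down to $\eps$ in $\ell_2$ steps. Your Phase~1 discussion, after the false starts, lands in the right place: the growth factor $.89(1+\lambda)\geq 2^{.8}$ holds precisely for $\lambda\geq .956$, so it applies for most of the phase, and a constant number of extra steps (absorbed in the $+3$) handles the tail from $\lambda\approx .97$ down to $1/4$; this is exactly the paper's computation ($f_{.11}(1-\delta)\leq 1-2^{.8}\delta$ for $\delta\leq .03$, then nine more steps to reach $1/4$).

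Phase~2 is where there is a genuine gap. You correctly diagnose that the invariant $\lambda_j\leq 2^{-2^j}$ does not propagate (the additive $\mu_j=\tfrac14 2^{-2^j}$ is far larger than $\lambda_j^2$), and you come very close to the right fix when you write ``$\lambda_{j+1}\leq 2\mu_j=\tfrac12 2^{-2^j}$'' — but then you trail off without verifying it, and your closing ``Summary'' paragraph replaces it with the circular step $\lambda_{\ell_2}^2\leq\eps^2$, which presupposes $\lambda_{\ell_2}\leq\eps$, the very thing being proved. The missing piece is that the \emph{correct} invariant is $\lambda_j\leq\tfrac12 2^{-2^j}$ (starting from $\lambda_0=1/4$, one gets $\lambda_1\leq (1/4)^2+\tfrac14 2^{-2}=\tfrac12 2^{-2}$), and this one does propagate because the two error terms are then balanced: assuming $\lambda_j\leq\tfrac12 2^{-2^j}$,
\begin{equation}
  \lambda_{j+1}\ \leq\ \lambda_j^2+\mu_{j+1}\ \leq\ \tfrac14\,2^{-2^{j+1}}+\tfrac14\,2^{-2^{j+1}}\ =\ \tfrac12\,2^{-2^{j+1}},
\end{equation}
so the $\tfrac14$ in front of $\vec\mu^{(2)}$ is tuned precisely to make the additive term match the squared term. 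After $\ell_2$ steps this gives $\lambda_{\ell_2}\leq\tfrac12 2^{-2^{\ell_2}}\leq\tfrac12\eps$, which is the bound the paper records as $F_{\vec\mu^{(2)}}(1/4)\leq\tfrac12 2^{-2^{\ell_2}}$. Without explicitly finding and verifying this invariant, the proposal does not close; with it, the argument becomes the same three-line proof as in the paper.
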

\begin{proof}
    Since $f_{\mu}(\lambda)$ is nondecreasing on $[0,1]$ for both $\mu$ and $\lambda$, it suffices to analyze all upper bounds as if they were equalities.
    It is easy to check that $f_{.11}(1-\delta) \leq 1-2^{.8}\delta$ for all $0 \leq \delta \leq .03$, and hence
    \begin{equation}
        \ell \geq  \log_{2^{.8}}(1/\delta) - 6 \quad\implies\quad f_{.11}^{\circ \ell}(1-\delta) \leq 1 - .03/1.75 \leq .985.
    \end{equation}
    Also, $f_{.11}^{\circ 9}(.985) \leq 1/4$, and hence $F_{\vec{\mu}^{(1)}}(1-\delta) \leq 1/4$.
    The proof is now complete by observing that $F_{\vec{\mu}^{(2)}}(1/4) \leq \frac12 2^{-2^{-\ell_2}}$.
\end{proof}

Regarding explicit construction of expander graphs, taking $p = 29$ and $509$ in \cite[Thm.~1.2]{Alo21} and adding a self-loop to every vertex yields:
\begin{theorem} \label{thm:alon}
    For $(d,\mu) = (32,.45)$ and also $(d,\mu) = (512,.11)$, there is a strongly explicit algorithm for constructing $n$-vertex, $d$-regular, $\mu$-expander graphs (for all sufficiently large~$n$).%\rnote{$.11 \leq (\sqrt{2\cdot(511 - 1)} + \sqrt{509} + 1)/512$.}
    %There is a $\poly(n)$-time deterministic algorithm that, for all sufficiently large~$n$, outputs an $n$-vertex, $128$-regular (self-loop-less) $\mu$-expander graph with $\mu \leq \frac{1}{4\sqrt{2}} \leq .177$.
\end{theorem}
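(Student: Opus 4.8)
\textbf{Proof plan for \Cref{thm:alon}.}
The plan is to invoke \cite[Thm.~1.2]{Alo21}, which (in the form we need) gives, for every prime~$p$, a strongly explicit family of $(p+3)$-regular Cayley graphs on $\mathrm{PSL}_2(\Z_q)$ (for all sufficiently large prime~$q$) that are $\mu_0$-expanders with $\mu_0 = \mu_0(p)$ an explicit function tending to the Ramanujan bound as $p \to \infty$; concretely one has $\mu_0(p) \le \tfrac{2\sqrt{p}}{p+1} + o(1)$, and this is small enough that for the two specific primes $p = 29$ and $p = 509$ one gets $\mu_0(29) < .45 - \tfrac{1}{d}$ and $\mu_0(509) < .11 - \tfrac1d$ with room to spare, where $d = p+3$ equals $32$ and $512$ respectively. (One should double-check the exact constant and spectral-gap convention used in \cite{Alo21} against what we need here, but any explicit near-Ramanujan family with these degrees suffices, and several are available.)

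The one cosmetic mismatch is that \Cref{thm:alon} asks for graphs on an \emph{arbitrary} sufficiently large number~$n$ of vertices, whereas the Cayley construction only directly produces graphs whose vertex count is $|\mathrm{PSL}_2(\Z_q)| = \tfrac12 q(q^2-1)$ for prime~$q$, a sparse set of integers. First I would remove this discrepancy by the standard trick of padding: given a target~$n$, pick the largest graph $G_0$ in the explicit family on $n_0 \le n$ vertices with $n_0 \ge n/2$ (possible since consecutive primes, hence consecutive values of $\tfrac12 q(q^2-1)$, differ by a sub-multiplicative factor for large~$q$ by Bertrand's postulate); then form $G$ on $n$ vertices by taking $G_0$ together with $n - n_0$ extra vertices, and adding edges to make the whole thing regular while keeping the spectral gap essentially unchanged. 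A clean way to do this is the ``2-lift / tensor with a small regular expander'' or simply the observation in \cite{RV05} and \cite{Alo21} that the blow-up of a $\mu$-expander by a complete graph, or padding by disjoint copies of a fixed small expander of matching degree, only degrades $\mu$ by a negligible additive amount; one keeps a slightly larger prime in reserve so the final $\mu$ still comes in under $.45$ (resp.\ $.11$). Since we only ever apply \Cref{thm:alon} with $n$ ranging over the edge-counts $|E_i|$ arising in \Cref{prop:itera}, and those are themselves flexible up to constant factors, an even softer approach also works: just let the construction output graphs on the nearest available vertex count $\ge n$ and note downstream that this changes nothing asymptotically. I would pick whichever of these is least disruptive to \Cref{sec:pseudorandom-walks}.

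Finally, the last clause, ``adding a self-loop to every vertex,'' is there to push the graph from a $1$-sided to a $2$-sided expander (equivalently, to kill any eigenvalue near $-1$): if $G'$ is $d'$-regular with all nontrivial eigenvalues of $A_{G'}$ in $[-1, 1-\gamma]$, then adding one self-loop at each vertex yields a $(d'+1)$-regular $G$ with $A_G = \tfrac{d'}{d'+1}A_{G'} + \tfrac{1}{d'+1}\Id$, whose nontrivial eigenvalues lie in $[\tfrac{-d'+1}{d'+1}, 1 - \tfrac{d'}{d'+1}\gamma]$, hence $\opnorm{A_G - A_\K} \le \max\{\tfrac{d'-1}{d'+1}, 1 - \tfrac{d'}{d'+1}\gamma\}$. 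For $d' = 31, 511$ and $\gamma$ the near-Ramanujan gap this is comfortably below $.45$ and $.11$ respectively, and the degree becomes $32$ resp.\ $512$ as claimed; strong explicitness is preserved since adding a self-loop is a trivial local operation. I do not expect any real obstacle here: everything reduces to quoting \cite{Alo21} with the right primes and doing the routine self-loop and padding bookkeeping. The only point requiring a little care is making sure the numerical slack between the Ramanujan bound $\tfrac{2\sqrt p}{p+1}$ at $p = 29, 509$ and the target constants $.45, .11$ genuinely survives the $+1$ to the degree and the $o(1)$ error term in \cite{Alo21} — a short explicit computation that I would include inline.
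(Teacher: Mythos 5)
Your plan lands on the same two ingredients the paper uses — Alon's Theorem~1.2 at $p=29,509$, followed by a self-loop at each vertex — and your final eigenvalue check is correct. But the writeup rests on a misreading of the cited theorem that makes a chunk of the argument both unnecessary and, as written, not quite sound. The whole point of \cite{Alo21} (title: \emph{Explicit Expanders of Every Degree and Size}) is that its Theorem~1.2 already produces strongly explicit, $2$-sided near-Ramanujan, $(p+2)$-regular graphs on \emph{every} sufficiently large vertex count $n$, not on the sparse sequence $|\mathrm{PSL}_2(\Z_q)| = \tfrac12 q(q^2-1)$; your ``$(p+3)$-regular Cayley graphs on $\mathrm{PSL}_2(\Z_q)$'' description conflates Alon's result with its LPS ingredient. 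Consequently the entire padding paragraph should be deleted — and it needs to be, because the fixes you float there do not actually work: adjoining disjoint copies of a fixed small expander disconnects the graph and gives $\mu = 1$, and ``output graphs on the nearest available vertex count $\geq n$'' is not available downstream, since \Cref{prop:itera} requires $V_{i+1} = E_i$ exactly, so each successive vertex count is pinned by the previous graph. Moving from LPS's sparse sizes to arbitrary sizes is a real theorem (Alon's), not routine bookkeeping you should attempt to redo in a sentence.

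Separately, the purpose you assign to the self-loop (upgrading $1$-sided to $2$-sided expansion) is also off: Alon's construction already bounds every nontrivial eigenvalue in absolute value. The self-loop is degree bookkeeping — it turns $p + 2 = 31, 511$ into $32 = 2^5$ and $512 = 2^9$, which is what the repeated squaring in \Cref{cor:expanders} and the assumption $c = 2^{i_1}$ in \Cref{thm:derando-walks} need so that the intermediate sizes $|V_i| = |E_{i-1}| = d_{i-1}|V_{i-1}|$ stay powers of two. Your eigenvalue computation for the self-loop is correct and does show the modification is harmless, but note it slightly \emph{worsens} $\mu$ rather than improving it — which should already signal that ``fixing two-sidedness'' is not its purpose. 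There is also an internal inconsistency: you first call Alon's output ``$(p+3)$-regular'' but then compute with $d' = 31 = p+2$; only the latter is consistent with the target degrees $32$ and $512$.
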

By repeatedly squaring the $32$-regular graphs above, one can also conclude the following (in which it is possible that $d = d(n) > n$):
\begin{corollary}   \label{cor:expanders}
    For any easy-to-compute $j = j(n) \in \N$, there is a strongly explicit ($\polylog(n, d)$ time) algorithm for constructing $n$-vertex, $d$-regular, $\mu$-expander graphs (for all sufficiently large~$n$) where, for $k = 2^j$, we have $d = 32^k$ and $\mu = \mu(n) = .45^k \leq \frac14 2^{-k} = \frac14 d^{-1/5}$ (the inequality holding provided $j \geq 4$).
    %There is a $\poly(n)$-time deterministic algorithm that, for all sufficiently large~$n$, outputs an $n$-vertex, $128$-regular (self-loop-less) $\mu$-expander graph with $\mu \leq \frac{1}{4\sqrt{2}} \leq .177$.
\end{corollary}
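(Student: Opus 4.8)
The plan is to obtain the graphs of \Cref{cor:expanders} by starting from the strongly explicit $32$-regular $.45$-expanders provided by the first case of \Cref{thm:alon} and then applying the graph-squaring operation $j$ times. Recall that the square $G^2$ of an undirected graph $G$ on vertex sequence $V$ has the same vertex sequence, has edge sequence indexed by $E_G \times E_G$ (identifying an index $(a_1,a_2) \in [d]^2$ with the length-$2$ walk that follows edge $a_1$ and then, from the resulting vertex, follows edge $a_2$), is $d^2$-regular when $G$ is $d$-regular, remains undirected, and has normalized adjacency matrix $A_{G^2} = A_G^2$. The first step is the spectral bookkeeping: since $A_{\K}$ is the projector onto the span of the all-ones vector and $A_G$ fixes this subspace, $A_G$ and $A_{\K}$ commute and $A_G A_{\K} = A_{\K} A_G = A_{\K}$, so $(A_G - A_{\K})^2 = A_G^2 - A_{\K} = A_{G^2} - A_{\K}$, whence $\opnorm{A_{G^2} - A_{\K}} = \opnorm{A_G - A_{\K}}^2$ (using that $A_G - A_{\K}$ is Hermitian). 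Thus squaring a $\mu$-expander yields a $\mu^2$-expander. Iterating $j$ times from $(d_0,\mu_0) = (32,.45)$ produces an $n$-vertex graph that is $d$-regular and a $\mu$-expander with $d = 32^{2^j} = 32^k$ and $\mu = .45^{2^j} = .45^k$ for $k = 2^j$, as claimed; the vertex set is unchanged, so the ``sufficiently large $n$'' hypothesis is inherited directly from \Cref{thm:alon}.

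The second step is the elementary numerics. We must verify $.45^k \leq \tfrac14 2^{-k}$, which is equivalent to $.9^k \leq \tfrac14$, i.e.\ to $k \geq \ln 4 / \ln(10/9) \approx 13.2$; this holds whenever $k = 2^j \geq 16$, that is, whenever $j \geq 4$. The identity $\tfrac14 2^{-k} = \tfrac14 d^{-1/5}$ is immediate from $d = 32^k = 2^{5k}$.

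The third and last step is strong explicitness, namely that a neighbor query in the final graph runs in $\polylog(n,d)$ time. Given a vertex and an index $a \in [32^k]$, we split $a$ into its $k$ base-$32$ digits (time $\polylog(d)$) and then perform the corresponding $k$-fold composition of base-graph neighbor lookups dictated by the squaring construction. Unrolling the $j$ levels of squaring, this is a balanced binary tree of depth $j$, hence $2^j = k$ calls to the strongly explicit neighbor lookup of \Cref{thm:alon}, each costing $\polylog(n)$, for a total of $k \cdot \polylog(n) + \polylog(d) = \polylog(n,d)$ (using $k = \log_{32} d$). Since $j = j(n)$ is assumed easy to compute, the overall construction is easy to compute as well.

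I do not expect any genuine obstacle here: the argument is the standard fact that squaring multiplies degree and spectral parameter in the expected way, together with a one-line constant check. The only place that warrants mild care is the bookkeeping of the iterated squaring — noting that after $j$ squarings the degree is $32^{2^j}$ (not $32^{2j}$) and that the neighbor-lookup cost grows like $2^j$ (not $j$) — but this is routine and poses no real difficulty.
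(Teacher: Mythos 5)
Your proposal is correct and follows exactly the route the paper intends: the paper's only stated justification for \Cref{cor:expanders} is "by repeatedly squaring the $32$-regular graphs above," which is precisely your iterated-squaring argument (you also supply the standard spectral computation $A_{G^2} - A_{\K} = (A_G - A_{\K})^2$, the constant check $.9^{16} < 1/4$, and the $\polylog(n,d)$ neighbor-lookup accounting, all of which check out). No gaps.
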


Putting together \Cref{cor:expanders},  \Cref{prop:calcs}, and \Cref{prop:itera} yields the following:
\begin{theorem} \label{thm:derando-walks}
    There is a strongly explicit, space-minimal algorithm with the following behavior on inputs~$c$ and $0 < \delta, \eps < 1$ (where we assume $c = 2^{i_1}$, $\delta = 16^{-i_2}$, and $\eps = 2^{-2^{i_3}}$ for  some \mbox{$i_1,i_2,i_3 \in \N$} sufficiently large).
    The algorithm outputs a sequence~$Q$ of $N = O(\c/(\delta^{11.25} \eps^{10}))$ monomials over symbols $u_1, \dots, u_{\c}$ and $u_1^\dagger, \dots, u_{\c}^\dagger$, each of length $L = 8\log_2(1/\eps)/\delta^{1.25}$, with the following property:

    For any sequence $\calU = (U_1, \dots, U_{\c})$ of matrices in $\C^{r \times r}$ satisfying $\opnorm{U_i} \leq 1$ for all~$i$ and $\opnorm{\avg(\calU)} \leq 1-\delta$, it holds that $\opnorm{\avg(Q \circ \calU)} \leq \eps$.

    Here ``strongly explicit and space-minimal'' means that, given a monomial index~$i \in [N]$  and a monomial position index $j \in [L]$, the algorithm runs in deterministic $\polylog(\c/\delta \eps)$ time and $O(\log(\c/\delta \eps))$ space and outputs the~$j$th symbol of the $i$th monomial in~$Q$.
\end{theorem}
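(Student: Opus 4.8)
\textbf{Proof proposal for \Cref{thm:derando-walks}.}
The plan is to assemble the three ingredients already laid out: the explicit expander graphs of \Cref{cor:expanders}, the iterated ``derandomized squaring'' bound of \Cref{prop:itera}, and the contraction analysis of the function compositions $F_{\vec\mu}$ in \Cref{prop:calcs}. At a high level, given $\calU$ with $\opnorm{\avg(\calU)} \leq 1-\delta$, we start from the trivial polynomial sequence $S = (u_1,\dots,u_c)$, which has $\norm{S}\leq 1$ and $\opnorm{\avg(S(\calU))}\leq 1-\delta$, and repeatedly apply the operation $q_{G_i}$ for a suitable chain of expanders $G_1,G_2,\dots,G_t$ with $V_{i+1}=E_i$. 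By \Cref{prop:itera} the operator norm of the average of the resulting monomial sequence is bounded by $F_{(\mu_1,\dots,\mu_t)}(1-\delta)$, and by \Cref{prop:calcs} this is at most $\eps$ provided the expansion parameters $\vec\mu$ entrywise dominate the schedule $(\vec\mu^{(1)},\vec\mu^{(2)})$ there. So the task reduces to: (i) choosing the chain of graphs so that the parameters match, (ii) bounding the number and length of the resulting monomials, and (iii) verifying strong explicitness and space bounds.

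For step (i), I would take the first $\ell_1 = O(\log(1/\delta))$ graphs $G_1,\dots,G_{\ell_1}$ to each be $512$-regular $.11$-expanders from \Cref{thm:alon} (the second parameter choice there), realizing $\vec\mu^{(1)}$; then take the remaining $\ell_2 = \lceil\log_2\log_2(1/\eps)\rceil$ graphs $G_{\ell_1+1},\dots,G_{\ell_1+\ell_2}$ from \Cref{cor:expanders}, where the $i$th of these is taken with $j=\log_2 i$ (or the nearest valid value) so that it is $32^{2^{i}}$-regular with expansion $\mu = .45^{2^i}\leq \frac14 2^{-2^i}$, realizing (an entrywise upper bound on) $\vec\mu^{(2)}$. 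One must check the hypothesis $V_{i+1}=E_i$ can be satisfied: $|E_i| = d_i|V_i|$, and the expander constructions of \Cref{thm:alon} and \Cref{cor:expanders} work ``for all sufficiently large $n$,'' so as long as $c$ is large enough the vertex counts $|V_1|=c$, $|V_2|=d_1 c$, $|V_3| = d_2 d_1 c, \dots$ are all large enough for the next graph in the chain to exist; this is exactly the ``$i_1,i_2,i_3$ sufficiently large'' caveat in the statement. The total length of each monomial is $L = 2^t = 2^{\ell_1+\ell_2}$: each $q_{G_i}$ doubles the word length (\Cref{def:q} forms $s_j^\dagger s_i$). With $\ell_1 = \log_{2^{.8}}(1/\delta)+O(1)$ we get $2^{\ell_1} = O(1/\delta^{1/0.8}) = O(1/\delta^{1.25})$, and $2^{\ell_2} = O(\log(1/\eps))$, giving $L = O(\log(1/\eps)/\delta^{1.25})$, matching the claimed $8\log_2(1/\eps)/\delta^{1.25}$ up to constants (which one tunes by being slightly careful about the additive constant $+3$ in $\ell_1$ and rounding in $\ell_2$). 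The number of monomials is $N = |V_{t+1}| = |E_t| = c\prod_{i=1}^t d_i$; the first $\ell_1$ factors contribute $512^{\ell_1} = (1/\delta)^{O(1)}$, and working out the exponent with $512 = 2^9$ and $\ell_1 = \frac{9}{.8\cdot 9}\log_2(1/\delta)+O(1)$... more simply, $512^{\ell_1} = \mathrm{poly}(1/\delta)$ with the stated exponent $11.25$, while $\prod_{i=\ell_1+1}^{t} 32^{2^i} = 32^{\sum 2^i} = 32^{O(2^{\ell_2})} = 32^{O(\log(1/\eps))} = \mathrm{poly}(1/\eps)$ with exponent $10$; so $N = O(c/(\delta^{11.25}\eps^{10}))$ after a routine accounting of the constants.

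For step (iii), strong explicitness follows by unwinding the recursive structure: a monomial index in $[N]$ is an edge of $G_t$, which decodes to an edge of $G_{t-1}$ (a vertex of $G_t$) together with a local edge-label, and so on down to $G_1$; each decoding step is a call to the strongly explicit expander algorithm of \Cref{thm:alon}/\Cref{cor:expanders}, which runs in $\polylog$ time and $O(\log)$ space in its input size, and the input sizes here are $\polylog(c/\delta\eps)$. To read off the $j$th symbol of the $i$th monomial, note that the length-$2^t$ word produced by $q_{G_t}\circ\cdots\circ q_{G_1}$ applied to $(u_1,\dots,u_c)$ has, at each position, either some $u_a$ or some $u_a^\dagger$, determined by the parity pattern of ``which side of each $s_j^\dagger s_i$ we are on'' (a bit string of length $t$ read off from $j$) together with the vertex of $G_1$ reached by following the corresponding path in the chain; this is again computable in $\polylog$ time and $O(\log)$ space by the same recursive decoding. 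Composing these observations, the algorithm meets the stated complexity. The main obstacle I anticipate is not any single deep step but the bookkeeping: getting the constants in $N$ and $L$ to come out as the precise exponents $11.25$ and $1.25$ and the precise length $8\log_2(1/\eps)/\delta^{1.25}$ requires carefully tracking the additive constants in $\ell_1$ (the $+3$), the base-$2^{.8}$ logarithm, the rounding in $\ell_2$, and the degree doubling, and making sure the ``sufficiently large'' thresholds on $i_1,i_2,i_3$ are consistent across all the expander constructions in the chain; the conceptual content is entirely contained in \Cref{prop:itera}, \Cref{prop:calcs}, and \Cref{cor:expanders}.
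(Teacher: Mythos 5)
Your proposal is correct and follows essentially the same route as the paper: take $Q = q_{G_t}\circ\cdots\circ q_{G_1}\circ(u_1,\dots,u_c)$ with $G_1,\dots,G_{\ell_1}$ the $512$-regular $.11$-expanders of \Cref{thm:alon}, then $G_{\ell_1+1},\dots,G_{\ell_1+\ell_2}$ from \Cref{cor:expanders} with rapidly decaying $\mu_j$, and invoke \Cref{prop:itera}, \Cref{prop:calcs}, and the recursive binary-tree decoding for strong explicitness. (One small slip: for the $i$th graph in the second phase you want to take $j=i$, not $j=\log_2 i$, in \Cref{cor:expanders} — and actually $j=\max(i,4)$ so the inequality $\mu\le\frac14 2^{-k}$ applies — to obtain degree $32^{2^i}$ and expansion at most $\frac14 2^{-2^i}$.)
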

\begin{proof}
    Given $c, \delta, \eps$, the desired $Q$ is $q_{G_t} \circ \cdots \circ q_{G_1} \circ (u_1, \dots, u_{\c})$, where $G_1, \dots, G_t$ is a sequence as in \Cref{prop:itera}, with:
    \begin{itemize}
        \item $\ell_1 = \log_{2^{.8}}(1/\delta) + 3 = \frac54 \log_2(1/\delta) + 3$, $\ell_2 = \log_2 \log_2 (1/\eps)$, and $t = \ell_1 + \ell_2$;
        \item $G_1, \dots, G_{\ell_1}$ are $512$-regular $.11$-expanders, with $G_j$ on $512^{j-1} {\c}$ vertices, as in \Cref{thm:alon};
        \item $G_{\ell_1 + 1}, \dots, G_{\ell_1 + \ell_2}$ are as in \Cref{cor:expanders}, with $G_{\ell_1 + j}$ being a $32^k$-regular, $\tfrac14 2^{-k}$-expander (for $k = 2^{\min(j,4)}$) on $32^{k+32} N_0$ vertices (once $j \geq 4$), where $N_0 = 512^{\ell_1} \c$ is $|E(G_{\ell_1})|$.
    \end{itemize}
    The length of~$Q$ is 
    \begin{equation}
        N = |E(G_{t})| = 32^{2^{\ell_2+1}+32}N_0 = 2^{160} \cdot 2^{5 \cdot \log_2(1/\eps) \cdot 2} \cdot 2^{9(\log_{2^{4/5}}(1/\delta) + 3)} = 2^{187} \cdot \c/\delta^{11.25}\eps^{10},
    \end{equation}
    and each monomial in~$Q$ has length $2^t = 8\log_2(1/\eps)/\delta^{1.25}$.
    The desired bound $\opnorm{\avg(Q(\calU))} \leq \eps$ follows from \Cref{prop:itera,prop:calcs}.  
    Finally, the time and space bounds are easy to verify, as computation of the $j$th symbol of the $i$th monomial of~$Q$ simply amounts to determining  the $i$th edge of~$G_t$, and then following a path down a binary tree of height~$t$, where at each node one has to compute the the~$a$th edge of a particular~$G_b$.
\end{proof}
\begin{remark}
    As in \cite[Thm.~5.8]{RV05}, if $\delta$ is not small but is rather already of the form~$\delta = 1-\lambda$ for small~$\lambda$, one can retain only the last $\ell_2 - \log_2 \log_2(1/\lambda)$ or so expanders and obtain \mbox{$L = O(\log(1/\eps)/\log(1/\lambda))$}; we omit details.
\end{remark}

When using \Cref{thm:derando-walks}, we will often want to disregard a certain ``trivial'' subspace; we will then %be in the scenario of that the matrices $U_1, \dots, U_n, U_1^\dagger, \dots, U_n^\dagger$ preserve some ``trivial'' subspace~$T$, and we are interested in their action outside~$T$.  We will then 
employ the following simple observation:
\begin{fact}    \label{fact:triv}
    In the setting of \Cref{thm:derando-walks},
    say each $U_j$ may be written as $U_j = R_j \oplus U'_j$, where $R_j$ acts on subspace~$T$ and $U'_j$ acts on its orthogonal complement $T^\bot$ in~$\C^r$.  Then $\avg(Q(\calU)) = \avg(Q(\calR)) \oplus \avg(Q(\calU'))$, where $\calR = (R_1, \dots, R_{\c})$ and $\calU' = (U'_1, \dots, U'_{\c})$.
\end{fact}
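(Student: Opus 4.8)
The plan is to observe that the hypothesis on the $U_j$'s is precisely the statement that each $U_j$ is block-diagonal with respect to the fixed orthogonal decomposition $\C^r = T \oplus T^\bot$, and that this block-diagonal structure is preserved by every operation used to assemble $\avg(Q(\calU))$ out of the $U_j$'s. So the argument is just a matter of propagating a closure property.

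First I would record that the adjoint respects the decomposition: since $T$ and $T^\bot$ are orthogonal complements, for each $j$ we have $U_j^\dagger = R_j^\dagger \oplus (U_j')^\dagger$. Next, products respect it: if $A = A_1 \oplus A_2$ and $B = B_1 \oplus B_2$ are both block-diagonal with respect to $T \oplus T^\bot$, then $AB = A_1 B_1 \oplus A_2 B_2$. By a trivial induction on word length, it follows that for every monomial $s$ over the symbols $u_1, \dots, u_{\c}, u_1^\dagger, \dots, u_{\c}^\dagger$ we have $s(\calU) = s(\calR) \oplus s(\calU')$; here we use that $Q$, as produced by \Cref{thm:derando-walks}, is a sequence each of whose entries $s_i$ is a monomial, so the induction applies to each $s_i$.

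Finally, finite sums (hence averages) of block-diagonal matrices are again block-diagonal, with the sum taken blockwise: $\sum_i (X_i \oplus Y_i) = \bigl(\sum_i X_i\bigr) \oplus \bigl(\sum_i Y_i\bigr)$. Applying this to $\avg(Q(\calU)) = \frac{1}{N} \sum_{i=1}^N s_i(\calU) = \frac{1}{N} \sum_{i=1}^N \bigl(s_i(\calR) \oplus s_i(\calU')\bigr)$ yields $\avg(Q(\calU)) = \avg(Q(\calR)) \oplus \avg(Q(\calU'))$, as claimed. There is no real obstacle here; the only bookkeeping point is that the entries of $Q$ are genuine monomials rather than arbitrary polynomials (guaranteed by \Cref{thm:derando-walks}), though the identical argument would go through for arbitrary polynomial sequences by linearity in any case.
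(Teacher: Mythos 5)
Your proof is correct and is exactly the routine argument the paper has in mind when it states this as a ``Fact'' without proof: block-diagonality with respect to the fixed orthogonal decomposition $T \oplus T^\bot$ is preserved by adjoints, products (hence monomials by induction on word length), and linear combinations (hence averages), and $Q$ is a sequence of monomials. Your closing remark that the argument extends to arbitrary polynomial sequences is also accurate.
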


For example, suppose $G = (V,E)$ is a $d$-regular undirected graph on $V = (1, 2, \dots, n)$ with normalized adjacency matrix expressed as
\begin{equation}    \label{eqn:decomp}
    A_G = \avg(P_1, \dots,  P_d),
\end{equation}
where  $P_1, \dots, P_d$ are $n \times n$ permutation matrices.
Each $P_i$ and $P_i^\dagger$ has operator norm~$1$ and fixes the one-dimensional space $T = \spn\{\ket{1} + \cdots + \ket{n}\}$.
If we write $P_i = \mathrm{proj}_T \oplus U'_i$ where $U'_i$ is the action of $P_i$ on $T^\bot$, then
\begin{equation}
    A_G = \mathrm{proj}_T \oplus \avg(U'_1, \dots, U'_d) 
\end{equation}
and we are in a position to apply \Cref{fact:triv} and \Cref{thm:derando-walks} together.  
The result is a sequence~$Q$ of ``walks'', each of the form $P_{i_L}^\dagger P_{i_{L-1}} \cdots P_{i_2}^\dagger P_{i_1}$.
Applying one such walk to any starting vertex~$\ket{v}$ leads to a valid walk of length~$L$ in~$G$ (with the steps $P_i^\dagger$ being valid since~$G$ is undirected). 
If we write~$\wt{G}$ for the $|Q|$-regular undirected graph on~$V$ wherein each $v \in V$ has an edge to all its walk outcomes, the result is that
\begin{equation}
    A_{\wt{G}} = \avg(Q \circ (P_1, \dots, P_d)) = \mathrm{proj}_T \oplus \avg(Q(U'_1, \dots, U'_d)).
\end{equation}
Hence if $G$ is a $(1-\delta)$-expander, we obtain that $\wt{G}$ is an~$\eps$-expander with $|Q| = O(d/(\delta \eps)^{O(1)})$ and walks of length $O(\log(1/\eps)/\delta^{O(1)})$.
As shown in \cite{Rei08,RV05}, given any simple, connected, $n$-vertex, undirected graph, there is a very simple transformation preserving connectivity that produces a~$4$-regular undirected graph (together with the associated $P_1, \dots, P_4$ as in \Cref{eqn:decomp}) that has $\delta \geq 1/\poly(n)$;  by taking $\eps = 1/\poly(n)$, one can use these pseudorandom walks to establish Reingold's Theorem~$\mathsf{SL} = \mathsf{L}$~\cite{Rei08}.

\bibliographystyle{alpha}
\bibliography{some-papers}

\end{document}